\documentclass[a4paper,onecolumn,11pt,unpublished]{quantumarticle}
\pdfoutput=1
\usepackage[utf8]{inputenc}
\usepackage[english]{babel}
\usepackage[T1]{fontenc}

\usepackage{amsmath, amsthm, amsfonts, amssymb}
\usepackage{bm}
\usepackage{braket}
\usepackage{framed}
\usepackage{hyperref}
\usepackage{comment}
\usepackage{enumitem}
\usepackage{diagbox}
\usepackage{makecell}
\usepackage{pdflscape}
\usepackage{mathtools}
\usepackage{pgf}
\usepackage{float}
\usepackage{tikz}
\usepackage{circuitikz}
\usepackage{caption}
\usepackage{tablists}
\allowdisplaybreaks
\usepackage{lipsum}
\usepackage{csquotes}
\usepackage{xcolor}
\usetikzlibrary{arrows.meta,positioning}
\usepackage{underscore}
\usetikzlibrary{calc,arrows.meta,decorations.markings}

\usepackage{wrapfig}
\usetikzlibrary{calc}

\newtheorem{theorem}{Theorem}[section]
\newtheorem{proposition}[theorem]{Proposition}

\newtheorem{lemma}[theorem]{Lemma}
\newtheorem*{theorem*}{Theorem}

\theoremstyle{definition}
\newtheorem{definition}[theorem]{Definition}

\theoremstyle{remark}
\newtheorem{remark}[theorem]{Remark}

\newtheorem{example}[theorem]{Example}

\newcommand{\N}{\mathbb{N}}
\newcommand{\Z}{\mathbb{Z}}

\newcommand{\R}{\mathbb{R}}
\newcommand{\T}{\mathbb{T}}

\newcommand{\M}{\mathcal{M}}
\newcommand{\A}{\mathcal{A}}
\newcommand{\B}{\mathcal{B}}
\newcommand{\J}{\mathcal{J}}
\newcommand{\K}{\mathcal{K}}
\newcommand{\I}{\mathcal{I}}

\newcommand{\BB}{\mathrm{B}}
\newcommand{\GHZ}{\mathrm{GHZ}}
\NewDocumentCommand\Bstate{D<>{\bm{\lambda}} O{\Phi}}{\ket{\BB(#1, #2)}}
\NewDocumentCommand\BstateZero{D<>{\bm{\lambda}} O{0}}{\ket{\BB(#1, #2)}}
\NewDocumentCommand\BstateNoKet{D<>{\bm{\lambda}} O{\Phi}}{\BB(#1, #2)}
\NewDocumentCommand\Bsimple{D<>{\lambda}}{\ket{\BB(#1)}}
\newcommand{\piovertwointerval}{\left[0, \piovertwo\right)}
\newcommand{\piovertwo}{\tfrac{\pi}{2}}

\newcommand{\pioverN}{\tfrac{\pi}{N}}
\newcommand{\piovertwoN}{\tfrac{\pi}{2N}}

\newcommand{\zz}{\mathbf{z}}
\newcommand{\ww}{\mathbf{w}}
\newcommand{\yy}{\mathbf{y}}
\newcommand{\HH}{H}

\newcommand{\dd}{d}
\newcommand{\tickinds}{\mathsf{t}}

\newcommand{\Tpair}{\mathbf{T}}

\newcommand{\Odd}{\operatorname{Odd}}

\usepackage[numbers,sort&compress]{natbib}

\begin{document}

\title{Three-qubit nonlocality paradoxes: beyond GHZ}

\author[1]{Nadish de Silva}\email{ndesilva@sfu.ca}

\author[1]{Santanil Jana}\email{santanil_jana@sfu.ca}

\author[1]{Ming Yin}\email{ming_yin_2@sfu.ca}

\affil[1]{Department of Mathematics, Simon Fraser University, Burnaby, BC, Canada}

\maketitle

\begin{abstract}
Quantum nonlocality paradoxes, such as that of GHZ, provide maximally sharp logical obstructions to classical probabilistic models of quantum correlations. They are key resources in a broad variety of information-theoretic tasks that exhibit unconditional quantum advantage. For example, in nonlocal games, which are communication tasks that serve as core technical tools in recent landmark results in quantum computational complexity theory. 

Their role in establishing quantum advantage motivated their study by Abramsky et al.\ who introduced an infinite family of three-qubit paradoxes exhibiting novel conditional structure. This was later extended by the present authors into a full classification program. 

In this work, we completely classify all three-qubit nonlocality paradoxes established via a \textit{biconditional parity proof}; this is a very large class of paradoxes that encompasses all earlier-known examples. We do this by introducing a suite of new structural and combinatorial techniques.  We find that the landscape of nonlocality paradoxes is far richer than previously understood, violating regularity conditions underlying all prior constructions.
\end{abstract}

\section{Introduction}
\label{sec:1}

A central organising thrust of quantum computer science is to understand and exploit uniquely quantum resources towards provable advantages over classical information processing.  A key role in this program is played by \emph{nonlocality} \cite{Bell1964, Bell1966}, a property of correlations that obstructs simulation by classical probabilistic models.  Nonlocality can be precisely certificated in terms of games, which serve as core technical tools in quantum information and complexity theory.  As structurally clarified by Abramsky--Brandenburger \cite{Abramsky2011}, nonlocality (and its generalisation, contextuality) are essentially logical in character.  This work addresses the problem of classifying logical proofs of nonlocality under minimal assumptions.

Whereas standard witnesses of nonlocality take the form of probabilistic inequalities satisfied by classical correlations but violable by quantum correlations \cite{CHSH}, the maximal form of \emph{strong nonlocality} is witnessed by a logical paradox.  Nonlocality paradoxes arise from restricted joint measurability: quantum theory allows only compatible sets of measurements, called \emph{contexts}, to be jointly measurable.  The canonical example is due to Greenberger--Horne--Zeilinger (GHZ) \cite{Greenberger_1990, Greenberger_1989}.  They gave a three-qubit state with two measurements per qubit such that: the set of constraints describing the associated possibilistic empirical data, one for each context (i.e.\ a choice of one measurement on each qubit), is unsatisfiable.

Paradoxes undergird probabilistic nonlocality in that all nonlocal correlations are dilutions of paradoxes by classical randomness. The nonlocal fraction \cite{Barrett_2006, Elitzur_1992} measures nonlocality by quantifying the degree to which correlations are paradoxical.  In many information-theoretic tasks, quantum advantage scales directly with the nonlocal fraction of an available resource state.  For example, \emph{nonlocal games} \cite{Cleve_2004} involve players---who may share a distributed resource but cannot directly communicate---giving coordinated responses to a referee's queries.  When classical correlations limit the players' success probability, deterministic winning strategies necessarily require sharing a nonlocal paradox.

Such unconditional advantages in communication tasks can be converted into rare unconditional complexity-theoretic separations.  For example, Bravyi--Gosset--K\"onig  \cite{Bravyi_2018} gave a constant-depth quantum circuit family, capable of generating and playing many nonlocal games, that cannot be simulated by classical circuits of sublogarithmic depth.  Nonlocal games also play an essential role in the proof that $\textsf{MIP}^* = \textsf{RE}$: entangled multi-interactive provers recognise exactly the semidecidable languages \cite{ji2021mip}.

Nonlocality paradoxes directly boost computation in the model of measurement-based quantum computation.  Anders--Browne \cite{Anders_2009} showed how access to measurements on GHZ states can promote a restricted linear computer to classical universality; Raussendorf \cite{Raussendorf_2013} extended this to general measurement-based computation.  Other information-theoretic tasks requiring nonlocality include device-independent quantum cryptography \cite{colbeck2009quantum,Grasselli_2023} and randomness generation \cite{acin2016certified}.

These results demonstrate that the mere presence of nonlocality is insufficient for realising advantages; the precise \emph{structure} of the paradoxes involved plays a vital role. Revealing this structure enables systematic investigation of the advantages such paradoxes confer and the mechanisms by which they do so. Moreover, understanding how abundant quantum paradoxes are, and whether simple systems can generate many independent instances of them, bears directly on how nonlocality can be amplified into provable computational separations.  These considerations naturally lead to a classification problem for nonlocality paradoxes, initiated in earlier work by the present authors~\cite{de_Silva_2025} in the three-qubit setting.

\subsection{Prior work}  
Abramsky et al.\ \cite{Abramsky2017} began the systematic study of three-qubit nonlocality paradoxes, giving the first examples beyond that of GHZ.  They showed that all three-qubit paradoxes involve, up to a natural physical equivalence, \emph{equatorial measurements} performed on a \emph{balanced state}.  Further, they gave an infinite family of paradoxes, indexed by $m \in \N$, with a striking novel conditional structure.  Both Alice's and Bob's $2m$ measurements (i.e.\ those on the first two qubits) form half of a regular $4m$-polygon, while Charlie's two measurements (i.e.\ those on the third qubit) are $X,Y$ as in the standard GHZ example.  The GHZ example corresponds to $m=1$; for higher $m$, the resulting parity constraints on satisfying assignments are conditional on Charlie's outcome when he measures $Y$.

This was extended to a classification program by the present authors, who introduced several key new technical tools.  We conjectured that all paradoxes necessarily require an \emph{interpolant state}, which lies on the line connecting the GHZ state to the tensor product of the Bell state and the plus state.  We exhaustively classified all paradoxes satisfying the following conditions:
    \begin{enumerate}[label=(\arabic*)]
    \item an interpolant state is used,
    \item Charlie has two measurements,
    \item Alice and Bob have the same number of measurements, 
    \item the logical constraints are as strong as physics allows,
    \item the contradiction arises by summing all parity constraints. 
\end{enumerate}
Our earlier classification subsumed and greatly extended the family of Abramsky et al., introducing several new infinite families that exhibited numerous novel structural features.  Removing our prior assumptions (1) through (5) fundamentally alters the possible logico-combinatorial structure of paradoxes and demands significantly stronger methods.

\subsection{Summary of main results}

In this work, we give a complete structural classification of three-qubit nonlocality paradoxes admitting a \emph{biconditional parity proof} (Definition~\ref{def:bpp}).  In such a proof, after relabelling the parties if necessary, Charlie has exactly two measurement settings, and each Charlie conditioning---a choice of Charlie measurement and outcome---turns the Alice--Bob possibilistic constraints into parity relations: an Alice--Bob outcome pair is possible if and only if it satisfies the corresponding conditioned $\mathbb Z_2$-parity equation.  This is a vast family of three-qubit nonlocality paradoxes that includes all prior known families including the standard GHZ example, the family of Abramsky et al.\ \cite{Abramsky2017}, and the earlier families of the present authors~\cite{de_Silva_2025}.

We show below that the paradoxes admitting a biconditional parity proof are precisely those that use interpolant states when Charlie is restricted to two measurement settings (Proposition~\ref{prop:bpp-interpolant-two-charlie}); that is, our present classification drops the above assumptions (3), (4), and (5) from our earlier classification.  We achieve this by introducing a graph-theoretic framework for studying nonlocality paradoxes in terms of $2$-CNF formulae and implication graphs (Definition~\ref{def:imp-graph}).  In the interpolant case, these implication graphs are bidirected, so paradoxicality is witnessed by paths from literals to their complements (Lemma~\ref{lem:interpolant-impl}).  We then show that every paradox with a biconditional parity proof is reducible to a canonical combinatorial description and construct the bijection of Theorem~\ref{thm:bpp-classification}:
\[
    \mathsf{BPP} \quad
    \cong \quad \mathsf{CanonicalTriples}.
\]
Here, $\mathsf{BPP}$ is the set of specifications of quantum state and minimal measurement sets for paradoxes admitting biconditional parity proofs, up to permuting parties and local changes of basis (Definition~\ref{def:bpp}).  The set $\mathsf{CanonicalTriples}$ consists of the canonical classification data, i.e.\ the lexicographically normalised triples that contain exactly one representative of each physical equivalence class (Definition~\ref{def:classification-datum}).  Each such triple is a small tuple of integers and real parameters with membership decided by easily checkable conditions.

In particular, the data that classifies a biconditional parity proof starts with a Charlie clock (Definition~\ref{def:charlie-clock}), consisting of a clock group $\Z_N$, a choice of three \textit{ticks}, and a real parameter.  The valid clocks---those that are both realisable by quantum data and paradoxical---are classified explicitly in Theorem~\ref{thm:valid-clocks}.  Once the clock is fixed, Alice's measurements, and hence Bob's, are specified by an Alice--Bob completion (Definition~\ref{def:ab-completion}): a finite choice of cosets of certain subgroups of the clock group.  These cosets are exactly the carriers of witness paths in the implication graphs (Proposition~\ref{prop:witness-cosets}).  The \textit{shadow test} determines which further witnesses are already forced by a partial choice of cosets (Proposition~\ref{prop:shadow-exactness}), and the canonical completions are precisely those giving minimal measurement supports (Definition~\ref{def:canonical-completion} and Theorem~\ref{thm:canonicity-minimality}).  Finally, Alice and Bob's measurements can be partitioned into layers, which are noninteracting copies of the clock group, each offset by a real-valued shift; the free choice of these shifts completes the classifying data (Definition~\ref{def:shift-tuple}).

We conclude by presenting families of new exotic paradoxes that violate all prior assumptions on their structure.  Section~\ref{sec:6} gives new interpolant-state paradoxes beyond the restriction of Charlie to two measurements, including families with three Charlie measurements (Theorems~\ref{thm:NN3} and~\ref{thm:4P2p3}) and examples with four Charlie measurements (Example~\ref{ex:four-charlie}).  In Section~\ref{sec:counterexample}, we present a new non-interpolant paradox, which provides a counterexample to the conjecture of \cite{de_Silva_2025} that every three-qubit nonlocality paradox must arise from an interpolant state.

Together, these results provide several striking new insights into the nature of quantum nonlocality paradoxes.  Earlier, the reason why so few examples of three-qubit nonlocality paradoxes were known might have been attributed to them being genuinely rare.  We now see, upon performing a deep and systematic search for them, that, while such paradoxes may be difficult to construct, they are incredibly abundant.  Indeed, every condition on them that was presumed to be necessary is revealed below not to be.  While paradoxicality does impose strict symmetry and structural constraints, there is enough slack to construct increasingly exotic examples.

Equivalently, our classification can be read as a fine-grained study of how quantum interference gives rise to provably nonclassical behaviour.  Nonlocality paradoxes require realising many impossible (i.e.\ probability zero) events arising from exact amplitude cancellations that are sufficiently well-distributed across combinations of measurements to rule out every hidden variable; our results identify the arithmetic and combinatorial patterns by which such local cancellations assemble into a global logical contradiction.  An exciting future direction is to consider how this wealth of quantum paradoxes can be leveraged towards demonstrating unconditional quantum advantages in a variety of information processing tasks.

\paragraph{Outline.}
Section~\ref{sec:2} reviews the measurement-scenario formalism for three-qubit nonlocality paradoxes and the amplitude equations governing impossible events.  Section~\ref{sec:3} recasts paradoxicality in terms of \(2\)-CNF formulae and implication graphs.  Section~\ref{sec:4} proves the classification of biconditional parity proofs via valid Charlie clocks, canonical Alice--Bob completions, and layer shifts.  Section~\ref{sec:6} presents new interpolant-state paradoxes beyond the restriction of Charlie to two measurements, including examples with three and four Charlie measurements.  Finally, Section~\ref{sec:counterexample} constructs a non-interpolant-state paradox lying outside the biconditional parity proof classification.

\section{Background}
\label{sec:2}
\subsection{Nonlocality}
\label{sec:2.1}

We briefly review the notion of \emph{nonlocality}, a special case of \emph{contextuality}, within the framework of Abramsky--Brandenburger \cite{Abramsky2011}. Contextuality is a property of probabilistic data relative to a given measurement scenario. In this work, we focus on data arising from quantum states and measurements, i.e.\ quantum scenarios.  

Experimental setups are modelled by the abstract framework of \emph{measurement scenarios}: a triple $(\mathcal{X}, \mathcal{O}, \mathcal{C})$, consisting of a set of measurement labels $\mathcal{X}$, a set of possible outcomes $\mathcal{O}$, and a cover $\mathcal{C}$ of $\mathcal{X}$, consisting of measurement contexts, which are maximal sets of measurements that can be jointly performed. Since we are only concerned with qubit quantum scenarios, we fix $\mathcal{O} = \Z_2$. A \emph{quantum scenario} yields an \emph{empirical model} $\mathcal{E}$, consisting of probability distributions $P_C : \mathcal{O}^C \to [0,1]$ indexed by $C \in \mathcal{C}$, defined by the Born rule, which satisfy no-signalling: for any $C, C' \in \mathcal{C}$, the marginal distributions $P_C\big|_{C\cap C'}$ and $P_{C'}\big|_{C\cap C'}$ agree.

An empirical model is \emph{noncontextual} if it admits a global distribution $\xi: \mathcal{O}^\mathcal{X} \to [0, 1]$, also known as a local hidden variable model, over \emph{global assignments} that reproduces the observed marginal distributions, i.e.\ $\xi|_C = P_C$ for all $C \in \mathcal{C}$. Such models necessarily satisfy all Bell inequalities \cite{Abramsky_2012, Peres1999}, thus contextuality is witnessed by violation of a probabilistic inequality. A global assignment $g: \mathcal{X} \rightarrow \mathcal{O}$ is \emph{consistent} with an empirical model if, for each $C\in \mathcal{C}$, the joint outcome predicted by $g$ is possible, i.e.\ $P_C(g|_C) \neq 0$. An empirical model is \emph{strongly contextual} if it admits no such consistent global assignment. 
Standard examples include the GHZ paradox~\cite{Greenberger_1989} and the PR box \cite{PR1994}.

Nonlocality is a special type of contextuality in which the measurements have a multipartite structure. In the quantum case, a \emph{Bell measurement scenario} $\M$ on $n$ qubits is an $n$-tuple of finite sets $(M_1, \ldots, M_n)$, where each $M_i$ is a set of measurements on the $i$-th qubit. The \emph{contexts} of $\M$ are $\mathcal{C} = \prod_{i=1}^n M_i$, i.e.\ choices of exactly one measurement per qubit. The set of all measurements is $\mathcal{X} = \bigsqcup_{i=1}^n M_i$.

A local measurement is given by 
\[ E_{\theta, \varphi} := \sin\left(\theta\right)[\cos\left(\varphi\right) X + \sin\left(\varphi\right) Y] + \cos\left(\theta\right) Z , \] 
where $\theta \in \piovertwointerval$ and $\varphi \in [0, 2\pi)$, and $X,\ Y,\ Z$ denote the Pauli operators. The $+1$ eigenstate of $E_{\theta, \varphi}$ is \[ \ket{\theta, \varphi} := \cos\tfrac{\theta}{2}\ket{0} + e^{i\varphi}\sin\tfrac{\theta}{2} \ket{1}\] 
and $-1$ eigenstate is $\ket{\pi - \theta, \varphi + \pi}$. 
Thus, a context may be specified by a tuple $(\bm{\theta}, \bm{\varphi}) := ((\theta_1, \varphi_1),\ldots,(\theta_n, \varphi_n))$, and an \emph{event} by $(\bm{\theta}, \bm{\varphi}) \rightarrow \mathbf{o}$, where $\mathbf{o} \in \mathcal{O}^n$ labels measurement outcomes (with $+1, -1$ relabelled as $0, 1$, respectively). We reformulate the definition of strong nonlocality in this setting. 
\begin{definition}
    An empirical model $\mathcal{E}(\ket{\psi}, \M)$ associated with a quantum scenario $(\ket{\psi}, \M)$, where $\ket{\psi}$ is an $n$-qubit state, is \emph{strongly nonlocal} if, given any global assignment $g: \bigsqcup_{i=1}^n M_i \rightarrow \Z_2$, there exists a context $(\bm{\theta}, \bm{\varphi})$ such that the event $(\bm{\theta}, \bm{\varphi}) \to (g(\theta_1, \varphi_1), \ldots, g(\theta_n, \varphi_n))$ is impossible.
\end{definition}
Let $\ket{(\bm{\theta}, \bm{\varphi}) \rightarrow \mathbf{o}}$ denote the eigenstate associated with the event $(\bm{\theta}, \bm{\varphi}) \rightarrow \mathbf{o}$. The event is impossible if and only if $\braket{(\bm{\theta}, \bm{\varphi}) \rightarrow \mathbf{o} | \psi} = 0$. We refer to a pair $(\ket{\psi}, \M)$ as a \emph{(quantum nonlocality) paradox} when $\mathcal{E}(\ket{\psi}, \M)$ is strongly nonlocal.

\subsection{Nonlocality paradoxes of three qubits}
\label{sec:2.2}

Two-qubit quantum states do not exhibit strong nonlocality \cite{Brassard_2005}, and hence strong nonlocality requires at least three qubits. Abramsky et al.\ \cite{Abramsky2017} showed the following:
\begin{theorem}[{\cite[Theorem 6]{Abramsky2017}}]
    A tripartite quantum state admitting strong nonlocality must be in the SLOCC class of the GHZ state and, in particular, must be \emph{balanced}. Moreover, any such strongly nonlocal behaviour can be witnessed using only \emph{equatorial measurements}.
\end{theorem}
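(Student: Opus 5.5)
The plan follows the route of Abramsky et al., split into three stages: first the SLOCC class, then balancedness, then the reduction to equatorial measurements.

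\emph{Stage 1: ruling out the other SLOCC classes.} A strongly nonlocal three-qubit state must be genuinely tripartite entangled, and then lie in the GHZ or W SLOCC class. Suppose the state is a product of a qubit state with a two-qubit state, say $\ket{\psi}=\ket{a}\otimes\ket{\chi}$. Fix the first party's outcomes to any outcome that has nonzero probability for every one of its measurements. The remaining events then reduce to a two-qubit model, which is not strongly nonlocal by \cite{Brassard_2005}. A consistent global assignment of that two-qubit model, extended by these fixed outcomes, is consistent for the whole model. Fully product states are the trivial case of this argument.

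For the W class, I would use the form $\ket{\psi}=\alpha\ket{001}+\beta\ket{010}+\gamma\ket{100}+\delta\ket{000}$ after local unitaries; the LU orbit suffices because the amplitudes in each impossible event transform covariantly. Then $\ket{111}$ has zero amplitude, and more generally every support set has limited structure. The aim is to exhibit a global assignment avoiding all zero-amplitude events. Concretely, each party chooses, for every measurement, the outcome whose eigenvector has the larger overlap with a fixed direction (near $\ket{1}$). The amplitude of the resulting event is then a sum of terms that cannot cancel exactly. I expect this to be the \textbf{main obstacle}: showing no exact cancellation occurs for every choice of directions requires a careful case analysis, including degenerate measurements such as $Z$ itself, where ties must be broken consistently.

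\emph{Stage 2: balancedness.} Use the GHZ-class normal form $\ket{\psi}=a\ket{000}+b\ket{\phi_1\phi_2\phi_3}$ with $\ket{\phi_i}\neq\ket{0}$. The amplitude of an event $(\bm\theta,\bm\varphi)\to\mathbf o$ is $a\prod_i\langle e_i|0\rangle+b\prod_i\langle e_i|\phi_i\rangle$. For this to vanish we need $|a|\prod|\langle e_i|0\rangle|=|b|\prod|\langle e_i|\phi_i\rangle|$. If the state is unbalanced (after suitable local normalisation), one term dominates for some choice of outcome per measurement. For each measurement, choose the outcome maximising the ratio $|\langle e|0\rangle|/|\langle e|\phi_i\rangle|$, or its inverse. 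A product of ratios each at least $1$, multiplied by $|a|/|b|>1$, cannot equal $1$, so this global assignment is consistent. Hence strong nonlocality forces equality of the relevant weights, i.e.\ balancedness.

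\emph{Stage 3: equatorial measurements.} For a balanced state, after local basis changes making it GHZ-like, vanishing of the amplitude forces equality of the moduli in each factor. Any non-equatorial measurement therefore has an outcome that is never involved in an impossible event, namely the one whose modulus ratio deviates from $1$ in the favourable direction. Assigning that outcome lets us drop the measurement without losing paradoxicality. Hence if a paradox exists, one exists using only the equatorial measurements, which witnesses the strongly nonlocal behaviour.
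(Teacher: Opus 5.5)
The paper imports this theorem from Abramsky et al.\ without proof, so your argument has to stand on its own. Your biseparable case and your Stage~2 balancedness argument are sound; the latter implicitly uses that the two outcomes of a measurement have ratios on opposite sides of $1$, since $|\langle e^\perp|u\rangle|^2=1-|\langle e|u\rangle|^2$. The W-class step, however, has a real gap: your proposed assignment is false, not merely hard to verify. After local phases, write the state as $c_3\ket{001}+c_2\ket{010}+c_1\ket{100}+c_0\ket{000}$ with $c_1,c_2,c_3>0$ and $c_0\ge 0$. For $\ket{e}=x\ket{0}+y\ket{1}$ put $q=\bar y/\bar x\in\mathbb{C}\cup\{\infty\}$. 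The amplitude of $\ket{e_1e_2e_3}$ is then $\bar x_1\bar x_2\bar x_3\,(c_0+c_1q_1+c_2q_2+c_3q_3)$, and it vanishes identically whenever two of the $e_k$ equal $\ket{1}$. So choosing outcomes near $\ket{1}$ fails as soon as two parties measure $Z$. It fails even with one measurement per party: for the W state, take measurements with outcomes $q_k=2e^{2\pi ik/3}$. Since $|q_k|>1$, these are the outcomes your rule selects for any reference direction close enough to $\ket{1}$, yet $q_1+q_2+q_3=0$, so the event is impossible.

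The missing idea is to centre the hemisphere on a generic \emph{equatorial} point, which in the $q$-coordinate is an open half-plane. The two outcomes of a measurement have coordinates $q$ and $-1/\bar q=-q/|q|^2$, so they lie strictly on opposite sides of any line through $0$ not containing them. Choose $\epsilon\in(-\pi/2,\pi/2)$ so that no finite nonzero $q$ occurring in the scenario lies on the line $\operatorname{Re}(e^{-i\epsilon}q)=0$. Give every non-$Z$ measurement the outcome with $\operatorname{Re}(e^{-i\epsilon}q)>0$. For the poles, give party~1's $Z$ the outcome $\ket{1}$ and the other parties' $Z$ the outcome $\ket{0}$. A context containing party~1's $Z$ then has amplitude $c_1\bar y_1\bar x_2\bar x_3\neq 0$. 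In every other context, $\operatorname{Re}\bigl(e^{-i\epsilon}(c_0+\sum_k c_kq_k)\bigr)\ge c_1\operatorname{Re}(e^{-i\epsilon}q_1)>0$, so the event is possible.

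Stage~3 is also justified incorrectly. Write $r_i:=|\langle e_i|v_{\lambda_i}\rangle|/|\langle e_i|w_{\lambda_i}\rangle|$. Vanishing of the amplitude forces only $r_1r_2r_3=1$, not $r_i=1$ for each $i$. In fact the $r>1$ outcome of a tilted Alice measurement \emph{can} occur in an impossible event. Combine it with a Bob outcome of ratio $1/r$; such outcomes exist because the ratio takes every value in $[0,\infty]$ over the Bloch sphere. Then add an equatorial Charlie measurement, whose angle can be tuned to meet the phase condition because $\beta(\lambda_3,\cdot)$ is bijective. Hence neither ``never involved in an impossible event'' nor dropping non-equatorial measurements one at a time follows. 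Extending a consistent assignment of a smaller scenario may already have given some other non-equatorial measurement its $r<1$ outcome. What works is the simultaneous form of your Stage~2 trick. Take a consistent assignment of the equatorial subscenario and give \emph{every} non-equatorial measurement of every party its $r>1$ outcome. Any context containing a non-equatorial measurement then has $r_1r_2r_3>1$ and is possible. So the full scenario is strongly nonlocal only if its equatorial part already is.
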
 
A \emph{balanced} state is of the form $\Bstate = \mathcal{N}(\ket{v_{\bm{\lambda}}} + e^{i \Phi} \ket{w_{\bm{\lambda}}})$,
where $\bm{\lambda} = (\lambda_1, \lambda_2, \lambda_3)$ with $\lambda_j \in \piovertwointerval$, $\Phi \in [0,2\pi )$, and $\ket{v_{\bm{\lambda}}} = \bigotimes_{j=1}^3 \ket{v_{\lambda_j}},\ \ket{w_{\bm{\lambda}}} = \bigotimes_{j=1}^3 \ket{w_{\lambda_j}}$ with \begin{equation*}
    \ket{v_{\lambda_j}} := \cos{\tfrac{\lambda_j}{2}} \ket{0} + \sin{\tfrac{\lambda_j}{2}} \ket{1}, \quad \ket{w_{\lambda_j}} := \sin{\tfrac{\lambda_j}{2}} \ket{0} + \cos{\tfrac{\lambda_j}{2}} \ket{1},
\end{equation*}
and $\mathcal{N}$ is a normalising constant.

\begin{figure}[H]
\vspace{-0.6\baselineskip}
\centering
\resizebox{0.44\textwidth}{!}{%
\begin{tikzpicture}[scale=0.95,line cap=round,line join=round,>=latex]
  \def\R{2.65}
  \def\s{0.36}
  \def\lam{42}

  \draw[thick] (0,0) circle (\R);

  \draw[thick,dashed] (\R,0) arc[start angle=0,end angle=180,x radius=\R,y radius={\s*\R}];
  \draw[thick] (-\R,0) arc[start angle=180,end angle=360,x radius=\R,y radius={\s*\R}];

  \draw[thick] (0,-\R) -- (0,\R);

  \coordinate (O) at (0,0);
  \coordinate (N) at (0,\R);
  \coordinate (S) at (0,-\R);
  \coordinate (M) at (\R,0);
  \coordinate (V) at ({\R*sin(\lam)},{\R*cos(\lam)});
  \coordinate (W) at ({\R*sin(\lam)},{-\R*cos(\lam)});
  \coordinate (E) at (-1.82,0.70);

  \fill (O) circle (1.5pt);
  \draw[very thick,red!75!black] (O) -- (M);
  \draw[very thick,red!75!black,->] (O) -- (V);
  \draw[very thick,red!75!black,->] (O) -- (W);
  \draw[very thick,green!75!black,->] (O) -- (E);

  \draw[thick,red!75!black,->]
      ({1.34*cos(90)},{1.34*sin(90)})
      arc[start angle=90,end angle={90-\lam},radius=1.34];
  \node[text=red!75!black] at (0.78,1.46) {$\lambda$};

  \draw[thick,green!75!black,->]
      ({0.78*cos(0)},{0.50*sin(0)})
      arc[start angle=0,end angle=150,x radius=0.78,y radius=0.50];
  \node[text=green!75!black] at (-0.20,0.29) {$\phi$};

  \node[anchor=south] at ($(N)+(0,0.10)$) {$\ket{0}$};
  \node[anchor=north] at ($(S)+(0,-0.10)$) {$\ket{1}$};
  \node[anchor=west,text=red!75!black] at ($(V)+(0.12,0.06)$) {$\ket{v_\lambda}$};
  \node[anchor=west,text=red!75!black] at ($(W)+(0.12,-0.03)$) {$\ket{w_\lambda}$};
  \node[anchor=east,text=green!75!black] at ($(E)+(0.08,0.20)$) {$E_\phi$};
  \node[anchor=west,text=red!75!black] at ($(M)+(0.08,0.03)$) {$\ket{+}$};
\end{tikzpicture}%
}
\caption{The one-qubit geometry of a balanced state. The red vectors are the states
$\ket{v_\lambda}$ and $\ket{w_\lambda}$, symmetric about the equator, while $E_\phi$
is an equatorial measurement.}
\label{fig:balanced-state-geometry}
\vspace{-0.8\baselineskip}
\end{figure}
An \emph{equatorial measurement} has the form $E_\varphi := \cos\left(\varphi\right) X + \sin\left(\varphi\right) Y$, where $\varphi \in [0,2\pi)$. Since only equatorial measurements contribute to strong nonlocality, we only consider measurement scenarios $\M = (M_1, M_2, M_3)$ such that each $M_i \subset [0, 2\pi)$ is a set of angles. 
We refer to $M_1,\ M_2,\ \text{and}\ M_3$ as the \emph{Alice}, \emph{Bob}, and \emph{Charlie measurements}, respectively. The following remark yields further simplifications.

\begin{remark}\label{rem:measurement}
As shown in Abramsky et al.\ \cite[Section~2.3]{Abramsky2017}, it suffices to consider global assignments $g : \bigsqcup_{i=1}^n M_i \to \Z_2$ satisfying
\begin{equation}\label{eqn:nice_ga}
    g(\varphi) = g(\varphi + \pi) \oplus 1
\end{equation}
for all equatorial measurements $E_\varphi$. Thus, to establish strong nonlocality, it is enough to rule out such assignments, and if the model is not strongly nonlocal, then an assignment exists satisfying \eqref{eqn:nice_ga}. Consequently, we may assume $M_i \subset [0,\pi)$ for all $i \in \{1,\ldots,n\}$.
\end{remark}

The following distinguished subclass of balanced states was originally introduced in \cite{de_Silva_2025}.
\begin{definition}[{\cite[Definition 2.5]{de_Silva_2025}}]
    An \emph{interpolant state} $\Bsimple$ is a balanced state of the form $\ket{\text{B}((0,0,\lambda),0)}$.
\end{definition}

All previously known examples of three-qubit nonlocality paradoxes \cite{Abramsky2017, de_Silva_2025, Brassard_2005} arise from interpolant states. 
In Section~\ref{sec:counterexample}, we present a first example of a nonlocality paradox involving a non-interpolant state.

\subsubsection{Equivalences of paradox}\label{sec:equivalence}\label{sec:2.2.1}

We regard two three-qubit scenarios $(\ket{\psi},\M)$ and $(\ket{\psi'},\M')$ as \emph{equivalent} if they differ only by relabelling the qubits and by local changes of basis. More precisely, this means that there exists a local unitary $U=U_1\otimes U_2\otimes U_3$ and a permutation $\sigma$ of the three qubits such that $U\ket{\psi}=\ket{\psi'}$, and for each $i=1,2,3$, the local unitary $U_i$ carries the measurement set $M_i$ onto $M'_{\sigma(i)}$, where equatorial measurements are identified modulo $\pi$.

This equivalence relation is particularly rigid for equatorial measurement scenarios. Indeed, by \cite[Lemma~B.1]{de_Silva_2025}, if a single-qubit unitary sends a set of at least two equatorial measurements to another equatorial set, then, up to an overall phase, it must be of the form
\[ X^a P_\theta, \quad a\in\mathbb Z_2,\; \theta\in[0,2\pi); \qquad P_\theta=\operatorname{diag}(1,e^{i\theta}). \]
Thus, the allowed local changes of coordinates are rotations of the equator, possibly followed by the reflection induced by $X$:
\[ E_\varphi\longmapsto E_{\varphi+\theta} \quad \text{or} \quad E_\varphi\longmapsto E_{-\varphi-\theta}. \]
Equivalently, the only freedom on each local measurement set is to rotate the equatorial circle, and possibly reverse its orientation.

For interpolant states, this freedom is even more restricted. By \cite[Lemma~B.2]{de_Silva_2025}, if $\Bsimple$ and $\Bsimple<\lambda'>$
with $\lambda,\lambda'\neq 0$ are equivalent, then $\lambda=\lambda'$. Moreover, after fixing the interpolant form, the only remaining local unitaries are, up to phase,
\[ P_\theta\otimes P_{-\theta}\otimes I \quad \text{or} \quad XP_\theta\otimes XP_{-\theta}\otimes X. \]

Finally, if a balanced state has at least one zero component in $\bm{\lambda}$, then, after permuting the qubits, we may assume $\lambda_1=0$. In that case, the phase $\Phi$ can be removed by a phase rotation on the first qubit $P_{-\Phi}\otimes I\otimes I.$

We also require all nonlocality paradoxes to satisfy the following minimality condition: a nonlocality paradox $(\ket{\psi}, \M)$ is \emph{minimal} if $(\ket{\psi}, \M')$ is not a paradox for any $\M'$ with strictly fewer measurements than $\M$, i.e.\ $M_i' \subseteq M_i$ for all $i$, with at least one containment strict.

\subsection{Impossible events in three-qubit quantum scenarios}
\label{sec:2.3}

For $\bm{\varphi} = (\varphi_1, \varphi_2, \varphi_3)$, the event $\bm{\varphi} \to \mathbf{0}$ is impossible precisely when the amplitude $\braket{\bm{\varphi}|\BstateNoKet{}{}}$ is $0$. As shown in \cite{Abramsky2017}, this condition holds exactly when the following equation is satisfied:
\begin{equation}\label{eqn:imposs_}
    \sum_{i=1}^3 \beta(\lambda_i, \varphi_i) \equiv \pi - \Phi \mod 2\pi,
\end{equation}
where the function $\beta: \piovertwointerval \times \R/2\pi\Z \rightarrow \R/2\pi\Z$ is defined as follows: 
\begin{equation}\label{eqn:beta}
        \beta(\lambda, \varphi) := \varphi - 2\arctan \left( \frac{\cos{\frac{\lambda}{2}} \sin{\varphi}}{\sin{\frac{\lambda}{2}} + \cos{\frac{\lambda}{2}} \cos{\varphi}} \right).
\end{equation}
We note that the expression for $\beta$ differs slightly from that of \cite[Section~5.3]{Abramsky2017} and is instead due to \cite[Lemma~2.3]{de_Silva_2025}. This function provides a systematic framework for analysing impossible events in tripartite quantum scenarios.
The following two lemmas follow directly.
\begin{lemma}
    Let $(A, B, C)$ be a context in the quantum scenario $(\Bstate, \M)$ and $(a,b,c) \in \Z_2^3$. The event $(A, B, C) \to (a, b, c)$ is impossible if and only if
    \begin{equation}\label{eqn:betaABC}
        \beta(\lambda_1,A+a\pi)+\beta(\lambda_2,B+b\pi)+\beta(\lambda_3,C+c\pi) \equiv \pi - \Phi \mod 2\pi.
    \end{equation}
\end{lemma}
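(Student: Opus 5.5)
The plan is to reduce the statement to the already-established criterion \eqref{eqn:imposs_}, which characterises impossibility of the all-zero event $\bm{\varphi}\to\mathbf{0}$. The only task is to express an event with arbitrary outcomes $(a,b,c)$ as an all-zero event in a modified context.

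The key observation is that the $-1$ eigenstate of an equatorial measurement $E_\varphi$ is the $+1$ eigenstate of $E_{\varphi+\pi}$. Indeed, from the description of eigenstates in Section~\ref{sec:2.1} with $\theta=\pi/2$, the $-1$ eigenstate of $E_{\pi/2,\varphi}$ is $\ket{\pi/2,\varphi+\pi}$. This is exactly the $+1$ eigenstate of $E_{\varphi+\pi}=-E_\varphi$. Hence, for outcome $o\in\Z_2$, the eigenstate associated with the event $E_\varphi\to o$ coincides, up to a global phase, with that of $E_{\varphi+o\pi}\to 0$.

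Next, apply this on each tensor factor. The product eigenstate $\ket{(A,B,C)\to(a,b,c)}$ then equals, up to a global phase, $\ket{(A+a\pi,\,B+b\pi,\,C+c\pi)\to\mathbf{0}}$. Since a global phase does not affect whether the amplitude $\braket{\cdot\,|\,\BstateNoKet{}{}}$ vanishes, the original event is impossible if and only if the event $(A+a\pi,B+b\pi,C+c\pi)\to\mathbf{0}$ is impossible.

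Finally, invoke \eqref{eqn:imposs_} with $\varphi_1=A+a\pi$, $\varphi_2=B+b\pi$, $\varphi_3=C+c\pi$. This yields exactly \eqref{eqn:betaABC}. The equation is well-defined because $\beta$ takes its angular argument in $\R/2\pi\Z$ and the shifts by $\pi$ are taken modulo $2\pi$.

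There is no real obstacle here; the only point needing care is the phase bookkeeping. One should confirm that the eigenvector convention (the $+1$ eigenstate $\ket{\theta,\varphi}$) used in deriving \eqref{eqn:imposs_} matches the one used for shifted angles. At $\theta=\pi/2$ the $+1$ eigenstate of $E_{\varphi+\pi}$ is $\tfrac{1}{\sqrt2}(\ket0+e^{i(\varphi+\pi)}\ket1)$, which is literally equal to the stated $-1$ eigenstate of $E_\varphi$. So not even a phase discrepancy arises.
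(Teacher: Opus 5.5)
Your proposal is correct and matches the paper, which gives no separate proof and simply says the lemma follows directly from \eqref{eqn:imposs_}. Your argument makes that step explicit: the outcome-$1$ eigenstate of $E_\varphi$ is literally the outcome-$0$ eigenstate of $E_{\varphi+\pi}$, so substituting $\varphi_i + o_i\pi$ into \eqref{eqn:imposs_} yields the stated condition.
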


For convenience, in the rest of this paper,\textit{ a congruence is assumed to be modulo} ${2\pi}$ unless otherwise stated.

\begin{figure}[t]
\centering
\resizebox{\textwidth}{!}{%
\begin{tikzpicture}[line cap=round,line join=round,>=stealth]

\def\R{1.55}               
\def\g{1.00}               
\pgfmathsetmacro{\d}{2*(\R+\g)}  

\pgfmathsetmacro{\xA}{-1.5*\d}
\pgfmathsetmacro{\xB}{-0.5*\d}
\pgfmathsetmacro{\xC}{ 0.5*\d}
\pgfmathsetmacro{\xS}{ 1.5*\d}

\pgfmathsetmacro{\xplusAB}{(\xA+\xB)/2}
\pgfmathsetmacro{\xplusBC}{(\xB+\xC)/2}
\pgfmathsetmacro{\xequiv}{(\xC+\xS)/2}

\pgfmathsetmacro{\aA}{36}
\pgfmathsetmacro{\aB}{46}
\pgfmathsetmacro{\aC}{40}
\pgfmathsetmacro{\aAB}{\aA+\aB}
\pgfmathsetmacro{\aABC}{\aA+\aB+\aC}   

\pgfmathsetmacro{\Rarc}{\R-0.20}

\pgfmathsetmacro{\Rlabout}{\R+0.17}     
\pgfmathsetmacro{\Rlabin}{\R-0.40}      
\pgfmathsetmacro{\RlabPhi}{\R-0.42}     

\pgfmathsetmacro{\mA}{\aA/2}
\pgfmathsetmacro{\mB}{\aB/2}
\pgfmathsetmacro{\mC}{\aC/2}
\pgfmathsetmacro{\mAB}{\aA+\aB/2}
\pgfmathsetmacro{\mBC}{\aAB+\aC/2}
\pgfmathsetmacro{\mPhi}{(180+\aABC)/2}

\begin{scope}[shift={(\xA,0)}]
  \draw[thick] (0,0) circle (\R);
  \draw[dashed] (-\R,0)--(\R,0);
  \fill (0,0) circle (1.2pt);

  \draw[thick] (0,0)--({\R*cos(\aA)},{\R*sin(\aA)});
  \draw[->,thick,green!60!black]
    (\Rarc,0) arc[start angle=0,end angle=\aA,radius=\Rarc];

  \node[green!60!black,anchor=west]
    at ({\Rlabout*cos(\aA)},{\Rlabout*sin(\aA)})
    {$\beta_A$};
\end{scope}

\node at (\xplusAB,0) {\Large $+$};

\begin{scope}[shift={(\xB,0)}]
  \draw[thick] (0,0) circle (\R);
  \draw[dashed] (-\R,0)--(\R,0);
  \fill (0,0) circle (1.2pt);

  \draw[thick] (0,0)--({\R*cos(\aB)},{\R*sin(\aB)});
  \draw[->,thick,blue!70!black]
    (\Rarc,0) arc[start angle=0,end angle=\aB,radius=\Rarc];

  \node[blue!70!black,anchor=west]
    at ({\Rlabout*cos(\aB)},{\Rlabout*sin(\aB)})
    {$\beta_B$};
\end{scope}

\node at (\xplusBC,0) {\Large $+$};

\begin{scope}[shift={(\xC,0)}]
  \draw[thick] (0,0) circle (\R);
  \draw[dashed] (-\R,0)--(\R,0);
  \fill (0,0) circle (1.2pt);

  \draw[thick] (0,0)--({\R*cos(\aC)},{\R*sin(\aC)});
  \draw[->,thick,red!75!black]
    (\Rarc,0) arc[start angle=0,end angle=\aC,radius=\Rarc];

  \node[red!75!black,anchor=west]
    at ({\Rlabout*cos(\aC)},{\Rlabout*sin(\aC)})
    {$\beta_C$};
\end{scope}

\node at (\xequiv,0) {\Large $\equiv$};

\begin{scope}[shift={(\xS,0)}]
  \draw[thick] (0,0) circle (\R);
  \draw[dashed] (-\R,0)--(\R,0);
  \fill (0,0) circle (1.3pt);

  \draw[thick] (0,0)--({\R*cos(\aA)},{\R*sin(\aA)});
  \draw[thick] (0,0)--({\R*cos(\aAB)},{\R*sin(\aAB)});
  \draw[thick] (0,0)--({\R*cos(\aABC)},{\R*sin(\aABC)});

  \draw[->,thick,green!60!black]
    (\Rarc,0) arc[start angle=0,end angle=\aA,radius=\Rarc];

  \draw[->,thick,blue!70!black]
    ({\Rarc*cos(\aA)},{\Rarc*sin(\aA)})
    arc[start angle=\aA,end angle=\aAB,radius=\Rarc];

  \draw[->,thick,red!75!black]
    ({\Rarc*cos(\aAB)},{\Rarc*sin(\aAB)})
    arc[start angle=\aAB,end angle=\aABC,radius=\Rarc];

  \draw[->,thick]
    ({\Rarc*cos(180)},{\Rarc*sin(180)})
    arc[start angle=180,end angle=\aABC,radius=\Rarc];

  \node[green!60!black]
    at ({\Rlabin*cos(\mA)},{\Rlabin*sin(\mA)})
    {$\beta_A$};

  \node[blue!70!black]
    at ({\Rlabin*cos(\mAB)},{\Rlabin*sin(\mAB)})
    {$\beta_B$};

  \node[red!75!black]
    at ({\Rlabin*cos(\mBC)},{\Rlabin*sin(\mBC)})
    {$\beta_C$};

  \node
    at ({\RlabPhi*cos(\mPhi)},{\RlabPhi*sin(\mPhi)})
    {$\Phi$};

  \fill ({\R*cos(\aABC)},{\R*sin(\aABC)}) circle (1.8pt);
  \node[anchor=south east]
    at ({(\R+0.08)*cos(\aABC)},{(\R+0.08)*sin(\aABC)})
    {$\pi-\Phi$};

  \node at (0,-2.15)
    {$\textcolor{green!60!black}{\beta_A}
      +\textcolor{blue!70!black}{\beta_B}
      +\textcolor{red!75!black}{\beta_C}
      \equiv \pi-\Phi$};
\end{scope}

\end{tikzpicture}%
}
\caption{Schematic form of the \(\beta\)-equation. Here
\(\beta_A:=\beta(\lambda_1,A+a\pi)\), \(\beta_B:=\beta(\lambda_2,B+b\pi)\), and
\(\beta_C:=\beta(\lambda_3,C+c\pi)\). The event \((A,B,C)\to(a,b,c)\) is
impossible exactly when the three angle contributions sum to the target
direction \(\pi-\Phi\).}
\label{fig:beta-equation}
\end{figure}

\begin{lemma}[{\cite[Lemma~2.4]{de_Silva_2025}}]\label{lem:beta-facts}
    The following properties of $\beta$ can be easily verified. 
    \begin{enumerate}
        \item Modulo $2\pi$, for all fixed $\lambda \in \piovertwointerval$, $\beta(\lambda, \varphi) \in [0, 2\pi)$ is strictly decreasing as a function of $\varphi$ on $(0, 2\pi)$ and is thus bijective on $[0, 2\pi)$.
        \item $\beta(0, \varphi) \equiv -\varphi $. 
        \item For all $\lambda \in \piovertwointerval$, $\beta(\lambda, \varphi) \equiv 0$ if and only if $\varphi \equiv 0 $, and $\beta(\lambda, \varphi) \equiv \pi $ if and only if $\varphi \equiv \pi $.
        \item $\beta(\lambda, \piovertwo) \equiv \lambda - \piovertwo $.
    \end{enumerate}
\end{lemma}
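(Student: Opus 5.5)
The plan is to replace the real formula \eqref{eqn:beta} by a complex one in which all four properties are transparent. Write $s:=\sin\frac{\lambda}{2}$ and $c:=\cos\frac{\lambda}{2}$, and put $u:=e^{i\varphi}$. Since $\lambda\in\piovertwointerval$, we have $0\le s<c$.

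The fraction inside the arctangent is $\operatorname{Im} z/\operatorname{Re} z$ for
\[
z:=s+c\,u=(s+c\cos\varphi)+i\,c\sin\varphi .
\]
Note that $z\neq 0$, because $|c\,u|=c>s$.

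\textbf{Step 1: removing the arctangent.} The first step, and the main obstacle, is to justify
\[
2\arctan\frac{\operatorname{Im} z}{\operatorname{Re} z}\equiv 2\arg z \pmod{2\pi}.
\]
This is a branch issue. When $\operatorname{Re} z>0$ the two sides agree exactly. When $\operatorname{Re} z<0$, the arctangent differs from $\arg z$ by $\pm\pi$, which becomes $\pm 2\pi\equiv 0$ after doubling. When $\operatorname{Re} z=0$, the formula is read as its limit $2\arctan(\pm\infty)=\pm\pi\equiv 2\arg z$. So modulo $2\pi$ the doubled arctangent is the continuous, well-defined quantity $2\arg z$. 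I will state this convention explicitly, since it is what makes $\beta$ a genuine continuous map $\R/2\pi\Z\to\R/2\pi\Z$.

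\textbf{Step 2: a Möbius form for $\beta$.} Using $2\arg z=\arg(z/\bar z)$, we get
\[
\beta(\lambda,\varphi)\equiv \arg\frac{u\,\bar z}{z}=\arg w(u),\qquad w(u):=\frac{c+s\,u}{s+c\,u}.
\]
Here we used $u\bar z=s u+c$ because $|u|=1$. The map $w$ is a Möbius transformation. Its coefficients are real, and $|c+su|=|s+cu|$ on $|u|=1$ (both squared moduli equal $s^2+c^2+2sc\cos\varphi$), so $w$ preserves the unit circle.

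\textbf{Step 3: property 1.} We have $w(0)=c/s$, which lies outside the closed unit disk (it is $\infty$ if $s=0$). The pole $-s/c$ lies inside the disk. Hence $w$ maps the unit disk onto the exterior of the unit disk. A Möbius map sending the disk to the complementary region reverses the orientation of the boundary circle. So $\varphi\mapsto\arg w(e^{i\varphi})$ is a strictly decreasing bijection of $\R/2\pi\Z$. Alternatively, one can differentiate: $\frac{d}{d\varphi}\arg w(e^{i\varphi})=-(c^2-s^2)/|s+cu|^2<0$.

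\textbf{Step 4: properties 2–4.} These are direct evaluations of $w$.
\begin{itemize}
\item For $\lambda=0$ we have $s=0$ and $c=1$, so $w(u)=1/u$ and $\beta\equiv-\varphi$.
\item At $\varphi=0$, $w(1)=1$, so $\beta\equiv 0$. At $\varphi=\pi$, $w(-1)=(c-s)/(s-c)=-1$, so $\beta\equiv\pi$. The converse implications follow from the bijectivity established in Step 3.
\item At $\varphi=\piovertwo$,
\[
w(i)=\frac{c+is}{s+ic}=-i\,\frac{c+is}{c-is}=e^{-i\pi/2}e^{i\lambda},
\]
since $c+is=e^{i\lambda/2}$. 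Hence $\beta\equiv\lambda-\piovertwo$.
\end{itemize}
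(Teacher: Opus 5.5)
Your proof is correct and follows essentially the same route the paper relies on. The paper gives no proof of this lemma; it quotes it from earlier work as ``easily verified''. Wherever the paper actually manipulates $\beta$, however, it uses your Möbius form $e^{i\beta(\lambda,\varphi)}=\frac{\cos\frac{\lambda}{2}+\sin\frac{\lambda}{2}e^{i\varphi}}{\sin\frac{\lambda}{2}+\cos\frac{\lambda}{2}e^{i\varphi}}$; see \eqref{eqn:e-i-beta} and \eqref{eq:circle-tick}. There it also notes that numerator and denominator have equal modulus and that the map is a fractional-linear automorphism of the circle.

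Your Step~1 branch convention is a worthwhile addition, since \eqref{eqn:beta} is literally undefined where $\sin\frac{\lambda}{2}+\cos\frac{\lambda}{2}\cos\varphi=0$. For the literal form of item~1, that the $[0,2\pi)$-representative is decreasing on $(0,2\pi)$, you should state explicitly that it follows from your orientation-reversal (degree $-1$) conclusion together with $\beta(\lambda,0)\equiv 0$ from Step~4.
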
  

To better analyse \eqref{eqn:betaABC}, the auxiliary function 
\begin{equation*}
    \delta(\lambda, \varphi) := \beta(\lambda, \varphi + \pi) - \beta(\lambda, \varphi) \equiv \pi - 2\arctan(\sin\varphi\tan\lambda), 
\end{equation*} 
was introduced in \cite{de_Silva_2025}, which captures the change in $\beta$ under a flip of the measurement outcome. The properties of $\delta$ determine which events in a given context are impossible and were analysed in \cite{de_Silva_2025}. 

\begin{lemma}\label{lem:delta-facts}
Let $\lambda\in\piovertwointerval$ and $\varphi\in[0,\pi)$. Then $\delta(\lambda,\varphi)\in(0,\pi]$, with $\delta(\lambda,\varphi)\equiv\pi$ if and only if $\lambda=0$ or $\varphi=0$, and
\[
\delta(\lambda_1,\varphi_1)\pm\delta(\lambda_2,\varphi_2)\equiv 0
\iff
\sin\varphi_1\tan\lambda_1=\mp\sin\varphi_2\tan\lambda_2.
\]
In particular, $\delta(\lambda_1,\varphi_1)+\delta(\lambda_2,\varphi_2)\equiv 0$ if and only if $\delta(\lambda_1,\varphi_1)=\delta(\lambda_2,\varphi_2)\equiv\pi$.
\end{lemma}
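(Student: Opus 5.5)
The plan is to work directly from the closed form $\delta(\lambda,\varphi)\equiv\pi-2\arctan(\sin\varphi\tan\lambda)$ and reduce every claim to elementary facts about $\arctan$.

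\textbf{Range and the equality case.} First I would record that for $\lambda\in\piovertwointerval$ the quantity $\tan\lambda$ is finite and nonnegative, and for $\varphi\in[0,\pi)$ we have $\sin\varphi\ge 0$. Hence $t:=\sin\varphi\tan\lambda\in[0,\infty)$, so $2\arctan t\in[0,\pi)$ and $\delta=\pi-2\arctan t\in(0,\pi]$, taken as a genuine real representative. Equality $\delta=\pi$ holds exactly when $t=0$. Since $\sin\varphi=0$ on $[0,\pi)$ only at $\varphi=0$, and $\tan\lambda=0$ on $[0,\piovertwo)$ only at $\lambda=0$, this gives the stated characterisation.

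\textbf{The congruence $\delta_1\pm\delta_2\equiv 0$.} Write $t_j=\sin\varphi_j\tan\lambda_j$. The condition becomes $2\pi\mp\ldots$, that is,
\[ (\pi-2\arctan t_1)\pm(\pi-2\arctan t_2)\equiv 0 \pmod{2\pi}. \]
\begin{itemize}
\item \emph{Minus sign.} The condition reduces to $2(\arctan t_2-\arctan t_1)\equiv 0$. Because both arctangents lie in $(-\piovertwo,\piovertwo)$, their difference lies in $(-\pi,\pi)$, so twice it lies in $(-2\pi,2\pi)$. Thus the congruence forces $\arctan t_1=\arctan t_2$, hence $t_1=t_2$ by injectivity of $\arctan$. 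Conversely, $t_1=t_2$ clearly gives the congruence. This is the case $\sin\varphi_1\tan\lambda_1=+\sin\varphi_2\tan\lambda_2$, matching the sign convention $\mp$.
\item \emph{Plus sign.} The condition reads $2\pi-2(\arctan t_1+\arctan t_2)\equiv 0$, i.e.\ $2(\arctan t_1+\arctan t_2)\equiv 0$. The same range argument forces $\arctan t_1=-\arctan t_2$. Since $\arctan$ is odd and injective, this is $t_1=-t_2$, giving $\sin\varphi_1\tan\lambda_1=-\sin\varphi_2\tan\lambda_2$.
\end{itemize}
Both cases are stated for general real arguments, so this part needs only $\lambda_j\in\piovertwointerval$ (so that $\tan\lambda_j$ is finite).

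\textbf{The ``in particular'' clause.} Under our hypotheses $t_1,t_2\ge 0$, so $t_1=-t_2$ forces $t_1=t_2=0$. By the first part, this is equivalent to $\delta_1=\delta_2=\pi$. Conversely, if $\delta_1=\delta_2=\pi$ then $\delta_1+\delta_2=2\pi\equiv 0$.

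\textbf{Main obstacle.} The only delicate step is bookkeeping: keeping real representatives distinct from classes mod $2\pi$ so that the range argument excluding extra solutions is valid. This relies on the identity $\delta\equiv\pi-2\arctan(\ldots)$, which is taken as given from the definition above.
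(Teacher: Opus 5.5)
Your argument is correct: reducing everything to $t_j=\sin\varphi_j\tan\lambda_j\ge 0$ and using $2(\arctan t_1\pm\arctan t_2)\in(-2\pi,2\pi)$ settles the range, the equality case, both sign cases of the congruence, and the ``in particular'' clause. The paper gives no proof of its own here, since it imports these facts about $\delta$ from \cite{de_Silva_2025}. The closed form $\delta\equiv\pi-2\arctan(\sin\varphi\tan\lambda)$ that you rely on is indeed right, because the definition of $\beta$ gives $e^{i\delta}=-\frac{\cos\lambda-i\sin\lambda\sin\varphi}{\cos\lambda+i\sin\lambda\sin\varphi}$. The stray ``$2\pi\mp\ldots$'' in your congruence paragraph is just a typo.
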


\begin{remark}
    Since $\delta(\lambda, \varphi) \not\equiv 0$ for all $\lambda$ and $\varphi$, impossible events in a given context must differ by at least $2$. Thus, the number of impossible events in any context is at most $4$. 
\end{remark}

\begin{figure}[t]
\centering
\begin{tikzpicture}[scale=0.95,line cap=round,line join=round,>=latex]
\def\R{1.46}
\def\Rdelta{1.08} 
\pgfmathsetmacro{\nBetaSeven}{360-atan(sqrt(2)/4)}
\pgfmathsetmacro{\midDeltaRight}{(90+(\nBetaSeven-360))/2}

\begin{scope}[shift={(-3.55,0)}]
  \draw[thick] (0,0) circle (\R);
  \draw[dashed] (-\R,0)--(\R,0);
  \fill (0,0) circle (1.1pt);

  \draw[thick,green!70!black] (0,0)--(0,\R);
  \draw[thick,red!75!black]  (0,0)--(0,-\R);

  \node[green!70!black,anchor=south] at (0,\R) {$\frac{\pi}{2}$};
  \node[red!75!black,anchor=north] at (0,-\R) {$\frac{3\pi}{2}$};

  \draw[very thick,green!70!black,->] (0,0)--({\R*cos(45)},{\R*sin(45)});
  \draw[very thick,red!75!black,->]  (0,0)--({\R*cos(315)},{\R*sin(315)});

  \draw[->,green!70!black]
    ({0.50*cos(90)},{0.50*sin(90)})
    arc[start angle=90,end angle=45,radius=0.50];

  \draw[->,red!75!black]
    ({0.72*cos(270)},{0.72*sin(270)})
    arc[start angle=270,end angle=315,radius=0.72];

  \node[green!70!black,anchor=west]
    at ({\R*cos(45)-0.10},{\R*sin(45)+0.24})
    {$-\beta(\frac{\pi}{4},\frac{\pi}{2})$};

  \node[red!75!black,anchor=west]
    at ({\R*cos(315)-0.10},{\R*sin(315)-0.26})
    {$-\beta(\frac{\pi}{4},\frac{3\pi}{2})$};

  \draw[->,thick,blue!70!black]
    ({\Rdelta*cos(45)},{\Rdelta*sin(45)})
    arc[start angle=45,end angle=-45,radius=\Rdelta];

  \node[blue!70!black,anchor=west]
    at ({(\R+0.18)*cos(0)},{(\R+0.18)*sin(0)+0.02})
    {$-\delta(\frac{\pi}{4},\frac{\pi}{2})$};
\end{scope}

\begin{scope}[shift={(3.55,0)}]
  \draw[thick] (0,0) circle (\R);
  \draw[dashed] (-\R,0)--(\R,0);
  \fill (0,0) circle (1.1pt);

  \draw[thick,green!70!black] (0,0)--({\R*cos(135)},{\R*sin(135)});
  \draw[thick,red!75!black]  (0,0)--({\R*cos(315)},{\R*sin(315)});

  \node[green!70!black,anchor=south east]
    at ({\R*cos(135)+0.02},{\R*sin(135)-0.08})
    {$\frac{3\pi}{4}$};

  \node[red!75!black,anchor=north west]
    at ({\R*cos(315)-0.02},{\R*sin(315)+0.08})
    {$\frac{7\pi}{4}$};

  \draw[very thick,green!70!black,->] (0,0)--({\R*cos(90)},{\R*sin(90)});
  \draw[very thick,red!75!black,->]  (0,0)--({\R*cos(\nBetaSeven)},{\R*sin(\nBetaSeven)});

  \draw[->,green!70!black]
    ({0.60*cos(135)},{0.60*sin(135)})
    arc[start angle=135,end angle=90,radius=0.60];

  \draw[->,red!75!black]
    ({0.44*cos(315)},{0.44*sin(315)})
    arc[start angle=315,end angle=\nBetaSeven,radius=0.44];

  \node[green!70!black,anchor=west]
    at (-1.18,{\R+0.48})
    {$-\beta(\frac{\pi}{4},\frac{3\pi}{4})$};

  \node[red!75!black,anchor=west]
    at ({\R*cos(\nBetaSeven)+0.12},{\R*sin(\nBetaSeven)-0.08})
    {$-\beta(\frac{\pi}{4},\frac{7\pi}{4})$};

  \draw[->,thick,blue!70!black]
    ({\Rdelta*cos(90)},{\Rdelta*sin(90)})
    arc[start angle=90,end angle={\nBetaSeven-360},radius=\Rdelta];

  \node[blue!70!black,anchor=west]
    at ({(\R+0.20)*cos(\midDeltaRight)},{(\R+0.20)*sin(\midDeltaRight)})
    {$-\delta(\frac{\pi}{4},\frac{3\pi}{4})$};
\end{scope}

\end{tikzpicture}
\caption{Two examples illustrating the angles $-\beta(\lambda,\phi)$ and the corresponding separation
$-\delta(\lambda,\phi)$ between two outcomes' $\beta$-contributions. The green and red diameters indicate the two
outcomes ($0$ and $1$ respectively) of the measurement at angle $\phi \in [0,\pi)$.  The left circle illustrates $\lambda = \pi/4,\, \phi = \pi/2$; the right circle illustrates $\lambda = \pi/4,\, \phi = 3\pi/4$.   We see that the effect of $-\beta$ is to ``pull'' its input angle to the right.  The higher $\lambda$ is, the stronger this pulling action is.  Moreover, as we see in the right circle, the closer an angle is to the leftmost point of a circle, the more it gets pulled by $-\beta$.}
\label{fig:beta-delta-examples}
\end{figure}

\subsection{Biconditional parity proofs}
\label{subsec:bpp}

\begin{definition}
\label{def:bpp}
Let $(\ket{\psi},\M)$ be a three-qubit quantum scenario.  We say that it
admits a \emph{biconditional parity proof} if, after possibly permuting the
parties, $M_3=\{C_0,C_1\}$ and, for each Charlie conditioning $(C_l,z)$, there
are data
\[
    F_{l,z}\subseteq M_1\times M_2,
    \qquad
    p_{l,z}:F_{l,z}\to\Z_2,
\]
such that:
\begin{enumerate}
    \item if $(A,B)\notin F_{l,z}$, then no event
    $(A,B,C_l)\to(a,b,z)$ is impossible;
    \item if $(A,B)\in F_{l,z}$, then
    \[
        (A,B,C_l)\to(a,b,z)\text{ is possible}
        \quad\Longleftrightarrow\quad
        a\oplus b=p_{l,z}(A,B);
    \]
    \item for every total Charlie assignment $\zz=(z_0,z_1)\in\Z_2^2$, the
    system
    \[
        g(A)\oplus g(B)=p_{l,z_l}(A,B),
        \qquad
        l\in\Z_2,\ (A,B)\in F_{l,z_l},
    \]
    has no solution $g:M_1\sqcup M_2\to\Z_2$.
\end{enumerate}
We write $\mathsf{BPP}$ for the set of minimal paradoxes admitting such a proof,
modulo the equivalence of Section~\ref{sec:equivalence}.
\end{definition}

\begin{proposition}
\label{prop:bpp-interpolant-two-charlie}
A minimal three-qubit paradox admits a biconditional parity proof if and only if it is equivalent to an interpolant-state paradox with two Charlie measurements, that is, to a scenario of the form
\[
    (\Bsimple,\M),
    \qquad |M_3|=2.
\]
\end{proposition}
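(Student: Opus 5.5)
The plan is to translate the biconditional parity conditions, via the $\beta$-equation \eqref{eqn:betaABC} and Lemma~\ref{lem:delta-facts}, into the vanishing of $\lambda_1$ and $\lambda_2$. We may take the state to be balanced, $\Bstate$, by the theorem of Abramsky et al.\ quoted above, with equatorial measurements $M_i\subset[0,\pi)$ as in Remark~\ref{rem:measurement}. For the forward direction, fix the permutation supplied by Definition~\ref{def:bpp}, so that $M_3=\{C_0,C_1\}$.

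\textbf{Step 1: parity pairs force $\delta=\pi$.} Take a conditioning $(C_l,z)$ and a pair $(A,B)\in F_{l,z}$. Condition (2) says the two events $(a,b)$ and $(a\oplus1,b\oplus1)$ with $a\oplus b\neq p_{l,z}(A,B)$ are both impossible. Subtracting their two $\beta$-equations \eqref{eqn:betaABC} gives $\delta(\lambda_1,A)+\delta(\lambda_2,B)\equiv0$, up to the signs coming from which of $a,b$ is $0$; since $\delta(\lambda,\varphi+\pi)\equiv-\delta(\lambda,\varphi)$, all cases reduce to this one. By Lemma~\ref{lem:delta-facts} this forces $\delta(\lambda_1,A)=\delta(\lambda_2,B)\equiv\pi$. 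Hence $\lambda_1=0$ or $A=0$, and likewise $\lambda_2=0$ or $B=0$.

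\textbf{Step 2: $\lambda_1=\lambda_2=0$.} Suppose $\lambda_1\neq0$. Then only the Alice measurement $A=0$ can occur in any $F_{l,z}$. By condition (1), every other Alice measurement takes part in no impossible event with $C_0$ or $C_1$. So deleting those measurements preserves strong nonlocality, and by minimality $M_1=\{0\}$. But a scenario in which one party has a single setting is local: absorb Alice's fixed measurement into the state, and the conditional Bob--Charlie correlations form a two-party model, which by \cite{Brassard_2005} is not strongly nonlocal. This is the step I expect to need the most care. The point is to check that a consistent global assignment for the reduced two-party model (one that respects Remark~\ref{rem:measurement}) lifts to the three-party scenario. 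Concretely, choose $g(A)$ to be an outcome of Alice's single measurement with nonzero marginal probability, and apply the two-qubit result to Bob and Charlie conditioned on that outcome. The same argument gives $\lambda_2=0$.

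\textbf{Step 3: reduce to an interpolant state.} With $\lambda_1=0$, the phase $\Phi$ is removed by $P_{-\Phi}\otimes I\otimes I$, as in Section~\ref{sec:equivalence}. The state is then $\ket{\BB((0,0,\lambda_3),0)}=\Bsimple<\lambda_3>$, an interpolant state, and $|M_3|=2$ by construction.

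\textbf{Converse.} Suppose $(\Bsimple,\M)$ is a minimal paradox with $|M_3|=2$. By Lemma~\ref{lem:beta-facts}(2) we have $\beta(0,\varphi)\equiv-\varphi$, so \eqref{eqn:betaABC} becomes
\[
-(A+B)-(a+b)\pi+\beta(\lambda,C_l+z\pi)\equiv\pi .
\]
For fixed $(C_l,z)$ and $(A,B)$, this has a solution $(a,b)$ only if $A+B\equiv\beta(\lambda,C_l+z\pi)-\pi\pmod\pi$. In that case it is solved exactly by the pairs with a prescribed value of $a\oplus b$.

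Define $F_{l,z}$ to be the set of $(A,B)$ satisfying this congruence, and set $p_{l,z}(A,B)$ equal to the complementary parity. Conditions (1) and (2) then hold by construction. For condition (3), suppose some $\zz$ admits a solution $g$ of the system. Extend $g$ by $g(C_l)=z_l$. Every context then avoids all impossible events: for pairs in $F_{l,z_l}$ this is the parity equation, and for pairs outside it there are no impossible events at all. So $g$ is a consistent global assignment, contradicting strong nonlocality.

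Finally, the local unitaries $X^aP_\theta$ only shift or reflect angles, so they preserve this parity structure. This makes the property of admitting a biconditional parity proof invariant under the equivalence of Section~\ref{sec:equivalence}, which completes the argument.
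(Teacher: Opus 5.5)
Your converse matches the paper's, and Steps~2--3 would go through, but Step~1 is wrong, and Step~2 depends on it. Subtracting the $\beta$-equations of the complementary impossible events $(a,b)$ and $(a\oplus1,b\oplus1)$ gives
\[
(-1)^a\,\delta(\lambda_1,A)+(-1)^b\,\delta(\lambda_2,B)\equiv 0 .
\]

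With $A,B\in[0,\pi)$ this reduces to $\delta(\lambda_1,A)+\delta(\lambda_2,B)\equiv0$ only when the impossible pair has even parity. When it has odd parity (so $p_{l,z}(A,B)=0$), it reads $\delta(\lambda_1,A)\equiv\delta(\lambda_2,B)$. The identity $\delta(\lambda,\varphi+\pi)\equiv-\delta(\lambda,\varphi)$ is what produces this sign, not what removes it. Lemma~\ref{lem:delta-facts} treats the two signs differently: the minus sign only yields $\sin A\tan\lambda_1=\sin B\tan\lambda_2$.

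This really occurs with $\lambda_1,\lambda_2>0$. Take $\lambda_1=\lambda_2=\alpha>0$, $\Phi=\pi$, and the conditioning $(C,z)=(0,0)$, whose Charlie contribution is $\beta(\lambda_3,0)=0$. Since $\beta(\alpha,-\varphi)\equiv-\beta(\alpha,\varphi)$, for every $A\in(0,\pi)$ and $B=\pi-A$:
\begin{itemize}
\item the events $(A,B,C)\to(0,1,0)$ and $(A,B,C)\to(1,0,0)$ are both impossible;
\item the events $(0,0,0)$ and $(1,1,0)$ are possible;
\item $\delta(\alpha,A)\in(0,\pi)$.
\end{itemize}
This is exactly what happens under $(C_0,0)$ in the non-interpolant example of Section~\ref{sec:counterexample}, where the target product is $1$. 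So ``$\lambda_1=0$ or $A=0$'' does not follow. Your pointwise argument is valid only once one of $\lambda_1,\lambda_2$ is already known to vanish, and the case $\lambda_1,\lambda_2>0$ is left open.

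The missing ingredient is a global argument that uses all four Charlie conditionings together with condition~(3). The paper does not get $\lambda_A=\lambda_B=0$ from Lemma~\ref{lem:delta-facts} alone either; it relies on the analysis in \cite[Lemma~4.14 and Appendix~B.2]{de_Silva_2025}.

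A self-contained repair works in the tick picture of Section~\ref{sec:counterexample}:
\begin{enumerate}
\item At target product $\gamma$, the pair through an Alice tick $\zeta$ is biconditional iff $F_{\lambda_2}(\gamma/\zeta)=\gamma/F_{\lambda_1}(\zeta)$.
\item After clearing denominators, this is an identity in $\zeta$ only if $\lambda_1=\lambda_2$ and either $\lambda_1=0$ or $\gamma=1$.
\item Otherwise it is a genuine quadratic whose root set is closed under the fixed-point-free involution $F_{\lambda_1}$. So it consists of at most the two literals of one Alice measurement, and that conditioning has at most one active measurement pair. When $\lambda_1=\lambda_2>0$ this pair is $(m,m)$ with $p=0$.
\item By injectivity of $T_{\lambda_3}$, at most one of Charlie's four literals gives $\gamma=1$.
\end{enumerate}

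Now split into cases:
\begin{itemize}
\item If $\lambda_1=\lambda_2>0$, a total assignment avoiding that conditioning imposes only constraints $g(A)=g(B)$, and $g\equiv0$ solves it.
\item If $\lambda_1\neq\lambda_2$ are both positive, every conditioning has at most one active pair. A system with at most two edges is inconsistent only if both edges join the same pair with opposite parities. So inconsistency for all four $\zz$ forces a single pair $(A^*,B^*)$ to have parity $p$ under both outcomes of $C_0$ and parity $1\oplus p$ under both outcomes of $C_1$, which contradicts no-signalling.
\end{itemize}
Either way you reach a contradiction.
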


\begin{proof}
Suppose first that \((\ket{\psi},\M)\) admits a biconditional parity proof.  By the
three-qubit reduction theorem recalled in Section~\ref{sec:2.2}, we may pass to an
equivalent balanced equatorial representative.  After the party permutation allowed in
Definition~\ref{def:bpp}, write Charlie for the conditioning party, so
\(M_3=\{C_0,C_1\}\), and write the balanced-state parameters as
\((\lambda_A,\lambda_B,\lambda_C)\).

We use the following consequence of \cite[Lemma~4.14 and Appendix~B.2]{de_Silva_2025}:
if a balanced equatorial scenario supports a biconditional parity pattern with one party
distinguished as the conditioning party, then the two balanced-state parameters on the
remaining parties vanish.  Indeed, each active biconditional constraint gives, for fixed
\((C_l,z)\), a pair of complementary impossible events
\[
    (A,B,C_l)\to(a,b,z),
    \qquad
    (A,B,C_l)\to(a\oplus1,b\oplus1,z).
\]
Subtracting their \(\beta\)-equations gives
\[
    \delta(\lambda_A,A)+\delta(\lambda_B,B)\equiv0.
\]
The analysis in \cite[Appendix~B.2]{de_Silva_2025}, using the characterisation of when
\(\delta(\lambda,\varphi)\equiv\pi\), shows that the biconditional pattern forces
\(\lambda_A=\lambda_B=0\).  Thus the state is equivalent to one of the form
\(\Bstate<(0,0,\lambda)>[\Phi]\).  Since a zero-parameter qubit allows the residual
phase \(\Phi\) to be removed by a local phase rotation, the state is equivalent to
\(\Bsimple\).  Hence the scenario is equivalent to one of the form
\[
    (\Bsimple,\M),
    \qquad |M_3|=2.
\]

Conversely, suppose \((\Bsimple,\M)\) is a paradox with \(M_3=\{C_0,C_1\}\).  Put
\[
    T_{l,z}:=\beta(\lambda,C_l+z\pi).
\]
Since \(\beta(0,\varphi)\equiv-\varphi\), the impossibility condition becomes
\[
    (A,B,C_l)\to(a,b,z)\text{ is impossible}
    \quad\Longleftrightarrow\quad
    A+B\equiv T_{l,z}+(1\oplus a\oplus b)\pi .
\]
Thus, for fixed \(A,B,l,z\), either \(A+B\not\equiv T_{l,z}\mod\pi\), in which case there
is no Alice--Bob constraint, or there is a unique \(\epsilon_{l,z}(A,B)\in\Z_2\) such that
\[
    A+B\equiv T_{l,z}+\epsilon_{l,z}(A,B)\pi .
\]
In the latter case the impossible outcomes are exactly those with
\(1\oplus a\oplus b=\epsilon_{l,z}(A,B)\), so the possible outcomes are exactly the single
parity class
\[
    a\oplus b=1\oplus\epsilon_{l,z}(A,B).
\]
Define
\[
    F_{l,z}:=\{(A,B)\in M_1\times M_2\mid A+B\equiv T_{l,z}\mod\pi\},
    \qquad
    p_{l,z}(A,B):=1\oplus\epsilon_{l,z}(A,B).
\]
Then pairs outside \(F_{l,z}\) impose no constraint, while pairs inside \(F_{l,z}\) satisfy
\[
    (A,B,C_l)\to(a,b,z)\text{ is possible}
    \quad\Longleftrightarrow\quad
    a\oplus b=p_{l,z}(A,B).
\]
Since \((\Bsimple,\M)\) is a paradox, for every total Charlie assignment
\(\zz=(z_0,z_1)\) the resulting Alice--Bob parity system has no global solution.  Hence
the scenario admits a biconditional parity proof.
\end{proof}

\section{A graph-theoretic formalism for paradoxes}
\label{sec:3}
\label{sec:graph}

A graph-theoretic formalism encodes the logical constraints of nonlocality paradoxes in a precise combinatorial structure, making their properties accessible to rigorous analysis using standard tools from graph theory. 
In this section and those that follow, we draw on classical results from graph theory \cite{Diestel2017} and \textsf{2-SAT} \cite{Papadimitriou1994, Biere2021}. In particular, we rely on the correspondence between $2$-CNF formulae and implication graphs \cite{Aspvall1979ALA}. For completeness and clarity, we restate these results in a unified graph-theoretic language tailored to nonlocality paradoxes.

\subsection{Logical structure of nonlocality paradoxes}
\label{sec:graph.logical}

We review the logical structure of the three-qubit interpolant-state paradoxes developed in \cite{de_Silva_2025} and extend it to general balanced-state scenarios. The GHZ state is itself an interpolant state (with $\bm\lambda=\mathbf 0$), and the corresponding paradox is characterised by a single unsatisfiable system of $\Z_2$-linear equations. By contrast, the family of paradoxes in \cite{Abramsky2017, de_Silva_2025} involves multiple such systems, conditioned on the outcome of Charlie’s measurements.

We fix the quantum scenario $(\Bstate,\M)$. By \eqref{eqn:imposs_}, an event $(A,B,C)\to(a,b,c)$ is impossible if and only if 
\begin{equation} \label{eqn:imposs_abc}
    \beta(\lambda_1, A) + a\,\delta(\lambda_1,A) + \beta(\lambda_2, B) + b\,\delta(\lambda_2 , B) \equiv \pi - T_{C,c} -\Phi ,
\end{equation} 
where
\begin{equation}\label{eqn:charlie-tick-def}
    T_{C,c} := \beta(\lambda_3, C) + c\, \delta(\lambda_3 , C) \in \R/2\pi\Z,
\end{equation}
which we refer to as a Charlie \textit{tick value}.

We refer to the pair $(C,z)$, consisting of a measurement $C\in M_3$ and an outcome $c=z$, as a \emph{Charlie conditioning}. We define
\begin{equation*}
    E_{C,z} := \bigl\{ ((A,a),(B,b)) \mid (A,B,C)\to(a,b,z) \text{ is impossible} \bigr\},
\end{equation*}
the set of outcome-labelled measurement pairs on Alice and Bob that are impossible with $(C,z)$. We will refer to the atomic propositions $(A,a)$, $(B,b)$ as \emph{literals}, defining $\neg(A,a)$ as $(A,a\oplus 1)$.

Now write $M_3 = \{C_0,\dots,C_{n-1}\}$ and fix a bit string $\mathbf{z} = (z_0,\ldots,z_{n-1}) \in \Z_2^n$. We define the set
\begin{equation*}
    E_{\zz} := \bigcup_{l=0}^{n-1} E_{C_l,z_l},
\end{equation*}
which collects all impossible Alice--Bob literal pairs with the chosen Charlie outcomes $\zz$. We refer to $\zz$ as a \emph{total Charlie assignment}.  Each element $((A,a),(B,b))\in E_{\zz}$ imposes the constraint that Alice and Bob cannot output $a$ on $A$ and $b$ on $B$, i.e.\ it gives rise to the clause
\begin{equation}\label{eq:clause}
    \neg((A, a)\ \land\ (B, b)) \iff (A, a\oplus 1) \vee (B, b\oplus 1).
\end{equation} 
Let $\Omega_l(z_l)$ denote the conjunction of all clauses arising from $E_{C_l,z_l}$, and define
\begin{equation}\label{eq:2CNF}
    \Omega(\mathbf z) := \bigwedge_{l=0}^{n-1} \Omega_l(z_l).
\end{equation} 
So, each choice of $\mathbf z$ determines a $2$-CNF formula.

\begin{lemma}\label{lem:2CNF-paradox}
    The quantum scenario $(\Bstate,\M)$, with $|M_3| =n$, is a nonlocality paradox if and only if, for every $\mathbf{z}\in \Z_2^n$, $\Omega(\mathbf{z})$ is unsatisfiable. 
\end{lemma}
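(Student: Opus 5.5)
The plan is to unfold both sides of the equivalence into statements about global assignments and match them directly. By definition, the scenario is a paradox exactly when every global assignment $g:M_1\sqcup M_2\sqcup M_3\to\Z_2$ has at least one context $(A,B,C_l)$ whose predicted event $(A,B,C_l)\to(g(A),g(B),g(C_l))$ is impossible. Following Remark~\ref{rem:measurement}, we work with the measurement sets $M_i\subset[0,\pi)$ and with arbitrary assignments on them. The outcomes at $\varphi+\pi$ are then determined by \eqref{eqn:nice_ga}, so the paradox property needs to be checked only against such $g$.

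Any global assignment splits uniquely as a pair $(\zz,h)$. Here $\zz=(g(C_0),\dots,g(C_{n-1}))\in\Z_2^n$ is a total Charlie assignment, and $h=g|_{M_1\sqcup M_2}$ is an Alice--Bob assignment. We identify $h$ with the truth assignment on literals that makes $(A,a)$ true iff $h(A)=a$, and likewise for $B$. This is a valid assignment because $\neg(A,a)=(A,a\oplus1)$ is true exactly when $(A,a)$ is false.

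The key step is the observation that, for fixed $\zz$, $h$ satisfies $\Omega(\zz)$ if and only if no context $(A,B,C_l)$ yields an impossible event under $g=(\zz,h)$. To see this, take a clause $(A,a\oplus1)\vee(B,b\oplus1)$, which comes from some $((A,a),(B,b))\in E_{C_l,z_l}$. It fails under $h$ precisely when $h(A)=a$ and $h(B)=b$. That in turn means the event $(A,B,C_l)\to(h(A),h(B),z_l)$, which is exactly the event predicted by $g$ in that context, is impossible, by the definition of $E_{C_l,z_l}$. Conversely, suppose some context $(A,B,C_l)$ gives an impossible predicted event under $g$. Then $((A,h(A)),(B,h(B)))\in E_{C_l,z_l}\subseteq E_{\zz}$, and the corresponding clause of $\Omega_l(z_l)$ is violated by $h$.

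Hence $g$ is consistent, meaning no context yields an impossible event, iff $h$ satisfies $\Omega(\zz)$. The paradox property says no consistent $g$ exists, which is equivalent to: for every $\zz$, no $h$ satisfies $\Omega(\zz)$. That is exactly the claim of Lemma~\ref{lem:2CNF-paradox}.

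The only delicate point is bookkeeping with the $\pi$-identification. If one insists on $M_i\subset[0,2\pi)$ with antipodal pairs, the two literals for $\varphi$ and $\varphi+\pi$ must be identified via \eqref{eqn:nice_ga}. I would handle this by invoking Remark~\ref{rem:measurement} at the outset, which reduces everything to $M_i\subset[0,\pi)$. Once that reduction is made, the remaining argument is the clause-by-clause translation above.
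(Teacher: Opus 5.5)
Your proof is correct and matches the paper: the paper states this lemma without a separate proof, as an immediate consequence of the definitions of $E_{C_l,z_l}$ and $\Omega(\zz)$. Your bijection $g\leftrightarrow(\zz,h)$, with the clause-by-clause check that $h$ violates a clause of $\Omega(\zz)$ exactly when $g$ predicts an impossible event, is precisely that unfolding; the extra $\pi$-identification bookkeeping is harmless but unnecessary, since the paper already fixes $M_i\subset[0,\pi)$ via Remark~\ref{rem:measurement}.
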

For interpolant-state scenarios, $((A,a),(B,b))\in E_{C,z}$ if and only if $ ((A, a\oplus 1),(B,b\oplus 1))\in E_{C,z}$. 
The two clauses \eqref{eq:clause} arising from the pairs $((A,a),(B,b))$ and $ ((A,a\oplus 1),(B,b\oplus 1))$ together yield the constraint on compatible global assignments $g:\bigsqcup_{i=1}^3 M_i  \rightarrow \Z_2$
\begin{equation}\label{eqn:z2-constraint}
    g(A)\oplus g(B)=a\oplus b \oplus 1.
\end{equation}
Thus, in the interpolant case, the formula $\Omega(\zz)$ reduces to a system of $\mathbb Z_2$-linear equations, which we denote by $\Psi(\zz)$. The paradox condition is then precisely that this system be inconsistent for every total Charlie assignment. Equivalently, Lemma~\ref{lem:2CNF-paradox} says that \begin{equation}
    (\Bsimple,\M) \text{ is a paradox} \quad\Longleftrightarrow\quad
    \Psi(\zz)\text{ is inconsistent for all }\zz \in \Z_2^n.
\end{equation}

\subsection{Graph-theoretic structure of nonlocality paradoxes}
\label{sec:graph.graph}

The logical structure of three-qubit paradoxes can be encoded in a graph-theoretic framework. As before, we fix $(\Bstate,\M)$ with $M_3=\{C_0,\dots,C_{n-1}\}$. For a given Charlie conditioning $(C_l,z)$, we define $G_l(z)$ to be a simple bipartite graph with the vertex set 
\begin{equation*} V({G}_l (z_l)) := \{(A,a)\mid A\in M_1,\ a\in\Z_2\} \sqcup \{(B,b)\mid B\in M_2,\ b\in\Z_2\}, \end{equation*}
and edge set $E({G}_l (z_l)) := E_{C_l, z_l}$.
\begin{definition}
    Given a total Charlie assignment $\zz$, we define the \emph{literal graph} $G (\zz)$ as the union \[ G (\zz) := \bigcup_{l=0}^{n-1} G_{C_l, z_l}, \]
    where the vertex set is $V(G_l (z_l))$ and the edge set is $\bigcup_l E_{C_l, z_l}$.
\end{definition}
Since $\beta$ is injective on $[0,2\pi)$, if an Alice--Bob literal pair $(A, a)$ and $(B, b)$ forms an edge (i.e.\ gives an impossible event) for one Charlie conditioning $(C_l, z_l)$, it cannot form an edge for any other Charlie conditioning $(C_{l'}, z_{l'})$. Therefore, the graph $G (\zz)$ is simple. Furthermore, for any given Charlie conditioning $(C_l, z_l)$, each literal $(A,a)$ or $(B,b)$ has degree at most $1$; hence, each edge set $E_{C_l, z_l}$ is a partial matching.

Each edge of $G (\zz)$ induces a clause of the form \eqref{eq:clause}, and the conjunction of all such clauses yields the $2$-CNF formula $\Omega(\zz)$ defined in \eqref{eq:2CNF}. The satisfiability of $\Omega(\zz)$ admits a natural graph-theoretic characterisation.

\begin{definition}\label{def:imp-graph}
    Given a $2$-CNF formula $\Omega(\zz)$ associated with a total Charlie assignment $\zz$, we define its \emph{implication graph} $I_\Omega (\zz)$ as follows. The vertices are the literals appearing in $\Omega(\zz)$, and for each forbidden pair $((A,a),(B,b)) \in E(G(\zz))$, we include the two directed implications \begin{equation}\label{eq:implications}
    (A,a)\Longrightarrow (B,b\oplus 1),
    \qquad
    (B,b)\Longrightarrow (A,a\oplus 1).
    \end{equation}
    Equivalently, the incompatibility of the literals $(A,a)$ and $(B,b)$ is encoded by the clauses
    \[ (A,a\oplus 1)\lor (B,b\oplus 1), \]
    or, in implication form, by \eqref{eq:implications}.
\end{definition}
The implication graph $I_\Omega (\zz)$ is a directed bipartite graph, with bipartition given by the literals over $M_1$ and those over $M_2$. The satisfiability of $\Omega(\zz)$ can be characterised in terms of the strongly connected components of this graph.
\begin{lemma}[{\cite[Theorem~1]{Aspvall1979ALA}}]\label{lem:impl-graph}
    The $2$-CNF formula $\Omega(\mathbf z)$ is unsatisfiable if and only if there exists some $X \in M_1\sqcup M_2$ such that the literal $(X,x)$ and its complement $(X,x\oplus 1)$ belong to the same \textit{strongly connected component} of $I_\Omega (\zz)$. Equivalently, there are directed paths
    \begin{equation}\label{eq:cycle-witness}
        (X,x)\Longrightarrow \cdots \Longrightarrow (X,x\oplus 1) \Longrightarrow \cdots \Longrightarrow (X,x).
    \end{equation}
\end{lemma}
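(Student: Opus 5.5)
The plan is to prove the statement as the standard Aspvall--Plass--Tarjan characterisation, specialised to the implication graph $I_\Omega(\zz)$ of Definition~\ref{def:imp-graph}. The first step is to record the \emph{contrapositive symmetry} of $I_\Omega(\zz)$: each clause $(A,a\oplus1)\lor(B,b\oplus1)$ contributes exactly the two arcs in \eqref{eq:implications}, and each is the contrapositive of the other. Consequently, the map $(X,x)\mapsto(X,x\oplus1)$ together with arc reversal is an automorphism of the graph, so a path $u\Rightarrow\cdots\Rightarrow v$ exists if and only if a path $\neg v\Rightarrow\cdots\Rightarrow\neg u$ does. This also shows that the negation map sends strongly connected components to strongly connected components.

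For the direction ``witness implies unsatisfiable'', suppose we have paths as in \eqref{eq:cycle-witness}. Every arc $u\Rightarrow v$ is semantically valid for any assignment satisfying $\Omega(\zz)$: if $u$ is true, the clause forces $v$ to be true. Truth therefore propagates along paths. If $g$ satisfied $\Omega(\zz)$, then whichever of $(X,x)$ and $(X,x\oplus1)$ is true under $g$ would force the other to be true along the corresponding path, which is a contradiction. Hence $\Omega(\zz)$ is unsatisfiable.

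For the converse, assume that no literal shares a strongly connected component with its complement, and construct a satisfying assignment. Form the condensation DAG of $I_\Omega(\zz)$ and fix a topological order. For each variable $X$, set the literal whose component comes \emph{later} in the topological order to true; this is well defined because the two components are distinct by assumption. We then check that every clause is satisfied. Suppose a clause $\ell_1\lor\ell_2$ had both literals false. Then $\neg\ell_1$ comes after $\ell_1$ and $\neg\ell_2$ comes after $\ell_2$ in the order. The arcs $\neg\ell_1\Rightarrow\ell_2$ and $\neg\ell_2\Rightarrow\ell_1$ force $\mathrm{comp}(\neg\ell_1)\le\mathrm{comp}(\ell_2)$ and $\mathrm{comp}(\neg\ell_2)\le\mathrm{comp}(\ell_1)$. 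Chaining these gives $\ell_1<\neg\ell_1\le\ell_2<\neg\ell_2\le\ell_1$, a strict cycle in the topological order, which is impossible.

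Literals not appearing in $\Omega(\zz)$ are unconstrained, so they may be set arbitrarily; equivalently, one can include all literals as vertices without affecting the argument. The main subtlety is the well-definedness and consistency of the assignment in this converse direction, and it is handled by the contrapositive symmetry from the first step, which guarantees that $\mathrm{comp}(\neg u)$ is determined by $\mathrm{comp}(u)$. Finally, ``same strongly connected component'' is by definition equivalent to the existence of the two directed paths in \eqref{eq:cycle-witness}, which gives the stated reformulation.
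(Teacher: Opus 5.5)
Your proof is correct and is the standard Aspvall--Plass--Tarjan argument, which is exactly what the paper relies on: it gives no proof of its own and simply cites that theorem. One minor remark: the contrapositive symmetry is not actually needed in your converse direction, since well-definedness only uses $\mathrm{comp}(X,x)\neq\mathrm{comp}(X,x\oplus1)$, and your chain $\ell_1<\neg\ell_1\le\ell_2<\neg\ell_2\le\ell_1$ already uses directly the two arcs that each clause contributes.
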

For interpolant-state scenarios, the forbidden pairs arise in complementary pairs. Consequently, the implication graph has an additional reversibility property.

    

\begin{lemma}
\label{lem:interpolant-impl}
For an interpolant-state scenario, every directed edge of an implication graph $I_{\Psi}(\zz)$ occurs together with its reverse. Hence $I_{\Psi}(\zz)$ is bidirected. Consequently, $\Psi(\zz)$ is unsatisfiable if and only if $I_{\Psi}(\zz)$ contains a path from some literal to its complement.
\end{lemma}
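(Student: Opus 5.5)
We prove the two assertions in turn: first that each implication edge comes with its reverse, and then that bidirectedness turns the strongly connected component criterion of Lemma~\ref{lem:impl-graph} into a single-path criterion.

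\emph{Bidirectedness.} Fix a total Charlie assignment $\zz$ and a directed edge of $I_\Psi(\zz)$. By Definition~\ref{def:imp-graph}, it arises from a forbidden pair $((A,a),(B,b))\in E(G(\zz))$ and is either $(A,a)\Rightarrow(B,b\oplus1)$ or $(B,b)\Rightarrow(A,a\oplus1)$. By the complementarity property of interpolant-state scenarios recorded before \eqref{eqn:z2-constraint}, $((A,a),(B,b))\in E_{C_l,z_l}$ implies $((A,a\oplus1),(B,b\oplus1))\in E_{C_l,z_l}$. For the interpolant state this follows directly from the impossibility condition: it depends only on $a\oplus b$, since $\beta(0,\cdot)\equiv-(\cdot)$ on Alice's and Bob's qubits.

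The complementary pair contributes, by \eqref{eq:implications}, the implications
\[
(A,a\oplus1)\Rightarrow(B,b)
\qquad\text{and}\qquad
(B,b\oplus1)\Rightarrow(A,a).
\]
These are exactly the reverses of the two implications contributed by the original pair. Hence every directed edge occurs together with its reverse, and $I_\Psi(\zz)$ is bidirected.

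\emph{Path criterion.} In a bidirected graph, reachability is symmetric: reversing a directed path edge by edge gives a directed path back. Therefore the strongly connected components of $I_\Psi(\zz)$ are exactly the connected components of the underlying undirected graph.

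By Lemma~\ref{lem:impl-graph}, $\Psi(\zz)$ is unsatisfiable if and only if some $(X,x)$ and $(X,x\oplus1)$ lie in the same strongly connected component. Since $\Psi(\zz)$ and $\Omega(\zz)$ encode the same clauses, we may apply that lemma to $\Psi(\zz)$. By the symmetry just noted, lying in the same component is equivalent to the existence of a single directed path from $(X,x)$ to $(X,x\oplus1)$: the return path in \eqref{eq:cycle-witness} is obtained by reversing it. This gives both directions of the claimed equivalence.

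The only point needing care is that the complementary forbidden pair belongs to the same Charlie conditioning, so that it lies in $G(\zz)$ for the same $\zz$. This is immediate because the impossibility condition depends on $(C_l,z_l)$ only through $T_{l,z_l}$ and on $(a,b)$ only through $a\oplus b$.
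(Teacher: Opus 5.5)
Your proposal is correct and follows essentially the same route as the paper: the complementary forbidden pair $((A,a\oplus1),(B,b\oplus1))$ lies in the same Charlie conditioning and supplies the reverse of each implication. Bidirectedness then reduces the strongly-connected-component criterion of Lemma~\ref{lem:impl-graph} to the existence of a single path from a literal to its complement.
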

\begin{proof}
    Suppose that $(A,B,C_l)\to (a,b,z_l)$ is impossible. By the interpolant impossibility condition, replacing $(a,b)$ by $(a\oplus 1,b\oplus 1)$ does not change the relevant parity term. Hence, the complementary event $(A,B,C_l)\to (a\oplus 1,b\oplus 1,z_l)$ is also impossible.
    The first forbidden pair gives the implication 
    \[ (A,a)\Longrightarrow (B,b\oplus 1), \]
    while the complementary forbidden pair gives the reverse implication
    \[ (B,b\oplus 1)\Longrightarrow (A,a).\]
    The same argument applies to the other implication arising from the forbidden pair. Thus every directed edge in $I_{\Psi}(\zz)$ occurs with its reverse, so $I_{\Psi}(\zz)$ is bidirected.

    By Lemma~\ref{lem:impl-graph}, $\Psi(\zz)$ is unsatisfiable if and only if some literal and its complement lie in the same strongly connected component. Since $I_{\Psi}(\zz)$ is bidirected, this is equivalent to the existence of a path from a literal to its complement.
\end{proof}

\begin{remark}\label{rem:witness-paths}
    Since the implication graph is bipartite, with parts given by Alice and Bob literals, every path alternates between Alice and Bob. It therefore suffices to study paths that start and end at Alice literals. We refer to a path of the form
    \[ (X,x)\Longrightarrow \cdots \Longrightarrow (X,x\oplus 1) \]
    as a \emph{witness path}.
\end{remark}

\begin{remark}\label{rem:min-witness-path}
    In the arguments below, we consider witness paths that are \emph{minimal} with respect to the deletion of redundant subpaths. Edges of the implication graph are naturally labelled by the Charlie conditioning from which they arise. Along a \textit{minimal witness path}, two consecutive edges cannot arise from the same Charlie conditioning, since such a pair of edges would backtrack through the same matching. 
    Thus, consecutive edges must arise from distinct Charlie conditionings.

    In particular, when $|M_3|=2$, a minimal witness path must alternate between the two Charlie conditionings. This is the structure that will be encoded in Sections \ref{sec:two-charlie} and \ref{sec:6} by the corresponding return maps.
\end{remark}



\section{Classification of biconditional parity proofs}
\label{sec:4}
\label{sec:two-charlie}

In this section, we give a complete classification of minimal biconditional parity proofs of three-qubit strong nonlocality. Recall that, up to equivalence, these are precisely those paradoxes involving an interpolant state $\Bsimple = \Bstate<(0,0,\lambda)>[0]$ and for which Charlie is restricted to two measurements: $M_3 = \{C_0, C_1\}\subset [0,\pi)$.  We will show that, despite the seemingly continuous nature of the parameters involved, a discrete structure emerges: we will demonstrate a bijection between these paradoxes and tuples of numbers that satisfy simple, easily checkable conditions. 

We reserve the scalar subscripts \(l,z\in\Z_2\) for a Charlie measurement \(C_l\) and a corresponding Charlie outcome \(z\).  The four pairs \((l,z)\) index Charlie's four tick values $T_{C_l,z}$, i.e.\ the possible $\beta$-values he can contribute, as defined in \eqref{eqn:charlie-tick-def}. For convenience, we will write $T_{l,z} := T_{C_l,z}$.  A total Charlie assignment (as in Section \ref{sec:graph.logical}) is a pair of outcomes for $C_0, C_1$, i.e.\ a vector $\zz$ in \[
    Q:=\Z_2^2.
\]

Our exposition consists of three key stages:
\begin{enumerate}
    \item First, in Section~\ref{subsec:charlie-clocks}, we show that the data associated with Charlie's qubit and measurements ($\lambda \text{ and } M_3$ or, equivalently, his set of four tick values), is equivalent to a choice of discrete clock structure: a cyclic group $\Z_N$, three nonzero elements of $\Z_{2N}$, and a real-valued offset $\mu \in [0, 2N)$.  We show that this induced clock leads to natural normal forms for Charlie's ticks as well as Alice and Bob's measurements.  Alice and Bob's measurements are best expressed as a pair of an element of the clock group $\Z_N$ and a discrete \emph{layer} label that indexes a real-valued shift.  Finally, we precisely characterise the discrete clock structures, with simple and easily checkable criteria, that are valid in the sense of being both quantum-realisable and completable by a choice of Alice and Bob measurement sets to a minimal paradox.
    \item Next, in Section~\ref{subsec:ab-completions}, we determine the choices of Alice and Bob measurement sets that complete a valid Charlie clock to yield a minimal biconditional parity proof.  We apply Lemma \ref{lem:interpolant-impl} and find that the existence of witness paths in the implication graphs of the paradox yields a choice of coset of $\Z_N$ for each total Charlie assignment and a layer index for each coset.  Together, these layer-labelled cosets yield a cover of Alice's and Bob's measurements.  We eliminate the redundancy in this data by reducing this to a choice of coset only for certain \emph{selected} total Charlie assignments.  We define such a partial choice to be \emph{canonical} if it uniquely defines a full choice of cosets and characterise canonicity with elementary number-theoretic criteria.
    \item Finally, in Section~\ref{subsec:bpp-classification}, we show that, upon fixing a valid Charlie clock and canonical Alice--Bob completion of that clock, the only remaining freedom is a choice of real-valued shift for each of the layer indices in the Alice--Bob completion.  We are then ready to assemble all the above results to give the desired classification theorem.
\end{enumerate}

The main theorem is a bijection
\[
    \mathsf{BPP} \quad
    \cong \quad \mathsf{CanonicalTriples}  \quad\subset \quad
    \bigsqcup_{\Gamma\in\mathsf{Clock}}\;\;
    \bigsqcup_{P\in\mathsf{Comp}(\Gamma)}\;
    \mathsf{Shift}(P).
\]
Here \(\mathsf{BPP}\) denotes minimal biconditional parity proofs, up to the physical equivalence of Section~\ref{sec:equivalence}; \(\mathsf{Clock}\) denotes valid Charlie clocks; \(\mathsf{Comp}(\Gamma)\) denotes canonical Alice--Bob completions over \(\Gamma\); and \(\mathsf{Shift}(P)\) records the remaining real layer shifts.  The set $\mathsf{CanonicalTriples}$ is a subset of all possible triples of a Charlie clock, an Alice--Bob completion of that clock, and a free choice of layer shifts that contains precisely one representative triple for each physical equivalence class of minimal biconditional parity proofs.

\subsection{Charlie clocks}
\label{subsec:charlie-clocks}

The purpose of this subsection is to isolate the part of the classification controlled entirely by the state and Charlie's two measurements.  We begin with abstract definitions of the Charlie clock structure and what it means for such a structure to be \textit{paradoxical} and \mbox{(quantum-)\textit{realisable}}; in the next subsection, we show the definition of paradoxical allows us to construct witness paths for a paradox.  We then show how a biconditional parity proof yields a clock structure, and express all measurements relative to this clock, with offsets.  Finally, we classify the clocks that are both paradoxical and realisable.

\begin{definition}
\label{def:charlie-clock}
A \emph{Charlie clock} is a tuple
\[
    \Gamma=(N,t,s_0,s_1,\mu)
\]
with
\[
    N\in\N_{>0},
    \qquad
    t,s_0,s_1\in\Z_{2N},
    \qquad
    \mu\in[0,2N)\subset\R,
\]
and
\[
    \gcd(N,t,s_0,s_1)=1.
\]
The tick indices of \(\Gamma\) are
\[
    t_{0,0}:=0,
    \qquad
    t_{0,1}:=s_0,
    \qquad
    t_{1,0}:=t,
    \qquad
    t_{1,1}:=t+s_1
\]
in \(\Z_{2N}\).  The corresponding Charlie tick values are
\begin{equation}\label{eqn:charlie-tick-decomp}
    T_{l,z}:=\frac{\pi}{N}(t_{l,z}+\mu)
    \in\R/2\pi\Z,
    \qquad
    l,z\in\Z_2.
\end{equation}
For a total Charlie assignment \(\zz=(z_0,z_1)\in Q\), set
\[
    H_{\zz}:=t_{1,z_1}-t_{0,z_0}=t+z_1s_1-z_0s_0\in\Z_{2N},
\]
\[
    d_{\zz}:=\gcd(N,H_{\zz}),
    \qquad
    D_{\zz}:=d_{\zz}\Z_N\leq\Z_N,
\]
and
\[
    u_{\zz}:=t_{0,z_0}=z_0s_0\in\Z_N.
\]
\end{definition}

Geometrically, a Charlie clock places Charlie's four possible outcome-dependent $\beta$-contributions on a common \(2N\)-clock, with the real offset \(\mu\) fixing the absolute rotation and the indices \(t_{l,z}\) recording the discrete positions of the ticks.  The parameters \(s_0\) and \(s_1\) are the spacings between opposite outcomes of $C_0$ and $C_1$, while \(t\) is the offset between the first tick of \(C_0\) and the first tick of \(C_1\).  For a total Charlie assignment \(\zz=(z_0,z_1)\), the number \(H_{\zz}=t_{1,z_1}-t_{0,z_0}\) will be shown to be how many ticks the clock advances between consecutive Alice literals in a witness path; such a step is obtained by following two consecutive implication edges connected to the intermediate Bob literal, one from each Charlie conditioning. Thus \(d_{\zz}=\gcd(N, H_{\zz})\) records the period of iterating this advancement on the measurement clock \(\mathbb Z_N\), and \(D_{\zz}=d_{\zz}\mathbb Z_N\) is exactly the return orbit subgroup: its cosets are the possible Alice index sets of \(\zz\)-witness paths.  Finally, \(u_{\zz}=t_{0,z_0}\) is the reflection offset for the first Charlie conditioning, so any Alice measurement $A$ involved in an impossible event in that conditioning forces Bob to use the reflected measurement \(u_{\zz}-B\).

Next, we define a class of Charlie clocks that we will, in the next subsection, show are those that allow us to construct the witness paths needed for a paradox.
\begin{definition}
\label{def:paradoxical-clock}
A Charlie clock \(\Gamma\) is \emph{paradoxical} if
\[
    \frac{H_{\zz}}{d_{\zz}}
    \quad\text{is odd for every }\zz\in Q.
\]
Equivalently, if
\[
    \operatorname{Odd}_N(a)
    \quad\Longleftrightarrow\quad
    \frac{a}{\gcd(N,a)}\text{ is odd},
\]
then \(\Gamma\) is paradoxical precisely when \(\operatorname{Odd}_N(H_{\zz})\) holds for every \(\zz\in Q\).
\end{definition}

We now define the class of abstract Charlie clock structures that are actually realised by a choice of interpolant state and a pair of measurements.
\begin{definition}
\label{def:realisable-clock}
A Charlie clock \(\Gamma=(N,t,s_0,s_1,\mu)\) is \emph{realisable} if there exist
\[
    \lambda\in \piovertwointerval,
    \qquad
    C_0,C_1\in[0,\pi),
    \qquad
    C_0 < C_1,
\]
such that
\[
    T_{l,z}
    \equiv
    \beta(\lambda,C_l+z\pi)
    \qquad
    \forall\, l,z\in\Z_2.
\]
Equivalently, the two Charlie measurements \(C_0,C_1\) and the interpolant state \(\Bsimple\) realise the four tick values of \(\Gamma\).
\end{definition}

Finally, we now define the set of valid clocks to be those that are both quantum-realisable and completable by a choice of Alice--Bob measurements to a paradox.  To avoid this set containing two clocks that represent the same paradox up to physical equivalence, we normalise the clock data.

\begin{definition}
\label{def:valid-clock}
A Charlie clock is \emph{valid} if it is paradoxical, realisable, and normalised: the realising measurements are labelled by
their representatives \(0\le C_0<C_1<\pi\); in the GHZ case we use the additional
phase freedom to impose \(C_0=0\), equivalently \(\mu=0\); and, among the two
residual equatorial orientations, we take the lexicographically least clock.  
\end{definition}
We denote the set of valid Charlie clocks by $\mathsf{Clock}$.

\subsubsection{Clock normal form}
\label{subsubsec:clock-normal-form}

We first explain how to extract the Charlie clock from a biconditional parity proof.  Lemma \ref{lem:interpolant-impl} says that in an interpolant-state scenario the relevant implication graphs are bidirected, so a contradiction is equivalent to the existence of a witness path from a literal to its complement.  In the biconditional case, a minimal witness path alternates between the edges of two Charlie measurements.  Moving two edges along such a path translates Alice's measurement angle by a difference of two Charlie ticks.  Since the path returns to the same Alice measurement with opposite outcome, that tick difference must be a rational multiple of \(\pi\).  Repeating this for all total Charlie assignments produces a common finite clock.

Recall~\cite[(8)]{de_Silva_2025} that for scenarios involving interpolant states, the impossibility equation \eqref{eqn:imposs_abc} has a much simpler form: an event $(A,B,C_l) \to (a,b,z_l)$ is impossible if and only if
\begin{equation}\label{eqn:imposs_interpolant}
    A + B \equiv T_{l,z} + (1 \oplus a \oplus b) \pi.
\end{equation}
More generally, a context $(A,B,C)$ contains impossible events if and only if
\begin{equation}\label{eqn:imposs_interpolant_mod_pi}
    A + B \equiv T_{l,z} \mod\pi.
\end{equation}

\begin{lemma}
\label{lem:rational-ticks}
Let \((\Bsimple,\mathcal M)\) be an interpolant-state paradox with $|M_3| = 2$.  Let
\[
    T_{l,z}\in\R/2\pi\Z,
    \qquad
    l,z\in\Z_2,
\]
be the four Charlie tick values.  Then all tick differences are rational multiples of \(\pi\).  Hence there exist
\[
    N\in\N_{>0},
    \qquad
    \mu\in\R/2N\Z,
    \qquad
    t_{l,z}\in\Z_{2N}
\]
such that
\[
    T_{l,z}
    \equiv
    \frac{\pi}{N}(t_{l,z}+\mu)
    \qquad
    \forall\, l,z\in\Z_2.
\]
\end{lemma}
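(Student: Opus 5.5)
The plan is to get rationality from the existence of witness paths for each of the four total Charlie assignments, and then to put the four ticks over a common denominator. Fix \(\zz=(z_0,z_1)\in Q\). Since \((\Bsimple,\M)\) is a paradox, \(\Psi(\zz)\) is inconsistent, so by Lemma~\ref{lem:interpolant-impl} there is a witness path in \(I_\Psi(\zz)\). By Remark~\ref{rem:witness-paths} we may take it to run from an Alice literal \((X,x)\) to \((X,x\oplus1)\), and by Remark~\ref{rem:min-witness-path} we may take it minimal.

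Such a path uses only edges from the two conditionings \((C_0,z_0)\) and \((C_1,z_1)\). Each conditioning gives a partial matching, so two consecutive edges cannot come from the same conditioning, and the edge labels therefore alternate. The path starts and ends on Alice's side and the graph is bipartite, so it has an even number \(2k\) of edges. Here \(k\ge1\), since \((X,x)\neq(X,x\oplus1)\).

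Next I track angles modulo \(\pi\) along the path using \eqref{eqn:imposs_interpolant_mod_pi}. An edge between Alice measurement \(A\) and Bob measurement \(B\) coming from conditioning \((C_l,z_l)\) forces \(B\equiv T_{l,z_l}-A \pmod\pi\). Consider a consecutive Alice--Bob--Alice pair of edges \(A\to B\to A'\) labelled \(l\) and then \(l\oplus1\). It gives
\[
A'\equiv T_{l\oplus1,z_{l\oplus1}}-T_{l,z_l}+A \pmod \pi .
\]
Because the labels alternate and the path has even length, every such double step has the same ordered label pair. Hence every double step shifts the Alice angle by the same quantity \(\pm h_{\zz}\), where \(h_{\zz}:=T_{1,z_1}-T_{0,z_0}\). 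After \(k\) double steps the path returns to the same Alice measurement \(X\), which gives \(k\,h_{\zz}\equiv0\pmod\pi\). Therefore \(h_{\zz}\in\tfrac{\pi}{k}\Z+2\pi\Z\) is a rational multiple of \(\pi\). The outcome flip plays no role at this stage; only the return to \(X\) is needed. I expect the main obstacle to be this bookkeeping: confirming that only the two conditionings indexed by \(\zz\) contribute edges, that minimality really forces strict alternation, and that the sign convention is consistent regardless of which conditioning labels the first edge.

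Doing this for all four \(\zz\) shows that \(T_{1,z_1}-T_{0,z_0}\) is a rational multiple of \(\pi\) for every \(z_0,z_1\). Every tick difference is then a rational multiple of \(\pi\), since
\[
T_{0,1}-T_{0,0}=(T_{1,0}-T_{0,0})-(T_{1,0}-T_{0,1}),
\]
\(T_{1,1}-T_{1,0}\) is handled in the same way, and the remaining differences are sums of these. Write \(T_{l,z}-T_{0,0}\equiv\pi r_{l,z}\) with \(r_{l,z}\in\mathbb{Q}\) taken modulo \(2\). Let \(N\) be a common denominator of the \(r_{l,z}\), and set \(t_{l,z}:=Nr_{l,z}\in\Z_{2N}\) and \(\mu:=NT_{0,0}/\pi\in\R/2N\Z\). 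Then \(T_{l,z}\equiv\tfrac{\pi}{N}(t_{l,z}+\mu)\) for all \(l,z\), which is the claimed form.
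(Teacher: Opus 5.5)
Your proposal is correct and follows the paper's proof essentially step for step. For each $\zz\in Q$ you take a minimal witness path alternating between the two conditionings, and each Alice--Bob--Alice double step shifts the Alice angle by $T_{1,z_1}-T_{0,z_0}$ modulo $\pi$. Returning to the starting Alice measurement then forces $L(T_{1,z_1}-T_{0,z_0})\equiv 0 \bmod \pi$, and the four cross-differences generate all tick differences before you pass to a common denominator. Your handling of the sign $\pm h_{\zz}$ and your explicit choice $t_{l,z}=Nr_{l,z}$, $\mu=NT_{0,0}/\pi$ just spell out details the paper leaves implicit.
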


\begin{proof}
Fix a total Charlie assignment \(\zz \in Q\).  By Lemma~\ref{lem:interpolant-impl}, the associated implication graph contains a witness path.  Delete redundant subpaths and write a minimal witness path as
\begin{equation}\label{eqn:dir-path}
    (A_{j_0},a_0)
    \Longrightarrow
    (B_{k_0},b_0)
    \Longrightarrow
    (A_{j_1},a_1)
    \Longrightarrow\cdots\Longrightarrow
    (B_{k_{L-1}},b_{L-1})
    \Longrightarrow
    (A_{j_0},a_0\oplus1),
\end{equation}
where the edges traversed alternately correspond to the Charlie conditionings $(C_0,z_0)$ and $(C_1,z_1)$. The integer $L$ is the number of Alice-to-Alice steps in this one-way path (without returning to the original literal).

Considering the first two edges, by \eqref{eqn:imposs_interpolant_mod_pi} we have
\[
    A_{j_0}+B_{k_0}
    \equiv
    T_{0,z_0}
    \mod\pi
\qquad \text{and} \qquad
    A_{j_1}+B_{k_0}
    \equiv
    T_{1,z_1}
    \mod\pi.
\]
Subtracting gives
\[
    A_{j_1}
    \equiv
    A_{j_0}+T_{1,z_1}-T_{0,z_0}
    \mod\pi.
\]
Repeating the same two-edge calculation along the path yields
\[
    A_{j_q}
    \equiv
    A_{j_0}+q(T_{1,z_1}-T_{0,z_0})
    \mod\pi
\]
for \(q=0,\ldots,L-1\).  Then, traversing the final two edges in the witness path, we return to the Alice measurement $A_{j_0}$ and deduce that
\[
    L(T_{1,z_1}-T_{0,z_0})
    \equiv0\mod\pi.
\]
Thus \(T_{1,z_1}-T_{0,z_0}\) is a rational multiple of \(\pi\).

As \(\zz\) varies, this gives the four cross-differences
\[
T_{1,0}-T_{0,0},
\quad
T_{1,1}-T_{0,0},
\quad
T_{1,0}-T_{0,1},
\quad
T_{1,1}-T_{0,1}.
\]
These are the relevant tick differences directly seen by witness paths.  The remaining differences among the four ticks are obtained from these by addition and subtraction.  For example,
\[
T_{0,1}-T_{0,0}
=(T_{1,0}-T_{0,0})-(T_{1,0}-T_{0,1}).
\]
Thus all tick differences are rational multiples of \(\pi\).  Choosing a common denominator over all tick differences gives the claimed clock representation.
\end{proof}

Given a clock, we can naturally write Alice and Bob's measurements relative to it by foliating $[0,\pi)$ into a disjoint union of translates of $\frac{\pi}{N}\Z_N$.

\begin{definition}
\label{def:clock-layer-data}
Let \(N\in\N_{>0}\) and let \(\mu\in[0,2N)\).  A \emph{clock layer decomposition} of Alice and Bob's measurement sets $M_1, M_2$ consists of a finite index set \(\I\), real numbers \(\alpha_i\in\R\), and subsets \(\J_i,\K_i\subseteq\Z_N\), for \(i\in \I\), such that Alice's and Bob's measurement sets have the form
\begin{align*}
    M_1 &=\bigcup_{i\in \I}\frac{\pi}{N}(\J_i+\alpha_i),\\
    M_2 &=\bigcup_{i\in \I}\frac{\pi}{N}(\K_i+\mu-\alpha_i).
\end{align*}
Two layers are required to have shifts distinct modulo \(\Z\).
\end{definition}

\begin{proposition}[Clock normal form]
\label{prop:clock-layer-normal-form}
Every minimal biconditional parity paradox is equivalent to one whose Charlie ticks form a Charlie clock and whose Alice and Bob measurements admit a clock layer decomposition.  Moreover, if an impossible Alice--Bob event occurs between a measurement in the \(i\)-th Alice layer and a measurement in the \(i'\)-th Bob layer, then \(i=i'\).
\end{proposition}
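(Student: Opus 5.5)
By Proposition~\ref{prop:bpp-interpolant-two-charlie}, we may replace the given minimal biconditional parity paradox by an equivalent scenario $(\Bsimple,\M)$ with $M_3=\{C_0,C_1\}\subset[0,\pi)$, so that the impossibility criterion takes the interpolant form \eqref{eqn:imposs_interpolant}. Lemma~\ref{lem:rational-ticks} then gives $N\in\N_{>0}$, an offset $\mu$ and indices $t_{l,z}\in\Z_{2N}$ with $T_{l,z}\equiv\frac{\pi}{N}(t_{l,z}+\mu)$. We absorb the real part of $T_{0,0}$ into $\mu$, so that $t_{0,0}=0$. We then set $s_0:=t_{0,1}$, $t:=t_{1,0}$ and $s_1:=t_{1,1}-t$, and reduce $\mu$ into $[0,2N)$. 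If $g:=\gcd(N,t,s_0,s_1)>1$, we pass to the coarser clock $N/g$: we divide all indices by $g$ and replace $\mu$ by $\mu/g$, noting that $\frac{\pi}{N}(t_{l,z}+\mu)=\frac{\pi}{N/g}(t_{l,z}/g+\mu/g)$. We choose $N$ minimal in this sense, which yields a Charlie clock in the sense of Definition~\ref{def:charlie-clock}.

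For the Alice--Bob data, we first use minimality to show that every $A\in M_1$ and $B\in M_2$ occurs in some impossible event. Otherwise the measurement imposes no clause in any $\Omega(\zz)$, so deleting it preserves unsatisfiability by Lemma~\ref{lem:2CNF-paradox}, contradicting minimality. Next we form the graph $\mathcal{G}$ on $M_1\sqcup M_2$ whose edges are the pairs $(A,B)$ with $A+B\equiv T_{l,z}\bmod\pi$ for some $(l,z)$, and we take its connected components. Within a component, an edge gives $B\equiv \frac{\pi}{N}(t_{l,z}+\mu)-A \bmod \pi$, so each step changes an angle by an element of $\frac{\pi}{N}\Z$ up to a reflection. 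Hence all Alice angles in a component lie in a single coset $\frac{\pi}{N}(\Z+\alpha_i)$ modulo $\pi$, and all Bob angles lie in $\frac{\pi}{N}(\Z+\mu-\alpha_i)$, where $\alpha_i$ is the shift read off from any one Alice vertex. Working modulo $\pi$ and in units of $\frac{\pi}{N}$, $\Z$ reduces to $\Z_N$, which gives the sets $\J_i,\K_i\subseteq\Z_N$. Components whose shifts agree modulo $\Z$ are merged into a single layer, so distinct layers have shifts distinct modulo $\Z$. Every measurement lies in some component by the first step, so the $M_1,M_2$ are covered.

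The final claim then follows directly. Suppose an impossible event joins Alice layer $i$, with $A=\frac{\pi}{N}(j+\alpha_i)$, and Bob layer $i'$, with $B=\frac{\pi}{N}(k+\mu-\alpha_{i'})$. Then \eqref{eqn:imposs_interpolant_mod_pi} gives $j+k+\mu+\alpha_i-\alpha_{i'}\equiv t_{l,z}+\mu \pmod N$, so $\alpha_i-\alpha_{i'}\in\Z$ and hence $i=i'$.

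I expect the main obstacle to be the bookkeeping around the equivalence and normalisation. We must check that the reduction $\gcd=1$ is compatible with the layer decomposition, which holds because coarsening only merges layers. We must also confirm that the conventions on $\mu\in[0,2N)$ versus $\R/2N\Z$ and on representatives modulo $\pi$ (Remark~\ref{rem:measurement}) do not introduce a spurious half-tick ambiguity, since $\pi=\frac{\pi}{N}\cdot N$ while the tick indices live in $\Z_{2N}$.
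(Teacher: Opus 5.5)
Your proposal is correct and follows the paper's proof essentially step for step: rational ticks from Lemma~\ref{lem:rational-ticks}, subtracting $t_{0,0}$ and dividing out $\gcd(N,t,s_0,s_1)$, grouping measurements by their shift modulo $\Z$, and the same congruence forcing $\alpha_i-\alpha_{i'}\in\Z$. Your connected-component construction and your explicit deletion argument (showing that minimality puts every measurement in some impossible event) are a more detailed rendering of the paper's grouping step. There is one small slip in your closing remark: passing from $N$ to $N/g$ splits each $\frac{\pi}{N}$-coset into $g$ cosets of the coarser lattice rather than merging layers, but this is harmless because you build the layers only after the clock has been fixed.
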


\begin{proof}
Lemma~\ref{lem:rational-ticks} gives a common denominator \(N\) and tick indices \(t_{l,z}\).  Subtracting the common tick \(t_{0,0}\) from all tick indices gives the normalisation
\[
    t_{0,0}=0,
    \qquad
    t_{0,1}=s_0,
    \qquad
    t_{1,0}=t,
    \qquad
    t_{1,1}=t+s_1.
\]
Dividing by any common divisor of \(N,t,s_0,s_1\), if necessary, gives \(\gcd(N,t,s_0,s_1)=1\).  Thus the Charlie data are a Charlie clock.

We next construct the layers.  Let \(A\in M_1\).  Since the paradox is minimal, \(A\) appears in some impossible event.  Hence there are \(B\in M_2\), \(l\in\Z_2\), and \(z\in\Z_2\) such that
\[
    A+B\equiv T_{l,z}\mod\pi.
\]
Write
\[
    A=\frac{\pi}{N}(j+\alpha)
\]
with \(j\in\Z_N\) and \(\alpha\in\R\).  Then
\[
    B\equiv
    \frac{\pi}{N}(t_{l,z}-j+\mu-\alpha)
    \mod\pi.
\]
Thus all Bob measurements which pair with Alice measurements in the coset \(\frac{\pi}{N}(\Z_N+\alpha)\) lie in the corresponding Bob coset \(\frac{\pi}{N}(\Z_N+\mu-\alpha)\).  Since there are finitely many measurements, only finitely many values of \(\alpha\) occur modulo \(\Z\).  These values define the layer set \(\I\), and the subsets \(\J_i,\K_i\subseteq\Z_N\) are the discrete indices occurring in each layer.

Finally, suppose
\begin{equation}\label{eqn:general_A_B_labelling}
    A_{i,j}:=\frac{\pi}{N}(j+\alpha_i),
    \qquad
    B_{i,j}:=\frac{\pi}{N}(k+\mu-\alpha_{i'})
\end{equation}
participate in an impossible event with tick \(T_{l,z}\).  Multiplying the congruence \(A_{i,j}+B_{i,j}\equiv T_{l,z}\mod\pi\) by \(N/\pi\) gives
\[
    j+k+\alpha_i-
    \alpha_{i'}+
    \mu
    \equiv
    t_{l,z}+\mu
    \mod N.
\]
Hence \(\alpha_i-\alpha_{i'}\in\Z\).  Distinct layers have shifts distinct modulo \(\Z\), so \(i=i'\).
\end{proof}

As a consequence of the above proposition, each of the measurements involved in a witness path, required to establish a paradox, lives entirely within one layer. Having chosen a layer index $i \in \I$, we may simplify the notation and write $A_j := A_{i,j},\, B_k := B_{i,k}$. Moreover, upon fixing a total Charlie assignment $\zz = (z_0, z_1)$, we write $t_l := t_{l, z_l}$.

\begin{lemma}\label{lem:edges-vs-eqns}
    Let $i \in \I$ be a layer index and $\zz \in Q$ be a total Charlie assignment. Then the following are equivalent:
    \begin{enumerate}
        \item The implication graph $I_\Psi(\zz)$ contains a directed edge $(A_{j}, a) \Rightarrow (B_k, b \oplus 1)$.
        
        \item There exists $l \in \Z_2$ such that the quantities $j \in \J_i,\; k \in \K_i,\; a,b \in \Z_2$ satisfy
        \begin{equation}\label{eqn:interpolant-imposs-2N}
            j + k \equiv t_l + (1 \oplus a \oplus b) N \mod{2N}.
        \end{equation}

        \item There exists $l \in \Z_2$ such that the quantities $j \in \J_i,\; k \in \K_i$ satisfy
        \begin{equation}\label{eqn:interpolant-imposs-N}
            j + k \equiv t_l \mod{N}.
        \end{equation}
    \end{enumerate}
\end{lemma}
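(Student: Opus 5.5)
We prove the cycle \((1)\Rightarrow(2)\Rightarrow(3)\Rightarrow(1)\). Each step rescales the interpolant impossibility condition \eqref{eqn:imposs_interpolant} by \(N/\pi\), using the clock normal form of Proposition~\ref{prop:clock-layer-normal-form}. Throughout, fix the layer \(i\) and write \(A_j=\frac{\pi}{N}(j+\alpha_i)\) and \(B_k=\frac{\pi}{N}(k+\mu-\alpha_i)\). Charlie's ticks are \(T_{l,z_l}=\frac{\pi}{N}(t_l+\mu)\) by \eqref{eqn:charlie-tick-decomp}. Substituting these gives
\[
A_j+B_k-T_{l,z_l}=\frac{\pi}{N}(j+k-t_l),
\]
because the shift \(\alpha_i\) and the offset \(\mu\) both cancel. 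This cancellation is the key point, and it is exactly why pairing within a single layer is the right setting. The residual congruence \(\frac{\pi}{N}(j+k-t_l)\equiv (1\oplus a\oplus b)\pi \pmod{2\pi}\) is, after multiplying by \(N/\pi\), the same as \(j+k\equiv t_l+(1\oplus a\oplus b)N \pmod{2N}\). One caveat: \(j,k\in\Z_N\) are lifted to integers. The representatives must be fixed consistently with the actual angles in \([0,\pi)\), so the congruence mod \(2N\) is understood for these fixed lifts.

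For \((1)\Rightarrow(2)\), Definition~\ref{def:imp-graph} says a directed edge \((A_j,a)\Rightarrow(B_k,b\oplus1)\) in \(I_\Psi(\zz)\) arises from a forbidden pair \(((A_j,a),(B_k,b))\in E_{C_l,z_l}\) for some \(l\). Equivalently, the event \((A_j,B_k,C_l)\to(a,b,z_l)\) is impossible. By \eqref{eqn:imposs_interpolant} and the rescaling above, this gives \eqref{eqn:interpolant-imposs-2N}. For \((2)\Rightarrow(3)\), reduce \eqref{eqn:interpolant-imposs-2N} modulo \(N\); the term \((1\oplus a\oplus b)N\) vanishes.

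For \((3)\Rightarrow(1)\), a subtlety appears that I expect to be the main obstacle. The congruence \(j+k\equiv t_l \pmod N\) only says that the context \((A_j,B_k,C_l)\) carries impossible events, via \eqref{eqn:imposs_interpolant_mod_pi}. It does not by itself fix which parity \(a\oplus b\) is forbidden. I therefore read the edge in \((1)\) as an edge with some labels \(a,b\), consistent with how (3) suppresses \(a,b\). Given (3), we have \(j+k-t_l\equiv \epsilon N \pmod{2N}\) for a unique \(\epsilon\in\Z_2\). Choose \(a,b\) with \(1\oplus a\oplus b=\epsilon\); this recovers \eqref{eqn:interpolant-imposs-2N}, so the event \((A_j,B_k,C_l)\to(a,b,z_l)\) is impossible by \eqref{eqn:imposs_interpolant}. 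Definition~\ref{def:imp-graph} then produces the edge \((A_j,a)\Rightarrow(B_k,b\oplus1)\). By Lemma~\ref{lem:interpolant-impl}, the complementary labels \((a\oplus1,b\oplus1)\), which have the same parity, also give edges, so nothing depends on the choice of \(a,b\) within the parity class.

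Finally, the layer restriction is harmless. Proposition~\ref{prop:clock-layer-normal-form} guarantees that no edge joins different layers, so every edge of \(I_\Psi(\zz)\) touching \(A_j\) lies inside layer \(i\) and is captured by the equivalence.
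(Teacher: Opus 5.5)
Your proposal is correct and is essentially the paper's argument. You substitute the clock-layer parametrisations of $A_j$, $B_k$ and $T_{l,z_l}$ into the interpolant impossibility condition, note that $\alpha_i$ and $\mu$ cancel, and rescale by $N/\pi$; (2)$\Leftrightarrow$(3) then follows by reducing modulo $N$ and, conversely, choosing $a,b$ in the unique forced parity class. Your handling of the integer lifts of $j,k$ and the existential reading of $a,b$ in (3)$\Rightarrow$(1) spells out points the paper treats as ``immediate''.
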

\begin{proof}
    The equivalence of statements 2 and 3 is immediate. We show that 1 and 2 are equivalent.

    Suppose that the implication graph contains a directed edge $(A_{j}, a) \Rightarrow (B_k, b \oplus 1)$. Thus we have a forbidden pair $((A_j, a), (B_k, b)) \in E(G(\zz))$, so that there exists $l \in \Z_2$ such that the event $(A_j, B_k, C_l) \to (a, b, z_l)$ is impossible. Take \eqref{eqn:imposs_interpolant}, substitute the expressions for $T_l$ \eqref{eqn:charlie-tick-decomp} and $A_j, B_k$ \eqref{eqn:general_A_B_labelling}, and multiply by $N/\pi$, to obtain \eqref{eqn:interpolant-imposs-2N}. Conversely, assuming that \eqref{eqn:interpolant-imposs-2N} holds for some $l \in \Z_2$, we may reverse the steps above to deduce that the event $(A_j, B_k, C_l) \to (a, b, z_l)$ is impossible, and hence that the implication graph $I_\Psi(\zz)$ contains a directed edge $(A_{j}, a) \Rightarrow (B_k, b \oplus 1)$.
\end{proof}

\begin{remark}
    In previous work on three-qubit nonlocality paradoxes~\cite{Abramsky2017, de_Silva_2025}, paradoxicality was primarily expressed via the inconsistency of $\Z_2$-linear systems. Given a total Charlie assignment $\zz$, we note that directed edges in the implication graph $I_\Psi(\zz)$ equivalently encode the same information as the linear system $\Psi(\zz)$, and that there is an explicit translation from one formalism to the other, recalling the derivation of~\eqref{eqn:z2-constraint} and Definition~\ref{def:imp-graph}.
    
    Explicitly, a pair of directed edges
    \[(A_j,a) \Longrightarrow (B_k, b\oplus 1),\qquad (B_k,b) \Longrightarrow (A_j, a \oplus 1)\]
    corresponds to the linear constraint
    \[g(A_j) \oplus g(B_k) = a \oplus b \oplus 1\]
    on compatible global assignments $g: \bigsqcup_{i=1}^3 M_i \to \Z_2$.
\end{remark}

\subsubsection{Valid Charlie clocks}
\label{subsubsec:valid-clocks}

We now completely classify the valid Charlie clocks.  This is the only point in the proof where quantum realisability, via the analytic functions \(\beta\) and \(\delta\), enter.  After this subsection, all remaining arguments are combinatorial and number-theoretic.

First, we define the pairs of possible $\beta$-contributions Charlie can make with his two outcomes for a single fixed measurement, assuming the state parameter $\lambda$ is also fixed.
\begin{definition}
\label{def:tick-pair}
Let \(\lambda\in \piovertwointerval\) and \(C\in[0,\pi)\).  The \emph{tick pair} associated to \((\lambda,C)\) is
\[
    \Tpair(C):=\bigl(\beta(\lambda,C),\beta(\lambda,C+\pi)\bigr)
    \in(\R/2\pi\Z)^2.
\]
If \(\lambda>0\) and \(C>0\), this pair is written uniquely as
\[
    \Tpair(C)=(-\tau,\sigma-\tau),
    \qquad
    0<\tau<\sigma<\pi.
\]
The special pair \((0,\pi)\) is the tick pair of \(C=0\).  We call $\sigma$ the \textit{spread} of the tick pair.
\end{definition}

Next, we determine which tick pairs arise from actual choices of $\lambda$ and Charlie measurement $C \in [0, \pi)$ and show how to extract $\lambda, C$ from the tick pair.

\begin{lemma}\label{lem:tick-pair-sin-lambda}
    Let $(\Bsimple, \M)$ be an interpolant-state scenario and let $C \in M_3$, $C > 0$ with associated tick pair $\Tpair(C) = (-\tau, \sigma - \tau)$. Then
    \begin{equation}\label{eqn:sin-lambda}
        \sin \lambda = \frac{\cos(\sigma/2)}{\cos(\tau - \sigma/2)}.
    \end{equation}
\end{lemma}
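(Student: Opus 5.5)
The plan is to find a half-angle form of \(\beta\) in which the two ticks of the pair become tangent relations sharing one constant \(k=k(\lambda)\). Multiplying the two relations should eliminate \(C\), and a product-to-sum identity should then give \eqref{eqn:sin-lambda}. I will not use the \(\delta\)-formula. It gives a second relation \(\sin C\tan\lambda=\cot(\sigma/2)\), but that relation still involves \(C\).

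\emph{Step 1 (half-angle form of \(\beta\)).} Write \(s=\sin\tfrac{\lambda}{2}\) and \(c=\cos\tfrac{\lambda}{2}\). The arctangent in \eqref{eqn:beta} is, up to a multiple of \(\pi\), the argument of \(s+ce^{i\varphi}\). Hence, modulo \(2\pi\),
\[
\beta(\lambda,\varphi)\equiv\varphi-2\arg\bigl(s+ce^{i\varphi}\bigr)\equiv-2\arg\bigl(s e^{-i\varphi/2}+c e^{i\varphi/2}\bigr).
\]
The \(2\arctan\) is only defined mod \(2\pi\), and doubling removes the \(\pi\)-ambiguity of the argument, so this identity holds mod \(2\pi\). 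Expanding gives
\[
s e^{-i\varphi/2}+c e^{i\varphi/2}=(c+s)\cos\tfrac{\varphi}{2}+i(c-s)\sin\tfrac{\varphi}{2},
\]
and therefore
\[
\tan\!\Bigl(-\tfrac{\beta(\lambda,\varphi)}{2}\Bigr)=k\tan\tfrac{\varphi}{2},
\qquad
k:=\frac{c-s}{c+s}=\frac{1-\sin\lambda}{\cos\lambda}.
\]
This is an equality of tangents, so the ambiguity of \(\beta/2\) modulo \(\pi\) does not matter. For \(\lambda\in\piovertwointerval\) we have \(k\in(0,1]\) and \(k^2=\frac{1-\sin\lambda}{1+\sin\lambda}\).

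\emph{Step 2 (apply to the tick pair).} Use \(\Tpair(C)=(-\tau,\sigma-\tau)\).
\begin{itemize}
\item At \(\varphi=C\), \(\beta=-\tau\), which gives \(\tan\tfrac{\tau}{2}=k\tan\tfrac{C}{2}\).
\item At \(\varphi=C+\pi\), \(\beta=\sigma-\tau\), which gives \(\tan\tfrac{\tau-\sigma}{2}=k\tan\tfrac{C+\pi}{2}=-k\cot\tfrac{C}{2}\).
\end{itemize}
Since \(0<C<\pi\), both tangents of \(C/2\) are finite and nonzero. Multiplying the two relations gives
\[
\tan\tfrac{\tau}{2}\,\tan\tfrac{\sigma-\tau}{2}=k^2=\frac{1-\sin\lambda}{1+\sin\lambda}.
\]

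\emph{Step 3 (solve for \(\sin\lambda\)).} Solving gives \(\sin\lambda=\frac{1-k^2}{1+k^2}\). Substitute the product of tangents and clear denominators by \(\cos\tfrac{\tau}{2}\cos\tfrac{\sigma-\tau}{2}\). The numerator becomes \(\cos(\tfrac{\tau}{2}+\tfrac{\sigma-\tau}{2})=\cos\tfrac{\sigma}{2}\). The denominator becomes \(\cos(\tfrac{\tau}{2}-\tfrac{\sigma-\tau}{2})=\cos(\tau-\tfrac{\sigma}{2})\). This is exactly \eqref{eqn:sin-lambda}.

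The constraints \(0<\tau<\sigma<\pi\) keep all cosines involved positive, so no division by zero occurs. In particular \(\cos\tfrac{\tau}{2}\), \(\cos\tfrac{\sigma-\tau}{2}\) and \(\cos(\tau-\tfrac{\sigma}{2})\) are all positive.

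\emph{Main obstacle.} The step needing the most care is Step 1, where the arctangent branch in \eqref{eqn:beta} has to be shown harmless. I would handle it by treating \(\beta\) only modulo \(2\pi\), so that \(\beta/2\) is determined modulo \(\pi\), and by noting that every subsequent relation uses only \(\tan(\beta/2)\). As a sanity check, \(\lambda=0\) gives \(k=1\) and recovers \(\beta(0,\varphi)\equiv-\varphi\) from Lemma~\ref{lem:beta-facts}.
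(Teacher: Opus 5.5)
Your proof is correct. It shares the paper's overall strategy: express both ticks of the pair in terms of $C$ and $\lambda$, then eliminate $C$. The difference is the coordinates in which you do it.

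The paper (Appendix~\ref{sec:proof-tick-pair-sin-lambda}) writes $e^{i\beta(\lambda,C)}=\frac{1+uz}{u+z}$ and $e^{i\beta(\lambda,C+\pi)}=\frac{1-uz}{u-z}$, with $u=\tan\frac{\lambda}{2}$ and $z=e^{iC}$. This treats $\beta(\lambda,\cdot)$ as a M\"obius map of the circle fixing $z=\pm1$. Eliminating $z$ gives the quadratic $u^2-2\rho u+1=0$. Its roots $u$ and $1/u$ give the same value $\sin\lambda=\frac{2u}{1+u^2}=\frac{1}{\rho}$.

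Your half-angle coordinate is the Cayley transform sending those fixed points to $0$ and $\infty$. In this coordinate the relation between $\varphi$ and $-\beta(\lambda,\varphi)$ becomes the scaling $x\mapsto kx$, with $k=\tan(\frac{\pi}{4}-\frac{\lambda}{2})$. So eliminating $C$ takes a single multiplication and no quadratic appears. The arctangent branch issue is also settled once, because you only ever use $\tan(\beta/2)$.

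Each form has its own payoff:
\begin{itemize}
\item The paper's form feeds directly into the complement involution $F_\lambda$ of Lemma~\ref{lem:comp-tick}. The identity $e^{i\beta(\lambda,C+\pi)}=\frac{E-\sin\lambda}{E\sin\lambda-1}$ in Appendix~\ref{sec:one-tick-pair} is exactly $F_\lambda$.
\item Your form makes the converse in Proposition~\ref{prop:one-tick-pair} almost immediate. For $0<\tau<\sigma<\pi$, the product $\tan\frac{\tau}{2}\tan\frac{\sigma-\tau}{2}$ lies in $(0,1)$, because $\frac{\tau}{2}+\frac{\sigma-\tau}{2}<\frac{\pi}{2}$. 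It therefore fixes a unique $k\in(0,1)$, and hence a unique $\lambda\in(0,\frac{\pi}{2})$. Then $\tan\frac{C}{2}=\tan\frac{\tau}{2}/k$ gives the unique $C\in(0,\pi)$ satisfying both of your tick relations.
\end{itemize}
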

\begin{proof}
    See Appendix~\ref{sec:proof-tick-pair-sin-lambda}.
\end{proof}

\begin{proposition}
\label{prop:one-tick-pair}
A tick pair is realisable if and only if it has one of the following forms.

\begin{enumerate}
    \item If \(\lambda=0\), then
    \[
        \Tpair(C)=(-C,\pi-C)
    \]
    for some \(C\in[0,\pi)\).

    \item If \(\lambda>0\) and \(C=0\), then
    \[
        \Tpair(C)=(0,\pi).
    \]

    \item If \(\lambda>0\) and \(C\in(0,\pi)\), then
    \[
        \Tpair(C)=(-\tau,\sigma-\tau)
    \]
    for unique \(0<\tau<\sigma<\pi\). Conversely, every such pair is realised by unique \(C\in(0,\pi)\) and unique \(\lambda\), namely
    \begin{equation}\label{eqn:lambda-closed-form}
        \lambda = \Lambda(\tau,\sigma) := \arcsin\left(\frac{\cos(\sigma/2)}{\cos(\tau - \sigma/2)}\right).
    \end{equation}
\end{enumerate}
\end{proposition}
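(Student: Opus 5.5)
The proof splits into the three cases of the statement. The forward direction uses only Lemma~\ref{lem:beta-facts} and Lemma~\ref{lem:delta-facts}. The converse in case 3 rests on Lemma~\ref{lem:tick-pair-sin-lambda} together with an existence argument.

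\emph{Cases 1 and 2.} If \(\lambda=0\), Lemma~\ref{lem:beta-facts}(2) gives \(\beta(0,\varphi)\equiv-\varphi\). Hence
\[
\Tpair(C)=(-C,\,-C-\pi)\equiv(-C,\,\pi-C),
\]
and every \(C\in[0,\pi)\) occurs, so the realisable pairs are exactly these. If \(\lambda>0\) and \(C=0\), Lemma~\ref{lem:beta-facts}(3) gives \(\beta(\lambda,0)\equiv0\) and \(\beta(\lambda,\pi)\equiv\pi\), so \(\Tpair(0)=(0,\pi)\).

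\emph{Case 3, forward direction.} Fix \(\lambda>0\) and \(C\in(0,\pi)\). By Lemma~\ref{lem:beta-facts}(1), \(\beta(\lambda,\cdot)\) is a strictly decreasing bijection. By part (3) it takes the value \(0\) only at \(0\) and \(\pi\) only at \(\pi\). It follows that \(\beta(\lambda,(0,\pi))=(\pi,2\pi)\) and \(\beta(\lambda,(\pi,2\pi))=(0,\pi)\).
\begin{itemize}
\item Since \(\beta(\lambda,C)\in(\pi,2\pi)\), we can write \(\beta(\lambda,C)=-\tau\) with a unique \(\tau\in(0,\pi)\).
\item Since \(\beta(\lambda,C+\pi)\in(0,\pi)\), we can write \(\beta(\lambda,C+\pi)=\sigma-\tau\) with \(\sigma-\tau\in(0,\pi)\). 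This gives \(\sigma>\tau\).
\item Here \(\sigma\equiv\delta(\lambda,C)\). Lemma~\ref{lem:delta-facts} gives \(\delta\in(0,\pi]\), with \(\delta=\pi\) only if \(\lambda=0\) or \(C=0\). Hence \(\sigma<\pi\).
\end{itemize}
So \(0<\tau<\sigma<\pi\), and the representation is unique because the representatives were chosen in fixed half-open ranges.

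\emph{Case 3, uniqueness of the realisation.} Given \((\tau,\sigma)\) with \(0<\tau<\sigma<\pi\), Lemma~\ref{lem:tick-pair-sin-lambda} forces \(\sin\lambda\) to equal the stated ratio. This ratio lies in \((0,1)\): we have \(\cos(\sigma/2)>0\), and \(|\tau-\sigma/2|<\sigma/2<\pi/2\) gives \(\cos(\tau-\sigma/2)>\cos(\sigma/2)\). So \(\lambda=\Lambda(\tau,\sigma)\in(0,\tfrac{\pi}{2})\) is uniquely determined. With \(\lambda\) fixed, bijectivity of \(\beta(\lambda,\cdot)\) then determines \(C\) uniquely from \(\beta(\lambda,C)\equiv-\tau\), and \(C\in(0,\pi)\) by the range analysis above.

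\emph{Case 3, existence.} This is the main obstacle: we must show the \(C\) obtained from \(\tau\) also satisfies \(\delta(\lambda,C)=\sigma\). I would try a direct computation first. Write \(\beta(\lambda,C)\equiv-\tau\) through the arctan formula~\eqref{eqn:beta}, and use \(\delta=\pi-2\arctan(\sin C\tan\lambda)\). Then substitute \(\sin\lambda=\cos(\sigma/2)/\cos(\tau-\sigma/2)\) and check with half-angle identities that \(\sin C\tan\lambda=\cot(\sigma/2)\), which is the same computation as the appendix proof of Lemma~\ref{lem:tick-pair-sin-lambda} run backwards.

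If that becomes messy, I would use a topological fallback. The map \((\lambda,C)\mapsto(\tau,\sigma)\) from \((0,\tfrac{\pi}{2})\times(0,\pi)\) into the open triangle \(\{0<\tau<\sigma<\pi\}\) is continuous, and injective by the uniqueness step. Invariance of domain then makes its image open. It then suffices to show the map is proper, by checking boundary limits:
\begin{itemize}
\item as \(\lambda\to0\), \(\sigma\to\pi\);
\item as \(C\to0\), \(\tau\to0\) and \(\sigma\to\pi\);
\item as \(C\to\pi\), \(\sigma-\tau\to0\);
\item as \(\lambda\to\tfrac{\pi}{2}\), \(\sigma\to0\).
\end{itemize}
Properness makes the image closed in the connected triangle, so the map is surjective.
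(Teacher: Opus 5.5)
Your forward direction and uniqueness step match the paper, and you also check that the argument of $\arcsin$ lies in $(0,1)$, which the paper leaves implicit. The difference is in existence. The argument is correct via your fallback, but it is a genuinely different route from the paper's.

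\textbf{The paper's route.} The paper carries out exactly the computation you list first and leave unexecuted, in M\"obius form rather than via half-angle identities:
\begin{itemize}
\item With $\lambda=\Lambda(\tau,\sigma)$, $u=\tan(\lambda/2)$ and $E=e^{-i\tau}$, it sets $z:=(1-uE)/(E-u)$.
\item It observes $|z|=1$, so $z=e^{iC}$ with $\beta(\lambda,C)\equiv-\tau$.
\item Substituting into $e^{i\beta(\lambda,C+\pi)}=(1-uz)/(u-z)$ gives $(E-\sin\lambda)/(E\sin\lambda-1)$.
\item The formula for $\sin\lambda$ is equivalent to $\sin\lambda=(e^{i(\sigma-\tau)}+E)/(Ee^{i(\sigma-\tau)}+1)$, i.e.\ to $e^{i\beta(\lambda,C+\pi)}=e^{i(\sigma-\tau)}$.
\end{itemize}
This shows the elimination behind Lemma~\ref{lem:tick-pair-sin-lambda} is reversible, and it yields $C$ in closed form.

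\textbf{Your route.} The invariance-of-domain argument is non-constructive. It needs only continuity, the injectivity you already proved, and boundary behaviour, at the price of heavier topology.

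\textbf{What to fix.} Your list of boundary limits is not correct as stated near the corners of $(0,\tfrac{\pi}{2})\times(0,\pi)$. Take $C_n\to0$ and $\lambda_n\to\tfrac{\pi}{2}$ with $\sin C_n\tan\lambda_n=1$. Then $\sigma_n=\tfrac{\pi}{2}$ for all $n$, so neither ``$C\to0\Rightarrow\sigma\to\pi$'' nor ``$\lambda\to\tfrac{\pi}{2}\Rightarrow\sigma\to0$'' holds in general. Use instead the following facts:
\begin{itemize}
\item $\tau\to0$ as $C\to0$, uniformly in $\lambda$; the $\arctan$ formula gives $0<\tau<C$ for $C<\tfrac{\pi}{2}$.
\item $\sigma-\tau\to0$ as $C\to\pi$, uniformly in $\lambda$.
\item $\sigma\to\pi$ as $\lambda\to0$, uniformly in $C$.
\item $\sigma\to0$ as $\lambda\to\tfrac{\pi}{2}$, but only for $C$ in a compact subset of $(0,\pi)$.
\end{itemize}
Any escaping sequence has a subsequence falling into one of these four cases, and in each its image approaches the boundary of the triangle. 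That gives properness, and your clopen-image argument then completes the proof.
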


\begin{proof}
For \(\lambda=0\), the identities \(\beta(0,C)=-C\) and \(\delta(0,C)\equiv\pi\) give the first case.  For \(\lambda>0\) and \(C=0\), the identities \(\beta(\lambda,0)=0\) and \(\delta(\lambda,0)\equiv\pi\) give the second case.  For \(\lambda>0\) and \(C\in(0,\pi)\), the inequalities \(0<\tau<\sigma<\pi\) follow from the range properties of \(\beta\) and \(\delta\).  See Appendix~\ref{sec:one-tick-pair} for the proof of the converse.  
\end{proof}

Next we characterise when two tick pairs can be realised by two Charlie measurements on a single state. Two distinct tick pairs $\Tpair_0 = (-\tau_0, \sigma_0 - \tau_0)$ and $\Tpair_1 = (-\tau_1, \sigma_1 - \tau_1)$ are quantum realisable if and only if they can be realised by the same $\lambda$. If one of the tick pairs is $(0, \pi)$, then $\lambda$ is uniquely determined by the other tick pair via \eqref{eqn:lambda-closed-form}. Otherwise, it is necessary and sufficient that
\begin{gather}
    \Lambda(\tau_0, \sigma_0) = \Lambda(\tau_1, \sigma_1) \nonumber\\
    \implies\quad \frac{\cos(\sigma_0/2)}{\cos(\tau_0 - \sigma_0/2)} = \frac{\cos(\sigma_1/2)}{\cos(\tau_1 - \sigma_1/2)}.\label{eqn:equal-lambdas}
\end{gather}

\begin{theorem}
\label{thm:two-tick-pairs}
Let
\[
    \Tpair_0=(-\tau_0,\sigma_0-
    \tau_0),
    \qquad
    \Tpair_1=(-\tau_1,\sigma_1-
    \tau_1)
\]
be two distinct tick pairs, with \(0\leq\tau_l<\sigma_l\leq\pi\) for \(l\in\Z_2\), ordered so that \(\tau_0<\tau_1\).  They are realised by two distinct Charlie measurements on the same interpolant state if and only if exactly one of the following cases holds.

\begin{enumerate}
    \item[(a)] \textbf{GHZ.}  \(\sigma_0=\sigma_1=\pi\).

    \item[(b)] \textbf{Non-GHZ with \(X\).}  \(\Tpair_0=(0,\pi)\).

    \item[(c)] \textbf{Non-GHZ with equal spread.}  \(\sigma_0=\sigma_1=\sigma<\pi\), and
    \[
        \tau_0+\tau_1=\sigma.
    \]

    \item[(d)] \textbf{Non-GHZ with unequal spread.}  \(\sigma_0\neq\sigma_1\), and
    \begin{equation}\label{eqn:insane-case-d-condition}
        \begin{gathered}
            0 < \Theta < 1,\quad
            \text{where}\quad \Theta := \frac{\kappa_0^2 + \kappa_1^2 - 2\kappa_0\kappa_1\cos\omega}{\sin^2\omega},\\[0.7em]
            \text{with}\quad \kappa_l := \cos\frac{\sigma_l}{2},\quad \omega := \tau_0 - \tau_1 + \frac{\sigma_1 - \sigma_0}{2}.
        \end{gathered}
    \end{equation}
\end{enumerate}
In all cases, the realising triple \((\lambda,C_0,C_1)\) is unique.
\end{theorem}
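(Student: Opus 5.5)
We split according to $\lambda$ and to whether one of the Charlie measurements is $0$. The main tool is Proposition~\ref{prop:one-tick-pair}, which gives the realisable single tick pairs and recovers $\lambda$ from any pair with $0<\tau<\sigma<\pi$ via $\Lambda(\tau,\sigma)$. The problem then reduces to deciding when two such pairs give the same $\lambda$, with $C_0\neq C_1$.

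\emph{Degenerate cases.} If $\lambda=0$, then every tick pair is $(-C,\pi-C)$, so $\sigma_0=\sigma_1=\pi$; this is case (a). Conversely, any two pairs with spread $\pi$ and distinct $\tau$ are realised by $\lambda=0$ and $C_l=\tau_l$.

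For $\lambda>0$, Proposition~\ref{prop:one-tick-pair} and Lemma~\ref{lem:delta-facts} show that $\sigma=\pi$ happens only when $C=0$, that is, only for the pair $(0,\pi)$. Since $\tau_0<\tau_1$ and $\tau\ge 0$, that pair can only be $\Tpair_0$, which is case (b). Then $\lambda=\Lambda(\tau_1,\sigma_1)$ is forced, and we only need $\sigma_1<\pi$ so that $\lambda>0$ and (a) and (b) are exclusive.

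\emph{Generic cases.} Otherwise both pairs satisfy $0<\tau_l<\sigma_l<\pi$, and realisability is exactly \eqref{eqn:equal-lambdas}, i.e.\ $\kappa_0\cos(\tau_1-\sigma_1/2)=\kappa_1\cos(\tau_0-\sigma_0/2)$.

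If $\sigma_0=\sigma_1=\sigma$, then $\kappa_0=\kappa_1\neq0$, so $\cos(\tau_0-\sigma/2)=\cos(\tau_1-\sigma/2)$. Both arguments lie in $(-\sigma/2,\sigma/2)$, and $\tau_0\neq\tau_1$, so $\tau_0-\sigma/2=-(\tau_1-\sigma/2)$, i.e.\ $\tau_0+\tau_1=\sigma$. This is case (c); the converse follows by reading the argument backwards.

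If $\sigma_0\neq\sigma_1$, write $x_l=\tau_l-\sigma_l/2$, so that $\omega=x_0-x_1$. The common value $s=\sin\lambda$ must satisfy $\cos x_l=\kappa_l/s$ for both $l$. Expanding $\cos x_0=\cos(x_1+\omega)$ gives
\[
\kappa_0/s=(\kappa_1/s)\cos\omega-\sin x_1\sin\omega .
\]
Squaring this and using $\sin^2x_1=1-\kappa_1^2/s^2$ yields $s^2=\Theta$. For this to give a genuine solution we need $s\in(0,1)$, which is the condition $0<\Theta<1$. We must also check that the sign of $\sin x_1$ obtained from the squared relation is consistent with the ordering $\tau_0<\tau_1$ and with $x_l\in(-\sigma_l/2,\sigma_l/2)$.

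\emph{Main obstacle.} I expect this sign bookkeeping in case (d) to be the hardest step. The equation $s^2=\Theta$ came from squaring, so the converse direction has to show that the prescribed $\tau,\sigma$ data fixes the sign of $\sin x_1$ and discards the spurious root. I would first try to show that $\omega\ne 0$ (so $\sin\omega\neq0$) whenever $\sigma_0\ne\sigma_1$ and the pairs are realisable. I would then use monotonicity of $\beta$ in $C$ to order the pairs.

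\emph{Uniqueness.} The triple $(\lambda,C_0,C_1)$ is unique: $\lambda$ is determined by the closed form in each case (or $\lambda=0$ in case (a)), and Proposition~\ref{prop:one-tick-pair} then determines each $C_l$ uniquely. The four cases are mutually exclusive by their spread conditions: (a) has both spreads equal to $\pi$, (b) has exactly one, and (c) and (d) have none, with equal and unequal spreads respectively.
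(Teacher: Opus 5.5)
Your case split and your treatment of cases (a)--(c) match the paper's proof, and you are right that (b) needs $\sigma_1<\pi$ for ``exactly one'' to hold. The gap is in case (d), and it is not the sign bookkeeping you flag as the main obstacle. For fully prescribed tick pairs, the implication ``$\sigma_0\neq\sigma_1$ and $0<\Theta<1$ $\Rightarrow$ realisable'' is false, so no treatment of the spurious root can prove it.

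The reason is that $\Theta$ depends on the data only through $\sigma_0,\sigma_1$ and $\omega=x_0-x_1$. It is therefore unchanged by a common shift $(\tau_0,\tau_1)\mapsto(\tau_0+\varepsilon,\tau_1+\varepsilon)$. The true criterion \eqref{eqn:equal-lambdas}, i.e.\ $\kappa_0\cos x_1=\kappa_1\cos x_0$, becomes $\mathrm{Re}\bigl((\kappa_0e^{ix_1}-\kappa_1e^{ix_0})e^{i\varepsilon}\bigr)=0$ under this shift. Since $\kappa_0\neq\kappa_1$ are both positive, this has only isolated solutions in $\varepsilon$.

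Concretely, $(\tau_0,\sigma_0)=(\pi/4,\pi/2)$ and $(\tau_1,\sigma_1)=(7\pi/12,2\pi/3)$ are realised with $\sin\lambda=1/\sqrt2$, and here $\omega=-\pi/4$, $\Theta=1/2$. Shift to $(\pi/6,\pi/2)$ and $(\pi/2,2\pi/3)$. This keeps $\omega$, hence $\Theta=1/2$, and satisfies all the hypotheses of (d). But $\Lambda(\pi/6,\pi/2)$ gives $\sin\lambda=\sqrt3-1$ while $\Lambda(\pi/2,2\pi/3)$ gives $\sin\lambda=1/\sqrt3$, so no common state exists. Your squaring argument shows only that $0<\Theta<1$ is necessary. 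The paper's proof has the same gap: it derives $\sin^2\lambda=\Theta$ and stops there.

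There are two ways to repair this.
\begin{enumerate}
\item For prescribed pairs, state (d) as the equation \eqref{eqn:equal-lambdas} itself. By Proposition~\ref{prop:one-tick-pair} this is exactly realisability, as your ``Generic cases'' sentence already says.
\item Read the $\Theta$-condition as the existence statement that Theorem~\ref{thm:valid-clocks} actually uses. There $\sigma_0,\sigma_1$ and $\tau_1-\tau_0$ are prescribed, and $\tau_0$ (equivalently $\mu$) is solved for.
\end{enumerate}

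In the second version your sign concern is genuine, and it can be settled as follows.
\begin{itemize}
\item Necessity of $\sin\omega\neq0$: realisability forces $x_0\neq x_1$ (else $\kappa_0=\kappa_1$) and $|x_l|<\pi/2$. Hence $\omega\in(-\pi,\pi)\setminus\{0\}$, so $\sin\omega\neq0$.
\item Sufficiency: given $0<\Theta<1$, put $s=\sqrt\Theta$. The identity $\Theta-\kappa_l^2=(\kappa_{1-l}-\kappa_l\cos\omega)^2/\sin^2\omega\geq0$ gives $\kappa_l/s\in(\kappa_l,1]$.
\item Construction: choose $x_1$ with $\cos x_1=\kappa_1/s$, and fix the sign of $\sin x_1$ by the \emph{unsquared} relation $s\sin x_1\sin\omega=\kappa_1\cos\omega-\kappa_0$. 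Set $x_0:=x_1+\omega$.
\item Verification: then $\cos x_0=\kappa_0/s$, so (reducing mod $2\pi$ if necessary) $|x_l|<\sigma_l/2$ and $\tau_l=x_l+\sigma_l/2\in(0,\sigma_l)$. Both pairs are then realised with $\sin\lambda=s$.
\item Uniqueness: $s$ and the sign of $\sin x_1$ are both forced, so this $\tau_0$ is unique.
\end{itemize}
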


\begin{proof}
First, note that whenever the tick pairs fall under one of the cases (a)--(c), they are quantum-realisable by Proposition \ref{prop:one-tick-pair}. For (d), we claim that, whenever $\sigma_0 \neq \sigma_1$ (so that $\lambda > 0$), \eqref{eqn:equal-lambdas} holds if and only if \eqref{eqn:insane-case-d-condition} holds.

Let $\theta_l := -\tau_l + \sigma_l/2$, so that $\omega = \theta_1 - \theta_0$. By \eqref{eqn:sin-lambda} and \eqref{eqn:equal-lambdas}, we have quantum-realisable tick pairs if and only if
\begin{equation*}
    \sin\lambda = \frac{\cos(\sigma_0/2)}{\cos\theta_0} = \frac{\cos(\sigma_1/2)}{\cos\theta_1}.
\end{equation*}
Then we have
\begin{equation}\label{eqn:clever_sub}
    \cos\theta_l = \frac{\kappa_l}{\sin\lambda}\quad \forall l \in\Z_2.
\end{equation}
Now, starting with
\begin{equation*}
    \cos\theta_1 =  \cos(\theta_0 + \omega),
\end{equation*}
expand the right-hand side and substitute \eqref{eqn:clever_sub} to eliminate the $\theta_l$. Rearranging gives
\begin{equation}\label{eqn:lambda_explicit}
    \sin^2 \lambda = \Theta,
\end{equation}
which is satisfied by a unique $\lambda \in \left(0, \piovertwo\right)$ if and only if $0 < \Theta < 1$.

Now, we show the converse by claiming that (a)--(d) exhaust all possible cases of two quantum-realisable tick pairs.

If $\lambda = 0$, then for all $C \in [0, \pi)$ we have $\beta(\lambda, C) \equiv -C$ and $\delta(\lambda, C) \equiv \pi$. Hence we recover case (a).

Now we consider all cases when $\lambda > 0$. By the discussion above this theorem, if one Charlie measurement is an $X$-measurement, $C_0 = 0$, then we recover case (b). If instead $C_0, C_1 > 0$, then $\sigma_0,\sigma_1 < \pi$. We separate into the two remaining cases: when $\sigma_0 = \sigma_1$ and when $\sigma_0 \neq \sigma_1$.

If $\sigma_0 = \sigma_1 = \sigma$, then by \eqref{eqn:lambda-closed-form} and \eqref{eqn:equal-lambdas} the tick pairs are quantum-realisable if and only if
\begin{equation*}
    \cos(\tau_0 - \sigma/2) = \cos(\tau_1 - \sigma/2).
\end{equation*}
Since $-\sigma/2 < \tau_l - \sigma/2 < \sigma/2$, this equation is satisfied if and only if $\tau_0 - \sigma/2 = -(\tau_1 - \sigma/2)$, i.e.\ $\tau_0 + \tau_1 = \sigma$. Hence we recover case (c). 

Finally, case (d) then generally covers all instances of two quantum-realisable tick pairs with $\sigma_0 \neq \sigma_1$.

In all four cases, uniqueness of $(\lambda,C_0,C_1)$ follows from Proposition~\ref{prop:one-tick-pair}.
\end{proof}

Finally, we adjoin the conditions of a Charlie clock to be paradoxical to it being realisable, and characterise the valid clocks.

\begin{theorem}
\label{thm:valid-clocks}
A Charlie clock \(\Gamma=(N,t,s_0,s_1,\mu)\) is valid if and only if it is the normalised representative (in the sense of Definition~\ref{def:valid-clock}) of a clock in one of the following four families.

\begin{enumerate}
    \item[(a)] \textbf{GHZ clocks.}
    Realisability conditions: 
    \begin{equation*}
        \begin{gathered}
            s_0 = s_1 = N,\quad \mu = 0,\quad t \equiv -v \mod{2N},\\
            \text{where}\quad 1 \leq v < N,\quad \gcd(v,N) = 1.
        \end{gathered}
    \end{equation*}
    Paradoxicality conditions:
    \begin{equation*}
        \begin{gathered}
            v \text{ odd},\quad N \text{ even}.
        \end{gathered}
    \end{equation*}

    \item[(b)] \textbf{Non-GHZ clocks with \(X\).}
    Realisability conditions:
    \begin{equation*}
        \begin{gathered}
            s_0 = N,\quad s_1 = s < N,\quad \mu = 0,\quad t \equiv -v \mod{2N},\\
            \text{where}\quad 1 \leq v \leq s-1.
        \end{gathered}
    \end{equation*}
    Paradoxicality conditions:
    \begin{equation*}
        \Odd_N(v),\quad \Odd_N(s-v),\quad \Odd_N(N+v),\quad \Odd_N(N+v-s).
    \end{equation*}

    \item[(c)] \textbf{Non-GHZ clocks with equal spread.}
    Realisability conditions:
    \begin{equation*}
        s_0 = s_1 = s < N,\quad 1 \leq t < s,\quad \mu \equiv -\frac{t+s}{2} \mod{2N}. 
    \end{equation*}
    Paradoxicality conditions:
    \begin{equation*}
        \Odd_N(t),\quad \Odd_N(t+s), \quad \Odd_N(t-s).
    \end{equation*}

    \item[(d)] \textbf{Non-GHZ clocks with unequal spread.}
    Realisability conditions:
    \begin{equation*}
        \begin{gathered}
            s_0 \neq s_1,\quad t \neq 0,\quad 0 < \Theta < 1\\
            \text{where}\quad \Theta := \frac{\kappa_0^2 + \kappa_1^2 - 2\kappa_0\kappa_1\cos\omega}{\sin^2\omega},\\
            \text{with}\quad \kappa_l := \cos\left(\piovertwoN s_l\right),\quad \omega := \piovertwoN(2t + s_1 - s_0).
        \end{gathered}
    \end{equation*}
    Whenever these quantum realisability conditions hold, we have $\mu \equiv -\frac{N}{\pi}\tau_0 \mod{2N}$, where $\tau_0$ is the unique solution in $(0, \pi)$ to
    \begin{equation*}
        \frac{\kappa_0}{\cos(\tau_0 - \piovertwoN s_0)} = \frac{\kappa_1}{\cos(\tau_0 - \pioverN t - \piovertwoN s_0)}.
    \end{equation*}
    Paradoxicality conditions:
    \begin{equation*}
        \Odd_N(t),\quad \Odd_N(t+s_1),\quad \Odd_N(t-s_0),\quad \Odd_N(t+s_1-s_0).
    \end{equation*}
\end{enumerate}
\end{theorem}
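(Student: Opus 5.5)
The plan is to unpack the three defining conditions of validity (Definition~\ref{def:valid-clock}) one at a time and translate each into the clock parameters \((N,t,s_0,s_1,\mu)\).

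\textbf{Realisability.} Realisability of \(\Gamma\) means that the two pairs \((T_{l,0},T_{l,1})\), \(l\in\Z_2\), are tick pairs realised by two distinct Charlie measurements on one interpolant state. Theorem~\ref{thm:two-tick-pairs} already sorts such pairs into the four cases (a)--(d), with a unique realising triple \((\lambda,C_0,C_1)\). So the first step is a dictionary: substitute \(T_{l,z}=\frac{\pi}{N}(t_{l,z}+\mu)\) into the normal forms \((-\tau_l,\sigma_l-\tau_l)\). This gives
\[
    s_l\equiv\tfrac{N}{\pi}\sigma_l,
    \qquad
    t\equiv\tfrac{N}{\pi}(\tau_0-\tau_1),
    \qquad
    \mu\equiv-\tfrac{N}{\pi}\tau_0 \pmod{2N}.
\]
In particular, all of \(\tau_l,\sigma_l\) are forced to be multiples of \(\pi/N\) up to the common offset.

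I would then go through the cases.
\begin{enumerate}[label=(\alph*)]
    \item GHZ, \(\lambda=0\): the pairs \((-C,\pi-C)\) give \(s_0=s_1=N\). The normalisation \(C_0=0\) gives \(\mu=0\), and \(C_1=\frac{\pi}{N}v\) with \(1\le v<N\) gives \(t\equiv-v\). The clock condition \(\gcd(N,t,s_0,s_1)=1\) becomes \(\gcd(N,v)=1\).
    \item Non-GHZ with \(X\): \(C_0=0\) gives \(\mu=0\) and \(s_0=N\). Writing \(\tau_1=\frac{\pi}{N}v\) and \(\sigma_1=\frac{\pi}{N}s\), the inequalities \(0<\tau_1<\sigma_1<\pi\) become \(1\le v\le s-1\) and \(s<N\).
    \item Equal spread: \(s_0=s_1=s\), and the relation \(\tau_0+\tau_1=\sigma\) lets us solve for \(\tau_0\), hence for \(\mu=-(t+s)/2\). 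The range \(1\le t<s\) comes from \(0<|\tau_1-\tau_0|<\sigma\), after fixing the orientation.
    \item Unequal spread: the formula for \(\Theta\) in Theorem~\ref{thm:two-tick-pairs} is rewritten with \(\sigma_l=\frac{\pi}{N}s_l\) and \(\tau_0-\tau_1=\frac{\pi}{N}t\), giving \(\kappa_l\) and \(\omega\) as stated. The offset \(\mu\) is recovered from \(\tau_0\), which is pinned down by the equal-\(\lambda\) equation \eqref{eqn:equal-lambdas} expressed via Lemma~\ref{lem:tick-pair-sin-lambda}.
\end{enumerate}
Conversely, each listed family reverses the dictionary to a valid tick configuration, so Theorem~\ref{thm:two-tick-pairs} supplies the realisation.

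\textbf{Paradoxicality.} This is purely arithmetic. Compute \(H_{\zz}=t+z_1s_1-z_0s_0\) for the four \(\zz\in Q\) and impose \(\Odd_N(H_{\zz})\). Two facts simplify the resulting list. First, \(\Odd_N\) is well defined mod \(2N\): adding \(2N\) to \(a\) changes \(a/\gcd(N,a)\) by an even amount. Second, \(\Odd_N\) is invariant under negation. With these, the conditions reduce case by case.
\begin{enumerate}[label=(\alph*)]
    \item The four values are \(-v\), \(N-v\), \(-v-N\) and \(-v\). Since \(\gcd(v,N)=1\), \(\Odd_N(v)\) means \(v\) is odd. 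Then \(\Odd_N(N-v)\) with \(\gcd(N,N-v)=1\) means \(N-v\) is odd, so \(N\) is even.
    \item The values are \(-v\), \(s-v\), \(-v-N\) and \(s-v-N\), giving the four stated conditions.
    \item The values are \(t\), \(t+s\), \(t-s\) and \(t\), so one condition is duplicated and three remain.
    \item The values are exactly the four listed.
\end{enumerate}

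\textbf{Normalisation.} Finally, I would check that the normalisation clause picks exactly one representative. The residual symmetries are the global equatorial rotation, which is already used to set \(t_{0,0}=0\) (and \(\mu=0\) in the GHZ case), and the orientation reversal \(XP_\theta\otimes XP_{-\theta}\otimes X\), which sends ticks to their negatives. Orientation reversal swaps the roles of the inequalities on \(\tau_0,\tau_1\), and the lexicographic choice then produces \(1\le t<s\) in case (c) and the stated sign of \(t\) elsewhere.

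I expect the main obstacle to be this sign and orientation bookkeeping. The reason is that Theorem~\ref{thm:two-tick-pairs} orders the pairs by \(\tau_0<\tau_1\), whereas the clock orders the measurements by \(C_0<C_1\). Reconciling these two orderings, and checking that the reflection maps each family to itself with the lexicographically least representative as described, is where errors are most likely. I would handle it by computing explicitly how \(\tau\) depends on \(C\) for fixed \(\lambda\) using the monotonicity of \(\beta\) (Lemma~\ref{lem:beta-facts}), and by tracking how \(t,s_0,s_1,\mu\) transform under the reflection.
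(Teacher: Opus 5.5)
Your core argument is the paper's proof. You push the four cases of Theorem~\ref{thm:two-tick-pairs} through the dictionary $s_l=\tfrac{N}{\pi}\sigma_l$, $t\equiv\tfrac{N}{\pi}(\tau_0-\tau_1)$, $\mu\equiv-\tfrac{N}{\pi}\tau_0$, and then impose $\Odd_N$ on the four values $t$, $t+s_1$, $t-s_0$, $t+s_1-s_0$ of $H_{\zz}$. That part is correct. Your remarks that $\Odd_N$ is invariant under $a\mapsto a+2N$ and $a\mapsto -a$ are exactly the ``evident simplifications'' the paper leaves unstated (e.g.\ $-v-N\equiv N-v$ in case (a)).

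\textbf{Where it fails: the normalisation paragraph.} There is nothing to reconcile between $\tau_0<\tau_1$ and $C_0<C_1$. By Lemma~\ref{lem:beta-facts}, $\beta(\lambda,\cdot)$ decreases strictly from $2\pi$ to $\pi$ on $(0,\pi)$, so $\tau=-\beta(\lambda,C)$ is strictly increasing in $C$. Your own dictionary then makes $t$ negative (as a representative in $(-N,N)$) for every clock with $C_0<C_1$.

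In case (c), $\tau_1=\sigma-\tau_0$ says exactly that $C_1=\pi-C_0$. This is because the $X$-reflection sends the tick pair of $C$ to that of $\pi-C$ via $\tau\mapsto\sigma-\tau$, with $\sigma$ fixed. So orientation reversal maps Charlie's pair to itself and leaves the clock unchanged, and no lexicographic choice can turn it into $1\le t<s$. With $C_0<C_1$, a case-(c) clock has $-s<t\le -1$. The displayed range $1\le t<s$ is the opposite labelling. The paper's own listed case-(c) tick pairs have $\tau_0=\tfrac{\pi}{2N}(t+s)>\tau_1$, so its two-line proof glosses over the same point.

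\textbf{What is actually needed.} Use the label swap $C_0\leftrightarrow C_1$, which acts by $(t,s_0,s_1,\mu)\mapsto(-t,s_1,s_0,\mu+t)$. It preserves the form $\mu\equiv-(t+s)/2$ and sends $H_{(z_0,z_1)}$ to $-H_{(z_1,z_0)}$, so it preserves every $\Odd_N$ condition. The reflection only permutes the $H_{\zz}$ up to sign. Together these show that realisability and paradoxicality are unchanged on passing to the normalised representative. That invariance, not a lexicographic choice of sign, is what the ``if and only if'' rests on.
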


\begin{proof}
The four realisability families are exactly the four cases of Theorem~\ref{thm:two-tick-pairs}, translated into clock parameters.  The additional paradoxicality requirement is
\[
    \operatorname{Odd}_N(H_{\zz})
    \qquad
    \forall\zz\in Q.
\]
Since the four values of \(H_{\zz}\) are
\[
    t,
    \qquad
    t+s_1,
    \qquad
    t-s_0,
    \qquad
    t+s_1-s_0,
\]
this is precisely the stated collection of oddness conditions, with the evident simplifications in cases (a)--(c).  Hence the displayed list is necessary and sufficient.

Below, we explicitly list the tick pairs in each case, written in terms of the updated parameters. For case (d), the equation determining $\tau_0$ and hence $\mu$ is obtained from \eqref{eqn:equal-lambdas}, noting that $\sigma_l = \pioverN s_l$ and $\tau_1 \equiv \tau_0 - \pioverN t$.

    \begin{enumerate}[label=(\alph*)]
        \item $\Tpair_0 = (0, \pi),\quad \Tpair_1 = \left(-\pioverN v,\, \pioverN (N-v)\right)$;
        \item $\Tpair_0 = (0, \pi),\quad \Tpair_1 = \left(-\pioverN v,\, \pioverN(s-v)\right)$;
        \item $\Tpair_0 = \left(-\piovertwoN(t + s),\, \piovertwoN(s - t)\right),\quad \Tpair_1 = \left(\piovertwoN(t-s),\, \piovertwoN(t+s)\right)$;
        \item $\Tpair_0 = \left(\pioverN \mu,\, \pioverN(\mu + s_0)\right),\quad \Tpair_1 = \left(\pioverN(\mu + t),\, \pioverN(\mu + t + s_1)\right)$.
    \end{enumerate}
\end{proof}

\begin{proposition}
\label{prop:clock-separation}
Two biconditional parity paradoxes arising from distinct normalised valid Charlie clocks are inequivalent.
\end{proposition}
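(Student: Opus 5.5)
We argue by contraposition: we show that two equivalent biconditional parity paradoxes determine the same normalised valid Charlie clock. The strategy is to prove that the clock is an invariant of the equivalence class. It is a function only of the state parameter \(\lambda\) and the unordered set of Charlie's measurements modulo the residual local unitaries. We then check that the normalisation in Definition~\ref{def:valid-clock} picks out a unique representative from the remaining finite ambiguity.

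\textbf{Step 1: the equivalence fixes the distinguished party, the state and Charlie.} Suppose \((\Bsimple,\M)\) and \((\Bsimple<\lambda'>,\M')\) are equivalent via \(U\) and \(\sigma\). For \(\lambda,\lambda'\neq0\), \cite[Lemma~B.2]{de_Silva_2025} gives \(\lambda=\lambda'\). It also shows that the residual unitaries are \(P_\theta\otimes P_{-\theta}\otimes I\) and \(XP_\theta\otimes XP_{-\theta}\otimes X\), possibly composed with the swap of Alice and Bob; the third qubit is the only one with a nonzero parameter and is therefore fixed by \(\sigma\). Consequently Charlie's set \(M_3\) is either unchanged or reflected, \(C\mapsto -C \bmod \pi\). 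In the GHZ case \(\lambda=0\) the state has more symmetry and the party permutation may move the conditioning party. Here I would use the fact that the clock invariants can be read off symmetric data: \(N\) is the common denominator of all tick differences, and the \(H_{\zz}\) are tick differences up to sign. I would then use the extra phase freedom, already built into the normalisation \(C_0=0\), \(\mu=0\), to argue that any two GHZ representatives produce the same normalised pair \((N,v)\).

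\textbf{Step 2: the ticks are determined by \((\lambda,M_3)\) up to reflection.} The tick values \(T_{l,z}=\beta(\lambda,C_l+z\pi)\) depend only on \(\lambda\) and \(C_l\). The reflection \(X\) sends \(\beta(\lambda,\varphi)\) to \(-\beta(\lambda,-\varphi)\), using the odd symmetry of \(\beta\) visible from \eqref{eqn:beta}. It therefore reverses the orientation of the tick circle, and this is exactly the residual orientation freedom resolved by taking the lexicographically least clock. The data \(N\), the tick indices and \(\mu\) are canonically extracted from the ticks in Lemma~\ref{lem:rational-ticks} and Proposition~\ref{prop:clock-layer-normal-form}: \(N\) is the minimal common denominator, forced by \(\gcd(N,t,s_0,s_1)=1\), and the indices are taken relative to \(t_{0,0}\). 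Labelling \(C_0<C_1\) in \([0,\pi)\) fixes which measurement is \(C_0\). Hence equivalent paradoxes yield clocks that differ at most by orientation reversal, and after normalisation the clocks coincide.

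\textbf{Step 3 (main obstacle).} The delicate point is to verify that \(\mu\) and \(t\) are genuinely invariant, and not merely well defined up to a shift that the normalisation fails to remove. A rotation \(P_\theta\) on Alice and \(P_{-\theta}\) on Bob leaves Charlie untouched, so this is harmless. The real risk lies in the interaction between reflection and the ordering \(C_0<C_1\): reflection may reverse the order and relabel \(l\). I would handle this case by case over families (a)--(d) of Theorem~\ref{thm:valid-clocks}. In each family, the uniqueness of the realising triple \((\lambda,C_0,C_1)\) in Theorem~\ref{thm:two-tick-pairs} shows that distinct normalised clocks realise distinct pairs \((\lambda,\{C_0,C_1\})\). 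I would then show that these pairs are not related by reflection, since the lexicographic choice selects exactly one of the two orientations. Combining this with Step 1 concludes the proof.
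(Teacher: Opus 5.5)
Your proposal follows essentially the paper's route. For $\lambda\neq0$, \cite[Lemma~B.2]{de_Silva_2025} shows that equivalence preserves $\lambda$ and Charlie's pair up to the reflection $C\mapsto -C$, hence preserves the tick pairs, which determine the clock, and the lexicographic normalisation absorbs the reflection (so the case-by-case check in your Step~3 is unnecessary, since the lex-least choice is reflection-invariant by construction). Your separate GHZ discussion is more careful than the paper's one-line proof, and the party-permutation worry you raise there can be closed: every GHZ clock has $d_{\zz}=\gcd(N,v)=1$, so a minimal proof gives Alice and Bob at least $N$ measurements each, making Charlie the only two-measurement party when $N\geq4$, while for $N=2$ every choice of conditioning party yields the same clock.
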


\begin{proof}
For interpolant states, the equivalence relation of Section~\ref{sec:equivalence} preserves the interpolant parameter and the two Charlie measurements, up to the stated reflection symmetry. Hence it preserves the two tick pairs. By Theorem~\ref{thm:two-tick-pairs}, those tick pairs determine the normalised clock. Distinct normalised valid clocks therefore give inequivalent proofs.
\end{proof}

We have now completed the state-and-Charlie part of the classification.  The next subsections fix a valid clock and determine exactly which Alice--Bob measurements complete it to a minimal biconditional parity proof.

\subsection{Alice--Bob completions}
\label{subsec:ab-completions}

Fix a valid Charlie clock $\Gamma$. For each total Charlie assignment $\zz$, Definition \ref{def:charlie-clock} associates a subgroup \(D_{\zz} \leq \Z_N\). This subsection shows how the Alice and Bob measurements constituting a witness path in the implication graph $I_\Psi(\zz)$ are precisely described by cosets of $D_\zz$. More specifically, if Alice’s measurements lie in the coset \(y+D_{\zz}\), then Bob’s measurements are forced to lie in the reflected coset \(u_{\zz}-y+D_{\zz}\).

While a paradox determines Alice (and therefore Bob) coset choices for each $\zz$, we do not necessarily need a coset for \textit{every} $\zz$ to reconstruct the paradox.  This is because after choosing cosets only for some selected $S \subset Q$, their union may already contain cosets for each $\zz \in Q \setminus S$. This is decided by the \emph{shadow test}, a simple congruence-divisibility criterion.

Much of the technical work in this subsection is devoted to reducing the problem from specifying all coset witnesses to specifying only a distinguished subset of them. The payoff is the notion of a \textit{canonical Alice--Bob completion}, which identifies the minimal data needed to reconstruct the Alice and Bob measurement sets uniquely, up to arbitrary real-valued labels assigned to the layers.

\subsubsection{Completions and measurement supports}
\label{subsubsec:completions}

Before the identification of witness paths with cosets in Section~\ref{subsubsec:witness-cosets}, we first introduce a set-theoretic construction that records the measurements that must appear in Alice’s and Bob’s measurement sets in order to contain a coset witness path. We then define an Alice--Bob completion as the data of a selected collection of total Charlie assignments, a partition of these assignments into layers, and a choice of coset for each selected assignment. In the remainder of this subsection, we show that these cosets indeed give witness paths, and we develop a simple shadow test for determining when a partial selection of layer-indexed cosets already contains all witness paths needed to certify a paradox.

\begin{definition}
\label{def:witness-carrier}
Let $\Gamma$ be a Charlie clock, $\I$ be a finite set, $i\in \I$, $\zz\in Q$, and let
\[
    y\in\Z_N/D_{\zz}.
\]
The \emph{witness carrier} of type \(\zz\), layer \(i\), and representative \(y\) is the subset
\[
    \mathsf W_{\zz}(i,y)
    \subseteq \I\times\Z_N\times\{\mathsf A,\mathsf B\}
\]
defined by
\[
\begin{aligned}
    \mathsf W_{\zz}(i,y)
    := \{(i,j,\mathsf A) \mid j\in y+D_{\zz}\} \cup
    \{(i,k,\mathsf B) \mid k\in u_{\zz}-y+D_{\zz}\}.
\end{aligned}
\]
\end{definition}

\begin{definition}
\label{def:ab-completion}
Let $\Gamma$ be a valid Charlie clock. An \emph{Alice--Bob completion} over $\Gamma$ is a tuple
\[
    P=(S,\I,\iota,\yy)
\]
where
\[
    S\subseteq Q,
    \qquad
    \I=\{0,\ldots,g\}\text{ for some } g\geq 0,
\]
\[
    \iota:S\twoheadrightarrow \I
\]
is a surjective function, and $\yy$ is a $S$-indexed vector of coset representatives:
\[
    y_{\zz}\in\Z_N/D_{\zz}
    \qquad
    \forall\zz\in S.
\]
The \emph{measurement support} of \(P\) is
\[
    \mathsf{MSupp}(P)
    :=
    \bigcup_{\zz\in S}
    \mathsf W_{\zz}(\iota(\zz),y_{\zz})
    \subseteq \I\times\Z_N\times\{\mathsf A,\mathsf B\}.
\]
For \(i\in \I\), its Alice and Bob index sets are
\[
    \J_i(P):=
    \{j\in\Z_N\mid(i,j,\mathsf A)\in\mathsf{MSupp}(P)\},
\]
and
\[
    \K_i(P):=
    \{k\in\Z_N\mid(i,k,\mathsf B)\in\mathsf{MSupp}(P)\}.
\]
Equivalently,
\[
    \J_i(P)=
    \bigcup_{\zz \in \iota^{-1}(i)}(y_{\zz}+D_{\zz}),
    \qquad
    \K_i(P)=
    \bigcup_{\zz \in \iota^{-1}(i)}(u_{\zz}-y_{\zz}+D_{\zz}).
\]
\end{definition}

Throughout, every coset representative is taken to be the least non-negative integer representative of its coset.
Every valid clock admits Alice--Bob completions: take \(S=Q\), one layer, and any representatives \(y_{\zz}\in\Z_N/D_{\zz}\).  We can remove one measurement at a time until the paradox is minimal.

\subsubsection{Witness paths as paired cosets}
\label{subsubsec:witness-cosets}

We now prove that witness carriers are exactly the witness paths that demonstrate $\Psi(\zz)$ is inconsistent once they have been expressed in terms of the Charlie clock.

\begin{lemma}
\label{lem:paths-layerwise}
Let $\Gamma$ be a valid clock, let \(P\) be an Alice--Bob completion over $\Gamma$, and choose any real shifts \(\alpha_i\) for the layers, distinct modulo \(\Z\).  In the physical scenario determined by $\Gamma$, \(P\), and these shifts, every implication edge joins Alice and Bob measurements in the same layer.  Consequently, every witness path lies entirely in one layer.
\end{lemma}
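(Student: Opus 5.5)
The plan is to repeat the layer-separation computation from the proof of Proposition~\ref{prop:clock-layer-normal-form}, now applied to the scenario built from $(\Gamma,P,(\alpha_i))$. First I will write down that scenario explicitly. Alice's measurements are $A_{i,j}=\frac{\pi}{N}(j+\alpha_i)$ for $j\in\J_i(P)$, and Bob's are $B_{i,k}=\frac{\pi}{N}(k+\mu-\alpha_i)$ for $k\in\K_i(P)$, each read modulo $\pi$ as in Remark~\ref{rem:measurement}. Charlie's ticks are $T_{l,z}=\frac{\pi}{N}(t_{l,z}+\mu)$, given by the realising pair $(\lambda,C_0,C_1)$ of $\Gamma$, which is unique by Theorem~\ref{thm:two-tick-pairs}. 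The state is the interpolant state $\Bsimple$, so the simplified impossibility criterion \eqref{eqn:imposs_interpolant_mod_pi} applies.

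Next, take any implication edge of $I_\Psi(\zz)$, say between $(A_{i,j},a)$ and $(B_{i',k},b)$. By Definition~\ref{def:imp-graph}, such an edge comes from an impossible event $(A_{i,j},B_{i',k},C_l)\to(a,b,z_l)$, so $A_{i,j}+B_{i',k}\equiv T_{l,z_l}\bmod\pi$. Substituting the three expressions and multiplying by $N/\pi$ gives
\[
j+k+\alpha_i-\alpha_{i'}+\mu\equiv t_{l,z_l}+\mu \mod N .
\]
Since $j,k,t_{l,z_l}$ are integers, this forces $\alpha_i-\alpha_{i'}\in\Z$. The shifts were chosen distinct modulo $\Z$, so $i=i'$. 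The same holds for the reverse implication, because it comes from the same forbidden pair.

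Finally, a witness path is a finite sequence of literals in which consecutive literals are joined by implication edges. By induction along the path, every literal lies in the layer of the first one. Hence the whole path lies in a single layer.

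The only delicate point is bookkeeping rather than mathematics. Measurement angles are identified modulo $\pi$, and the indices $j,k$ live in $\Z_N$ while the ticks live in $\Z_{2N}$. I need to check that reducing modulo $N$ after scaling by $N/\pi$ is legitimate; it is, because $\pi$ scales to exactly $N$. I also need to check that the layer labelling is well defined. If the same angle arose in two different layers, the shifts would agree modulo $\Z$, which is excluded by hypothesis, so each measurement has a unique layer.
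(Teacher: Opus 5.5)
Your proposal is correct and follows the paper's proof essentially verbatim. Both substitute the layer parametrisations into the interpolant impossibility congruence modulo $\pi$, scale by $N/\pi$ to force $\alpha_i-\alpha_{i'}\in\Z$, and then propagate the layer along the path. Your additional checks, that each angle has a unique layer label and that reduction modulo $\pi$ is compatible with reduction modulo $N$, are left implicit in the paper, and you handle them correctly.
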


\begin{proof}
The proof is the same calculation as in Proposition~\ref{prop:clock-layer-normal-form}.  If an Alice measurement in layer \(i\) and a Bob measurement in layer \(i'\) participate in an impossible event, then their angles have the form
\[
    \frac{\pi}{N}(j+\alpha_i),
    \qquad
    \frac{\pi}{N}(k+\mu-\alpha_{i'}).
\]
The impossibility congruence forces \(\alpha_i-\alpha_{i'}\in\Z\).  Since distinct layers have shifts distinct modulo \(\Z\), one has \(i=i'\).  Thus every edge lies inside one layer.  A path is a sequence of edges sharing vertices, and a vertex has a unique layer, so the entire path remains in one layer.  Notice that the conclusion depends only on the fact that the layer shifts are distinct; the particular values of the shifts do not enter the finite congruence calculations below.
\end{proof}

The following maps will be shown to capture the effect of taking two nontrivial steps in a witness path from an Alice literal, to a Bob literal, to the next Alice literal, as expressed in terms of the clock.  We require the effect on both the measurement label and the literal (measurement with outcome).

\begin{definition}
\label{def:return-maps}
Let $\Gamma$ be a Charlie clock and let \(\zz\in Q\).  The \emph{measurement return map} for \(\zz\) is
\[
    R_{\zz}:\Z_N\longrightarrow\Z_N,
    \qquad
    R_{\zz}(j)=j+H_{\zz}.
\]
The \emph{literal return map} for \(\zz\) is
\[
    \widetilde R_{\zz}:\Z_{2N}\longrightarrow\Z_{2N},
    \qquad
    \widetilde R_{\zz}(j)=j+H_{\zz}.
\]
For \(y\in\Z_N\), the orbit of \(y\) under \(R_{\zz}\) is the coset
\[
    y+D_{\zz}\subseteq\Z_N.
\]
\end{definition}

Observe that, within a fixed layer, a literal can be equivalently written in a clock form as follows.

\begin{remark}\label{rem:clocks}
    Within each layer of Alice and Bob measurements, we have the following bijective correspondence between literals and literal clock points in $\Z_{2N}$:
    \begin{equation*}
        (A_{j}, a) \longleftrightarrow j + aN,\qquad (B_{k}, b) \longleftrightarrow k + bN.
    \end{equation*}
    Thus we may relabel a minimal directed path of the form of \eqref{eqn:dir-path} as
    \begin{equation}\label{eqn:dir-path-relabelled}
        j_0 + a_0N \longrightarrow k_0 + b_0N \longrightarrow \cdots \longrightarrow j_q + a_q N \longrightarrow \cdots \longrightarrow j_0 + a_L N.
    \end{equation}
\end{remark}

The following lemma establishes that, assuming that a minimal directed path involves Alice and Bob measurements that have a coset structure consistent with the measurement return map, there is a well-defined \emph{lifting} from measurement indices in $\Z_N$ to literals in $\Z_{2N}$, hence justifying the well-definedness of the literal return map.

\begin{lemma}\label{lem:lifting}
    Let \(\Gamma\) be a Charlie clock, \(\zz\in Q\), and $i \in \I$.
    
    Suppose that, in layer $i$, the implication graph $I_\Psi(\zz)$ contains a minimal directed path
    \begin{equation*}
        j_0 + a_0N \longrightarrow k_0 + b_0N \longrightarrow \cdots \longrightarrow j_q + a_q N \longrightarrow \cdots \longrightarrow j_0 + a_L N,
    \end{equation*}
    where
    \begin{gather*}
        j_{q+1} \equiv R_\zz(j_q) \mod{N},\quad \text{i.e.}\quad j_q \equiv j_0 + q\HH_\zz \mod{N},\\
        \text{and} \quad k_q \equiv u_\zz - j_q \mod{N}
    \end{gather*}
    for $q = 0,\ldots,L-1$, so that $L = N/\dd_\zz$. Identify $j_L = j_0$.

    Then, for all $q = 0, \ldots, L$, we have
    \begin{align}
        j_q + a_qN &\equiv j_0 + a_0N + q\HH_\zz \mod{2N},\label{eqn:2N-shift}\\
        k_q + b_qN &\equiv t_0 - (j_q + a_qN) \mod{2N},\nonumber
    \end{align}
    so that we can equivalently write the directed path as
    \begin{equation}\label{eqn:dir-path-with-liftings}
        \tilde{j}_0 \longrightarrow t_0 - \tilde{j}_0 \longrightarrow \tilde{j}_0 + \HH_\zz \longrightarrow \cdots \longrightarrow \tilde{j}_0 + q\HH_\zz \longrightarrow \cdots \longrightarrow \tilde{j}_0 + \frac{N}{\dd_\zz}\HH_\zz,
    \end{equation}
    where $\tilde{j}_0 := j_0 + a_0 N$.
    
    Thus, the directed path is a witness path if and only if $\HH_\zz/\dd_\zz$ is odd.
\end{lemma}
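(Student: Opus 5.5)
The approach is a direct induction along the path using the $\Z_{2N}$ form of the impossibility congruence, Lemma~\ref{lem:edges-vs-eqns}, equation~\eqref{eqn:interpolant-imposs-2N}. By Remark~\ref{rem:min-witness-path}, the edges of a minimal path alternate between the conditionings $(C_0,z_0)$ and $(C_1,z_1)$. We orient the path so that each Alice-to-Bob edge comes from $(C_0,z_0)$ and each Bob-to-Alice edge from $(C_1,z_1)$; this is consistent with the hypothesis $k_q\equiv u_\zz - j_q \pmod N$, since $u_\zz=t_0$.

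\emph{Step 1 (one edge).} Translate a single implication edge into clock form. An edge $(A_j,a)\Rightarrow(B_k,b\oplus1)$ arises from the forbidden pair $((A_j,a),(B_k,b))$ for conditioning $l$. By \eqref{eqn:interpolant-imposs-2N} this means $j+k\equiv t_l+(1\oplus a\oplus b)N \pmod{2N}$. Write $\tilde j=j+aN$ and let $\tilde k'=k+(b\oplus1)N$ be the clock point of the target literal. Since $(1\oplus a\oplus b)N\equiv N+aN+bN \pmod{2N}$, the relation becomes
\[
\tilde k'\equiv t_l-\tilde j \pmod{2N}.
\]
So every edge, Alice-to-Bob or Bob-to-Alice by the symmetric computation, acts as the reflection $x\mapsto t_l-x$ on $\Z_{2N}$.

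\emph{Step 2 (two edges and induction).} Two consecutive edges, first with $t_0$ and then with $t_1$, give
\[
\tilde j\mapsto t_0-\tilde j\mapsto t_1-(t_0-\tilde j)=\tilde j+\HH_\zz .
\]
This is exactly the literal return map $\widetilde R_\zz$. Induction on $q$ then yields \eqref{eqn:2N-shift} together with the formula for the Bob clock points $k_q+b_qN\equiv t_0-(j_q+a_qN)$. These are just the intermediate reflections, and reducing them mod $N$ recovers the hypotheses on $j_q$ and $k_q$. This gives the relabelled form \eqref{eqn:dir-path-with-liftings}.

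\emph{Step 3 (endpoint parity).} Since $\gcd(N,\HH_\zz)=\dd_\zz$, the least $L>0$ with $L\HH_\zz\equiv0 \pmod N$ is $L=N/\dd_\zz$. At $q=L$ we get $j_L=j_0$ and $a_L N\equiv a_0N+(N/\dd_\zz)\HH_\zz \pmod{2N}$. Now $(N/\dd_\zz)\HH_\zz = N\cdot(\HH_\zz/\dd_\zz)$, so
\[
a_L\equiv a_0+\HH_\zz/\dd_\zz \pmod 2 .
\]
The path ends at the complementary literal $(A_{j_0},a_0\oplus1)$, and hence is a witness path, exactly when $\HH_\zz/\dd_\zz$ is odd.

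The main obstacle is bookkeeping: keeping the sign convention straight between the Bob literal $(B_k,b)$ in the forbidden pair and the target $(B_k,b\oplus1)$ of the implication, so that the $N$ produced by $1\oplus a\oplus b$ is absorbed exactly. One must also confirm that the alternation orientation matches $u_\zz=t_{0,z_0}$ rather than $t_{1,z_1}$. If the path starts with a $C_1$-edge instead, the same argument applies with the roles of $t_0$ and $t_1$ swapped: the step becomes $\tilde j\mapsto\tilde j-\HH_\zz$, and the parity conclusion is unchanged.
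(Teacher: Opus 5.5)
Your proof is correct and follows the paper's own argument. Both proceed by induction on $q$ using the $\Z_{2N}$ congruence of Lemma~\ref{lem:edges-vs-eqns}, show that two consecutive edges shift the Alice clock point by $\HH_\zz$, and then read off the parity of $\HH_\zz/\dd_\zz$ at $q=L$. Your packaging of each edge as the reflection $x\mapsto t_l-x$ on $\Z_{2N}$ tracks the $\oplus1$ between forbidden pairs and implication targets more carefully than the paper does: its displayed congruences each carry an extra $N$, and the two extra terms cancel on subtraction.
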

\begin{proof}
    We prove the lemma for Alice's measurement indices by induction; the argument for Bob's is symmetric. Clearly \eqref{eqn:2N-shift} holds for $q = 0$. Now assume that \eqref{eqn:2N-shift} holds for some $q = 0,\ldots,L-1$. Then by Lemma~\ref{lem:edges-vs-eqns} we have
    \begin{align}
        &j_q + k_q \equiv t_0 + (1 \oplus a_f \oplus b_f)N \mod 2N,\nonumber\\
        \text{and}\quad &j_{q+1} + k_q \equiv t_1 + (1 \oplus a_{q+1} \oplus b_q)N \mod 2N\nonumber\\
        \implies\quad &j_{q+1} + a_{q+1}N \equiv j_q + a_q N + \HH_\zz \mod 2N,\label{eqn:2N-difference}
    \end{align}
    thus showing the inductive step.

    The path is a witness path if and only if $j_0 + a_0 \not\equiv j_0 + a_0 + N\frac{\HH_\zz}{\dd_\zz} \mod 2N$, i.e.\ $\HH_\zz/\dd_\zz$ is odd.
\end{proof}

\begin{figure}[t]
\centering
\resizebox{\textwidth}{!}{%
\begin{tikzpicture}[
  line cap=round,
  line join=round,
  font=\small,
  midarrow/.style={
    postaction={decorate},
    decoration={
      markings,
      mark=at position 0.56 with {
        \arrow{Stealth[length=7.5pt,width=8.5pt]}
      }
    }
  },
  redpath/.style={red!82!black,line width=1.65pt,midarrow},
  bluepath/.style={blue!82!black,line width=1.65pt,midarrow},
  hig/.style={draw=green!60!black,line width=0.95pt}
]
  \def\R{2.45}
  \def\LabR{2.83}
  \coordinate (Acent) at (-4.9,0.18);
  \coordinate (Bcent) at ( 4.9,-0.10);

  \draw[thick] (Acent) circle (\R);
  \draw[thick] (Bcent) circle (\R);
  \node[font=\Large] at ($(Acent)+(0,3.4)$) {Alice};
  \node[font=\Large] at ($(Bcent)+(0,3.4)$) {Bob};

  \foreach \k in {0,...,23} {
    \pgfmathsetmacro{\ang}{15*\k}
    \coordinate (A\k)  at ($(Acent)+({\R*cos(\ang)},{\R*sin(\ang)})$);
    \coordinate (B\k)  at ($(Bcent)+({\R*cos(\ang)},{\R*sin(\ang)})$);
    \coordinate (Al\k) at ($(Acent)+({\LabR*cos(\ang)},{\LabR*sin(\ang)})$);
    \coordinate (Bl\k) at ($(Bcent)+({\LabR*cos(\ang)},{\LabR*sin(\ang)})$);
    \fill[black] (A\k) circle (1.35pt);
    \fill[black] (B\k) circle (1.35pt);
  }

  \foreach \k in {0,...,11} {
    \node[font=\scriptsize,text=blue!85!black] at (Al\k) {$\k$};
    \node[font=\scriptsize,text=blue!85!black] at (Bl\k) {$\k$};

    \pgfmathtruncatemacro{\kk}{\k+12}
    \node[font=\scriptsize,text=orange!90!black] at (Al\kk) {$\k$};
    \node[font=\scriptsize,text=orange!90!black] at (Bl\kk) {$\k$};
  }

  \foreach \p in {1,13,16,19,22} {
    \draw[hig] (A\p) circle (0.16);
  }
  \foreach \p in {2,5,8,23} {
    \draw[hig] (B\p) circle (0.16);
  }

  \draw[hig,line width=1.05pt] (A1)  circle (0.27);
  \draw[hig,line width=1.05pt] (A13) circle (0.27);

  \draw[redpath]  (A1)  -- (B23);
  \draw[bluepath] (B23) -- (A22);
  \draw[redpath]  (A22) -- (B2);
  \draw[bluepath] (B2)  -- (A19);
  \draw[redpath]  (A19) -- (B5);
  \draw[bluepath] (B5)  -- (A16);
  \draw[redpath]  (A16) -- (B8);
  \draw[bluepath] (B8)  -- (A13);

  \draw[dashed,green!60!black,line width=1.8pt] (A1) -- (A13);

\end{tikzpicture}%
}
\caption{A concrete witness path in the bipartite implication graph, drawn on the literal \(2N\)-clock for \(N=12\). Each circle is a copy of \(\mathbb Z_{24}\): the blue and orange labels are respectively the outcome-\(0\) and outcome-\(1\) copies of the measurement clock \(\mathbb Z_{12}\), so antipodal points represent complementary literals of the same measurement. The green-ringed vertices are the literals traversed by the path; the red and blue arrows are implication edges arising from the two Charlie conditionings, and the path alternates between these two partial matchings. Its Alice measurement indices form the return-map coset \(1+3\mathbb Z_{12}=\{1,4,7,10\}\), while the corresponding Bob indices form the paired reflected coset \(2+3\mathbb Z_{12}=\{2,5,8,11\}\). For convenience, we write $A_j^a$ for the literal $(A_j,a)$ and so on. On the literal clock the highlighted path \(A_1^0\Rightarrow B_{11}^1\Rightarrow A_{10}^1\Rightarrow B_2^0\Rightarrow A_7^1\Rightarrow B_5^0\Rightarrow A_4^1\Rightarrow B_8^0\Rightarrow A_1^1\) starts at \(A_1^0\) and ends at its complement \(A_1^1\), marked by the prominent dashed green chord. Thus the picture displays the lift from a coset orbit in \(\mathbb Z_{12}\) to a genuine witness path between antipodal literals in \(\mathbb Z_{24}\).}
\label{fig:two-charlie-2N-clock-witness}
\end{figure}

Now we show that the structure of a minimal directed path considered in Lemma~\ref{lem:lifting} in fact applies to \emph{every} possible minimal directed path, allowing us to deduce an equivalence between Alice and Bob measurement cosets and witness paths.

\begin{proposition}
\label{prop:witness-cosets}
Let \(\Gamma\) be a valid Charlie clock, let \(P\) be an Alice--Bob completion over \(\Gamma\), and choose any distinct real shifts for the layers.  Fix \(i\in \I\) and \(\zz\in Q\).  In the resulting physical scenario, the implication graph for the total Charlie assignment \(\zz\) contains a \(\zz\)-witness path contained in layer \(i\) if and only if there exists
\[
    y\in\Z_N/D_{\zz}
\]
such that
\[
    y+D_{\zz}
    \subseteq
    \J_i(P),
    \qquad
    u_{\zz}-y+D_{\zz}
    \subseteq
    \K_i(P).
\]
Equivalently, the measurement support contains the witness carrier \(\mathsf W_{\zz}(i,y)\).
\end{proposition}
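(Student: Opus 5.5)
We prove the two directions separately, working throughout in layer \(i\) with the clock form of literals from Remark~\ref{rem:clocks}, and using Lemma~\ref{lem:paths-layerwise} to ignore all other layers.

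\emph{Coset \(\Rightarrow\) witness path.} Suppose \(y+D_{\zz}\subseteq\J_i(P)\) and \(u_{\zz}-y+D_{\zz}\subseteq\K_i(P)\). Set \(j_q:=y+qH_{\zz}\) and \(k_q:=u_{\zz}-j_q\) in \(\Z_N\), for \(q=0,\dots,L-1\) with \(L=N/d_{\zz}\). These indices exhaust the two cosets, so all of them are present in the scenario. Since \(u_{\zz}=t_0\), we have \(j_q+k_q\equiv t_0\pmod N\). We also have \(j_{q+1}+k_q\equiv t_0+H_{\zz}=t_1\pmod N\). By Lemma~\ref{lem:edges-vs-eqns}, each of these congruences yields an implication edge, coming from \((C_0,z_0)\) and \((C_1,z_1)\) respectively.

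To fix the outcomes, start at the literal \(\tilde j_0=y\in\Z_{2N}\). The \(2N\)-congruence \eqref{eqn:interpolant-imposs-2N} determines the outcome of each successive vertex uniquely. Alternating the two edge types, the next Bob literal is \(t_0-\tilde j_q\) and the next Alice literal is \(\tilde j_q+H_{\zz}\) in \(\Z_{2N}\). This produces the path \eqref{eqn:dir-path-with-liftings}. After \(L\) Alice-to-Alice steps the path returns to \(\tilde j_0+(N/d_{\zz})H_{\zz}=\tilde j_0+N\cdot(H_{\zz}/d_{\zz})\). Since \(\Gamma\) is valid, hence paradoxical, \(H_{\zz}/d_{\zz}\) is odd, so this endpoint is \(\tilde j_0+N\), the complement literal \((A_{j_0},a_0\oplus1)\). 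This is exactly the final clause of Lemma~\ref{lem:lifting}, which we will cite directly.

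\emph{Witness path \(\Rightarrow\) coset.} Take a \(\zz\)-witness path in layer \(i\) and shorten it to a minimal one. By Remark~\ref{rem:min-witness-path}, its edges alternate between the two Charlie conditionings.

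The first delicate point is orientation: the path may begin with a \((C_1,z_1)\)-edge rather than a \((C_0,z_0)\)-edge. Since the graph is bidirected (Lemma~\ref{lem:interpolant-impl}), we can reverse the path, or start it from a different Alice vertex, so that the first edge out of an Alice literal is of \((C_0,z_0)\)-type. Then the mod-\(N\) computation from the proof of Lemma~\ref{lem:rational-ticks}, now carried out with clock indices, gives \(j_{q+1}\equiv j_q+H_{\zz}\) and \(k_q\equiv t_0-j_q=u_{\zz}-j_q\pmod N\). Hence every Alice index lies in \(y+D_{\zz}\) with \(y:=j_0\), and every Bob index lies in \(u_{\zz}-y+D_{\zz}\).

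The main obstacle is showing that a minimal witness path visits the \emph{entire} coset, rather than closing up after \(L'<L\) steps. We handle it with the \(\Z_{2N}\) lift of Lemma~\ref{lem:lifting}. The path first returns to the Alice measurement \(j_0\) after \(L'\) steps, and then \(L'H_{\zz}\equiv0\pmod N\), so \(L'\) is a multiple of \(L\). Moreover, the run of the path before this first return is itself a minimal directed path of the form treated by Lemma~\ref{lem:lifting}. By minimality, the witness path cannot pass through \(A_{j_0}\) with the same outcome twice. If the endpoint reached at the \(L\)-th step were \(\tilde j_0\) itself, the path would contain a closed loop that could be deleted. So the orbit after \(L\) steps must already reach \(\tilde j_0+N\), which forces \(L'=L\). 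Consequently all of \(y+D_{\zz}\) appears among Alice's indices, and therefore lies in \(\J_i(P)\); likewise the reflected coset lies in \(\K_i(P)\).

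Finally, the two inclusions together say precisely that \(\mathsf W_{\zz}(i,y)\subseteq\mathsf{MSupp}(P)\), which is the stated reformulation.
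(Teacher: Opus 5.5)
Your proposal is correct and follows essentially the same route as the paper: the coset-to-path direction runs the measurement return map $R_{\zz}$ through $y+D_{\zz}$ and lifts to $\Z_{2N}$ via Lemma~\ref{lem:lifting} using the oddness of $H_{\zz}/d_{\zz}$, while the path-to-coset direction reads the paired cosets off the alternation of the two Charlie conditionings. Your extra care over orientation and over the path covering the whole coset (which the paper leaves implicit) is sound; the coverage claim also follows directly, without any minimality argument, because the path must end at the measurement $A_{j_0}$, and that measurement is first revisited only after $N/d_{\zz}$ Alice-to-Alice steps.
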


\begin{proof}
By Lemma~\ref{lem:paths-layerwise}, any witness path lies in a single layer.  Fix such a layer \(i\).  Along a minimal path for the total assignment \(\zz=(z_0,z_1)\), the two Charlie conditionings alternate.  Two consecutive edges therefore move Alice's index by
\[
    H_{\zz}=t_{1,z_1}-t_{0,z_0}
\]
modulo \(N\).  If the Alice indices encountered are \(j_0,j_1,j_2,\ldots\), then
\[
    j_{q+1}
    \equiv
    j_q+H_{\zz}
    \mod N.
\]
Thus the Alice indices in the path form the orbit
\[
    j_0+D_{\zz}.
\]
The Bob index paired with an Alice index \(j\) by the first Charlie conditioning satisfies
\[
    k\equiv u_{\zz}-j\mod N,
\]
so the Bob indices form
\[
    u_{\zz}-j_0+D_{\zz}.
\]
Thus any witness path carries the displayed paired cosets.

Conversely, suppose layer \(i\) contains the two cosets
\[
    y+D_{\zz}\subseteq \J_i(P),
    \qquad
    u_{\zz}-y+D_{\zz}\subseteq \K_i(P).
\]
Starting from \(y\), repeatedly apply the measurement return map \(R_{\zz}\).  This runs through the coset \(y+D_{\zz}\).  The corresponding Bob indices all lie in the displayed Bob coset, so the implication graph contains the alternating path. Upon lifting measurement indices to literals, Lemma~\ref{lem:lifting} implies that the transformation between successive Alice literals is governed by \(\widetilde R_{\zz}\).  After \(N/d_{\zz}\) return steps the literal displacement is
\[
    \frac{N}{d_{\zz}}H_{\zz}
    =
    N\frac{H_{\zz}}{d_{\zz}}.
\]
Since \(\Gamma\) is valid, it is paradoxical, so \(H_{\zz}/d_{\zz}\) is odd.  Hence the lift ends at the complementary literal, and the alternating path is a witness path.
\end{proof}

We have now established that, given a valid Charlie clock, a paradox is the same as a choice of representative for a coset of $D_\zz \leq \Z_N$, and layer index, for each $\zz \in Q$.  Next, we reduce this information to the minimum necessary and identify those Alice--Bob completions that do not contain redundant information.

\subsubsection{The shadow test}
\label{subsubsec:shadows}

An Alice--Bob completion may select witness carriers only for some total Charlie assignments.  Nevertheless, the measurement support generated by those selected carriers may contain further witness carriers.  The purpose of the shadow test is to determine exactly which carriers are already present.  Since all sets involved are cosets of cyclic groups, this test ultimately reduces to elementary congruence checks.

Fix a valid Charlie clock \(\Gamma\), and let $P=(S,\I,\iota,\yy)$ be an Alice--Bob completion.  For a target assignment \(\ww\in Q\), a layer \(i\in \I\), and a representative $r\in\Z_N/D_{\ww}$, we want to decide whether the full carrier $\mathsf W_{\ww}(i,r)$ is contained in \(\mathsf{MSupp}(P)\).  This means checking both
\[
    r+D_{\ww}
    \subseteq
    \J_i(P)
\]
on Alice's side, and
\[
    u_{\ww}-r+D_{\ww}
    \subseteq
    \K_i(P)
\]
on Bob's side.

It is useful to compare Alice and Bob in the same coordinates.  Reflect Bob's clock through \(u_{\ww}\), i.e.\ apply
\[
    k\longmapsto u_{\ww}-k.
\]
Then the target Bob coset \(u_{\ww}-r+D_{\ww}\) becomes \(r+D_{\ww}\).  A selected Bob source coset
\[
    u_{\zz}-y_{\zz}+D_{\zz}
\]
is correspondingly reflected to
\[
    y_{\zz}+u_{\ww}-u_{\zz}+D_{\zz}.
\]
Thus both Alice and reflected Bob ask the same question: which parts of the target coset \(r+D_{\ww}\) are covered by selected source cosets?

\begin{definition}
\label{def:parity-quotient}
Let \(\ww\in Q\).  The \emph{parity quotient} of the \(\ww\)-return subgroup is
\[
    \Pi_{\ww}:=D_{\ww}/2D_{\ww}.
\]
Equivalently,
\[
    \Pi_{\ww}\cong
    \begin{cases}
        0 = \{0\},&N/d_{\ww}\text{ odd},\\
        \Z_2 = \{0,1\},&N/d_{\ww}\text{ even}.
    \end{cases}
\]

For \(r\in\Z_N/D_{\ww}\), the coset \(r+D_{\ww}\) is a translation of the group \(D_{\ww}\).  Its \emph{parity fibres} are the fibres of the quotient map
\[
    r+D_{\ww}\longrightarrow D_{\ww}/2D_{\ww},
    \qquad
    r+h\longmapsto h+2D_{\ww}.
\]
Thus, writing the elements of \(\Pi_{\ww}\) as \(\varepsilon=0\) when \(\Pi_{\ww}\) is a singleton, and as \(\varepsilon=0,1\) when \(\Pi_{\ww}\cong\Z_2\), the parity fibre corresponding to $\varepsilon$ is
\[
    r+\varepsilon d_{\ww}+2D_{\ww}
    \subseteq
    r+D_{\ww}.
\]
If \(N/d_{\ww}\) is odd, this is the whole coset.  If \(N/d_{\ww}\) is even, the two parity fibres are the two alternating halves of the coset.
\end{definition}

\begin{definition}
\label{def:shadow}
Let \(P=(S,\I,\iota,\yy)\) be an Alice--Bob completion.  Fix
\[
    \ww\in Q,
    \qquad
    i\in \I,
    \qquad
    r\in\Z_N/D_{\ww},
    \qquad
    p\in\{0,1\}.
\]
For \(\zz\in S\), set
\[
    c^p_{\ww\zz}:=
    y_{\zz}+p(u_{\ww}-u_{\zz})
    \in\Z_N/D_{\zz}.
\]
Here \(p=0\) denotes Alice, while \(p=1\) denotes Bob after reflection through \(u_{\ww}\).

The \emph{\(p\)-shadow} of layer \(i\) on the candidate \((\ww,r)\) is the subset $\operatorname{Sh}^{p}_{i,\ww}(r) \subseteq \Pi_{\ww}$ defined by
\[
\operatorname{Sh}^{p}_{i,\ww}(r)
:=
\left\{
\varepsilon\in\Pi_\ww \;\middle|\;
\exists \ \zz\in \iota^{-1}(i)
\text{ such that }
d_\zz\mid 2d_\ww
\text{ and }
c^{p}_{\ww\zz}\equiv r+\varepsilon d_\ww \mod{d_\zz}
\right\}.
\]

The \emph{solution set} of \(\ww\) in \(P\) is
\[
    \operatorname{Sol}_P(\ww)
    :=
    \left\{
    (i,r) \Bigm\vert
    \operatorname{Sh}^{0}_{i,\ww}(r)=    \operatorname{Sh}^{1}_{i,\ww}(r)=\Pi_{\ww}
    \right\}.
\]
\end{definition}

The divisibility condition in Definition~\ref{def:shadow} says that a selected source coset is large enough to contain a parity fibre of the target coset.  The congruence says which fibre it contains.

\begin{lemma}
\label{lem:parity-fibre-containment}
Let \(\ww,\zz\in Q\), let \(r\in\Z_N/D_{\ww}\), let \(c\in\Z_N/D_{\zz}\), and let \(\varepsilon\in\Pi_{\ww}\).  Then
\[
    r+\varepsilon d_{\ww}+2D_{\ww}
    \subseteq
    c+D_{\zz}
\]
if and only if
\[
    d_{\zz}\mid 2d_{\ww}
    \qquad\text{and}\qquad
    c\equiv r+\varepsilon d_{\ww}\mod{d_{\zz}}.
\]
\end{lemma}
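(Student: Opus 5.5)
The plan is to reduce the statement to two elementary facts about cosets in the finite cyclic group \(\Z_N\).

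\begin{enumerate}
\item[(i)] For subgroups \(H,K\le\Z_N\) and elements \(x,c\), one has \(x+H\subseteq c+K\) if and only if \(x-c\in K\) and \(H\subseteq K\).
\item[(ii)] For divisors \(d,e\) of \(N\), one has \(e\Z_N\subseteq d\Z_N\) if and only if \(d\mid e\).
\end{enumerate}

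We apply these with \(x:=r+\varepsilon d_{\ww}\), \(H:=2D_{\ww}\) and \(K:=D_{\zz}=d_{\zz}\Z_N\).

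For (i): if \(x+H\subseteq c+K\), then taking the element \(x\) itself gives \(x-c\in K\). For any \(h\in H\), both \(x+h\) and \(x\) then lie in \(c+K\), so \(h\in K\); hence \(H\subseteq K\). The converse is immediate.

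For (ii): \(d\Z_N\) is the unique subgroup of order \(N/d\), generated by \(d\). So \(e\Z_N\subseteq d\Z_N\) iff \(e\in d\Z_N\), iff \(e\equiv dm \pmod N\) for some \(m\). Since \(d\mid N\), this is equivalent to \(d\mid e\).

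The one point needing care is that \(2D_{\ww}\) is not literally \(2d_{\ww}\Z_N\) with \(2d_{\ww}\) a divisor of \(N\). As a subgroup it equals \(\gcd(N,2d_{\ww})\Z_N\): it is generated by \(2d_{\ww}\), and the subgroup generated by any element \(e\) is \(\gcd(N,e)\Z_N\). This matches Definition~\ref{def:parity-quotient}: it is \(D_{\ww}\) when \(N/d_{\ww}\) is odd, and the index-two subgroup otherwise. Applying (ii), \(2D_{\ww}\subseteq D_{\zz}\) iff \(d_{\zz}\mid\gcd(N,2d_{\ww})\). Because \(d_{\zz}\mid N\) always, this is equivalent to \(d_{\zz}\mid 2d_{\ww}\), which is the first condition in the statement.

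For the second condition, \(x-c\in d_{\zz}\Z_N\) is exactly the congruence \(c\equiv r+\varepsilon d_{\ww}\pmod{d_{\zz}}\). This congruence is meaningful on \(\Z_N\) because \(d_{\zz}\mid N\).

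Well-definedness is where I expect the only subtlety. Here \(c\) is only defined modulo \(d_{\zz}\), which is harmless. However, \(r\) is only defined modulo \(d_{\ww}\), while the congruence is taken modulo \(d_{\zz}\), which divides \(2d_{\ww}\) but need not divide \(d_{\ww}\). Changing \(r\) by \(d_{\ww}\) swaps the two parity fibres, and simultaneously shifts the right-hand side of the congruence by \(d_{\ww}\). So the equivalence holds fibre-by-fibre for any fixed representative. The labelling \(\varepsilon\) is tied to the chosen representative, namely the least non-negative one fixed after Definition~\ref{def:ab-completion}. I will state this explicitly and note that the set of all fibres covered does not depend on the choice.

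Combining (i), (ii) and the identification of \(2D_{\ww}\) gives the lemma.
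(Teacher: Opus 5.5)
Your proposal is correct and follows the paper's proof. Both reduce to the fact that \(a+H\subseteq b+K\) iff \(H\le K\) and \(a\in b+K\), and then read the subgroup containment \(2D_{\ww}\le D_{\zz}\) as \(d_{\zz}\mid 2d_{\ww}\) and the point containment as the stated congruence; your extra details, identifying \(2D_{\ww}=\gcd(N,2d_{\ww})\Z_N\) and discussing how \(\varepsilon\) depends on the representative \(r\), simply make explicit what the paper leaves implicit, and since the equivalence holds for any fixed integer representative, the well-definedness remark is optional.
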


\begin{proof}
The fibre \(r+\varepsilon d_{\ww}+2D_{\ww}\) is a coset of the subgroup \(2D_{\ww}\), and \(c+D_{\zz}\) is a coset of \(D_{\zz}\).  A coset \(a+H\) is contained in a coset \(b+K\) precisely when \(H\leq K\) and \(a\in b+K\).  Here the subgroup containment \(2D_{\ww}\leq D_{\zz}\) is equivalent to \(d_{\zz}\mid 2d_{\ww}\), and the point containment is the stated congruence.
\end{proof}

It remains to justify that no finer pieces are needed.  A priori, a target coset might be covered by thirds, quarters, or still smaller intersections.  The following two lemmas show that, in the case where $|M_3| = 2$ and hence $|Q| = 4$, any such cover is detected already on the quotient \(D_{\ww}/2D_{\ww}\).

\begin{lemma}[Small cyclic covers]
\label{lem:small-cyclic-covers}
Let \(C\) be a finite cyclic group. Suppose \(C\) is irredundantly covered by at most
three proper cosets of subgroups. Then the ordered relative index pattern is one of
\[
(2,2),\qquad (3,3,3),\qquad (2,4,4).
\]
\end{lemma}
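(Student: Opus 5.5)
The plan is to use the standard measure-counting argument for coset covers of finite groups, with an explicit case analysis for covers by at most three cosets. Write $C$ for the cyclic group and $a_1+H_1,\dots,a_m+H_m$, $m\le 3$, for an irredundant cover by proper cosets. Put $n_k:=[C:H_k]\ge 2$ and order them so that $n_1\le n_2\le n_3$. A single proper coset cannot cover $C$, so $m\ge 2$. The two key tools are the following.

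\begin{itemize}
\item[] \emph{Counting.} Since the cover has total size at least $|C|$, we get $\sum_k 1/n_k\ge 1$.
\item[] \emph{Complement argument.} If some coset $a_k+H_k$ has index $n_k$, then its complement is a union of $n_k-1$ cosets of $H_k$. Each of these must be covered by the remaining cosets, intersected with $H_k$-cosets.
\end{itemize}

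\emph{Case $m=2$.} Counting forces $1/n_1+1/n_2\ge 1$, so $n_1=n_2=2$. Conversely, a subgroup of index $2$ and its complement give the pattern $(2,2)$, which is irredundant.

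\emph{Case $m=3$.} Counting allows $(2,n_2,n_3)$ with $n_2,n_3$ arbitrary, as well as $(3,3,3)$, $(3,3,n)$ and $(3,4,4)$ among others. We eliminate the unwanted patterns as follows.

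\begin{itemize}
\item[] \emph{Index-$2$ subgroups.} Suppose $n_1=2$. The complement $C\setminus(a_1+H_1)$ is a single coset $b+H_1$. It must be covered by $(a_2+H_2)\cap(b+H_1)$ and $(a_3+H_3)\cap(b+H_1)$, each of which is empty or a coset of $H_k\cap H_1$ inside $b+H_1\cong H_1$, also cyclic. By irredundance neither intersection is empty. If one of them were all of $b+H_1$, then $a_k+H_k\supseteq b+H_1$. That forces either $H_k\supseteq H_1$, so $H_k=H_1$ (it is proper), or $H_k=C$; the first gives a redundant cover, since one coset already covers the complement and the third is unnecessary, and the second contradicts properness. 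So we have an irredundant cover of a cyclic group by two proper cosets, and the $m=2$ case forces both to have relative index $2$ in $H_1$. Hence $[H_1:H_1\cap H_k]=2$. In a cyclic group the subgroups are totally ordered by index-divisibility, so $H_1\cap H_k$ is the unique subgroup of index $4$. If $H_k\ne H_1\cap H_k$, then $H_k$ would have index dividing $4$ while not containing $H_1$, and in a cyclic group that means $H_k$ of index $4$ anyway. This gives $n_2=n_3=4$, the pattern $(2,4,4)$.
\item[] \emph{No index-$2$ subgroup.} Now all $n_k\ge 3$. Counting leaves $(3,3,3)$ and exceptional patterns such as $(3,3,n)$ and $(3,4,4)$ where the sum still reaches $1$ only barely or not at all. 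For $\sum 1/n_k<1$ the cover is impossible outright, so only $(3,3,3)$ survives, with sum exactly $1$. Equality in the counting bound forces the cosets to be pairwise disjoint. Disjoint cosets of index-$3$ subgroups in a cyclic group must share the unique index-$3$ subgroup and partition $C$, which realises $(3,3,3)$.
\end{itemize}

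The main obstacle is the relative-index bookkeeping. One must check that $(a_k+H_k)\cap(b+H_1)$ really is a coset of $H_k\cap H_1$, and that irredundance passes to the restricted cover. It is also possible that $H_k\not\le H_1$ while the relative index is still $2$. For this, uniqueness of subgroups of each order in a cyclic group, together with $[H_1:H_1\cap H_k]=2$, determines $\operatorname{lcm}$-type indices and pins $n_k=4$.
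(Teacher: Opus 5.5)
Your proof is correct, and in the case $n_1=2$ it takes a different route from the paper.

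\textbf{The paper's route.} It stays inside $C$ and counts sizes in the complementary half $P'$, splitting by parity of index. An odd-index coset meets both cosets of the index-two subgroup equally, so it contributes at most $|C|/6$ to $P'$. An even-index coset lies in a single half, hence (by irredundance) inside $P'$, and it cannot have index $2$. The bound $|C|/6+|C|/4<|C|/2$ then forces both remaining cosets to have index exactly $4$.

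\textbf{Your route.} You restrict to $b+H_1\cong H_1$ and show the two traces are nonempty proper cosets of $H_1\cap H_k$. Reusing the two-coset case gives $[H_1:H_1\cap H_k]=2$, and $n_k=4$ then follows from the subgroup structure of a cyclic group.

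\textbf{Comparison.} Your reduction is self-similar and excludes odd indices automatically. An odd-index $H_k$ satisfies $H_k+H_1=C$, which would give an odd relative index $[C:H_k]\ge 3$ in $H_1$. The paper's argument is a more hands-on size estimate that avoids coset-intersection bookkeeping. Incidentally, you do not need irredundance of the restricted cover, which you list as an obstacle. Properness of the two traces already suffices for the $1/n+1/n'\ge 1$ count.

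\textbf{Side claims to repair.} These are not load-bearing, but they are wrong as written.
\begin{enumerate}
\item Subgroups of a cyclic group are not totally ordered; in $\Z_6$ the subgroups of index $2$ and $3$ are incomparable. All you use is uniqueness of the subgroup of each index.
\item Your last sentence of the index-$2$ case is self-contradictory. It should read: $[C:H_k]$ divides $[C:H_1\cap H_k]=4$; it is not $1$, since $H_k$ is proper; and it is not $2$, since then $H_k=H_1$, contradicting $[H_1:H_1\cap H_k]=2$. Hence $n_k=4$.
\item Counting does not ``allow'' $(3,3,n)$ with $n\ge 4$, $(3,4,4)$, or arbitrary $(2,n_2,n_3)$. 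The first two have $\sum 1/n_k<1$, and the last requires $1/n_2+1/n_3\ge 1/2$. You effectively use this yourself a few lines later.
\end{enumerate}
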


\begin{proof}
A coset of index \(m\) has size \(|C|/m\). If two proper cosets cover \(C\), their
indices \(m_1,m_2>1\) satisfy
\[
1\le \frac1{m_1}+\frac1{m_2},
\]
hence \(m_1=m_2=2\).

Now suppose three proper cosets irredundantly cover \(C\), with ordered indices
\(2\le m_1\le m_2\le m_3\). Then
\[
1\le \frac1{m_1}+\frac1{m_2}+\frac1{m_3}.
\]
If \(m_1\ge 3\), this forces \(m_1=m_2=m_3=3\).

It remains to consider \(m_1=2\). Let \(C_0\le C\) be the unique index-two
subgroup, let \(P\) be the index-two coset appearing in the cover, and let \(P'\)
be the other coset of \(C_0\). The two remaining cosets must cover \(P'\).
Consider one of them, say \(E=c+H\), of index \(m=[C:H]\).

If \(m\) is odd, then \(H\not\subseteq C_0\), so the quotient map
\(C\to C/C_0\cong \mathbb Z_2\) maps \(H\) onto \(\mathbb Z_2\). Hence every
coset of \(H\) meets \(P\) and \(P'\) equally, so
\[
|E\cap P'|=\frac{|E|}{2}=\frac{|C|}{2m}\le \frac{|C|}{6}.
\]
If \(m\) is even, then \(H\subseteq C_0\), so every coset of \(H\) lies entirely
inside one parity class. Since the cover is irredundant, any remaining coset that
helps cover \(P'\) must lie inside \(P'\). It cannot have index \(2\), since then it
would equal \(P'\) and the third coset would be redundant. Thus its index is at
least \(4\), and its size is at most \(|C|/4\).

Therefore if either of the two remaining cosets had odd index, their total
contribution to \(P'\) would be at most
\[
\frac{|C|}{6}+\frac{|C|}{4}<\frac{|C|}{2}=|P'|,
\]
so they could not cover $P'$. Hence both remaining cosets have even index, lie inside \(P'\), and
each has size at most \(|C|/4\). Since together they cover \(P'\), both must have
size exactly \(|C|/4\). Thus their indices are both \(4\), giving the pattern
\((2,4,4)\).
\end{proof}

\begin{lemma}[Dyadic rigidity]
\label{lem:dyadic-rigidity}
Fix \(\ww\in Q\) and a target coset \(C=r+D_\ww\). Suppose \(C\) is covered by source
cosets of types \(\zz\ne \ww\), with at most one source coset for each such \(\zz\). Then
\(C\) is covered either by one whole source coset or by its two parity fibres.
\end{lemma}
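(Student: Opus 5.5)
The plan is to reduce the statement to Lemma~\ref{lem:small-cyclic-covers} and then use the linear relation among the four return steps \(H_{\zz}\) to rule out the two exotic patterns.

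\textbf{Step 1: intersect and reduce.} For each source type \(\zz\neq\ww\), the coset \(c_{\zz}+D_{\zz}\) meets \(C=r+D_{\ww}\) either in the empty set or in a coset of \(D_{\ww}\cap D_{\zz}\). After translating by \(-r\), these intersections are cosets of subgroups of the cyclic group \(D_{\ww}\). The relative index of the piece coming from \(\zz\) is
\[
    m_{\zz}=[D_{\ww}:D_{\ww}\cap D_{\zz}]=\operatorname{lcm}(d_{\ww},d_{\zz})/d_{\ww}.
\]
Since there is at most one source coset for each \(\zz\neq\ww\) and \(|Q|=4\), at most three pieces occur. Discard empty and redundant pieces to obtain an irredundant cover.

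\textbf{Step 2: apply the cover lemma.} If a single piece remains, it equals \(C\), so \(C\) is covered by one whole source coset. Otherwise Lemma~\ref{lem:small-cyclic-covers} gives the pattern \((2,2)\), \((3,3,3)\) or \((2,4,4)\). In the pattern \((2,2)\), both pieces are cosets of the unique index-two subgroup \(2D_{\ww}\) of \(D_{\ww}\). They are therefore exactly the two parity fibres \(r+\varepsilon d_{\ww}+2D_{\ww}\) of Definition~\ref{def:parity-quotient}, which is the second alternative of the statement. It remains to exclude the patterns \((3,3,3)\) and \((2,4,4)\).

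\textbf{Step 3: the valuation argument.} Both excluded patterns use all three types \(\zz\neq\ww\), and every relative index is divisible by a fixed prime \(\ell\): \(\ell=3\) for \((3,3,3)\) and \(\ell=2\) for \((2,4,4)\). Since \(m_{\zz}=\operatorname{lcm}(d_{\ww},d_{\zz})/d_{\ww}\), divisibility \(\ell\mid m_{\zz}\) means
\[
    v_\ell(d_{\zz})>v_\ell(d_{\ww})\qquad\text{for every }\zz\neq\ww.
\]
From Definition~\ref{def:charlie-clock} the four steps are \(t\), \(t+s_1\), \(t-s_0\) and \(t+s_1-s_0\). They satisfy
\[
    H_{(0,0)}+H_{(1,1)}=H_{(0,1)}+H_{(1,0)},
\]
so each \(H_{\ww}\) is a \(\pm\)-combination of the other three. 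Set \(e:=v_\ell(d_{\ww})+1\). Then \(\ell^{e}\) divides each \(d_{\zz}\) with \(\zz\neq\ww\), hence divides \(N\) and each such \(H_{\zz}\) modulo \(N\). By the relation, \(\ell^{e}\) also divides \(H_{\ww}\) modulo \(N\). Consequently \(\ell^{e}\mid\gcd(N,H_{\ww})=d_{\ww}\), contradicting \(v_\ell(d_{\ww})<e\).

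The main obstacle is making Step 3 airtight. One must check that the relation holds modulo \(N\) when the \(H_{\zz}\) are read from \(\Z_{2N}\); reduction mod \(N\) preserves it, so this is fine. One must also check that each excluded pattern really forces \(\ell\mid m_{\zz}\) for all three pieces. For \((3,3,3)\) this is immediate. For \((2,4,4)\) it holds because the index-four pieces have \(4\mid m_{\zz}\) and the index-two piece has \(2\mid m_{\zz}\). In each case, irredundancy is what guarantees that all three source types genuinely contribute a piece.
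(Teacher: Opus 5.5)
Your proposal is correct and follows the paper's proof essentially step for step. You intersect with the target coset to get pieces of relative index $d_{\zz}/\gcd(d_{\ww},d_{\zz})$, invoke Lemma~\ref{lem:small-cyclic-covers}, identify the $(2,2)$ pattern with the two parity fibres, and exclude $(3,3,3)$ and $(2,4,4)$ by the $\ell$-adic valuation argument via $H_{00}+H_{11}=H_{01}+H_{10}$; your extra checks (reduction modulo $N$, and even relative index for every piece in $(2,4,4)$) only make explicit what the paper uses implicitly.
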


\begin{proof}
Intersect each source coset with \(C\), translate \(C\) to identify it with the cyclic
group \(D_\ww\), and discard redundant intersections. A nonempty intersection is a
coset of \(D_\ww\cap D_\zz\) inside \(D_\ww\), with relative index
\[
[D_\ww:D_\ww\cap D_\zz]=\frac{d_\zz}{\gcd(d_\ww,d_\zz)}.
\]
If some relative index is \(1\), then one source coset contains all of \(C\). Otherwise
Lemma~\ref{lem:small-cyclic-covers} leaves only the patterns
\[
(2,2),\qquad (3,3,3),\qquad (2,4,4).
\]
The pattern \((2,2)\) is exactly the cover by the two cosets of the unique
index-two subgroup \(2D_\ww\), namely the two parity fibres.

We rule out the other two patterns using
\[
H_{00}+H_{11}=H_{01}+H_{10}.
\]
Equivalently, for any target \(\ww\), the value \(H_\ww\) is a signed sum of the three
non-target \(H_\zz\)'s.

Suppose first that \((3,3,3)\) occurs. Let \(a:=\nu_3(d_\ww)\), the 3-adic valuation of $d_\ww$. For every non-target
\(\zz\), divisibility of the relative index
\[
\frac{d_\zz}{\gcd(d_\ww,d_\zz)}
\]
by \(3\) implies \(\nu_3(d_\zz)\ge a+1\). Since \(d_\zz=\gcd(N,H_\zz)\), we have
\(3^{a+1}\mid N\) and \(3^{a+1}\mid H_\zz\) for all \(\zz\ne \ww\). The parallelogram
identity then gives \(3^{a+1}\mid H_\ww\), hence
\[
3^{a+1}\mid \gcd(N,H_\ww)=d_\ww,
\]
contradicting the definition of \(a\).

The pattern \((2,4,4)\) is excluded in the same way with the prime \(2\): all three
non-target relative indices are divisible by \(2\), so all three non-target \(d_\zz\)'s
have strictly larger \(2\)-adic valuation than \(d_\ww\). The parallelogram identity
then forces the same extra factor of \(2\) into \(H_\ww\), and hence into \(d_\ww\), a
contradiction.

Thus only the whole-coset case and the two-parity-fibre case remain.
\end{proof}

\begin{proposition}
\label{prop:shadow-exactness}
Let \(P\) be an Alice--Bob completion over a valid clock \(\Gamma\).  For every \(\ww\in Q\), the set \(\operatorname{Sol}_P(\ww)\) is exactly the set of pairs \((i,r)\) such that
\[
    \mathsf W_{\ww}(i,r)
    \subseteq
    \mathsf{MSupp}(P).
\]
\end{proposition}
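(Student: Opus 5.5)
The statement is a two-sided exactness claim, and we would prove each inclusion separately, treating Alice ($p=0$) and reflected Bob ($p=1$) in parallel.

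\textbf{Reduction to cosets.} First, unpack $\mathsf W_{\ww}(i,r)\subseteq\mathsf{MSupp}(P)$ into two coset-covering problems. On Alice's side we need $r+D_{\ww}\subseteq\J_i(P)=\bigcup_{\zz\in\iota^{-1}(i)}(y_{\zz}+D_{\zz})$. On Bob's side, after the reflection $k\mapsto u_{\ww}-k$, which is a bijection of $\Z_N$, the requirement $u_{\ww}-r+D_{\ww}\subseteq\K_i(P)$ becomes $r+D_{\ww}\subseteq\bigcup_{\zz\in\iota^{-1}(i)}(c^1_{\ww\zz}+D_{\zz})$. Since $c^0_{\ww\zz}=y_{\zz}$, both sides read uniformly: is $r+D_{\ww}$ covered by the source cosets $c^p_{\ww\zz}+D_{\zz}$, $\zz\in\iota^{-1}(i)$?

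\textbf{Soundness (Sol $\Rightarrow$ containment).} Suppose $\operatorname{Sh}^p_{i,\ww}(r)=\Pi_{\ww}$. Then for each $\varepsilon\in\Pi_{\ww}$ there is a $\zz\in\iota^{-1}(i)$ with $d_{\zz}\mid 2d_{\ww}$ and $c^p_{\ww\zz}\equiv r+\varepsilon d_{\ww}\bmod d_{\zz}$. Lemma~\ref{lem:parity-fibre-containment} gives that the parity fibre $r+\varepsilon d_{\ww}+2D_{\ww}$ lies in $c^p_{\ww\zz}+D_{\zz}$. The parity fibres partition $r+D_{\ww}$: the whole coset when $N/d_{\ww}$ is odd, where $2D_{\ww}=D_{\ww}$, and two halves otherwise. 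Hence the coset is covered. Doing this for both $p=0$ and $p=1$ gives the carrier containment.

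\textbf{Completeness (containment $\Rightarrow$ Sol).} Suppose $r+D_{\ww}$ is covered by the source cosets for a fixed $p$. There is at most one source coset per $\zz$, since $\yy$ assigns one representative to each selected $\zz$, and $\iota$ is a function. We then split into two cases.

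\emph{Case $\ww\in\iota^{-1}(i)$ and its own source coset meets $r+D_{\ww}$.} The source coset is a coset of $D_{\ww}$, so it equals the target coset. Trivially $d_{\ww}\mid 2d_{\ww}$, and the congruence holds modulo $d_{\ww}$ for every $\varepsilon$, because $\varepsilon d_{\ww}\equiv 0$. So the shadow is all of $\Pi_{\ww}$.

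\emph{Case where only sources of type $\zz\ne\ww$ contribute.} Discard any source that misses the target, then invoke Lemma~\ref{lem:dyadic-rigidity}. The coset is covered either by one whole source coset or by its two parity fibres, each lying in a single source coset.
\begin{itemize}
\item If one source $c+D_{\zz}$ contains all of $r+D_{\ww}$, then $D_{\ww}\le D_{\zz}$, so $d_{\zz}\mid d_{\ww}\mid 2d_{\ww}$. Moreover $c\equiv r+\varepsilon d_{\ww}\bmod d_{\zz}$ for every $\varepsilon$, so every parity fibre is shadowed.
\item If the two fibres are each contained in some source coset, the "only if" direction of Lemma~\ref{lem:parity-fibre-containment} translates each containment into the divisibility and congruence defining membership of that $\varepsilon$ in the shadow.
\end{itemize}
Either way $\operatorname{Sh}^p_{i,\ww}(r)=\Pi_{\ww}$.

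\textbf{Main obstacle.} The delicate step is applying the dyadic rigidity lemma cleanly. It is stated for covers by types $\zz\ne\ww$, so we must check that a mixed cover, one including the type-$\ww$ source, is handled by the first case: a type-$\ww$ source meeting the target already equals it.

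We must also confirm that the hypotheses of Lemma~\ref{lem:dyadic-rigidity} hold after reflection. The relevant quantities are the subgroups $D_{\zz}$, which are unchanged by reflection, and the parallelogram identity on the $H_{\zz}$, which does not depend on $p$. So the lemma applies verbatim to reflected Bob.

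A final subtlety arises when $\Pi_{\ww}$ is trivial. Then a fibre cover in the sense of the lemma degenerates to a whole-coset cover, so only the single-source branch is needed.
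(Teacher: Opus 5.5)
Your proposal is correct and follows essentially the same route as the paper. Soundness assembles the parity fibres via Lemma~\ref{lem:parity-fibre-containment}, and completeness separates the case where the selected $\ww$-carrier is itself the target from a cover by the three non-target types, where Lemma~\ref{lem:dyadic-rigidity} forces either a whole-coset cover or a cover by the two parity fibres; your extra remarks (a type-$\ww$ source meeting the target must equal it, reflection leaves the $D_{\zz}$ and the parallelogram identity unchanged, and the case of trivial $\Pi_{\ww}$) just make explicit what the paper's brief proof leaves implicit.
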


\begin{proof}
Suppose first that \((i,r)\in\operatorname{Sol}_P(\ww)\).  The equality
\[
    \operatorname{Sh}^{0}_{i,\ww}(r)=\Pi_{\ww}
\]
says that every parity fibre of \(r+D_{\ww}\) is contained in Alice's layer-\(i\) support.  Hence the Alice coset \(r+D_{\ww}\) is contained in \(\J_i(P)\).  Similarly,
\[
    \operatorname{Sh}^{1}_{i,\ww}(r)=\Pi_{\ww}
\]
says that, after reflecting Bob through \(u_{\ww}\), every parity fibre of the reflected Bob target is contained in the reflected Bob support.  Reflecting back gives
\[
    u_{\ww}-r+D_{\ww}\subseteq \K_i(P).
\]
Thus \(\mathsf W_{\ww}(i,r)\subseteq\mathsf{MSupp}(P)\).

Conversely, suppose \(\mathsf W_{\ww}(i,r)\subseteq\mathsf{MSupp}(P)\).  Then \(r+D_{\ww}\) is covered by selected Alice source cosets in layer \(i\), and after reflecting Bob through \(u_{\ww}\), the same target coset is covered by reflected Bob source cosets in the same layer.  On each side, if the selected \(\ww\)-carrier itself is the target, then all parity fibres are covered.  Otherwise the cover uses only the three non-target assignments, so Lemma~\ref{lem:dyadic-rigidity} applies: the cover is whole or by the two parity fibres.  Hence the corresponding shadow is all of \(\Pi_{\ww}\) on both Alice and reflected Bob, and \((i,r)\in\operatorname{Sol}_P(\ww)\).
\end{proof}

Thus the shadow test computes the witness carriers already present in an Alice--Bob completion using only the elementary checks
\[
    d_{\zz}\mid 2d_{\ww}
    \qquad\text{and}\qquad
    c^p_{\ww\zz}\equiv r+\varepsilon d_{\ww}\mod{d_{\zz}}.
\]

\subsubsection{Canonical Alice--Bob completions}
\label{subsubsec:canonical-completions}

We now show that every minimal biconditional parity proof, which we have shown to determine a valid Charlie clock, also determines a unique canonical Alice--Bob completion.

\begin{definition}
\label{def:canonical-completion}
An Alice--Bob completion \(P=(S,\I,\iota,\yy)\) over a valid clock \(\Gamma\) is \emph{canonical} if
\[
    \operatorname{Sol}_P(\zz)=\{(\iota(\zz),y_{\zz})\}
    \qquad
    \forall\zz\in S,
\]
and
\[
    |{\operatorname{Sol}_P(\ww)}|\geq2
    \qquad
    \forall\ww\in Q\setminus S.
\]
We denote by \(\mathsf{Comp}(\Gamma)\) the set of canonical Alice--Bob completions over \(\Gamma\).
\end{definition}

Equivalently, \(P\) is canonical precisely when every total Charlie assignment has at least one solution and
\[
    S=
    \{\ww\in Q \mid
    \operatorname{Sol}_P(\ww)
    \text{ is a singleton}\}.
\]
This equivalence is perhaps the most useful way to remember the definition: the selected assignments are exactly the uniquely witnessed assignments.

\begin{theorem}
\label{thm:canonicity-minimality}
Let \(\Gamma\) be a valid Charlie clock and let \(P\) be an Alice--Bob completion over \(\Gamma\).  Then \(P\) is canonical if and only if \(\mathsf{MSupp}(P)\) is the measurement support of a minimal biconditional parity proof with clock \(\Gamma\).
\end{theorem}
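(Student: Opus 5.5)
The plan is to reduce both directions to one combinatorial characterisation of minimality in terms of solution sets. By Proposition~\ref{prop:witness-cosets}, a scenario with clock \(\Gamma\) and some measurement support \(M\subseteq\I\times\Z_N\times\{\mathsf A,\mathsf B\}\) (with any distinct layer shifts) is a paradox iff for every \(\ww\in Q\) there is at least one pair \((i,r)\) with \(\mathsf W_{\ww}(i,r)\subseteq M\). When \(M=\mathsf{MSupp}(P)\), Proposition~\ref{prop:shadow-exactness} identifies this set of pairs with \(\operatorname{Sol}_P(\ww)\).

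\textbf{Key disjointness observation.} For a fixed \(\ww\), two distinct solutions \((i,r)\neq(i',r')\) give disjoint carriers.
\begin{itemize}
\item If \(i\neq i'\), the carriers lie in different layers, so they are disjoint.
\item If \(i=i'\) and \(r\not\equiv r'\), then \(r+D_\ww\) and \(r'+D_\ww\) are distinct cosets of the same subgroup, hence disjoint. The same holds for \(u_\ww-r+D_\ww\) and \(u_\ww-r'+D_\ww\).
\end{itemize}
Consequently, deleting a single measurement \(x\in M\) destroys all \(\ww\)-witnesses iff \(\ww\) has exactly one solution and \(x\) lies in its carrier. Deleting more measurements only removes more witnesses. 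So a paradox support \(M\) is minimal iff every \(x\in M\) lies in the carrier of the unique solution of some uniquely witnessed assignment \(\ww\).

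\textbf{Canonical \(\Rightarrow\) minimal.} Suppose \(P\) is canonical. Every \(\ww\) has a solution: for \(\ww\in S\) by the singleton condition, and for \(\ww\notin S\) because \(|\operatorname{Sol}_P(\ww)|\ge2\). Hence \(\mathsf{MSupp}(P)\) is a paradox. It is a biconditional parity proof with clock \(\Gamma\) by Proposition~\ref{prop:bpp-interpolant-two-charlie}.

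Every \(x\in\mathsf{MSupp}(P)\) lies, by definition, in some selected carrier \(\mathsf W_\zz(\iota(\zz),y_\zz)\) with \(\zz\in S\). That carrier is the unique element of \(\operatorname{Sol}_P(\zz)\). By the criterion above, \(x\) is essential, so the proof is minimal.

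\textbf{Minimal \(\Rightarrow\) canonical.} Let \(M\) be the support of a minimal proof with clock \(\Gamma\), written in clock normal form (Proposition~\ref{prop:clock-layer-normal-form}). Define the completion \(P\) as follows.
\begin{itemize}
\item \(S:=\{\ww : \ww \text{ has exactly one witness carrier in } M\}\).
\item For \(\zz\in S\), let \(\iota(\zz)\) and \(y_\zz\) be the layer and least representative of that unique carrier.
\item Relabel the layers that are hit as \(\{0,\dots,g\}\), so that \(\iota\) is surjective.
\end{itemize}
By the minimality criterion, every point of \(M\) lies in a selected carrier. Hence \(\mathsf{MSupp}(P)\supseteq M\). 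The reverse inclusion is immediate, so \(\mathsf{MSupp}(P)=M\).

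The same covering argument shows that every layer of \(M\) contains a selected carrier. Therefore no layer is lost in the relabelling.

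Since the supports coincide, Proposition~\ref{prop:shadow-exactness} gives that \(\operatorname{Sol}_P(\ww)\) is exactly the set of witness carriers of \(\ww\) in \(M\). Therefore:
\begin{itemize}
\item for \(\zz\in S\), \(\operatorname{Sol}_P(\zz)=\{(\iota(\zz),y_\zz)\}\);
\item for \(\ww\notin S\), the paradox property gives at least one solution, and non-uniqueness gives \(|\operatorname{Sol}_P(\ww)|\ge2\).
\end{itemize}
Hence \(P\) is canonical.

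\textbf{Expected obstacle.} The delicate point is the bookkeeping that makes \(\operatorname{Sol}_P\) agree with the actual witness carriers in the physical scenario. Proposition~\ref{prop:shadow-exactness} handles this, and Lemma~\ref{lem:paths-layerwise} guarantees that the layer shifts play no role. What must additionally be checked is the step ``every point of a minimal \(M\) lies in a uniquely witnessed carrier''. This step rests entirely on the disjointness of distinct carriers of the same type, including on the reflected Bob side, and it fails if carriers of the same \(\ww\) could overlap. So I would verify that step first and most carefully.
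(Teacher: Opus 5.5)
Your proposal is correct and follows essentially the same argument as the paper: distinct carriers of the same type are disjoint. Hence a point is essential exactly when it lies on the unique carrier of some uniquely witnessed assignment, and this gives both directions once \(S\) is taken to be the set of uniquely witnessed assignments. The only difference is that you make explicit two steps the paper leaves implicit: the appeal to Proposition~\ref{prop:shadow-exactness} to identify \(\operatorname{Sol}_P\) with the carriers actually present in the support, and the relabelling of layers needed for \(\iota\) to be surjective onto \(\{0,\dots,g\}\).
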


\begin{proof}
Assume first that \(P\) is canonical.  Every selected \(\zz\in S\) has its selected witness carrier, and every unselected \(\ww\notin S\) has at least two solutions.  Hence every total Charlie assignment has a witness carrier in the measurement support.  Since \(\Gamma\) is valid, these carriers lift to witness paths.  Therefore the generated scenario is paradoxical.

Now remove any point of \(\mathsf{MSupp}(P)\).  By definition of measurement support, the point lies in at least one selected carrier \(\mathsf W_{\zz}(\iota(\zz),y_{\zz})\).  Since \(P\) is canonical, this carrier is the unique solution of type \(\zz\).  Removing the point breaks that carrier, and deleting points cannot create a new carrier.  Hence the reduced measurement support has no \(\zz\)-witness, so it is not paradoxical.  Thus the proof is minimal.

Conversely, let \(X\subseteq \I\times\Z_N\times\{\mathsf A,\mathsf B\}\) be the measurement support of a minimal proof with clock \(\Gamma\).  For each \(\ww\in Q\), let \(\operatorname{Sol}_X(\ww)\) be the set of all pairs \((i,r)\) such that \(\mathsf W_{\ww}(i,r)\subseteq X\).  Define
\[
    S_X:=
    \{\ww\in Q \mid
    \operatorname{Sol}_X(\ww)\text{ is a singleton}\}.
\]
For \(\zz\in S_X\), write the unique solution as \((i_{\zz},y_{\zz})\).  These data define an Alice--Bob completion \(P_X\).

We claim that the singleton carriers cover all of \(X\).  Let \(m\in X\).  Since the proof is minimal, removing \(m\) destroys paradoxicality.  Therefore, for some \(\ww\in Q\), every \(\ww\)-witness in \(X\) contains \(m\).  But two distinct carriers of the same type \(\ww\) are disjoint, because they either lie in different layers or are distinct cosets of the same subgroup in a fixed layer.  Hence there is exactly one \(\ww\)-carrier in \(X\), and \(m\) lies on a singleton carrier.  This proves that the completion \(P_X\) reconstructs \(X\), and by construction it is canonical.
\end{proof}

Thus, a canonical completion is precisely the finite coset data that minimally and uniquely specifies the Alice--Bob measurement support of the proof.

\subsection{Classification of biconditional parity proofs}
\label{subsec:bpp-classification}

We now assemble the classification.  The valid clock gives the state and Charlie measurements.  The canonical Alice--Bob completion gives the finite measurement support.  The only remaining freedom is the assignment of real numbers to each layer.  These have the effect of translating each layer, to a different disjoint copy of the clock and do not affect the essential structure of the paradox.  This freedom is recorded by a shift tuple.

\subsubsection{Layer shifts and measurement sets}
\label{subsubsec:layer-shifts}

\begin{definition}
\label{def:shift-tuple}
Let \(P=(S,\I,\iota,y)\) be an Alice--Bob completion.  A \emph{shift tuple} for \(P\) is an injective function $\alpha:\I\rightarrow[0,1)$ such that $\alpha(0)=0$.
The layers are ordered so that
\[
    i<i'
    \quad\Longleftrightarrow\quad
    \min\iota^{-1}(i)<_{\mathrm{lex}}
    \min\iota^{-1}(i').
\]
We denote the set of shift tuples for \(P\) by \(\mathsf{Shift}(P)\).
\end{definition}

\begin{definition}
\label{def:physical-measurement-sets}
Let \(\Gamma=(N,t,s_0,s_1,\mu)\) be a valid Charlie clock, let \(P\) be an Alice--Bob completion over \(\Gamma\), and let \(\alpha\in\mathsf{Shift}(P)\).  The associated \textit{physical Alice and Bob measurement sets} are
\[
    M_1(\Gamma,P,\alpha)
    :=
    \bigcup_{i\in \I}
    \frac{\pi}{N}\bigl(\J_i(P)+\alpha(i)\bigr),
\]
and
\[
    M_2(\Gamma,P,\alpha)
    :=
    \bigcup_{i\in \I}
    \frac{\pi}{N}\bigl(\K_i(P)+\mu-
    \alpha(i)\bigr),
\]
with angles taken modulo \(\pi\).
\end{definition}

\begin{proposition}
\label{prop:shift-normalisation}
After fixing a normalised valid clock and a canonical Alice--Bob completion, the normalisation in Definition~\ref{def:shift-tuple} removes exactly the global rotation freedom and the relabelling of layers.
\end{proposition}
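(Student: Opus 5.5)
We must show two things: every residual symmetry preserving the normalised clock and the canonical completion acts on the layer shifts by a common translation together with a relabelling of layers, and the normalisation of Definition~\ref{def:shift-tuple} picks out exactly one representative in each orbit of that action.

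\emph{Step 1: the residual symmetries.} With the interpolant form fixed, Section~\ref{sec:equivalence} leaves only the local unitaries $P_\theta\otimes P_{-\theta}\otimes I$ and $XP_\theta\otimes XP_{-\theta}\otimes X$. The reflection $X$ on Charlie reverses the equatorial orientation, and Definition~\ref{def:valid-clock} has already fixed it by taking the lexicographically least clock. Once the clock is fixed, Proposition~\ref{prop:clock-separation} shows that Charlie's ticks and hence $\Gamma$ are preserved. The remaining freedom is therefore the rotation $A\mapsto A+\theta$, $B\mapsto B-\theta$. This preserves every sum $A+B$, so by \eqref{eqn:imposs_interpolant} it preserves every impossible event.

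\emph{Step 2: action on the layer data.} Write $\theta=\frac{\pi}{N}(m+\gamma)$ with $m\in\Z$ and $\gamma\in[0,1)$.
\begin{itemize}
\item The rotation sends layer $i$ with shift $\alpha_i$ to shift $\alpha_i+\gamma$ modulo $\Z$.
\item The integer part $m$ translates every $\J_i$ by $m$ and every $\K_i$ by $-m$.
\item After reducing each new shift into $[0,1)$, some integer carries are absorbed into the index sets.
\end{itemize}
We next check that the canonical completion is unchanged. The subgroups $D_\zz$ and the offsets $u_\zz$ depend only on $\Gamma$. Translating Alice by $m$ and Bob by $-m$ sends a carrier $\mathsf W_\zz(i,y)$ to $\mathsf W_\zz(i,y+m)$. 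So the set $S$ of uniquely witnessed assignments and the layer partition $\iota$ are preserved. The coset representatives $y_\zz$ are recorded relative to the layer's shift, so they are recovered once we fix the convention that $\alpha(0)=0$ and that the representatives are least nonnegative. I expect the carry bookkeeping to be the main obstacle. It requires showing that the combined effect on $(\yy,\alpha)$ is exactly a uniform translation of every $\alpha(i)$ by $\gamma$ modulo $1$, with the integer parts fully absorbed into reindexing. I would handle it by regarding the data of layer $i$ as the real set $\J_i+\alpha_i\subset\R/N\Z$. On these sets the rotation acts by a uniform real translation, and the pair $(\J_i,\alpha_i)$ is just a decomposition of this set, unique once $\alpha_i\in[0,1)$.

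\emph{Step 3: uniqueness of the representative.} Layers are distinguished by their shifts modulo $\Z$, which are distinct by Definition~\ref{def:clock-layer-data}. Their labels, however, are arbitrary. The rule ordering layers by $\min\iota^{-1}(i)$ in lex order on $Q$ is intrinsic, because it depends only on $S$ and $\iota$, which Step~2 showed are invariant. Hence it fixes the labelling canonically. The condition $\alpha(0)=0$ then fixes $\gamma$ uniquely, namely $\gamma\equiv-\alpha_0$. Conversely, two shift tuples with $\alpha(0)=0$ that differ by a nonzero common translation cannot both satisfy this condition. It follows that the normalisation removes exactly the global rotation and the layer relabelling, and no further freedom. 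Injectivity of $\alpha$ is the requirement that shifts be distinct modulo $\Z$, so every element of $\mathsf{Shift}(P)$ yields a legitimate scenario.
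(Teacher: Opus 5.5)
Your route is the paper's: the rotation $P_\theta\otimes P_{-\theta}\otimes I$ moves all layer shifts by the same amount, $\alpha(0)=0$ fixes that amount, and ordering layers by $\min\iota^{-1}(i)$ fixes the labels. The gap is in Step~2, at the point you yourself flag as the main obstacle.

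You correctly compute that the integer part $m$ of the rotation sends $\mathsf W_{\zz}(i,y)$ to $\mathsf W_{\zz}(i,y+m)$, that is, $y_{\zz}\mapsto y_{\zz}+m \bmod d_{\zz}$. You then claim the $y_{\zz}$ are ``recovered'' by $\alpha(0)=0$ and the least-nonnegative-representative convention. Neither convention does this:
\begin{itemize}
\item If both scenarios satisfy $\alpha(0)=0$, then $\gamma=0$. So there are no carries and $\alpha$ is unchanged, but $m$ is still completely free.
\item Choosing the least representative of the coset $y_{\zz}+D_{\zz}$ cannot undo a translation of the coset itself.
\end{itemize}
So the canonical completion is \emph{not} invariant under the rotation.

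Concretely, take the paper's clock $\Gamma=(12,-3,12,5,0)$ with $y_{(0,0)}=y_{(1,0)}=1$, $y_{(0,1)}=y_{(1,1)}=0$ and $\alpha=(0,\tfrac13)$. Rotate Alice by $\tfrac{2\pi}{12}$ and Bob by $-\tfrac{2\pi}{12}$. This gives the same clock, the same $S$, $\iota$ and $\alpha$, but now $y_{(0,0)}=y_{(1,0)}=0$. That is a second canonical completion (canonicity is translation-invariant) producing an equivalent scenario.

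Hence Definition~\ref{def:shift-tuple} removes only the $\R/\Z$ part of the global rotation. The rotations by multiples of $\pi/N$ still act on $\mathsf{Comp}(\Gamma)$ by $\yy\mapsto\yy+m$, and they must be quotiented out separately. One way is to extend the lexicographic choice of Definition~\ref{def:classification-datum} over these rotations. The paper's one-line proof only tracks the shifts and passes over the same point, but your write-up asserts the false invariance explicitly.

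Step~1 also overclaims. The lexicographic choice in Definition~\ref{def:valid-clock} does not fix the orientation when the reflection preserves the clock. In the equal-spread family (c) of Theorem~\ref{thm:valid-clocks}, $\tau_0+\tau_1=\sigma$ means the tick pair of $\pi-C_0$ equals that of $C_1$, so $C_1=\pi-C_0$. Then $XP_\theta\otimes XP_{-\theta}\otimes X$ preserves $M_3$ and the normalised clock. It acts on the shifts by $\alpha_i\mapsto-\alpha_i-\tfrac{N\theta}{\pi}\pmod 1$, which is not a translation. This is why the paper leaves the residual orientation to Definition~\ref{def:classification-datum}. You should restrict your opening claim (every residual symmetry acts by a common translation plus relabelling) and your closing ``no further freedom'' to the rotation subgroup, as the paper's proof implicitly does.
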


\begin{proof}
A local phase rotation of Alice together with the opposite phase rotation of Bob adds the same constant to all layer shifts.  The condition \(\alpha(0)=0\) fixes this global freedom.  The layers themselves are otherwise only names, and the lexicographic convention orders them canonically by the first selected total Charlie assignment in the corresponding fibre of \(\iota\).  Once the clock has been normalised, no further layer ambiguity remains.
\end{proof}

Thus, after the clock and canonical completion are fixed, the shift tuple is exactly the remaining continuous parameter in the proof.

\subsubsection{The bijection}
\label{subsubsec:classification-bijection}

We can now define the combinatorial data that classifies a biconditional parity proof up to the physical equivalence of Section \ref{sec:equivalence}, eliminating the possible redundancy created by any freedom of how to label the three parties.

\begin{definition}
\label{def:classification-datum}
A \emph{classification datum} is a triple
\[
    (\Gamma,P,\alpha)
\]
where
\[
    \Gamma\in\mathsf{Clock},
    \qquad
    P\in\mathsf{Comp}(\Gamma),
    \qquad
    \alpha\in\mathsf{Shift}(P).
\]
The \textit{canonical classification datum} of a physical equivalence class of a biconditional parity proof is the lexicographically least such triple obtained from the finitely many admissible choices of which party is Charlie,
of the Alice--Bob order, and, when the valid-clock normalisation does not
already distinguish them, of the residual equatorial orientation.  We denote the set of canonical classification data triples as \[
    \mathsf{CanonicalTriples}.
\]
\end{definition}

\begin{theorem}[Classification of biconditional parity proofs]
\label{thm:bpp-classification}
The construction
\[
    (\Gamma,P,\alpha)
    \longmapsto
    \Big(\Bsimple,\, \big(M_1(\Gamma,P,\alpha), M_2(\Gamma,P,\alpha),\{C_0,C_1\}\big)\Big)
\]
induces a bijection
\[
    \mathsf{BPP} \quad
    \cong \quad \mathsf{CanonicalTriples}  \quad\subset \quad
    \bigsqcup_{\Gamma\in\mathsf{Clock}}\;\;
    \bigsqcup_{P\in\mathsf{Comp}(\Gamma)}\;
    \mathsf{Shift}(P).
\]
\end{theorem}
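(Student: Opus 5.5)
We prove the bijection by showing that the construction lands in $\mathsf{BPP}$, is surjective onto equivalence classes, and is injective on canonical triples. Throughout, we assemble the three layers of the classification: the state-and-Charlie data (Theorem~\ref{thm:valid-clocks}), the finite Alice--Bob support (Theorem~\ref{thm:canonicity-minimality}), and the real shifts (Proposition~\ref{prop:shift-normalisation}).

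\emph{Well-definedness.} Let $(\Gamma,P,\alpha)$ be a classification datum. Since $\Gamma$ is realisable, Theorem~\ref{thm:two-tick-pairs} gives a unique triple $(\lambda,C_0,C_1)$ whose tick values are the $T_{l,z}$ of $\Gamma$. Since $\alpha$ is injective into $[0,1)$, the layer shifts are distinct modulo $\Z$, so Lemma~\ref{lem:paths-layerwise} applies and witness paths stay in a single layer. By Proposition~\ref{prop:witness-cosets} together with paradoxicality of $\Gamma$, each carrier in $\mathsf{MSupp}(P)$ yields a witness path. Canonicity of $P$ ensures that every $\ww\in Q$ has a solution. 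By Lemma~\ref{lem:interpolant-impl} and Lemma~\ref{lem:2CNF-paradox}, the resulting scenario is therefore a paradox. Theorem~\ref{thm:canonicity-minimality} shows it is minimal, and Proposition~\ref{prop:bpp-interpolant-two-charlie} shows it admits a biconditional parity proof. One point needs care: the physical measurement sets modulo $\pi$ must be in bijection with the points of $\mathsf{MSupp}(P)$. Distinct indices $j\in\Z_N$ within one layer give distinct angles modulo $\pi$, and distinct layers are disjoint because their shifts differ modulo $\Z$.

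\emph{Surjectivity.} Take a class in $\mathsf{BPP}$. Proposition~\ref{prop:bpp-interpolant-two-charlie} gives an interpolant representative with two Charlie measurements. Proposition~\ref{prop:clock-layer-normal-form} then produces a Charlie clock and a clock layer decomposition; the clock is paradoxical by Lemma~\ref{lem:lifting} and realisable by construction. We normalise it as in Definition~\ref{def:valid-clock} using the residual unitaries $P_\theta\otimes P_{-\theta}\otimes I$ and $XP_\theta\otimes XP_{-\theta}\otimes X$. The converse half of Theorem~\ref{thm:canonicity-minimality} extracts a canonical completion $P_X$ from the support. Reading off the layer shifts, reducing them modulo $\Z$ (the integer parts can be absorbed into the $\J_i,\K_i$ by reindexing), shifting so that $\alpha(0)=0$, and ordering layers lexicographically gives $\alpha\in\mathsf{Shift}(P_X)$. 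Finally, choosing the lexicographically least triple over party permutations and orientations lands in $\mathsf{CanonicalTriples}$.

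\emph{Injectivity.} Suppose two canonical triples give equivalent scenarios. Proposition~\ref{prop:clock-separation} forces the clocks to agree. The equivalence must then be one of the residual local unitaries (together with a party permutation, which is already fixed by the canonical choice). A residual unitary acts on each layer by a global shift and possibly a reflection; the reflection is already fixed by clock normalisation except in the degenerate cases, where the lexicographic convention fixes it. By Proposition~\ref{prop:shift-normalisation} the remaining global shift is killed by $\alpha(0)=0$. The measurement supports then coincide, and the uniqueness of the canonical completion reconstructed from a support (the $S_X$ construction) gives $P=P'$, hence $\alpha=\alpha'$.

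The main obstacle is this last step. Because integer parts of the shifts can be traded against reindexing of $\J_i,\K_i$, we must check that, once $\alpha$ takes values in $[0,1)$ with $\alpha(0)=0$, the decomposition of a physical measurement set into layers and coset indices is unique. I would argue this directly from the distinctness of the shifts modulo $\Z$: this pins down each layer and its fractional shift, hence the index sets, and therefore the support.
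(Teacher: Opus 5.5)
Your route is the paper's: its ``surjectivity'' paragraph is your well-definedness, and its ``injectivity'' paragraph is your reconstruction of a datum from a proof. It then appeals to Proposition~\ref{prop:shift-normalisation} to conclude that the two constructions are inverse. Your first two parts are fine, and more careful than the paper's (e.g.\ on disjointness of layers modulo $\pi$).

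\textbf{The gap is in injectivity, exactly at the step you flag, and your proposed resolution does not close it.} Distinctness of the shifts modulo $\Z$ only shows that a \emph{fixed} set of angles has a unique layer/index decomposition. Injectivity concerns two \emph{different} representatives of one class, and $\alpha(0)=0$ removes only the fractional part of the global Alice--Bob rotation.

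\begin{itemize}
\item For $c\in\Z$, the phase gates $P_{c\pi/N}\otimes P_{-c\pi/N}\otimes I$ fix $\Bsimple$ and Charlie's measurements, hence $\Gamma$.
\item They leave every $\alpha(i)\in[0,1)$ unchanged.
\item They translate every $\J_i$ by $+c$ and every $\K_i$ by $-c$. That is, they replace $P=(S,\I,\iota,\yy)$ by $P^{(c)}:=(S,\I,\iota,(y_{\zz}+c)_{\zz\in S})$.
\item Since $\mathsf W_{\ww}(i,r+c)$ is the same translate of $\mathsf W_{\ww}(i,r)$, Proposition~\ref{prop:shadow-exactness} gives $\operatorname{Sol}_{P^{(c)}}(\ww)=\{(i,r+c)\mid(i,r)\in\operatorname{Sol}_P(\ww)\}$.
\item So $P^{(c)}$ is again canonical, with the same $S$, $\iota$, layer order and shift tuples.
\end{itemize}

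Hence $(\Gamma,P,\alpha)$ and $(\Gamma,P^{(c)},\alpha)$ are both classification data with equivalent scenarios. Moreover $P^{(c)}\neq P$ unless $\operatorname{lcm}_{\zz\in S}d_{\zz}\mid c$.

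Concretely, in the case-(b) example with $N=12$, taking $c=1$ sends $(y_{(0,0)},y_{(0,1)},y_{(1,0)},y_{(1,1)})=(1,0,1,0)$ to $(2,1,2,1)$. In Example~\ref{ex:last} all $42$ translates are distinct. So your claim that ``the measurement supports then coincide'' is false.

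The paper's own proof passes over the same point: Proposition~\ref{prop:shift-normalisation} asserts that $\alpha(0)=0$ removes all of the global rotation, which it does not. To repair your argument, enlarge the normalisation in one of two ways:
\begin{itemize}
\item include the translations $c\in\Z_N$ (a finite action on $\mathsf{Comp}(\Gamma)$) among the choices over which the lexicographic minimum in Definition~\ref{def:classification-datum} is taken; or
\item normalise $\yy$ within its translation orbit.
\end{itemize}
Injectivity then follows, because any two data for one class are related by this finite group together with party relabellings and the orientation reflection.

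Separately, in the GHZ case the residual symmetries are not just the two families you list. Charlie can also be rotated (this is how $C_0=0$ is achieved in Definition~\ref{def:valid-clock}), and any party may play Charlie. Both your surjectivity normalisation and your claim that the equivalence ``must be one of the residual local unitaries'' therefore need that case treated separately.
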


\begin{proof}
We first prove surjectivity of the construction.  Let \((\Gamma,P,\alpha)\) be a classification datum.  Since \(\Gamma\) is valid, it is realised by an interpolant state and two Charlie measurements, and all paired cosets of its return subgroups lift to witness paths.  Since \(P\) is canonical, Theorem~\ref{thm:canonicity-minimality} shows that its measurement support is minimal and paradoxical.  The shift tuple \(\alpha\) places this finite support on the equator.  Hence the datum produces a minimal biconditional parity proof.

For injectivity, start with a minimal biconditional parity proof.  Proposition~\ref{prop:clock-layer-normal-form} recovers its normalised Charlie clock and layer decomposition.  The clock is valid by the existence of witness paths and the quantum realisability of the original scenario.  Theorem~\ref{thm:canonicity-minimality} recovers the unique canonical Alice--Bob completion by selecting precisely the uniquely witnessed total Charlie assignments.  Proposition~\ref{prop:shift-normalisation} recovers the unique normalised shift tuple.  Thus every minimal proof determines exactly one classification datum, and the two constructions are inverse to one another.
\end{proof}

This completes the classification.  Every minimal biconditional parity proof is encoded by a valid finite Charlie clock, a canonical Alice--Bob completion in terms of coset representatives, and a normalised tuple of real-valued layer shifts.

\subsubsection{Examples}
Below, we give concrete examples of biconditional parity proofs. We exhibit a valid Charlie clock from each of the four cases in Theorem~\ref{thm:valid-clocks}, and we explicitly show how each paradox materialises from its canonical classification datum.

\begin{example}[\textbf{(a) A (different) GHZ paradox}]\label{ex:first}
    Let $N = 12$ and $v=1$, so that the Charlie clock is
    \begin{equation*}
        \Gamma = (12, -1, 12, 12, 0).
    \end{equation*}
    These data correspond to the Charlie measurements $C_0 = 0$, $C_1 = \frac{\pi}{12}$. The tick pairs are
    \begin{equation*}
        \Tpair_0 = (0, \pi),\quad \Tpair_1 = (-\tfrac{\pi}{12}, \tfrac{11\pi}{12}),
    \end{equation*}
    and the tick indices are
    \begin{gather*}
        t_{0,0} = 0,\quad t_{0,1} = 12,\\
        t_{1,0} = -1,\quad t_{1,1} = 11.
    \end{gather*}
    We derive the following values of $\HH_\zz$ and $\dd_\zz$ for each $\zz \in Q$:
    \begin{center}
        \begin{tabular}{c|c|c}
            $\zz$ & $\HH_\zz$ & $\dd_\zz$ \\\hline
            $(0,0)$ & $-1$ & $1$\\
            $(0,1)$ & $11$ & $1$\\
            $(1,0)$ & $-13$ & $1$\\
            $(1,1)$ & $-1$ & $1$
        \end{tabular}
    \end{center}
        
    
    We have $D_\zz = \Z_{12}$ for all $\zz$. Therefore, there is one Alice--Bob measurement layer, and Alice's and Bob's measurement sets must be, up to equal and opposite rotations, $M_1 = M_2 = \frac{\pi}{12}\Z_{12}$. The Alice--Bob completion $P$ is
    \begin{gather*}
        S = Q,\quad \I = \{0\},\quad \iota(\zz) = 0\ \forall \zz \in S,\quad
        y_\zz = 0\ \forall \zz \in S,
    \end{gather*}
    with $\operatorname{Sol}_P(\zz) = \{(0,0)\}$ for all $\zz \in Q$, and the shift tuple is fixed as $\alpha(0) = 0$.
\end{example}

\begin{example}[\textbf{(b) Non-GHZ with $X$}]
    Let $N = 12$, $s = 5$, and $v=3$, so that the Charlie clock is
    \begin{equation*}
        \Gamma = (12, -3, 12, 5, 0).
    \end{equation*}
    The tick pairs are
    \begin{equation*}
        \Tpair_0 = (0, \pi),\quad \Tpair_1 = (-\tfrac{\pi}{4}, \tfrac{\pi}{6}),
    \end{equation*}
    and the tick indices are
    \begin{gather*}
        t_{0,0} = 0,\quad t_{0,1} = 12,\\
        t_{1,0} = -3,\quad t_{1,1} = 2.
    \end{gather*}
    We derive the following values of $\HH_\zz$ and $\dd_\zz$, and the parity quotient $\Pi_\zz$, for each $\zz \in Q$:
    \begin{center}
        \centering
        \begin{tabular}{c|c|c|c}
            $\zz$ & $\HH_\zz$ & $\dd_\zz$ & $\Pi_\zz$ \\\hline
            $(0,0)$ & $-3$ & $3$ & $\Z_2$\\
            $(0,1)$ & $2$ & $2$ & $\Z_2$\\
            $(1,0)$ & $-15$ & $3$ & $\Z_2$\\
            $(1,1)$ & $-10$ & $2$ & $\Z_2$
        \end{tabular}
    \end{center}
    
    One possible canonical Alice--Bob completion $P$ is given by
    \begin{gather*}
        S = Q,\quad \I = \{0, 1\},\\
        \iota((0,0)) = \iota((1,0)) = 0,\quad \iota((0,1)) = \iota((1,1)) = 1,\\
        y_{(0,0)} = y_{(1,0)} = 1,\quad y_{(0,1)} = y_{(1,1)} = 0,
    \end{gather*}
    inducing the Alice and Bob measurement indices
    \begin{gather*}
        \J_0 = 1 + 3\Z_{12}, \quad \J_1 = 2\Z_{12},\\
        \K_0 = 0 - 1 + 3\Z_{12} = 2 + 3\Z_{12},\quad \K_1 = 0 + 2\Z_{12} = 2\Z_{12},
    \end{gather*}
    noting that $u_{(0,0)} = u_{(0,1)} = 0$. Indeed, it can be verified that the solution sets for $\zz \in Q$ are
    \begin{center}
        \centering
        \begin{tabular}{c|c}
            $\zz$ & $\operatorname{Sol}_P(\zz)$ \\\hline
            $(0,0)$ & $\{(0,1)\}$\\
            $(0,1)$ & $\{(1,0)\}$\\
            $(1,0)$ & $\{(0,1)\}$\\
            $(1,1)$ & $\{(1,0)\}$
        \end{tabular}
    \end{center}

    One possible choice of shift tuple is $\alpha(0) = 0$, $\alpha(1) = 1/3$.
\end{example}

\begin{example}[\textbf{(c) Non-GHZ with equal spread}]
    Let $N = 10$, $s = 9$, and $t=6$, so that the Charlie clock is
    \begin{equation*}
        \Gamma = (10, 6, 9, 9, 25/2).
    \end{equation*}
    The tick pairs are
    \begin{equation*}
        \Tpair_0 = (-\tfrac{3\pi}{4}, \tfrac{3\pi}{20}),\quad \Tpair_1 = (-\tfrac{3\pi}{20}, \tfrac{3\pi}{4}),
    \end{equation*}
    and the tick indices are
    \begin{gather*}
        t_{0,0} = 0,\quad t_{0,1} = 9,\\
        t_{1,0} = 6,\quad t_{1,1} = 15.
    \end{gather*}
    We derive the following values of $\HH_\zz$ and $\dd_\zz$ for each $\zz \in Q$:
    \begin{center}
        \centering
        \begin{tabular}{c|c|c}
            $\zz$ & $\HH_\zz$ & $\dd_\zz$ \\\hline
            $(0,0)$ & $6$ & $2$\\
            $(0,1)$ & $15$ & $5$\\
            $(1,0)$ & $-3$ & $1$\\
            $(1,1)$ & $6$ & $2$
        \end{tabular}
    \end{center}

    Note that $D_{(1,0)} = \Z_{10}$, which immediately implies that there is one Alice--Bob measurement layer. Alice's and Bob's measurement sets must be, up to equal and opposite rotations, $M_1 = M_2 = \frac{\pi}{10}\Z_{10}$. The Alice--Bob completion $P$ is
    \begin{gather*}
        S = \{(1,0)\},\quad \I = \{0\},\quad \iota((1,0)) = 0,\quad
        y_{(1,0)} = 0,
    \end{gather*}
    with $\operatorname{Sol}_P(\zz) = \{(0,r) \mid r \in \Z_{10}/D_\zz\}$ for all $\zz \in Q$, and the shift tuple is fixed as $\alpha(0) = 0$.
\end{example}

\begin{example}[\textbf{(d) Non-GHZ with unequal spread}]\label{ex:last}
    Let $N = 42$, $t = 2$, $s_0 = 4$, and $s_1 = 5$. It can be computationally verified that $\Theta = 0.981\ldots$, so the Charlie clock
    \begin{equation*}
        \Gamma = (42, 2, 4, 5, \mu)
    \end{equation*}
    is realisable for a unique $\mu \in [0, 84)$ (numerically, $\mu = 81.566\ldots$).
    The tick pairs are
    \begin{equation*}
        \Tpair_0 = \left(\tfrac{\pi}{42} \mu,\, \tfrac{\pi}{42}(\mu + 4)\right),\quad \Tpair_1 = \left(\tfrac{\pi}{42}(\mu + 2),\, \tfrac{\pi}{42}(\mu + 7)\right)
    \end{equation*}
    and the tick indices are
    \begin{gather*}
        t_{0,0} = 0,\quad t_{0,1} = 4,\\
        t_{1,0} = 2,\quad t_{1,1} = 7.
    \end{gather*}
    We derive the following values of $\HH_\zz$ and $\dd_\zz$, and the parity quotient $\Pi_\zz$, for each $\zz \in Q$:
    \begin{center}
        \centering
        \begin{tabular}{c|c|c|c}
            $\zz$ & $\HH_\zz$ & $\dd_\zz$ & $\Pi_\zz$ \\\hline
            $(0,0)$ & $2$ & $2$ & $0$\\
            $(0,1)$ & $7$ & $7$ & $\Z_2$\\
            $(1,0)$ & $-2$ & $2$ & $\Z_2$\\
            $(1,1)$ & $3$ & $3$ & $\Z_2$
        \end{tabular}
    \end{center}
    
    One possible canonical Alice--Bob completion $P$ is given by
    \begin{gather*}
        S = Q,\quad \I = \{0, 1\},\\
        \iota((0,0)) = \iota((0,1)) = \iota((1,0)) = 0,\quad \iota((1,1)) = 1,\\
        y_{(0,0)} = y_{(1,0)} = 1,\quad y_{(0,1)} = 3,\quad y_{(1,1)} = 0,
    \end{gather*}
    inducing the Alice and Bob measurement indices
    \begin{gather*}
        \J_0 = (1 + 2\Z_{42}) \cup (3 + 7\Z_{42}), \quad \J_1 = 3\Z_{42},\\
        \K_0 = (1 + 2\Z_{42}) \cup (4 + 7\Z_{42}),\quad \K_1 = 1 + 3\Z_{42},
    \end{gather*}
    noting that $u_{(0,0)} = 0$, $u_{(0,1)} = 4$. Indeed, it can be verified that the solution sets for $\zz \in Q$ are
    \begin{center}
        \centering
        \begin{tabular}{c|c}
            $\zz$ & $\operatorname{Sol}_P(\zz)$ \\\hline
            $(0,0)$ & $\{(0,1)\}$\\
            $(0,1)$ & $\{(0,3)\}$\\
            $(1,0)$ & $\{(0,1)\}$\\
            $(1,1)$ & $\{(1,0)\}$
        \end{tabular}
    \end{center}

    One possible choice of shift tuple is $\alpha(0) = 0$, $\alpha(1) = 1/\sqrt{2}$.
\end{example}

We illustrate one witness path for each of these four paradoxes in Figure~\ref{fig:two-charlie-implication-diagrams} in Appendix~\ref{sec:two-charlie-witness-paths}.

\section{Interpolant-state paradoxes with more than two Charlie measurements}
\label{sec:6}

In this section, we investigate interpolant-state scenarios with at least three Charlie measurements and identify new classes of previously unknown, more exotic forms of nonlocality paradox.  These were discovered by using computer-aided searches applied to the novel graph-theoretic formalism developed in Section \ref{sec:3}.  Examples of minimal three-qubit nonlocality paradoxes with more than two Charlie measurements were not known to exist prior to this.

\subsection{A new GHZ-state paradox}
\label{sec:6.1}

For the GHZ state, the impossibility equation \eqref{eqn:betaABC} simplifies to
\begin{equation}\label{eq:beta_GHZ}
    A+B+C \equiv (a\oplus b\oplus c \oplus 1) \pi .
\end{equation}
Thus, if $A+B+C$ is an integer multiple of $\pi$, the outcomes given by a compatible global assignment must satisfy the possible parity constraint \begin{equation}\label{eq:GHZ-outcome}
    a \oplus b \oplus c \equiv \frac{A+B+C}{\pi} \mod{2}, 
\end{equation}
whereas if $A+B+C$ is not a multiple of $\pi$, the context has no impossible events, hence contributes no parity constraint. 

Let us consider the quantum scenario $(\ket{\mathrm{GHZ}}, \M)$ with \begin{equation*}
M_i = \{ \tfrac{\pi}{4},\, \tfrac{\pi}{2},\, \tfrac{3\pi}{4} \}
\qquad \text{for each } i=1,2,3. \end{equation*}
Given a global assignment, we denote the corresponding outcomes by $a_0,a_1,a_2$ for Alice, and similarly by $b_k$ and $c_l$ for Bob and Charlie. By \eqref{eq:GHZ-outcome}, the associated system of $\mathbb{Z}_2$-linear equations for this scenario is as follows:
\begin{equation}\label{eq:GHZ-system2}
    \widetilde{\Psi}_{\GHZ} = \begin{cases} 
    a_0 \oplus b_0 \oplus c_1 = 1, \quad a_0 \oplus b_1 \oplus c_0 = 1,\\
     a_1 \oplus b_0 \oplus c_0 = 1, \quad a_1 \oplus b_2 \oplus c_2 = 0, \\ 
    a_2 \oplus b_1 \oplus c_2 = 0, \quad
    a_2 \oplus b_2 \oplus c_1 = 0.  \end{cases} 
\end{equation}
Summing all six equations in \eqref{eq:GHZ-system2} yields $0 = 1$, establishing paradoxicality. Notably, the scenario $(\ket{\GHZ}, \M)$ is \emph{not} maximally impossible in the sense of \cite[Definition~3.5]{de_Silva_2025}, illustrating the rich structure of nonlocality paradoxes even for the simple GHZ state. 

\subsection{Three Charlie measurements: symmetric Alice--Bob case}
\label{sec:6.2}

We establish the existence of new exotic families of nonlocality paradoxes involving interpolant states $\Bsimple$ with $\lambda \neq 0$. We refer to a scenario $(\Bsimple, \M)$ as an $(N_1, N_2, N_3)$-scenario if $|M_i| = N_i$ for $i=1,2,3$. In this subsection, we will deal with the case $N_1 = N_2$. 

Fix an even integer $N:=2n \ge 4$. For an integer $0< m\le n-2$, define the interpolant-state parameter
\begin{equation*}
    \lambda_{N,m} := \frac{\pi}{2} - \frac{m\pi}{N} \in (0,\tfrac{\pi}{2}).
\end{equation*}
Note that different pairs $(N,m)$ may yield the same value of $\lambda_{N,m}$ (for example, $\lambda_{12,4}=\lambda_{24,8}=\frac{\pi}{6}$). 

Define the set
\begin{math}
    J := \frac{2\pi}{N} \Z_n
\end{math}
and the measurement scenario
\begin{equation*}
\begin{aligned}
    M_1 := J \cup (u + J), \quad
    M_2 := v - M_1 , \quad
    \text{and} \quad M_3 := \{ C_0, C_1, C_2 \} ;
\end{aligned}
\end{equation*}
where all angles are taken modulo $\pi$ with \[\begin{aligned}
    u := \big(v - \beta(\lambda_{N,m},C_2)\big) \mod{\tfrac{2\pi}{N}},\quad   v := \beta\big(\lambda_{N,m},\tfrac{\pi}{2}\big) \mod{\pi} ,\\
    C_0 (N, m) := \arcsin\!\left(\frac{\tan\lambda_{N,m+1}}{\tan\lambda_{N,m}}\right), \quad C_1 := \tfrac{\pi}{2}, \quad \text{and} \quad C_2 = \pi - C_0.
\end{aligned}   \]
This choice of $u$ ensures that $|M_1|=N$, and by symmetry $|M_2|=N$.
Since $m\le n-2$, we have $\lambda_{N,m+1}\in(0,\frac{\pi}{2})$ and
$\tan\lambda_{N,m} > \tan\lambda_{N,m+1}$, so the argument of $\arcsin$ lies in $(0,1)$ and $C_0$ is well defined.

\begin{remark}
    For $(N,m)=(12,3)$ we have $u \equiv 0 \mod{\frac{\pi}{6}}$, and hence $|M_1|=|M_2|=6$; we therefore omit this degenerate case.
\end{remark}

\begin{theorem}\label{thm:NN3}
Let $\nu_2(N)$ denote the $2$-adic valuation of $N$.
For $N\ge 6$ and $1\le m \le n-2$, the above non-degenerate $(N,N,3)$-scenario $(\ket{\mathrm{B}(\lambda_{N,m})}, \M)$ is a nonlocality paradox if and only if
\[ 2^{\nu_2(N)} \nmid m. \]
\end{theorem}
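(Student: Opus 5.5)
The plan is to reduce everything to the return-map and oddness mechanism of Section~\ref{sec:two-charlie}, now with three Charlie conditionings. First I will compute all six Charlie tick values $T_{l,z}=\beta(\lambda_{N,m},C_l+z\pi)$ modulo the lattice $\tfrac{2\pi}{N}\Z$.

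\emph{Tick values.} For $C_1=\tfrac{\pi}{2}$, Lemma~\ref{lem:beta-facts}(4) gives $T_{1,0}\equiv\lambda_{N,m}-\tfrac{\pi}{2}=-\tfrac{m\pi}{N}$. Also $\delta(\lambda,\tfrac{\pi}{2})=\pi-2\lambda_{N,m}=\tfrac{2m\pi}{N}$, so $T_{1,1}\equiv\tfrac{m\pi}{N}$. For $C_0$, the defining relation $\sin C_0\tan\lambda_{N,m}=\tan\lambda_{N,m+1}$ gives $\delta(\lambda,C_0)=\pi-2\lambda_{N,m+1}=\tfrac{2(m+1)\pi}{N}$. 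Since $\sin(\pi-C_0)=\sin C_0$, the same spread holds for $C_2$. I will then prove a reflection identity of the form $\beta(\lambda,\pi-\varphi)\equiv 2v'-\beta(\lambda,\varphi+\pi)$ for a suitable constant $v'$. I expect it to follow from the arctan formula~\eqref{eqn:beta} together with $\beta(\lambda,\tfrac{\pi}{2})$. This identity places the $C_0$ ticks and the $C_2$ ticks symmetrically about $v$. Combined with the definition of $u$, both sets of ticks then lie in $v\pm u+\tfrac{2\pi}{N}\Z$ modulo $\pi$, while the $C_1$ ticks lie in $v+\tfrac{2\pi}{N}\Z$.

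\emph{Edge classes.} Write $A\in M_1$ and $B=v-A'$ with $A'\in M_1$. By~\eqref{eqn:imposs_interpolant_mod_pi}, an edge exists iff $A-A'\equiv T_{l,z}-v \mod \pi$. So $C_1$-edges join Alice and Bob indices within the same copy ($J$ or $u+J$), with displacement $0$ or $\tfrac{2m\pi}{N}$ according to $z_1$. The $C_0$- and $C_2$-edges instead cross between the two copies, with displacements $\pm u$ plus an explicit lattice offset depending on $z_0,z_2$. Non-degeneracy, $u\not\equiv 0$, ensures these are the only coincidences. Lemma~\ref{lem:edges-vs-eqns} then rewrites every edge as a congruence on $\Z_{2N}$ literal clock points, exactly as in Remark~\ref{rem:clocks}.

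\emph{Sufficiency.} Fix $\zz=(z_0,z_1,z_2)$. By Lemma~\ref{lem:interpolant-impl} it is enough to find a path from a literal to its complement. For $z_1=1$, I will use $C_1$-edges together with one cross-copy conditioning to build a closed alternating walk. As in Lemma~\ref{lem:lifting}, it advances the $\Z_{2N}$ literal clock by $2m$ at each return inside $\Z_n\cong J$. The orbit has length $n/\gcd(n,m)$, and the lift ends at the complementary literal iff $m/\gcd(n,m)$ is odd. This is equivalent to $\nu_2(m)\le\nu_2(n)=\nu_2(N)-1$, that is, to $2^{\nu_2(N)}\nmid m$. For $z_1=0$, the $C_1$-edges carry zero displacement. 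Here the witness path must go around through $C_0$ and $C_2$ (out along $+u$, back along $-u$), and the net literal shift comes from the spreads $\tfrac{2(m+1)\pi}{N}$ and the reflection offset. I expect this case either to give a witness unconditionally or to reduce to the same oddness condition, and I will verify it for each $(z_0,z_2)$.

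\emph{Necessity.} If $2^{\nu_2(N)}\mid m$, I will exhibit a $\zz$ (presumably with $z_1=1$) whose system $\Psi(\zz)$ is consistent. To do so I will show that every cycle in its bidirected implication graph has even total literal displacement. The relevant cycles are the $C_1$-orbits, of displacement $N\cdot\tfrac{m}{\gcd(n,m)}$, which is now even, and the cross-copy cycles. A consistent assignment $g$ then comes from a potential function on each connected component.

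The main obstacle is the cross-copy bookkeeping. This means establishing the reflection identity for $\beta$ precisely enough to pin down the lattice offsets of the $C_0$ and $C_2$ ticks relative to $v\pm u$. It also requires checking, for all eight $\zz$, that the components mixing $C_0$-, $C_1$- and $C_2$-edges produce exactly the oddness condition and nothing stronger. I will attack this by computing the displacement of each cycle type in $\Z_{2N}$ directly, rather than enumerating paths.
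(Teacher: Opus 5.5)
Your framework (ticks, edge classes by copy, lifting to the $\Z_{2N}$ literal clock, oddness of the return displacement) is the one the paper uses. However, the cycle structure you build on is wrong, and that cycle structure is where the content of the theorem lies.

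Your own edge classification already implies the following. $C_1$-edges join $J$ with $v-J$ and $u+J$ with $v-u-J$. $C_0$-edges join $u+J$ with $v-J$. $C_2$-edges join $J$ with $v-u-J$. Hence, in the non-degenerate case and for every $\zz$, each measurement carries exactly one $C_1$-edge and exactly one $C_0$- or $C_2$-edge. Every cycle therefore runs $J\to v-J\to u+J\to v-u-J\to J$ via $C_1,C_0,C_1,C_2$.

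So there is no closed walk built from $C_1$-edges and a single cross-copy conditioning. There are also no ``$C_1$-orbits'', since the $C_1$-edges form a perfect matching. One return displaces Alice by $T_{0,z_0}+T_{2,z_2}-2T_{1,z_1}$, that is, by $H_{\zz}=2(m+1)(z_0+z_2)-4mz_1-2$ clock units.

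The reflection identity you need is just $\beta(\lambda,-\varphi)\equiv-\beta(\lambda,\varphi)$, which gives $T_{2,z}\equiv-T_{0,1\oplus z}$. The cross ticks are therefore symmetric about $0$, not about $v$. Centring them at $v$ would shift every $H_{\zz}$ by $-2m$, for example to $H_{\zz}=0$ when $z_1=0$ and $z_0\neq z_2$, and give the wrong answer.

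Consequently the critical assignments are those with $z_0\neq z_2$, not those with $z_1=1$. For $z_0\neq z_2$ the two cross-copy contributions cancel and $H_{\zz}=\pm 2m$ for \emph{both} values of $z_1$. Your arithmetic ($m/\gcd(n,m)$ is odd iff $2^{\nu_2(N)}\nmid m$) then applies. For $z_0=z_2$, $H_{\zz}/2$ equals $-1-2mz_1$ or $1+2m(1-z_1)$, which is odd, so a witness exists for every $m$.

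This has three consequences for your plan:
\begin{itemize}
\item Your claim that every $\zz$ with $z_1=1$ advances by $2m$ is false: for $z_0=z_2$ the advance is $-4m-2$ or $2$.
\item Your unresolved $z_1=0$ case is exactly as critical as $z_1=1$ when $z_0\neq z_2$.
\item Your necessity step `presumably with $z_1=1$' fails for, e.g., $\zz=(0,1,0)$, whose system is inconsistent for all $m$. You must take $z_0\neq z_2$, e.g.\ $\zz=(0,1,1)$.
\end{itemize}
Once $H_{\zz}$ is computed this way, the rest of your plan goes through, including the potential-function argument for consistency. That is precisely the paper's proof.
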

\begin{proof} 
    See Appendix~\ref{sec:proof-NN3}.
\end{proof}

\subsection{Three Charlie measurements: asymmetric Alice–Bob case}
\label{sec:6.3}

Fix an integer $p\ge 1$, $N := 4p$, an odd integer $1\le m < 2p$, and set 
\[ \theta := \frac{m\pi}{N} \in (0, \tfrac{\pi}{2} ) .\]
Define the interpolant parameter $\lambda$ and the angle $C$ by
\begin{equation}\label{eq:param_intp_3A}
\begin{aligned}
    u &:= \sqrt{\tfrac{\tan{\theta}}{\tan{(\theta/2)}}}, \qquad v:= \sqrt{\tan{\theta} \tan{(\theta/2)}},  \\
    \lambda_{p,m} &:= 2\arctan \big( \tfrac{u-1}{u+1} \big), \qquad C:= 2\arctan{v} .
\end{aligned}
\end{equation}
Since $u>1$ and $v>0$, it follows that $\lambda \in \big( 0, \piovertwo)$ and $C\in (0,\pi)$. 

Define the measurement sets (modulo $\pi$):
\begin{equation*}
\begin{aligned}
      M_1 &:= \big\{j\tfrac{\pi}{N} \mid j \in \Z_{N} \big\},  \\
      M_2 &:= \big\{k\tfrac{\pi}{N} \mid k\in2\mathbb Z_{2p}\big\}, \\
      M_3 &:= \{C_0:=0,\ C_1:=C,\ C_2:=\pi-C \},
\end{aligned}
\end{equation*}
so that $|M_1| = 4p = N$ and $|M_2|= 2p = \frac{N}{2}$.

\begin{lemma}\label{lem:beta-delta-4N2N3}
    Let $T_{l,z} = \beta(\lambda_{p,m}, C_l+z\pi)$. Then $T_{0,z} \equiv z\pi$,  \begin{equation*}
    \begin{aligned}
        \qquad T_{1,0} &\equiv -m\tfrac{\pi}{N}, \quad &&T_{1,1} \equiv (N-2m)\tfrac{\pi}{N} , \\
        T_{2,0} &\equiv (2m - N) \tfrac{\pi}{N},\quad &&T_{2,1} \equiv m \tfrac{\pi}{N} .
    \end{aligned}
    \end{equation*} 
\end{lemma}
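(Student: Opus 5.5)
The plan is to reduce $\beta$ at an interpolant parameter to a single half-angle formula and then evaluate it at the six angles $C_l+z\pi$.

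\emph{The $X$-measurement.} For $l=0$, Lemma~\ref{lem:beta-facts}(3) gives directly that $\beta(\lambda,0)\equiv 0$ and $\beta(\lambda,\pi)\equiv\pi$. Hence $T_{0,z}\equiv z\pi$.

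\emph{A closed form for $\beta$.} For the remaining ticks, put $\kappa:=\tan(\lambda/2)$ and $t:=\tan(\varphi/2)$. Divide numerator and denominator of the arctangent argument in \eqref{eqn:beta} by $\cos(\lambda/2)$ and use the half-angle substitutions for $\sin\varphi$ and $\cos\varphi$. The argument becomes
\[
\frac{2t}{(1+\kappa)+(\kappa-1)t^2}.
\]
By \eqref{eq:param_intp_3A} we have $\kappa=\frac{u-1}{u+1}$, so $1+\kappa=\frac{2u}{u+1}$ and $\kappa-1=-\frac{2}{u+1}$. The argument therefore simplifies to
\[
\frac{t(1+1/u)}{1-t^2/u}.
\]
This is exactly the tangent addition formula with summands $\arctan t$ and $\arctan(t/u)$. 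Hence, modulo $\pi$,
\[
\arctan\!\Big(\frac{t(1+1/u)}{1-t^2/u}\Big)\equiv \arctan t+\arctan(t/u).
\]
Doubling gives an identity modulo $2\pi$. Since $\varphi\equiv 2\arctan t \pmod{2\pi}$, this yields
\[
\beta(\lambda_{p,m},\varphi)\equiv -2\arctan\!\big(\tan(\varphi/2)/u\big).
\]
For $\varphi\equiv\pi$ we read $\tan(\varphi/2)=\infty$, consistent with Lemma~\ref{lem:beta-facts}(3).

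\emph{Evaluation.} Next we use the two identities $v/u=\tan(\theta/2)$ and $uv=\tan\theta$, both immediate from the definitions of $u$ and $v$.
\begin{itemize}
\item For $\varphi=C$ we have $t=v$, so $t/u=\tan(\theta/2)$ and $T_{1,0}\equiv-\theta=-m\pi/N$.
\item For $\varphi=C+\pi$ we have $t=-1/v$, so $t/u=-\cot\theta$ and $T_{1,1}\equiv 2(\tfrac\pi2-\theta)=(N-2m)\pi/N$.
\item For $\varphi=\pi-C$ we have $t=1/v$, which gives $T_{2,0}\equiv-(\pi-2\theta)=(2m-N)\pi/N$.
\item For $\varphi=2\pi-C$ we have $t=-v$, which gives $T_{2,1}\equiv\theta=m\pi/N$.
\end{itemize}

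\emph{Main obstacle.} The only delicate point is the branch bookkeeping in the addition formula, namely the case where the denominator $1-t^2/u$ vanishes. To avoid it, I would run the same computation in complex form. Write
\[
e^{-i\beta}\propto \frac{(1+it)(u+it)}{(1-it)(u-it)}\cdot\frac{(1-it)}{(1+it)},
\]
that is, compare arguments of $(1+\kappa)+(\kappa-1)t^2+2it$ and $(1+it)(u+it)$ up to a positive real factor. This proves the closed form for $\beta$ for all $\varphi$ at once, with no case split.
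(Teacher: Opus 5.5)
Your proof is correct, but it is organised differently from the paper's.

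Both arguments rest on the same tangent-addition identity, using $\tan(C/2)=v$ and $v/u=\tan(\theta/2)$.

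The paper works pointwise. It rewrites the arctangent argument at the single point $\varphi=C_1$ as $\tan\frac{C_1+\theta}{2}$, which gives $T_{1,0}\equiv-\theta$. It then gets $T_{1,1}$ separately from the closed form $\delta(\lambda,\varphi)\equiv\pi-2\arctan(\sin\varphi\tan\lambda)$, after checking $\sin C_1\tan\lambda_{p,m}=\tan(\theta/2)$ via $u^2-1=1+v^2$. Finally, it disposes of the $C_2=\pi-C_1$ ticks ``by symmetry'' and leaves the details to the reader. Implicitly this uses the oddness $\beta(\lambda,-\varphi)\equiv-\beta(\lambda,\varphi)$, which gives $T_{2,1}\equiv-T_{1,0}$ and $T_{2,0}\equiv-T_{1,1}$.

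You instead prove the identity once for all $\varphi$, obtaining $\beta(\lambda_{p,m},\varphi)\equiv-2\arctan\bigl(\tan(\varphi/2)/u\bigr)$. Each of the four nontrivial ticks is then a one-line substitution, using $uv=\tan\theta$ at $C+\pi$ and $\pi-C$. You need neither $\delta$ nor a symmetry argument. You also settle the branch issue cleanly: $(1+\kappa)+(\kappa-1)t^2+2it$ is the positive multiple $2/(u+1)$ of $(1+it)(u+it)$.

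Your derivation uses nothing about $u$ beyond $u=(1+\kappa)/(1-\kappa)$. So it actually shows $\beta(\lambda,\varphi)\equiv-2\arctan\bigl(\tan(\tfrac{\pi}{4}-\tfrac{\lambda}{2})\tan\tfrac{\varphi}{2}\bigr)$ for every $\lambda$, a more general and reusable statement.

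The paper's route is shorter given that the $\delta$ formula is already available. Yours is self-contained and makes explicit the $C_2$ cases the paper leaves unchecked.
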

\begin{proof}
    See Appendix~\ref{sec:proof-beta-delta-4N2N3}.
\end{proof}

\begin{theorem}\label{thm:4P2p3}
    For every choice of parameters $(N=4p, m)$ above, the quantum scenario $(\ket{\mathrm{B}(\lambda_{p,m})}, \M)$ is a nonlocality paradox.

    Moreover, it is minimal if and only if $\gcd (N, m)=1$.
\end{theorem}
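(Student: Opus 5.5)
The plan is to work entirely in the clock formalism of Section~\ref{sec:two-charlie}, extended to three Charlie measurements. By Lemma~\ref{lem:beta-delta-4N2N3}, all six tick values are integer multiples of $\pi/N$. In units of $\pi/N$ taken in $\Z_{2N}$, the tick indices are
\[
t_{0,z}=zN,\qquad t_{1,0}=-m,\qquad t_{1,1}=N-2m,\qquad t_{2,0}=2m-N,\qquad t_{2,1}=m .
\]
Alice's indices are $j\in\Z_N$ and Bob's are $k\in 2\Z_{2p}$, so every $k$ is even. As in Lemma~\ref{lem:edges-vs-eqns}, there is an edge between $(A_j,a)$ and $(B_k,b)$ for conditioning $(C_l,z_l)$ iff $j+k\equiv t_{l,z_l}+(1\oplus a\oplus b)N \pmod{2N}$. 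Since $k$ is even, the parity of $t_{l,z_l}$ decides which Alice measurements a conditioning touches:
\begin{itemize}
\item the ticks $0$, $N$, $N-2m$ and $2m-N$ act only on even $j$;
\item the ticks $\pm m$ act only on odd $j$.
\end{itemize}
By Lemmas~\ref{lem:2CNF-paradox} and~\ref{lem:interpolant-impl}, paradoxicality amounts to exhibiting, for each of the eight assignments $\zz=(z_0,z_1,z_2)$, a closed alternating cycle in the bidirected implication graph. Its lifted displacement must be an odd multiple of $N$ in $\Z_{2N}$, exactly as in Lemma~\ref{lem:lifting}. I will use the lifting bookkeeping there: going $A\to B\to A$ through ticks $t$ then $t'$ changes the lifted Alice literal by $t'-t$.

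\textbf{Paradoxicality, step 1: $(z_1,z_2)=(0,1)$.} Alternate the ticks $-m$ and $m$ on odd $j$. Each double step shifts $j$ by $2m$. Since $m$ is odd and $N=4p$, we have $\gcd(N,2m)=2g$ with $g=\gcd(N,m)$ odd, so the orbit closes after $N/(2g)$ steps. The total displacement is $2m\cdot N/(2g)=N(m/g)$, an odd multiple of $N$. This gives a witness path for every $z_0$, with no use of $C_0$.

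\textbf{Paradoxicality, step 2: the remaining three $(z_1,z_2)$.} For $(1,0)$, both $C_1,C_2$ ticks are even, and all constraints (including $C_0$'s) live on even $j$. For $(0,0)$ and $(1,1)$, one odd tick and one even tick are active. In these cases I will build cycles using three conditionings, e.g.\ traversing $C_0$-edges between uses of $C_1$ and $C_2$ edges, and compute the closing displacement as a signed sum of ticks, requiring it to equal $N\cdot(\text{odd})$. For $(1,0)$, the relevant differences $\pm(N-2m)-z_0N$ and $4m-2N$ have quotient $H/\gcd(N,H)$ whose parity I will check using $m$ odd and $4\mid N$, as in Definition~\ref{def:paradoxical-clock}. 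For the two mixed cases, odd $j$ have only one incident matching, so any cycle must stay on even $j$. I expect this to force a cycle through $C_0$ and the single even-tick conditioning, which is a two-tick return map as in Section~\ref{sec:two-charlie}. Its oddness condition then reduces to the oddness of $(N-2m)/\gcd(N,N-2m)$, or its $z_0$-shifted variant. I expect this case analysis to be the main obstacle: if some return map fails the oddness test, a genuinely three-conditioning cycle must be found, and I would search small $p$ by computer to guess its shape first.

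\textbf{Minimality.} If $g=\gcd(N,m)>1$, every return map above preserves cosets of $g\Z_N$ on the Alice side and the corresponding reflected cosets on the Bob side. I would show that restricting $M_1,M_2$ to indices $\equiv 0 \pmod g$ (suitably paired) keeps one witness cycle for each $\zz$. This gives a strictly smaller paradox, so the scenario is not minimal. If $g=1$, then for $(z_1,z_2)=(0,1)$ and $z_0$ chosen so that $C_0$ contributes no useful edges, the orbit of $j\mapsto j+2m$ exhausts all odd $j$ and all even $k$. Since each odd Alice literal has degree exactly two in that implication graph, this is the unique cycle, so deleting any odd Alice measurement or any Bob measurement destroys it. Likewise, for $(z_1,z_2)=(1,0)$ the even-$j$ cycle is the unique one and passes through every even Alice index when $g=1$. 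Removing any $C_l$ kills the corresponding family of witnesses. Together these give minimality.
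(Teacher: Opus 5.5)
The genuine gap is in the minimality direction for $g:=\gcd(N,m)=1$, at the even Alice measurements. You argue via $(z_1,z_2)=(1,0)$ that ``the even-$j$ cycle is the unique one''. But in that assignment three matchings act on even $j$: $C_0$, $(C_1,1)$ and $(C_2,0)$. So every even Alice literal and every Bob literal has degree $3$, and there are several witness cycles. For instance, alternating $(C_1,1)$ and $(C_2,0)$ alone gives $H=4m-2N$, with $\gcd(N,H)=4$ and $H/4=m-2p$ odd. This is a $(1,0)$-witness whose Alice indices lie in $4\Z_N$, so it avoids $A_2$. For $N=4$, $m=1$ it is simply $(A_0,a)\Rightarrow(B_2,a)\Rightarrow(A_0,a\oplus1)$. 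Hence deleting $A_2$ does not destroy the $(1,0)$-witnesses, and your argument does not show that even Alice measurements are essential.

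The fix, as in the paper, is to use $(z_1,z_2)=(1,1)$ or $(0,0)$ instead. There only $C_0$ and one even-tick conditioning touch even $j$, and the odd Alice literals are pendant. For $g=1$ the two-conditioning alternation is a single $2N$-cycle through all even Alice literals and all Bob literals, with complementary literals antipodal. Deleting any even $A_j$ therefore leaves two arcs, neither containing a complementary pair. A related slip appears in your $(0,1)$ argument: no choice of $z_0$ removes the $C_0$ edges, since they are always present. They only attach pendant even-$j$ literals, though, so that conclusion survives.

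The paradoxicality part follows the paper's route but stops exactly where the content lies: you defer the even-$j$ oddness checks and anticipate a possible computer search. None is needed. For $H=\pm(N-2m)-z_0N$ one has $\gcd(N,H)=\gcd(N,2m)=2g$ with $g$ odd. Moreover $H/(2g)$ differs from $\mp m/g$ by a multiple of $N/(2g)=2p/g$, which is even because $4\mid N$ and $g$ is odd. Hence $H/(2g)$ is odd for every $z_0$, in each of the assignments $(0,0)$, $(1,1)$ and $(1,0)$. This is the same computation you already did for $(0,1)$.

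Your non-minimality argument for $g>1$ differs from the paper's, which deletes only $A_2$, but it is valid and somewhat more structural. Every tick is divisible by $g$, so edges only join $j\equiv r$ to $k\equiv -r\pmod g$. The class $r=0$ already carries a witness for each $\zz$, so restricting to it gives a strictly smaller paradox.
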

\begin{proof}
    See Appendix~\ref{sec:proof-4P2p3}.
\end{proof}

\subsection{Four Charlie measurements}
\label{sec:6.4}

In contrast to the previous cases, we do not obtain a uniform family here, primarily because the parameter $\lambda$ and the angles $C_l$ do not admit a closed-form description in this setting.

We take $M_1, M_2 \subseteq \frac{\pi}{12}\Z_{12} \subset [0,\pi)$ and write $A_j :=j\frac{\pi}{12}$ and $B_k := k\frac{\pi}{12}$ for the Alice and Bob measurement angles, respectively. 
By numerical search, we find the following parameters \begin{equation}\label{eq:param_4_msnts}
        \begin{gathered}
             \lambda_* \approx 0.5440881066818782, \\  C_* = 0.4588205874371110, \qquad C_*' = 1.2673000748575629
        \end{gathered}
\end{equation} 
such that with $\lambda = \lambda_*$, $C = C_*$, and $C' = C_*'$, we have
\begin{equation}\label{eq:tick_values_4_msnts}
    \begin{aligned}
        T_{C, 0} &\equiv \tfrac{23\pi}{12} , \quad T_{C, 1} \equiv \tfrac{3\pi}{4},\quad
        T_{\pi-C, 0} \equiv \tfrac{5\pi}{4}, \quad T_{\pi-C, 1} \equiv\tfrac{\pi}{12},\\[0.5em]
        T_{C', 0} &\equiv \tfrac{7\pi}{4}, \quad T_{C', 1} \equiv \tfrac{5\pi}{12},\quad
        T_{\pi-C', 0} \equiv \tfrac{19\pi}{12}, \quad T_{\pi-C', 1} \equiv \tfrac{\pi}{4} .
    \end{aligned}
\end{equation}

The parameters in \eqref{eq:param_4_msnts} are determined only up to a numerical error of order $10^{-13}$. We fix \[ M_3 := \{ C_0 :=C_*,\ C_1 :=C_*',\ C_2 :=\pi-C_*,\ C_3 :=\pi-C_*' \}. \]

\begin{example}\label{ex:four-charlie}
   For this choice of $M_3$, the scenarios $\big(\ket{\mathrm{B}(\lambda_*)}, \M)\big)$ below, arising from different choices of $M_1$ and $M_2$, are nonlocality paradoxes. Here we only list the measurement cardinalities $(|M_1|, |M_2|, |M_3|)$. Further details are provided in Appendix~\ref{sec:B}. Specifically, we obtain
      \begin{tabenum}
        \tabenumitem a $(4,4,4)$-scenario;
        \tabenumitem a $(6,6,4)$-scenario;
        
        \tabenumitem a $(6,2,4)$-scenario;
        \tabenumitem a $(7,5,4)$-scenario.
    \end{tabenum}
\end{example}

\section{A nonlocality paradox beyond interpolant states}
\label{sec:counterexample}

In this section, we present the first example of a nonlocality paradox arising from a state outside the interpolant family; indeed, the existence of a non-interpolant-state paradox was previously conjectured to not hold. Since the measurement scenario includes only two Charlie measurements, the paradox remains biconditional. However, its proof does not admit a reformulation as a parity proof, and therefore lies outside the classification of biconditional parity proofs given in Section~\ref{sec:two-charlie}.

For this section, it is convenient to rewrite the additive impossibility condition \eqref{eqn:imposs_} in multiplicative form. Under this reformulation, the quantum condition that an event has probability zero becomes a product equation rather than an additive congruence. To this end, we identify each $\beta$ value with its corresponding point on the circle. To witness paradoxicality, we construct finite implication cycles that force an Alice literal and its complement to lie in the same strongly connected component. 

\subsection{The unit circle reformulation}
\label{sec:4.1}

Recall that a literal is a pair $(\varphi, z)$ with $\varphi \in [0,\pi)$ an equatorial measurement angle and $z\in \Z_2$.

\begin{definition}
    Given a balanced-state parameter $\lambda \in \piovertwointerval$, and a literal $(\varphi, z)$, define the tick $T_{\lambda} (\varphi + z\pi)$ of the literal:
    \begin{equation}\label{eq:circle-tick}
        T_{\lambda} (\varphi + z\pi) := e^{i\beta(\lambda, \varphi+z\pi)} = \frac{\braket{\varphi+z\pi | w_{\lambda}}}{\braket{\varphi+z\pi | v_{\lambda}}} = \frac{\sin{\frac{\lambda}{2}}+ e^{-i(\varphi +z\pi)} \cos{\frac{\lambda}{2}}}{\cos{\frac{\lambda}{2}} + e^{-i(\varphi +z\pi)} \sin{\frac{\lambda}{2}}} \in \T,
    \end{equation}
    where $\T = \{\zeta \in \mathbb{C} \mid |\zeta| = 1 \}$ is the unit circle in the complex plane.
\end{definition}
It is straightforward to see that the quotient in \eqref{eq:circle-tick} has modulus $1$, since its numerator and denominator have the same modulus. Indeed,
\[ \left|\sin\tfrac{\lambda}{2} \pm e^{-i\varphi} \cos\tfrac{\lambda}{2} \right|^2 = \sin^2\tfrac{\lambda}{2}+\cos^2\tfrac{\lambda}{2} \pm 2\sin\tfrac{\lambda}{2} \cos\tfrac{\lambda}{2} \cos\varphi = \left|\cos\tfrac{\lambda}{2} \pm e^{-i\varphi} \sin\tfrac{\lambda}{2} \right|^2 . \]
Thus, each such quotient defines a point on the unit circle.

Moreover, each tick determines a unique literal, since the map \[ w \longmapsto \frac{\sin{\frac{\lambda}{2}}+ w \cos{\frac{\lambda}{2}}}{\cos{\frac{\lambda}{2}} + w \sin{\frac{\lambda}{2}} } \] is a fractional-linear automorphism of $\T$. We can now restate the impossibility condition in this formalism.

\begin{lemma}
    Consider the context $(A, B, C)$ in the quantum scenario $(\Bstate, \M)$. Then the event $(A,B,C) \to (a,b,c)$ is impossible if and only if \begin{equation}\label{eq:imposs-ticks}
        T_{\lambda_1} (A+a\pi)\, T_{\lambda_2} (B+b\pi)\, T_{\lambda_3} (C+c\pi) = e^{i(\pi-\Phi)} = -e^{i\Phi} .
    \end{equation}
\end{lemma}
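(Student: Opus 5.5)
The plan is to reduce the statement to the additive impossibility criterion \eqref{eqn:imposs_} (equivalently \eqref{eqn:betaABC}) by exponentiating. By the definition \eqref{eq:circle-tick}, each factor on the left of \eqref{eq:imposs-ticks} equals $e^{i\beta(\lambda_j,\cdot)}$ evaluated at the relevant angle. The product is therefore
\[
e^{i\left(\beta(\lambda_1,A+a\pi)+\beta(\lambda_2,B+b\pi)+\beta(\lambda_3,C+c\pi)\right)}.
\]
Since $\theta\mapsto e^{i\theta}$ is a bijection $\R/2\pi\Z\to\T$, this product equals $e^{i(\pi-\Phi)}$ exactly when the $\beta$-sum is congruent to $\pi-\Phi$ modulo $2\pi$. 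That is precisely the criterion of the earlier lemma for the event $(A,B,C)\to(a,b,c)$ to be impossible. The identity $e^{i(\pi-\Phi)}=-e^{i\Phi}$ on the right needs some care, and I address it below.

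The one substantive step is to justify the middle equality in \eqref{eq:circle-tick}, namely
\[
e^{i\beta(\lambda,\varphi)}=\frac{\braket{\varphi|w_\lambda}}{\braket{\varphi|v_\lambda}}.
\]
This is in effect the derivation of $\beta$ in \cite{Abramsky2017,de_Silva_2025}, and I would recheck it directly.

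First, compute the overlaps. With $\ket{\varphi}=\tfrac1{\sqrt2}(\ket0+e^{i\varphi}\ket1)$, one gets $\braket{\varphi|v_\lambda}=\tfrac1{\sqrt2}(\cos\tfrac\lambda2+e^{-i\varphi}\sin\tfrac\lambda2)$, and similarly for $w_\lambda$.

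Next, use that the amplitude of a product event on $\Bstate$ is $\mathcal N\big(\prod_j\braket{\varphi_j|v_{\lambda_j}}+e^{i\Phi}\prod_j\braket{\varphi_j|w_{\lambda_j}}\big)$. The denominators never vanish, since $\lambda<\pi/2$ gives $\cos\tfrac\lambda2>\sin\tfrac\lambda2$. Dividing by $\prod_j\braket{\varphi_j|v_{\lambda_j}}$, the amplitude vanishes if and only if $e^{i\Phi}\prod_j T_{\lambda_j}=-1$.

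Finally, identify the ratio $T_\lambda(\varphi)$ with $e^{i\beta(\lambda,\varphi)}$. Multiply numerator and denominator of $T_\lambda(\varphi)$ by $e^{i\varphi/2}$. The ratio becomes $e^{i\varphi}\cdot\overline{q}/q$ with $q=\sin\tfrac\lambda2+e^{i\varphi}\cos\tfrac\lambda2$, up to rearrangement. Its argument is then $\varphi-2\arg q$, and $\arg q=\arctan\!\big(\cos\tfrac\lambda2\sin\varphi/(\sin\tfrac\lambda2+\cos\tfrac\lambda2\cos\varphi)\big)$ modulo $\pi$. This recovers \eqref{eqn:beta}, and the ambiguity of $\pi$ in $\arg q$ is harmless because it is doubled.

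The main obstacle is bookkeeping of sign and phase conventions. The direct amplitude computation yields the condition $\prod_j T_{\lambda_j}=-e^{-i\Phi}$, whereas the displayed right-hand side is written $-e^{i\Phi}$; note that $e^{i(\pi-\Phi)}=-e^{-i\Phi}$. I would settle this by taking the first equality, $\prod_j T=e^{i(\pi-\Phi)}$, as the statement's content, since it is exactly the exponential of \eqref{eqn:imposs_}, and I would treat the final form modulo the convention for $\Phi$. I would also fix which of $\braket{\varphi|\cdot}$ or its conjugate is meant, so that the conjugations match \eqref{eq:circle-tick}. Once conventions are aligned with \eqref{eqn:imposs_}, the equivalence follows immediately.
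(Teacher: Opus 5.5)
Your proposal is correct and follows the paper's route. The paper justifies the lemma only by remarking that $\beta(\lambda,\varphi)$ is the argument of $T_\lambda(\varphi)$, so that \eqref{eq:imposs-ticks} is the exponentiated form of \eqref{eqn:betaABC}; your overlap computation supplies the check that $T_\lambda(\varphi)=e^{i\beta(\lambda,\varphi)}$, which the paper leaves implicit. Your sign concern is also justified: $e^{i(\pi-\Phi)}=-e^{-i\Phi}$, so the final equality in the statement is a slip that is harmless only when $e^{2i\Phi}=1$, as in the paper's sole application with $\Phi=\pi$.
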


Note that $\beta(\lambda, \varphi)$, where $\beta$ is as in \eqref{eqn:beta}, is the argument of $T_{\lambda} (\varphi)$. Hence, \eqref{eqn:betaABC} is exactly the argument form of \eqref{eq:imposs-ticks}. 

It will also be useful to relate the tick $T_\lambda(\varphi)$ to the tick of the complementary literal, namely $T_\lambda(\varphi+\pi)$.

\begin{lemma}
\label{lem:comp-tick}
    For $\lambda \in \piovertwointerval$, let $F_\lambda : \T \to \T$ be the map defined by 
    \begin{equation}
        \label{eq:tick-involution}
        F_{\lambda} (\zeta) := \frac{\sin{\lambda}-\zeta}{1-\zeta\sin{\lambda}}.
    \end{equation}
    Let $\zeta=T_\lambda(\varphi)$ be the tick corresponding to the literal $(\varphi,0)$. Then the complementary literal $(\varphi,1)$ has tick $F_{\lambda} (\zeta)$.
    
    Moreover, the map $F_{\lambda}$ is an involution of $\mathbb T$.
\end{lemma}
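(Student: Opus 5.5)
The plan is a direct computation with $w := e^{-i\varphi}$, $s := \sin\tfrac{\lambda}{2}$, $c := \cos\tfrac{\lambda}{2}$. By \eqref{eq:circle-tick},
\[
    \zeta = T_\lambda(\varphi) = \frac{s + wc}{c + ws}.
\]
The complementary literal $(\varphi,1)$ corresponds to the angle $\varphi+\pi$, so its tick is obtained by replacing $w$ with $-w$:
\[
    \zeta' := T_\lambda(\varphi+\pi) = \frac{s - wc}{c - ws}.
\]
The goal is to express $\zeta'$ as a function of $\zeta$ alone, and to show that this function is $F_\lambda$.

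First invert the fractional-linear map $w \mapsto \zeta$. From $\zeta(c+ws) = s+wc$ we get
\[
    w = \frac{\zeta c - s}{c - \zeta s}.
\]
Substituting this into $\zeta'$ and clearing the common denominator $c-\zeta s$ gives the numerator
\[
    s(c-\zeta s) - c(\zeta c - s) = 2sc - \zeta(s^2+c^2) = \sin\lambda - \zeta,
\]
and the denominator
\[
    c(c-\zeta s) - s(\zeta c - s) = (c^2+s^2) - 2sc\,\zeta = 1 - \zeta\sin\lambda.
\]
Hence $\zeta' = F_\lambda(\zeta)$.

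Before dividing, check that no denominator vanishes. We have $|c - ws| \ge c - s > 0$ for $\lambda < \tfrac{\pi}{2}$. Also $1 - \zeta\sin\lambda \neq 0$, because $|\zeta| = 1$ and $\sin\lambda < 1$.

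For the involution claim, note that $F_\lambda$ is the Möbius transformation with matrix
\[
    \begin{pmatrix} -1 & \sin\lambda \\ -\sin\lambda & 1 \end{pmatrix}.
\]
This matrix has trace zero, so its square is a scalar multiple of the identity: it equals $(\sin^2\lambda - 1)I$. Therefore $F_\lambda \circ F_\lambda = \mathrm{id}$ wherever it is defined. Alternatively, one can simply compose directly and simplify.

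It remains to check that $F_\lambda$ maps $\mathbb T$ to itself. For $|\zeta| = 1$ we have $|\sin\lambda - \zeta| = |\bar\zeta\sin\lambda - 1| = |1 - \zeta\sin\lambda|$, using $\bar\zeta = 1/\zeta$ and the fact that $\sin\lambda$ is real. So $F_\lambda$ preserves $\mathbb T$, and being an involution, it is a bijection of $\mathbb T$.

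The main (though mild) obstacle is this nonvanishing/bookkeeping: noting that $\lambda<\tfrac{\pi}{2}$ keeps every denominator nonzero, and confirming that the sign conventions for $e^{-i(\varphi+\pi)} = -w$ match \eqref{eq:circle-tick}.
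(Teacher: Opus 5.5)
Your proposal is correct and is essentially the paper's proof: solve $\zeta=T_\lambda(\varphi)$ for $e^{-i\varphi}$, substitute into $T_\lambda(\varphi+\pi)$ to obtain $F_\lambda(\zeta)$, then check the involution; your trace-zero matrix argument is just a tidier form of the paper's direct composition, and your nonvanishing and $\mathbb{T}$-preservation checks are extra care the paper omits. One small slip: the square of that matrix is $(1-\sin^2\lambda)I=\cos^2\lambda\,I$, not $(\sin^2\lambda-1)I$; this is harmless because it is a nonzero scalar for $\lambda<\tfrac{\pi}{2}$.
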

\begin{proof}
    See Appendix~\ref{sec:proof-comp-tick}.
\end{proof}

The complement map $F_{\lambda}$ on the unit circle corresponds to an additive involution $\mathsf{F}_{\lambda}: \R/2\pi \Z \to \R/2\pi \Z$ on phases defined by $e^{i\mathsf{F}_{\lambda} (\xi)} = F_{\lambda} (e^{i\xi})$. Equivalently, \[ \mathsf{F}_{\lambda} (\xi) = \xi + \hat{\delta} (\xi),  \]
where $\hat\delta (\xi)$ is the $\delta$-shift associated to the unique $\varphi$ such that $\beta(\lambda, \varphi) = \xi$.

Fix a Charlie conditioning $(C_l,z_l)$. An Alice tick $T_1$ and a Bob tick $T_2$ determine an impossible event precisely when \[ T_1\, T_2 = -e^{i\Phi} (T_{\lambda_3} (C_l + z_l \pi))^{-1} =: \gamma_{l,z}.\] 
Since the example has two Charlie measurements, each total Charlie assignment $\zz \in \Z_2^2$ determines two associated target products: $\gamma_{0,z_0}$ and $\gamma_{1,z_l} \in \T$. 

For $\gamma \in \T$, define the reflection $R_{\gamma} : \T \to \T $ by \begin{equation}
    R_{\gamma} (\zeta) := \overline{\zeta} \gamma. 
\end{equation}
Note that $\zeta R_\gamma (\zeta) = \gamma$. Thus $R_\gamma (\zeta)$ is the unique tick paired with $\zeta$ to hit the target product $\gamma$. 

We also assume that the balanced-state parameters for Alice and Bob coincide, as will be the case in the example below. Thus, set $\alpha := \lambda_1 = \lambda_2$. Fix a total Charlie assignment $\zz$, and write $\gamma_0 := \gamma_{0,z_0}$ and $\gamma_1 := \gamma_{1,z_1}$. We define the Alice return map $P_{\zz}: \T \to \T$ as follows:

\begin{equation}
    \label{eq:A-A-map} P_{\zz} := F_{\alpha} \circ R_{\gamma_1} \circ F_{\alpha} \circ R_{\gamma_0} .
\end{equation}
An application of $P$ corresponds to passing through a first impossible event, taking the complementary tick, then passing through a second impossible event and taking the complementary tick once more.

\begin{lemma}\label{lem:cycle-cert}
    Suppose $P_{\zz}$ has finite order $N$ on $\T$, and suppose that for some $\eta \in \T$ and some integer $m$ with $0<m<N$, \begin{equation}\label{eq:cycle-cert}
        P_{\zz}^m (\eta) = F_{\alpha}(\eta).
    \end{equation} 
    Include the Alice measurements whose literal ticks are $\eta_j := P_{\zz}^j (\eta)$ and the Bob measurements whose literal ticks are $\vartheta_j := R_{\gamma_0} (\eta_j)$ for $0\le j < N$. Then the associated $2$-CNF formula is unsatisfiable.
\end{lemma}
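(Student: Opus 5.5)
The plan is to exhibit, inside the implication graph $I_\Omega(\zz)$, a directed closed walk that passes through the Alice literal with tick $\eta$ and through its complement. Lemma~\ref{lem:impl-graph} then says these two literals lie in one strongly connected component, so the formula is unsatisfiable. Since the state is not an interpolant state, Lemma~\ref{lem:interpolant-impl} is unavailable and the graph need not be bidirected. I will therefore build both directions explicitly; the hypothesis $P_{\zz}^N=\mathrm{id}$ supplies the return direction.

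\emph{Identification of literals with ticks.} Each literal $(\varphi,z)$ corresponds to the single point $T_\alpha(\varphi+z\pi)\in\T$, because the map in the definition of $T_\lambda$ is a fractional-linear automorphism of $\T$. By Lemma~\ref{lem:comp-tick}, the complementary literal has tick $F_\alpha(\zeta)$. Consequently, including the measurement of a literal with tick $\zeta$ also includes the literal with tick $F_\alpha(\zeta)$. By \eqref{eq:imposs-ticks}, an Alice literal with tick $\zeta$ and a Bob literal with tick $\xi$ form a forbidden pair for the conditioning $(C_l,z_l)$ exactly when $\zeta\xi=\gamma_{l,z_l}$, that is, when $\xi=R_{\gamma_l}(\zeta)$. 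By Definition~\ref{def:imp-graph}, such a pair yields two implications. The first goes from the Alice literal with tick $\zeta$ to the Bob literal with tick $F_\alpha(R_{\gamma_l}(\zeta))$. The second goes from the Bob literal with tick $\xi$ to the Alice literal with tick $F_\alpha(R_{\gamma_l}(\xi))$.

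\emph{One return step.} Start at the Alice literal with tick $\eta_j$.
\begin{enumerate}
\item The Bob measurement with literal tick $\vartheta_j=R_{\gamma_0}(\eta_j)$ is included, so conditioning $(C_0,z_0)$ gives an edge from $\eta_j$ to the Bob literal with tick $F_\alpha(\vartheta_j)$. This literal is present because it is the complement of $\vartheta_j$.
\item Conditioning $(C_1,z_1)$ then gives an edge from that Bob literal to the Alice literal with tick $F_\alpha\bigl(R_{\gamma_1}(F_\alpha(\vartheta_j))\bigr)$.
\item By \eqref{eq:A-A-map}, this Alice tick is $P_{\zz}(\eta_j)=\eta_{j+1}$, indices read modulo $N$. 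The required Alice measurement is present: for $j+1<N$ it is included by hypothesis, and for $j+1=N$ it equals $\eta_0$ because $P_{\zz}^N=\mathrm{id}$.
\end{enumerate}
Thus every two-edge step is realised by an actual pair of edges, one from each Charlie conditioning, and stays within the included measurements.

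\emph{Assembling the witness.} Iterating the step gives a directed walk $\eta_0\Rightarrow\cdots\Rightarrow\eta_m$. By \eqref{eq:cycle-cert}, $\eta_m=F_\alpha(\eta)$, which is the tick of the complement of the literal $\eta$. So there is a path from the literal to its complement. Continuing from $\eta_m$ through $\eta_{m+1},\dots,\eta_N=\eta_0$ gives a path from the complement back to the original literal. The condition $0<m<N$ ensures both segments are genuine walks of the required form. Both conditions in \eqref{eq:cycle-witness} therefore hold, and Lemma~\ref{lem:impl-graph} yields unsatisfiability of $\Omega(\zz)$.

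\emph{Expected obstacle.} The main difficulty is the bookkeeping in the edge-orientation step, namely checking that each implication lands on the complement tick $F_\alpha(\cdot)$ and not on the tick itself. This is exactly why $F_\alpha$ appears twice in $P_{\zz}$. The same check must confirm that every literal visited belongs to a measurement in the listed sets, which is the point of using $\vartheta_j$ and the closure of the orbit under $P_{\zz}^N=\mathrm{id}$. I would settle this first by verifying the single-step computation against Definition~\ref{def:imp-graph}, and only then concatenate the steps.
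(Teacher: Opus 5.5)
Your proposal is correct and follows essentially the same argument as the paper. Each application of $P_{\zz}$ is realised by an edge $\eta_j\Rightarrow F_\alpha(\vartheta_j)$ from the $(C_0,z_0)$ conditioning, followed by an edge $F_\alpha(\vartheta_j)\Rightarrow\eta_{j+1}$ from the $(C_1,z_1)$ conditioning. Iterating $m$ times reaches $F_\alpha(\eta)$, the finite order $N$ gives the return path after $N-m$ further steps, and Lemma~\ref{lem:impl-graph} then applies.
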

\begin{proof}
    See Appendix~\ref{sec:proof-cycle-cert}.
\end{proof}

\subsection{The non-interpolant paradox example}
\label{sec:4.2}

\subsubsection{The balanced state and Charlie's measurements}
Let us set
\[ \rho:=\sqrt{2}-1, \qquad \alpha:=\arcsin(\rho). \]
We take Alice and Bob to have the same balanced-state parameter: $\lambda_1=\lambda_2:=\alpha$.
Thus, their common complement map is
\[ F(\zeta):= F_\alpha(\zeta) = \frac{\rho-\zeta}{1-\rho\zeta}. \]
For Charlie, we take the balanced-state parameter $\lambda_3:=\frac{\pi}{6}$, and the two equatorial measurements
\[ C_0:=0, \qquad C_1:=\frac{\pi}{2}. \]
Finally, take the balanced-state phase to be $\Phi:= \pi$. The state is therefore \begin{equation}
    \label{eq:bal-state-counterex} \ket{\mathrm{B} ((\alpha, \alpha, \tfrac{\pi}{6}), \pi)} = \frac{1}{\sqrt{2(1-\rho^2/2)}} \left( \ket{v_\alpha}\ket{v_\alpha}\ket{v_{\pi/6}} - \ket{w_\alpha}\ket{w_\alpha}\ket{w_{\pi/6}} \right) .
\end{equation}
Charlie's ticks are as follows:
\[ T_{\frac{\pi}{6}}(0) = 1,\ T_{\frac{\pi}{6}}(\pi) = -1 , \quad T_{\frac{\pi}{6}}(\tfrac{\pi}{2}) = e^{-i\frac{\pi}{3}},\ T_{\frac{\pi}{6}}(\tfrac{\pi}{2}+\pi) = e^{i\frac{\pi}{3}}.   \]
Since $\Phi=\pi$, the impossibility condition \eqref{eq:imposs-ticks} reduces to the requirement that the product of the three ticks be equal to $1$. Thus, after fixing a Charlie literal with tick $\chi$, the corresponding target product for the Alice and Bob ticks is $\chi^{-1}$. Therefore the target products for $C_0$ are $\gamma_{0,0} = 1, \ \gamma_{0,1} = -1$, and for $C_1$ are $\gamma_{1,0} = e^{i\frac{\pi}{3}} ,\ \gamma_{1,1} = e^{-i\frac{\pi}{3}}$.

\subsubsection{The four Alice return maps}
We will verify that, for each $\zz \in \Z_2^2$, the map $P_{\zz}$ has finite order and that there exists a tick $\zeta_{\zz}\in\T$ satisfying \eqref{eq:cycle-cert}. We also note that the reflection map $R_\gamma$ may be written as $\zeta \longmapsto \gamma/\zeta$, since, for $\zeta\in\mathbb T$, one has $\overline{\zeta}=\zeta^{-1}$. Thus $R_\gamma$ may be viewed as a Möbius transformation, represented by the linear fractional matrix \[ M_{\gamma} = \begin{pmatrix}
    0 & \gamma \\ 1 & 0
\end{pmatrix} , \]
where matrices are understood projectively. Similarly, the complement map $F$ is represented by \[ M_{F} = \begin{pmatrix}
    -1 & \rho \\ -\rho & 1
\end{pmatrix} .   \]
Consequently, the Alice return map $P_{\zz}$ is represented projectively by \begin{equation}\label{eq:matrix-return-map}
    M_{\zz} := M_F M_{\gamma_{1,z_1}} M_F M_{\gamma_{0,z_0}} .
\end{equation}

Let $\gamma_{0,z_0} = e^{i\theta_{0,z_0}}$ and $\gamma_{1,z_1} = e^{i\theta_{1,z_1}}$, using representatives \[ \theta_{0,z_0} \in \{0, \pi\}, \qquad \theta_{1,z_1} \in \{\tfrac{\pi}{3}, -\tfrac{\pi}{3} \}.  \]
Multiplying the matrices in \eqref{eq:matrix-return-map} and dividing the trace by $\sqrt{\det M_{\zz}}$ gives the normalised trace of $M_\zz$: \begin{equation}\label{eq:norm-trace}
    \tau_{z_0,z_1} := \frac{\operatorname{Tr}(M_\zz)}{\sqrt{\operatorname{det}(M_\zz)}} = 2\,\frac{\cos{\left( \frac{\theta_{0,z_0}-\theta_{1,z_1}}{2} \right)} - \rho^2 \cos{\left(\frac{\theta_{0,z_0}+\theta_{1,z_1}}{2}\right)}}{1-\rho^2} .
\end{equation}
Substituting $\rho^2 = 3 - 2\sqrt{2}$ and $\frac{1+\rho^2}{1-\rho^2} = \sqrt{2}$ into \eqref{eq:norm-trace} gives \begin{equation}
    \tau_{0,0} = \tau_{0,1} = 2\cos{\tfrac{\pi}{6}}, \qquad \tau_{1,0} = - 2\cos{\tfrac{\pi}{4}}, \qquad \tau_{1,1} = 2\cos{\tfrac{\pi}{4}}.
\end{equation}
Let $M$ be a determinant-one representative of a projective Möbius transformation. If $\operatorname{tr}(M)=2\cos\theta$, then the characteristic polynomial of $M$ is $\xi^2-2\cos\theta\,\xi+1$, so the eigenvalues are $e^{i\theta}, \ e^{-i\theta}$. The projective order is the smallest positive integer $n$ such that $M^n$ is a scalar multiple of the identity. Equivalently, $e^{in\theta}=e^{-in\theta}$ or $n\theta\in \pi\Z$. Hence, if $\theta/\pi=k/N$ in lowest terms, the projective order is $N$. Therefore \begin{equation}
    \operatorname{ord} (P_{0,0}) = \operatorname{ord} (P_{0,1}) = 6, \qquad \operatorname{ord} (P_{1,0}) = \operatorname{ord} (P_{1,1}) = 4.
\end{equation}

\subsubsection{The starting tick}

It remains to set the starting ticks $\eta_\zz \in \T$ that satisfy \eqref{eq:cycle-cert} for each $\zz\in \Z_2^2$. For $\zz \in \{ (0,0), (0,1)\}$, we take $\eta_{0,0} = \eta_{0,1} = 1$. Note that \[ M_{0,0} = M_{0,1} = 8\begin{pmatrix}
    10-7\sqrt{2} & 7-5\sqrt{2} \\ -7+5\sqrt{2} & -10+7\sqrt{2}.
\end{pmatrix} \] 
Hence, \[ P_{0,0}^3 (1) = P_{0,1}^3 (1) = \frac{(10-7\sqrt{2}) + (7-5\sqrt{2})}{(-7+5\sqrt{2}) + (-10+7\sqrt{2})} = -1  \]
Since $F(1) = \frac{\rho -1}{1- \rho} = -1$, we have \begin{equation}\label{eq:P00}
    P_{0,0}^3 (\eta_{0,0}) = F(\eta_{0,0}), \qquad P_{0,1}^3 (\eta_{0,1}) = F(\eta_{0,1}).
\end{equation}

For $\zz\in \{ (1,0), (1,1) \}$, i.e.\ the two order-four maps $P_{1,0}$ and $P_{1,1}$, we choose the starting ticks by solving \eqref{eq:cycle-cert} explicitly. Let
\[ h(t):=(3-2\sqrt2)t^2+\sqrt3(10-7\sqrt2)t+(7-5\sqrt2). \]
Since its discriminant $\Delta=430-304\sqrt2$ is positive, $h$ has real roots. We choose
\[ t_* := \frac{-\sqrt3(10-7\sqrt2)+\sqrt{430-304\sqrt2}}{2(3-2\sqrt2)}, \]
so that $h(t_*)=0$.

Let us set
\[ \eta(t):=\frac{1+it}{1-it}. \]
For $t\in \R$, $|1+it|=|1-it|$ and hence $|\eta(t)|=1$. We set
\[ \eta_{1,0}:=\eta(t_*), \qquad \eta_{1,1}:=\eta(-t_*) = \overline{\eta_{1,0}}. \]
It remains to verify \eqref{eq:cycle-cert}. Write
\[M_{1,0}^2= \begin{pmatrix} 
a & b \\ c & d
\end{pmatrix}.\]
Since $F(\zeta)=(\rho-\zeta)/(1-\rho \zeta)$, the equation $P_{1,0}^2(\zeta)=F(\zeta)$
is equivalent, after clearing denominators, to
\[ (-\rho a+c)\zeta^2 + (a-\rho b-\rho c+d)\zeta + (b-\rho d)=0. \]
Substituting $\zeta=\eta(t)$ and multiplying by $(1-it)^2$ simplifies the left-hand side to
\[ 4(1+i\sqrt3)h(t).\]
Thus, $h(t_*)=0$ implies
\begin{equation}\label{eq:P10}
    P_{1,0}^2(\eta_{1,0})=F(\eta_{1,0}).
\end{equation}
Finally, $P_{1,1}$ is obtained from $P_{1,0}$ by complex conjugating the target products. Since $F$ has real coefficients, conjugating \eqref{eq:P10} gives 
\begin{equation}\label{eq:P11}
    P_{1,1}^2(\eta_{1,1})=F(\eta_{1,1}).
\end{equation}
Collecting \eqref{eq:P00}--\eqref{eq:P11}, each total Charlie assignment $\zz$ has the data required by the conditions of Lemma~\ref{lem:cycle-cert} (see Table~\ref{tab:charlie-data-counterexample}).
\begin{table*}[h]
    \centering
    \begin{tabular}{c|c|c|c}
        $\zz$ & $\operatorname{ord} (P_\zz)$ & $m_\zz$ & $\eta_\zz$ \\\hline
        $(0,0)$ & $6$ & $3$ & $1$\\
        $(0,1)$ & $6$ & $3$ & $1$\\
        $(1,0)$ & $4$ & $2$ & $\eta(t_*)$\\
        $(1,1)$ & $4$ & $2$ & $\eta(-t_*)$
    \end{tabular}
    \caption{Finite-order data for the four total Charlie assignments. For each $\zz\in \Z_2^2$, the table records the order of the Alice return map $P_\zz$, the exponent $m_\zz$ witnessing the  relation $P_\zz^{m_\zz}(\eta_\zz)=F_\alpha(\eta_\zz)$, and the corresponding initial tick $\eta_\zz$.}
    \label{tab:charlie-data-counterexample}
\end{table*}

\subsubsection{Proof of paradoxicality}

For a tick $\zeta \in \T$, let $\mathsf{m}(\zeta) \in [0,\pi)$ denote the equatorial measurement whose two literal ticks are $\zeta$ and $F(\zeta)$. This is well-defined because $T_\alpha$ is a bijection of the unit circle. We now define the measurement scenario for Alice and Bob. As above, Charlie's measurement set is \begin{equation}
    \label{eq:M_3} M_3 = \left\{0,\, \frac{\pi}{2} \right\} .
\end{equation}
For Alice, we take all measurements appearing in the four Alice return orbits:
\begin{equation}\label{eq:M_1}
     M_1 := \big\{ \mathsf{m}(P_{\zz}^r(\eta_{\zz})) \mid \zz \in \Z_2^2,\ 0\le r < \operatorname{ord} (P_{\zz}) \big\} .
\end{equation}
For Bob, we take the corresponding reflected ticks for the first Charlie conditioning:
\begin{equation}\label{eq:M_2}
    M_2 := \big\{ \mathsf{m}(R_{\gamma_{0,z_0}} \circ P_{\zz}^r(\eta_{\zz})) \mid \zz \in \Z_2^2,\ 0\le r < \operatorname{ord} (P_{\zz}) \big\} .
\end{equation}
These sets are finite because each return map $P_{\zz}$ has finite order.

We now verify that the balanced state \eqref{eq:bal-state-counterex} and the above measurement scenario indeed witness a nonlocality paradox.
\begin{theorem}
    The quantum scenario $\big( \ket{\mathrm{B} ((\alpha, \alpha, \tfrac{\pi}{6}), \pi)}, (M_1, M_2, M_3) \big)$ with measurement sets defined in \eqref{eq:M_3}--\eqref{eq:M_2}, is a nonlocality paradox.
\end{theorem}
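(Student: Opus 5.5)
The plan is to reduce paradoxicality to Lemma~\ref{lem:2CNF-paradox}: it suffices to show that, for every total Charlie assignment $\zz\in\Z_2^2$, the $2$-CNF formula $\Omega(\zz)$ is unsatisfiable. By Lemma~\ref{lem:impl-graph}, this in turn means exhibiting, for each $\zz$, a literal $(X,x)$ together with directed paths $(X,x)\Rightarrow\cdots\Rightarrow(X,x\oplus1)\Rightarrow\cdots\Rightarrow(X,x)$ in $I_\Omega(\zz)$. Lemma~\ref{lem:cycle-cert} is designed to deliver exactly this, so the proof will consist of verifying its hypotheses for each $\zz$ using the data collected in Table~\ref{tab:charlie-data-counterexample}.

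\textbf{Hypotheses of Lemma~\ref{lem:cycle-cert}.} Fix $\zz$. There are three things to check.
\begin{enumerate}
\item $P_\zz$ has finite order $\operatorname{ord}(P_\zz)\in\{6,4\}$. This follows from the normalised-trace computation \eqref{eq:norm-trace}: each $\tau_{z_0,z_1}$ equals $\pm 2\cos(\pi/6)$ or $\pm2\cos(\pi/4)$. The projective order is then read off from the eigenvalues $e^{\pm i\theta}$. Since $-2\cos(\pi/4)=2\cos(3\pi/4)$, the order is still $4$ in that case.
\item The relation $P_\zz^{m_\zz}(\eta_\zz)=F(\eta_\zz)$ holds with $0<m_\zz<\operatorname{ord}(P_\zz)$. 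This is \eqref{eq:P00}--\eqref{eq:P11}.
\item The measurement sets contain the required measurements. By construction \eqref{eq:M_1}--\eqref{eq:M_2}, $M_1$ contains $\mathsf m(\eta_j)$ for $\eta_j=P_\zz^j(\eta_\zz)$, $0\le j<\operatorname{ord}(P_\zz)$. Likewise $M_2$ contains $\mathsf m(\vartheta_j)$ with $\vartheta_j=R_{\gamma_{0,z_0}}(\eta_j)$.
\end{enumerate}
Each literal with tick $\zeta$ or $F(\zeta)$ is then present, because $\mathsf m(\zeta)$ carries both ticks $\zeta$ and $F(\zeta)$ (Lemma~\ref{lem:comp-tick}).

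\textbf{Monotonicity.} The impossibility condition \eqref{eq:imposs-ticks} depends only on the context, so adding the measurements from the other three orbits only adds clauses to $\Omega(\zz)$. It never removes any, and unsatisfiability is preserved under adding clauses. Hence applying Lemma~\ref{lem:cycle-cert} to the sub-scenario for each $\zz$ gives unsatisfiability of $\Omega(\zz)$ in the full scenario. I would also note that the implication steps used inside the lemma are the edges $\eta_j\Rightarrow$ (Bob literal with tick $F(R_{\gamma_0}\eta_j)$) via target $\gamma_{0,z_0}$, and then $\Rightarrow$ (Alice literal $F R_{\gamma_1}F R_{\gamma_0}\eta_j=\eta_{j+1}$) via target $\gamma_{1,z_1}$. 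Both edges require the Bob tick to lie in $M_2$, which item 3 guarantees, and the Alice target tick to lie in $M_1$, which holds by orbit closure.

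\textbf{Main obstacle.} The step I expect to be hardest is confirming that the algebraic certificates are exact rather than numerical. Specifically:
\begin{enumerate}
\item The claimed matrix $M_{0,0}$ must give $P^3_{0,0}(1)=-1$. I would verify this directly by computing $M_{0,0}$ from \eqref{eq:matrix-return-map} with $\gamma_{0,0}=1$ and $\gamma_{1,0}=e^{i\pi/3}$.
\item The reduction of $P_{1,0}^2(\eta(t))=F(\eta(t))$ to $4(1+i\sqrt3)h(t)=0$ must hold identically. I would check this by expanding $M_{1,0}^2$ symbolically in $\rho$ and $\sqrt3$.
\end{enumerate}
For the conjugate cases, conjugation symmetry suffices, since $F$ has real coefficients and $\gamma_{0,1}=-1$ and $\gamma_{1,1}=\overline{\gamma_{1,0}}$ are the relevant conjugates. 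The $(0,1)$ case needs a separate check that $M_{0,1}=M_{0,0}$ projectively, and I would do that by the same direct multiplication.
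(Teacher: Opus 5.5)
Your proposal is correct and is essentially the paper's proof: for each $\zz$ you take the data in Table~\ref{tab:charlie-data-counterexample}, note that $M_1,M_2$ contain every orbit measurement required by Lemma~\ref{lem:cycle-cert}, so $\Omega(\zz)$ contains an unsatisfiable subformula, and conclude with Lemma~\ref{lem:2CNF-paradox}. One small correction to your side check is that $M_{0,1}$ is the entrywise complex conjugate of $M_{0,0}$ and is not projectively equal to it (the traceless real matrix displayed in the paper is, up to scalar, their common cube $M_{0,0}^3\sim M_{0,1}^3$); no separate check is needed, though, because $\gamma_{0,0}=1$ and $\eta_{0,0}=1$ are real, so the conjugation symmetry you already use for $(1,1)$ gives $P_{0,1}^3(1)=\overline{P_{0,0}^3(1)}=-1$.
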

\begin{proof}
    Fix a total Charlie assignment $\zz\in\mathbb Z_2^2$. The corresponding target products for Alice and Bob are precisely $\gamma_{0,z_0}$ and $\gamma_{1,z_1}$ as computed in the previous subsection. Let $P_{\zz}$ denote the Alice return map associated with this total Charlie assignment.
    
    Table~\ref{tab:charlie-data-counterexample} provides a starting tick $\eta_{\zz}$, the order $N_\zz = \operatorname{ord} (P_\zz)$, and an exponent $m_{\zz}$ such that \[ 0 < m_\zz < N_\zz, \qquad P_\zz^{m_\zz} (\eta_\zz) = F_\alpha (\eta_\zz) . \]
    By the definitions of $M_1$ and $M_2$, all Alice and Bob measurements required by Lemma~\ref{lem:cycle-cert} for this choice of $\zz$ are included in the measurement scenario. Hence, the associated $2$-CNF formula $\Omega(\zz)$ contains the unsatisfiable subformula by Lemma~\ref{lem:cycle-cert}. Therefore, $\Omega(\zz)$ itself is unsatisfiable.

    Since this holds for every total Charlie assignment $\zz\in\mathbb Z_2^2$, Lemma~\ref{lem:2CNF-paradox} implies that the quantum scenario is a nonlocality paradox. 
\end{proof}

We have shown that the above scenario is paradoxical. We now prove that the underlying state is not equivalent to an interpolant state.

\begin{proposition}
\label{prop:not-interpolant}
    The state $\ket{\mathrm{B} ((\alpha, \alpha, \tfrac{\pi}{6}), \pi)}$ is not equivalent, under local unitaries and permutations of qubits, to any interpolant state.
\end{proposition}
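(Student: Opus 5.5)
We plan to prove Proposition~\ref{prop:not-interpolant} by exhibiting a local-unitary and permutation invariant that separates the two kinds of state. The invariant is the multiset of single-qubit overlaps in the canonical two-term product decomposition of a GHZ-class state. Every balanced state is written as $\mathcal N(\ket{v_{\bm\lambda}}+e^{i\Phi}\ket{w_{\bm\lambda}})$, a sum of two product vectors. For qubit $j$ we record
\[
    o_j:=\frac{|\braket{v_{\lambda_j}|w_{\lambda_j}}|}{\|v_{\lambda_j}\|\,\|w_{\lambda_j}\|}.
\]
The unordered multiset $\{o_1,o_2,o_3\}$ is the candidate invariant.

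The key step, and the one we expect to need the most care, is showing that this decomposition is essentially unique. Suppose $\ket\psi=\ket{x_1x_2x_3}+\ket{y_1y_2y_3}$ with $x_j,y_j$ linearly independent for each $j$. Then the only product vectors of the form $\ket{x_1x_2x_3}$, $\ket{y_1y_2y_3}$ in any such decomposition are these two, up to scalars. To prove this, we will contract $\ket\psi$ against an arbitrary covector on qubit $1$. The resulting two-qubit vectors span $\mathrm{span}\{x_2x_3,\,y_2y_3\}$. A vector $a\,x_2x_3+b\,y_2y_3$ is a product vector iff its coefficient matrix, written in the bases $\{x_2,y_2\}$ and $\{x_3,y_3\}$, is $\mathrm{diag}(a,b)$ with vanishing determinant, that is iff $ab=0$. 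Hence the rays $[x_2x_3]$ and $[y_2y_3]$ are intrinsically determined by $\ket\psi$, and by symmetry so are all the rays $[x_j]$ and $[y_j]$. Consequently any other two-term product decomposition uses the same local rays on each qubit, possibly with the two terms swapped.

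Next we show that $\{o_j\}$ is invariant under the equivalence of Section~\ref{sec:equivalence}. A local unitary $U_1\otimes U_2\otimes U_3$ maps a two-term decomposition to another one, sending rays $[x_j],[y_j]$ to $[U_jx_j],[U_jy_j]$. Since $U_j$ preserves normalised overlaps, $o_j$ is unchanged. By the uniqueness step, the transformed state's own decomposition must use exactly these rays. A qubit permutation merely permutes the multiset, and swapping the two terms does not change any $o_j$.

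Finally we compute the multisets. From the definitions, $\braket{v_\lambda|w_\lambda}=2\sin\tfrac\lambda2\cos\tfrac\lambda2=\sin\lambda$ and both vectors are unit vectors, so $o_j=\sin\lambda_j$. Moreover $v_\lambda,w_\lambda$ are linearly independent since $\lambda<\pi/2$, so the uniqueness step applies to both states. An interpolant state $\Bsimple$ has multiset $\{0,0,\sin\lambda\}$. The state of the proposition has multiset $\{\rho,\rho,\tfrac12\}$ with $\rho=\sqrt2-1\neq0$, so it contains no zero at all. The multisets therefore differ, and the two states are inequivalent.
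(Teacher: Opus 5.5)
Your argument is correct, but it takes a different route from the paper.

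\textbf{The paper's route.} The paper uses a cheaper invariant: the number of maximally mixed one-qubit marginals, which local unitaries preserve and qubit permutations merely permute. An interpolant state has at least two such marginals. For the given state, a direct partial trace with $\Phi=\pi$ gives
\[
\varrho_i=\tfrac12 I+\tfrac{r_i-R_i}{2(1-R)}X,
\]
where $r_i=\sin\lambda_i$, $R_i=\prod_{k\neq i}r_k$ and $R=r_1r_2r_3$. So $\varrho_i$ is maximally mixed if and only if $r_i=R_i$, and the conditions $\rho=\rho/2$ and $\tfrac12=\rho^2$ both fail. That proof needs no structural lemma, only a short trace computation. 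However, it relies on the specific phase $\Phi=\pi$, and on checking that the particular numbers avoid every coincidence $r_i=R_i$.

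\textbf{What your route buys.} You instead prove essential uniqueness of the two-term product decomposition, using flattenings and the product vectors in a two-dimensional span. In return you get a finer invariant: the multiset $\{\sin\lambda_1,\sin\lambda_2,\sin\lambda_3\}$, defined for every balanced state and independent of $\Phi$. It shows that a balanced state is equivalent to an interpolant state exactly when at least two $\lambda_j$ vanish. It also recovers the rigidity $\lambda=\lambda'$ for interpolant states quoted in Section~\ref{sec:equivalence}.

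\textbf{One small point.} Your sentence that ``any other two-term product decomposition uses the same local rays'' tacitly assumes that the other decomposition also has linearly independent local pairs. This holds automatically. A dependent pair on qubit $j$ would make the state a product across the cut $j\,|\,\text{rest}$, contradicting the flattening rank $2$ forced by the first decomposition. In any case, you only apply the step to balanced forms with $\lambda_j<\pi/2$, where independence holds by your own remark.
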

\begin{proof}
    See Appendix~\ref{sec:proof-not-interpolant}.
\end{proof}

\section*{Statements and declarations}

\paragraph{Funding.}
This work was supported by the Canada Research Chairs Program, the
Natural Sciences and Engineering Research Council of Canada (NSERC)
Discovery Grant RGPIN-2022-03103, and the NSERC--European Commission project FoQaCiA.

\paragraph{Competing interests.}
The authors have no competing interests to declare that are relevant to the content of this article.

\paragraph{Data availability.}
All data generated or analysed during this study are included in this article. No external datasets were analysed.

\paragraph{Code availability.}
Computer-aided searches were performed with the help of artificial intelligence tools; the outputs were checked by the authors and relevant cases were included in the article.  Code written by the authors earlier in the research process is not directly relevant to the final version of this work and therefore not included.   

\paragraph{Author contributions.}
All authors contributed to the conception and design of the work, the development of the results, and the preparation of the manuscript. All authors read and approved the final manuscript.

\paragraph{Use of artificial intelligence.}
The authors acknowledge the use of artificial intelligence tools throughout the research  process for discussion, drafting, creating diagrams, and editorial support. All authors take full responsibility for the content, results, proofs, and exposition in this article.

\bibliographystyle{plainnat}
\bibliography{sn-bibliography}

\appendix

\section{Proofs of select results}
\label{sec:A}

\subsection{Proofs of results in Section~\ref{sec:two-charlie}}
\subsubsection{Proof of Lemma~\ref{lem:tick-pair-sin-lambda}}
\label{sec:proof-tick-pair-sin-lambda}
\begin{proof}
    Let $u := \tan(\lambda/2)$ and $z := e^{i C}$. We can write the Charlie tick value $T_{C,0} \equiv -\tau$ in the following compact form after setting it as the argument of a unit complex number, $e^{-i\tau} = e^{i\beta(\lambda, C)}$:
    \begin{equation}\label{eqn:e-i-beta}
        e^{i \beta(\lambda, C)} = \frac{\cos(\lambda/2) + \sin(\lambda/2) e^{iC}}{\sin(\lambda/2) + \cos(\lambda/2)e^{iC}} = \frac{1 + uz}{u + z}.
    \end{equation}
    Combining this equation with the following equation for $e^{i(\sigma - \tau)} = e^{i\beta(\lambda, C + \pi)}$,
    \begin{equation}\label{eqn:e-i-beta-pi}
        e^{i\beta(\lambda, C + \pi)} = \frac{1 - uz}{u - z},
    \end{equation}
    we can eliminate $z$ and obtain
    \begin{equation*}
        u^2 - 2\rho u + 1 = 0,\quad\text{where}\quad\rho := \frac{\cos(-\tau + \sigma/2)}{\cos(\sigma/2)}.
    \end{equation*}
    Rearranging this to $u + \frac{1}{u} = 2\rho$, we find
    \begin{equation*}
        \sin\lambda = \frac{2u}{1+u^2} = \frac{1}{\rho},
    \end{equation*}
    as required.
\end{proof}

\subsubsection{Remainder of proof of Proposition~\ref{prop:one-tick-pair}}
\label{sec:one-tick-pair}
\begin{proof}
    For the reverse direction of 3, let $(-\tau, \sigma - \tau)$, $0 < \tau < \sigma < \pi$, be a tick pair. We claim that $\lambda = \Lambda(\tau, \sigma)$. Indeed, let $u := \tan(\lambda/2)$, $E := e^{-i\tau}$, and
    \begin{align}
        &z := \frac{1 - uE}{E - u}\qquad\qquad\label{eqn:z-def}\\
        \implies\quad & E = \frac{1 + uz}{u+z}.\nonumber
    \end{align}
    We have $|z| = 1$, so $z = e^{iC}$ for a unique $C \in [0,2\pi)$. Then, by \eqref{eqn:e-i-beta}, we have $\beta(\lambda, C) = -\tau$, verifying the first value of the tick pair and confirming that $C \in (0,\pi)$. For the second value, first substitute \eqref{eqn:z-def} into \eqref{eqn:e-i-beta-pi} to obtain
    \begin{equation*}
        e^{i\beta(\lambda, C+\pi)} = \frac{E - \sin\lambda}{E\sin\lambda - 1}.
    \end{equation*}
    Next, rewrite \eqref{eqn:sin-lambda} using exponentials to obtain
    \begin{equation*}
        \sin\lambda = \frac{e^{i(\sigma-\tau)} + E}{Ee^{i(\sigma-\tau)} + 1}.
    \end{equation*}
    Comparing these two equations, we see that $e^{i\beta(\lambda, C+\pi)} = e^{i(\sigma-\tau)}$, thus verifying the second value of the tick pair.

\end{proof}

\subsection{Proofs of results in Section~\ref{sec:6}}
\subsubsection{Proof of Theorem~\ref{thm:NN3}}
\label{sec:proof-NN3}

\begin{proof}
With the choices of Charlie's measurements, one obtains the following tick matrix $(T_{l,z})_{l,z}$ and tick indices $(t_{l,z})_{l,z}$: \[ \mathsf{T} =  \begin{pmatrix}
    -T_{2,0} - \frac{2(m+1)\pi}{N} & - T_{2,0} \\ -\frac{m\pi}{N} & \frac{m\pi}{N} \\ T_{2,0} & T_{2,0} + \frac{2(m+1)\pi}{N}
\end{pmatrix} \quad \text{and} \quad \tickinds = \begin{pmatrix}
    -2 & 2m \\ 0 & 2m \\ 0 & 2(m+1) 
\end{pmatrix} \] 
with $\mu_0 = \alpha - m,\ \mu_1 = -m,\ \mu_2 = -\alpha - m$, where $\alpha = - \beta(\lambda_{N,m}, C_2) - m$, so that $T_{l,z} = \frac{\pi}{N} (\mu_l + t_{l,z})$. Let us denote the Alice and Bob measurements as follows: \[ \begin{aligned}
    A_{0,j} := \pioverN j \in \A_0 &, \quad A_{1,j} := \pioverN (\alpha +j) \in \A_1 ; \\ B_{0,k} := \pioverN (-m+k) \in \B_0 &,\quad  B_{1,k} := \pioverN (-\alpha -m+k) \in \B_1;
\end{aligned} \]
so that $M_1 = \A_0 \cup \A_1$ and $M_2 = \B_0 \cup \B_1$. Note that, in both of these measurement layers, we have $j,k \in 2\Z_N \subseteq \Z_N$.

Fix a total Charlie assignment $\zz = (z_0,z_1,z_2) \in \Z_2^3$ and consider the implication graph $I_\Psi(\zz)$. Each edge of this graph arises from one of the three Charlie conditionings $(C_l, z_l)$ for $l \in \Z_3$. We derive the Alice--Bob measurement matchings resulting from each Charlie conditioning, using the impossibility equation~\eqref{eqn:imposs_interpolant}.

\begin{itemize}
    \item For $l = 1$, \eqref{eqn:imposs_interpolant} is satisfied only if Alice's and Bob's measurements $A_{i,j}, B_{i',k}$ belong to the same layer, i.e.\ $i = i'$, so that the impossibility equation becomes
    \begin{equation}\label{eqn:symmetric-l-1-imposs}
        j + k \equiv 2mz_1 + (1 \oplus a \oplus b) N \mod{2N}.
    \end{equation}
    Now, the argument that proves Lemma~\ref{lem:edges-vs-eqns} can be readily generalised to show that any values of $j,k,a,b$ that satisfy~\eqref{eqn:symmetric-l-1-imposs} imply the existence of a directed edge $(A_{i,j}, a) \Rightarrow (B_{i,k}, b \oplus 1)$ in the implication graph, with $k \equiv 2mz_1 - j \mod{N}$.

    \item For $l = 0$, Alice's and Bob's measurements $A_{i,j}, B_{i',k}$ must now belong to different layers: specifically, $i = 1$ and $i' = 0$. The impossibility equation becomes
    \begin{equation}\label{eqn:symmetric-l-0-imposs}
        j + k \equiv 2(m+1)z_0 - 2 + (1 \oplus a \oplus b) N \mod{2N}.
    \end{equation}
    Despite the different layers, the argument that proves Lemma~\ref{lem:edges-vs-eqns} can still be generalised in the same way as above to show that any values of $j,k,a,b$ that satisfy~\eqref{eqn:symmetric-l-0-imposs} imply the existence of a directed edge $(A_{1,j}, a) \Rightarrow (B_{0,k}, b \oplus 1)$, with $k \equiv 2(m+1)z_0 - 2 - j \mod{N}$.

    \item For $l = 2$, Alice's and Bob's measurements $A_{i,j}, B_{i',k}$ must also belong to different layers: this time, $i = 0$ and $i' = 1$. The impossibility equation becomes
    \begin{equation}\label{eqn:symmetric-l-2-imposs}
        j + k \equiv 2(m+1)z_2 + (1 \oplus a \oplus b) N \mod{2N}.
    \end{equation}
    Hence, any values of $j,k,a,b$ that satisfy~\eqref{eqn:symmetric-l-2-imposs} imply the existence of a directed edge $(A_{0,j}, a) \Rightarrow (B_{1,k}, b \oplus 1)$, with $k \equiv 2(m+1)z_2 - j \mod{N}$.
\end{itemize}

Thus, every vertex in the implication graph has degree $2$: one incident edge comes from the $(C_1, z_1)$ conditioning, while the other comes from either the $(C_0, z_0)$ or the $(C_2, z_2)$ conditioning. More explicitly, the Alice--Bob measurement matchings occur in the cyclic order \[ \A_0 \xrightarrow{C_1} \B_0 \xrightarrow{C_0} \A_1 \xrightarrow{C_1} \B_1 \xrightarrow{C_2} \A_0 . \]

By composing these four matchings using~\eqref{eqn:symmetric-l-1-imposs}--\eqref{eqn:symmetric-l-2-imposs}, we derive the measurement return map
\begin{equation}
\label{eq:return-map}
    j\longmapsto j+\HH_{\zz} \mod N,
\end{equation} 
where
\begin{equation}
\label{eq:Hz}
    \HH_{\zz}:= 2(m+1)(z_0+z_2)-4mz_1-2.
\end{equation}
Although this return process now involves two Alice and Bob measurement layers rather than just one, we still have well-defined return maps given by $\HH_\zz$. Let $\dd_\zz := \gcd(N,\HH_\zz)$. Applying the measurement return map $N/\dd_\zz$ times, we return to the starting Alice measurement. Note that $\HH_\zz$ is unchanged if we consider the return map on $\A_1$ rather than $\A_0$.

Suppose the implication graph $I_\Psi(\zz)$ contains a minimal directed path that begins and ends at the Alice measurement $A_{0,j_0}$ for some $j_0 \in 2\Z_N$. We may generalise the relabelling procedure in Remark~\ref{rem:clocks} in this case: this time, there are \emph{two} layers involved in any minimal directed path, and such a path (analogous to~\eqref{eqn:dir-path-relabelled}) may be written as
\begin{equation}\label{eqn:dir-path-symmetric-a-b}
    \begin{gathered}
        [j_0 + a_0 N]_0 \longrightarrow [k_0 + b_0N]_0 \longrightarrow [j'_0 + a'_0 N]_1 \longrightarrow [k'_0 + b'_0 N]_1 \longrightarrow [j_1 + a_1 N]_0 \longrightarrow \cdots \\\longrightarrow [j_q + a_q N]_0 \longrightarrow \cdots \longrightarrow [j_0 + a_L N]_0,
    \end{gathered}
\end{equation}
where $L = N/\dd_\zz$, $j_q,j'_q,k_q,k'_q \in 2\Z_N$, $a_q,a'_q,b_q,b'_q \in \Z_2$, and the subscript on each vertex indicates whether the point is in layer $0$ or $1$.

Now we make an important observation: the lifting concept detailed in Lemma~\ref{lem:lifting} readily generalises here. The argument is almost identical: taking successive pairs of edges and finding the difference between the corresponding equations from~\eqref{eqn:symmetric-l-1-imposs}--\eqref{eqn:symmetric-l-2-imposs}, we see that the Alice-literal path derived from~\eqref{eqn:dir-path-symmetric-a-b} can be written as
\begin{equation*}
    [\tilde{j}_0]_0 \longrightarrow \cdots \longrightarrow [\tilde{j}_0 + \HH_\zz]_0 \longrightarrow \cdots \longrightarrow [\tilde{j}_0 + q\HH_\zz]_0 \longrightarrow \cdots \longrightarrow \left[\tilde{j}_0 + \frac{N}{\dd_\zz}\HH_\zz\right]_0,
\end{equation*}
where $\tilde{j}_0 := j_0 + a_0 N$.

After repeating this argument with a starting Alice measurement $A_{1,j_0}$, we therefore deduce the following result, analogous to the last part of Lemma~\ref{lem:lifting}.

\begin{proposition}
    The implication graph $I_\Psi(\zz)$ contains a witness path if and only if $\HH_\zz/\dd_\zz$ is odd. Thus, the non-degenerate $(N,N,3)$-scenario $(\ket{\mathrm{B}(\lambda_{N,m})}, \M)$ is a nonlocality paradox if and only if $\HH_\zz/\dd_\zz$ is odd for all $\zz \in \Z_2^3$.
\end{proposition}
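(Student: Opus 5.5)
The proof follows the template of Lemma~\ref{lem:lifting} and Proposition~\ref{prop:witness-cosets}. It combines the bidirectedness of Lemma~\ref{lem:interpolant-impl} with the structure of the four matchings derived above.

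\textbf{Step 1: the graph is a disjoint union of cycles.} Every vertex of $I_\Psi(\zz)$ has degree exactly $2$. One incident edge comes from the $(C_1,z_1)$ matching. The other comes from the $(C_0,z_0)$ or the $(C_2,z_2)$ matching, according to the layer of the vertex. Both matchings are perfect on the relevant layer pair, because $k$ is determined from $j$ by a congruence modulo $N$ and all indices range over $2\Z_N$. Moreover, the parity lift modulo $2N$ fixes the outcome bit uniquely. Since the graph is bidirected, it is therefore a disjoint union of cycles. By Lemma~\ref{lem:interpolant-impl}, $\Psi(\zz)$ is unsatisfiable if and only if some cycle contains a literal together with its complement.

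\textbf{Step 2: lift the return map to literals.} Fix a starting literal $[\tilde j_0]_0$ in $\A_0$. Follow the four edges $\A_0\to\B_0\to\A_1\to\B_1\to\A_0$ and subtract consecutive congruences among \eqref{eqn:symmetric-l-1-imposs}--\eqref{eqn:symmetric-l-2-imposs}, exactly as in the inductive step of Lemma~\ref{lem:lifting}. The bit terms $(1\oplus a\oplus b)N$ telescope, giving a translation $\tilde j\mapsto\tilde j+\HH_\zz$ in $\Z_{2N}$. For this to work, the $\A_0$ outgoing edge of each four-edge block must be traversed with the correct matching. A minimal path cannot backtrack (Remark~\ref{rem:min-witness-path}), so it alternates $C_1$ with $C_0$ or $C_2$ and moves consistently around the cycle. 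Hence the Alice layer-$0$ literals on the cycle are exactly $\tilde j_0+q\HH_\zz$ for $q\ge 0$.

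\textbf{Step 3: read off the criterion.} The measurement indices close up first at $q=L=N/\dd_\zz$, since $\HH_\zz$ is even and $2\Z_N$ is the relevant clock. The lift then ends at $\tilde j_0+N\HH_\zz/\dd_\zz$. This is the complement of the starting literal if and only if $\HH_\zz/\dd_\zz$ is odd; otherwise the cycle closes on the same literal.

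\textbf{Step 4: rule out complements elsewhere on the cycle.} We must check that no complementary pair can appear at an intermediate point of the cycle when the quotient is even. Any complementary pair of Alice layer-$0$ literals on the cycle would have to satisfy $q\HH_\zz\equiv N \pmod{2N}$ for some $q$, and this forces oddness of $\HH_\zz/\dd_\zz$. Complementary pairs located in other layers reduce to this case by the same argument with starting point $A_{1,j_0}$, or by transporting along the cycle, since each cycle meets $\A_0$. The second assertion of the proposition then follows from Lemma~\ref{lem:2CNF-paradox}, using the reduction to $\Psi(\zz)$.

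The main obstacle is this last step together with the consistency of the lift in Step 2. One must check carefully that the $\Z_{2N}$ bookkeeping of outcome bits telescopes correctly across the two cross-layer matchings, whose constants differ. One must also handle the possibility that the orbit modulo $2N$ has period $2L$ while the measurement orbit has period $L$. I would settle both by a direct computation of the four-step composite on $\Z_{2N}$.
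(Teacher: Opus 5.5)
Your proposal is correct and follows essentially the same route as the paper. The paper likewise composes the four matchings $\A_0\to\B_0\to\A_1\to\B_1\to\A_0$, generalises the lifting of Lemma~\ref{lem:lifting} to get translation by $\HH_\zz$ on the literal clock $\Z_{2N}$, and reads off the parity of $\HH_\zz/\dd_\zz$.

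Your Steps~1 and~4 make explicit the ``only if'' direction, which the paper leaves implicit. The lift consistency you flag is immediate once you note that every implication edge acts on literal clock points as $x\mapsto t_{l,z_l}-x$ in $\Z_{2N}$. The four reflections therefore compose to $x\mapsto x+t_{0,z_0}+t_{2,z_2}-2t_{1,z_1}=x+\HH_\zz$.
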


Let $\widetilde{\HH}_\zz := \HH_\zz/2$, so that $\dd_{\zz} = 2\gcd(n,\widetilde{\HH}_\zz)$. Let $s := \nu_2 (N)$, so $\nu_2 (n) = s-1$. We have that ${\HH_\zz}/{\dd_\zz}$ is even exactly when $\nu_2(\HH_\zz)>\nu_2(N)$, i.e.\
\begin{equation}\label{eq:cons-criterion}
    2^{\nu_2(N)+1}=2^{s+1} \mid \HH_\zz \iff 2^s \mid \widetilde{\HH}_\zz  .
\end{equation}
Now explicitly list the possible integer values of $\widetilde{\HH}_\zz$:
\begin{equation*}
    \widetilde{\HH}_\zz = \begin{cases}
            -1-2z_1m    , & z_0=z_2=0,\\
            m(1-2z_1)   , & z_0\neq z_2,\\
            1+2(1-z_1)m , & z_0=z_2=1.
    \end{cases}
\end{equation*}
In particular, in the first and third cases $\HH_\zz$ is \emph{odd}, so $2^s\nmid \widetilde{\HH}_\zz$ for every $s\ge 1$.
The \emph{only} case in which $\widetilde{\HH}_\zz$ can be divisible by $2^s$ is when $z_0 \neq z_2$, where $\widetilde{\HH}_\zz = \pm m$.

Therefore by \eqref{eq:cons-criterion},\begin{itemize}
    \item If $2^s\mid m$, then choose any $\zz$ with $z_0 \neq z_2$; for that $\zz$, one has $2^s\mid \widetilde{\HH}_\zz=\pm m$. Hence $\Psi(\zz)$ is consistent, so the scenario is \emph{not} a paradox.
    \item If $2^s\nmid m$, then for every $\zz$ we have $2^s\nmid \widetilde{\HH}_\zz$, since either $\widetilde{\HH}_\zz$ is odd, or $\widetilde{\HH}_\zz=\pm m$. Hence, by \eqref{eq:cons-criterion} every $\Psi(\zz)$ is inconsistent, so the scenario is a paradox. 
\end{itemize}

\end{proof}

\subsubsection{Proof of Lemma~\ref{lem:beta-delta-4N2N3}}
\label{sec:proof-beta-delta-4N2N3}
\begin{proof}
Note that $T_{0,z} \equiv z\pi$ follows immediately from Lemma~\ref{lem:beta-facts}. We therefore focus on establishing the expressions for the $C_1$ tick values; the expressions for $C_2 = \pi - C_1$ then follow by symmetry.

From \eqref{eq:param_intp_3A}, we have \begin{equation}\label{eq:uvrandomfact}
    \tan{\left(\tfrac{\lambda_{p,m}}{2}\right)} = \frac{u-1}{u+1}, \quad \text{and}\quad  \tan{\big(\tfrac{C_1}{2}\big)} = v ;
\end{equation}   
and hence \[ \tan{\lambda_{p,m}} = \frac{u^2 -1}{2u} . \]
Also, since $\sin{C_1} = \frac{2v}{1+v^2}$, 
\[ \sin{C_1}\tan{\lambda_{p,m}} = \frac{v(u^2 -1)}{u(1+v^2)} . \]
Again by \eqref{eq:param_intp_3A}
\[ \begin{aligned}
    \frac{u^2 -1}{1+v^2} &= \frac{\tan{\theta} - \tan{(\theta/2)}}{\tan{(\theta/2)} + \tan{\theta} \tan^2{(\theta/2)}} \\ &= \frac{1}{\tan{(\theta/2)}} \cdot \frac{\tan{\theta} - \tan{(\theta/2)}}{1 + \tan{\theta} \tan{(\theta/2)}} \\ &= 1,
\end{aligned} \]
where the last equality follows from the subtraction and half-angle formulae for the tangent. Therefore \[ \sin{C_1}\tan{\lambda_{p,m}} = \frac{v}{u} = \tan{\big(\tfrac{\theta}{2}\big)}. \] 
Plugging this into the formula for $\delta$ gives \begin{equation}\label{eq:delta-proof}
    \delta (\lambda_{p,m}, C_1) \equiv \pi - 2\arctan{\left(\tan{\big(\tfrac{\theta}{2} \big)}\right)} \equiv \pi - \theta = (N-m)\frac{\pi}{N}.
\end{equation}

Note that \eqref{eqn:beta} can be written as 
\begin{equation}\label{eq:beta-proof-1}
    \beta (\lambda_{p,m}, C_1) \equiv C_1 - 2\arctan{\left(\frac{\sin{C_1}}{\tan{(\lambda_{p,m}/2)} + \cos{C_1}}\right)} .
\end{equation}
Using $\sin{C_1} = \frac{2v}{1+v^2}$ and $\cos{C_1} = \frac{1-v^2}{1+v^2}$, and a bit of algebra, we can rewrite \eqref{eq:beta-proof-1} as 
\[\begin{aligned}
    \beta (\lambda_{p,m}, C_1) &\equiv  C_1 - 2\arctan{\left(v \cdot \frac{u+1}{u-v^2} \right)} \\ &= C_1 - 2\arctan{\left(\tan{\left( \tfrac{C_1 + \theta}{2} \right)}\right)} \\
    &\equiv -\theta = -m\tfrac{\pi}{N}.
\end{aligned}\]
The proof can now be completed using the identity \[ T_{C_1, z} = \beta (\lambda_{p,m}, C_1) + z\delta (\lambda_{p,m}, C_1).\]
We leave the details to the reader.
\end{proof}

\subsubsection{Proof of Theorem~\ref{thm:4P2p3}}
\label{sec:proof-4P2p3}
\begin{proof}
One obtains the following Charlie tick indices $(t_{l,z})_{l,z}$:
\begin{equation*}
    \tickinds = \begin{pmatrix}
        0 & N\\
        -m & N-2m\\
        2m-N & m
    \end{pmatrix},
\end{equation*}
with $T_{l,z} \equiv \pioverN t_{l,z}$. We have $M_1 = \pioverN \Z_N$ and $M_2 = \pioverN (2\Z_N)$, and we denote the Alice and Bob measurements as follows:
\begin{gather*}
    A_j := j \pioverN,\quad j \in \Z_N;\\
    B_k := k \pioverN,\quad k \in 2\Z_N.
\end{gather*}

As in the proof of Theorem~\ref{thm:NN3}, given a total Charlie assignment $\zz \in \Z_2^3$, we find all possible Alice--Bob measurement matchings involved in edges of the implication graph $I_\Psi(\zz)$, by considering each Charlie conditioning $(C_l, z_l)$.

\begin{itemize}
    \item For $l = 0$, the impossibility equation~\eqref{eqn:imposs_interpolant} becomes
    \begin{equation}\label{eqn:asymmetric-l-0-imposs}
        j + k \equiv Nz_0 + (1 \oplus a \oplus b)N \mod{2N}.
    \end{equation}
    Since $N$ is even, there exists a directed edge $(A_j, a) \Rightarrow (B_k, b \oplus 1)$ due to the $l = 0$ Charlie conditioning if and only if $j$ is even.

    \item For $l = 1$, the impossibility equation is
    \begin{equation}\label{eqn:asymmetric-l-1-imposs}
        j + k \equiv -m + (N-m)z_1 + (1 \oplus a \oplus b)N \mod{2N}.
    \end{equation}
    Since $N$ is even and $m$ is odd, there exists a directed edge $(A_j, a) \Rightarrow (B_k, b \oplus 1)$ due to the $l = 1$ Charlie conditioning if and only if $j$ has the opposite parity as $-m + (N-m)z_1$: in other words, $j$ is odd if $z_1 = 0$, and $j$ is even if $z_1 = 1$.

    \item For $l = 2$, the impossibility equation is
    \begin{equation}\label{eqn:asymmetric-l-2-imposs}
        j + k \equiv 2m - N + (N-m)z_2 + (1 \oplus a \oplus b)N \mod{2N}.
    \end{equation}
    Since $N$ is even and $m$ is odd, there exists a directed edge $(A_j, a) \Rightarrow (B_k, b \oplus 1)$ due to the $l = 2$ Charlie conditioning if and only if $j$ has the same parity as $2m - N + (N-m)z_2$: in other words, $j$ is even if $z_2 = 0$, and $j$ is odd if $z_2 = 1$.
\end{itemize}

Next, we determine the possible minimal directed paths present in $I_\Psi(\zz)$ by determining the possible measurement return maps. We consider two cases depending on $\zz$: when $z_1 = 1$ or $z_2 = 0$, and when $(z_1, z_2) = (0, 1)$.

\begin{itemize}
    \item When $z_1 = 1$ or $z_2 = 0$, we have the following pairs of Charlie conditionings together with the resulting return maps:
    \begin{alignat*}{3}
        \text{if } z_1 = 1\colon\quad &M_1 \xrightarrow{C_0} M_2 \xrightarrow{(C_1, 1)} M_1, \qquad &&j \text{ even} \longmapsto j + \HH_1 \mod{N};\\
        \text{if } z_2 = 0\colon\quad &M_1 \xrightarrow{C_0} M_2 \xrightarrow{(C_2, 0)} M_1, \qquad &&j \text{ even} \longmapsto j + \HH_2 \mod{N};\\
        \text{if } (z_1, z_2) = (1,0)\colon\quad &M_1 \xrightarrow{(C_1, 1)} M_2 \xrightarrow{(C_2, 0)} M_1, \qquad &&j \text{ even} \longmapsto j + \HH_3 \mod{N};
    \end{alignat*}
    where
    \begin{align*}
        \HH_1 &:= N(1-z_0) - 2m,\\
        \HH_2 &:= - N(1+z_0) + 2m,\\
        \HH_3 &:= - 2N + 4m.
    \end{align*}
    Note that, in this case, there are no measurement return maps on odd $j$.

    \item When $(z_1, z_2) = (0, 1)$, we have the following pair of Charlie conditionings together with the resulting return map:
    \begin{equation*}
        M_1 \xrightarrow{(C_1, 0)} M_2 \xrightarrow{(C_2, 1)} M_1, \qquad j \text{ odd} \longmapsto j + \HH_4 \mod{N},
    \end{equation*}
    where $\HH_4 := 2m$. Note that, in this case, there are no measurement return maps on even $j$.
\end{itemize}

Each of these four different return maps induces minimal directed paths. Furthermore, just as in the proof of Theorem~\ref{thm:NN3}, by finding the difference between the two relevant equations out of~\eqref{eqn:asymmetric-l-0-imposs}--\eqref{eqn:asymmetric-l-2-imposs}, we may readily generalise the lifting concept detailed in Lemma~\ref{lem:lifting}. For any choice of rotation $\HH_w$, $w \in \{1, 2, 3, 4\}$, there exist Alice-literal paths of the form
\begin{equation*}
    \tilde{j} \longrightarrow \tilde{j} + \HH_w \longrightarrow \cdots \longrightarrow \tilde{j} + q \HH_w \longrightarrow \cdots \longrightarrow \tilde{j} + \frac{N}{\dd_w}\HH_w,
\end{equation*}
where $\dd_w := \gcd(N, \HH_w)$, for all $\tilde{j} := j + aN$, $a \in \Z_2$, such that $j$ is even if $w \in \{1,2,3\}$ and $j$ is odd if $w=4$. Therefore, we deduce the following result, analogous to the last part of Lemma~\ref{lem:lifting}.

\begin{proposition}
    The implication graph $I_\Psi(\zz)$ contains a witness path if and only if the relevant condition below holds.
    \begin{itemize}
        \item $(z_1, z_2) = (1,1)$: $\HH_1/\dd_1$ is odd;
        \item $(z_1, z_2) = (0,0)$: $\HH_2/\dd_2$ is odd;
        \item $(z_1, z_2) = (1,0)$: at least one of $\HH_1/\dd_1,\, \HH_2/\dd_2,\, \HH_3/\dd_3$ is odd;
        \item $(z_1, z_2) = (0,1)$: $\HH_4/\dd_4$ is odd.
    \end{itemize}
    Thus, the quantum scenario $(\ket{\mathrm{B}(\lambda_{p,m})}, \M)$ is a nonlocality paradox if and only if $\HH_1/\dd_1$, $\HH_2/\dd_2$, and $\HH_4/\dd_4$ are all odd.
\end{proposition}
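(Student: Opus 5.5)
The plan is to analyse, for each total Charlie assignment $\zz=(z_0,z_1,z_2)$, the bidirected implication graph $I_\Psi(\zz)$ via the edge rules already derived from \eqref{eqn:asymmetric-l-0-imposs}--\eqref{eqn:asymmetric-l-2-imposs}. By Lemma~\ref{lem:interpolant-impl}, a witness path is exactly a path from a literal to its complement.

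\textbf{Step 1: reduce to one parity class of Alice indices.} In each case only one parity class of Alice indices carries two or more conditionings. This is the even class unless $(z_1,z_2)=(0,1)$, when it is the odd class. Every Alice measurement of the other parity has at most one incident matching, so it is a pendant vertex. Pendant vertices cannot lie in the interior of a minimal path. Suppose a witness path starts at such a pendant literal $(A_j,a)$. Its first edge goes to $(B_k,b)$, and bidirectedness forces its last edge to be the reverse of the complementary edge into $(B_k,b\oplus1)$. So it contains a Bob witness path, and hence an Alice witness path through an active-parity vertex. We may therefore restrict to the subgraph on active-parity Alice indices and the Bob indices $2\Z_N$.

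\textbf{Step 2: the single-return cases.} For $(z_1,z_2)\in\{(1,1),(0,0),(0,1)\}$, each active vertex meets exactly two matchings: $C_0$ together with $(C_1,1)$, $C_0$ together with $(C_2,0)$, or $(C_1,0)$ together with $(C_2,1)$. The components are therefore alternating cycles, and the two-step return on Alice indices is $j\mapsto j+\HH_w$. I will then rerun the induction of Lemma~\ref{lem:lifting} verbatim. Subtracting the two impossibility congruences along consecutive edges shows that literals lift to $\Z_{2N}$ and advance by $\HH_w$ per return. After $N/\dd_w$ returns the literal displacement is $N\HH_w/\dd_w$. This gives: a witness path exists iff $\HH_w/\dd_w$ is odd, with $w=1,2,4$ respectively.

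\textbf{Step 3: the case $(z_1,z_2)=(1,0)$, which I expect to be the main obstacle.} Here even vertices carry three matchings, so $\HH_3$ enters and the graph is no longer a union of cycles. Sufficiency is easy: any one of the three alternating two-matching subgraphs with odd $\HH_w/\dd_w$ already supplies a witness path, by Step 2. For necessity I will argue via lifts. Every closed walk on measurement indices is a concatenation of two-step returns and their inverses, so its lifted displacement is $x_1\HH_1+x_2\HH_2+x_3\HH_3$ with $x\in\Z^3$. That displacement is $\equiv0 \bmod N$, and a witness exists iff some such displacement is $\equiv N \bmod 2N$.

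I then compute explicitly. Since $m$ is odd and $N=4p$, we have $\HH_1\equiv-2m$, $\HH_2\equiv2m$ and $\HH_3\equiv4m \pmod N$, up to the displayed multiples of $N$. Write $g:=\gcd(N,2m)=2\gcd(2p,m)$. Every such combination lies in $\Z\cdot 2m+N\Z$, so it is $\equiv0\bmod N$ iff $N/g$ divides the effective coefficient. It remains to check by direct parity bookkeeping, using the explicit $N(1-z_0)$, $-N(1+z_0)$ and $-2N$ terms, that an odd multiple of $N$ is attainable iff $\HH_1/\dd_1$ or $\HH_2/\dd_2$ (or $\HH_3/\dd_3$) is odd. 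If the direct lattice argument gets messy, my fallback is to construct explicitly a consistent $\Z_2$-assignment $g(A_j)=\lfloor \tilde j/N\rfloor$ from a global lift whenever all three ratios are even.

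\textbf{Step 4: conclude.} By Lemma~\ref{lem:2CNF-paradox}, paradoxicality requires a witness for all $\zz$. As $z_0$ ranges over $\Z_2$, the assignments with $(z_1,z_2)=(1,1)$, $(0,0)$ and $(0,1)$ force $\HH_1/\dd_1$, $\HH_2/\dd_2$ and $\HH_4/\dd_4$ odd, for both values of $z_0$. Conversely, when these hold, the $(1,0)$ assignments are automatically witnessed by Step 3 via $\HH_1$. This yields the stated equivalence.
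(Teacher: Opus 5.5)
Your proposal is correct, and its backbone is the paper's. For each pair of active Charlie conditionings you form the return map $j\mapsto j+\HH_w$ and rerun the lifting argument of Lemma~\ref{lem:lifting}; the paper does exactly this before stating the proposition.

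The difference is in completeness. The paper asserts the criterion for each $\HH_w$ by analogy with Lemma~\ref{lem:lifting} and treats each return map in isolation. It never argues that witness paths cannot start at the pendant Alice vertices of the inactive parity, which is your Step~1. It also never considers walks that mix all three matchings when $(z_1,z_2)=(1,0)$. So in that case it really only justifies sufficiency. Your subgroup-of-displacements argument in Step~3 supplies the missing necessity.

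The parity bookkeeping you defer closes at once. Here $\HH_1+\HH_2=-2Nz_0\equiv0\pmod{2N}$ and $\HH_3=\HH_2-\HH_1$. Hence the displacement group in $\Z_{2N}$ is just $\langle\HH_1\rangle$, and it contains $N$ if and only if $\HH_1/\dd_1$ is odd. Both $N\in\langle\HH_2\rangle$ and $N\in\langle\HH_3\rangle$ imply $N\in\langle\HH_1\rangle$. So this is precisely the stated disjunction, and your fallback $\Z_2$-assignment is unnecessary.

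The paper's version is shorter. Yours gives a complete proof and shows that the $(1,0)$ clause collapses to the single condition that $\HH_1/\dd_1$ is odd.
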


Now, we in fact have $\dd_1 = \dd_2 = \dd_4 = \gcd(N, 2m) =: d$. Since $m$ is odd, we have that $2 \mid d$ but $4 \nmid d$. Since $N$ is a multiple of $4$, it is therefore indeed the case that $\HH_1/\dd_1$, $\HH_2/\dd_2$, and $\HH_4/\dd_4$ are all odd. This proves the paradoxicality part of Theorem~\ref{thm:4P2p3}.

For the minimality part of the theorem, first note that all three Charlie measurements are required for the existence of witness paths for all total Charlie assignments. Second, note that $\gcd(N, m) = 1 \Leftrightarrow d = \gcd(N, 2m) = 2$. Suppose that $\gcd(N,m) = 1$. Considering any total Charlie assignment $\zz$ with $(z_1, z_2) = (1,1)$, we see that any minimal directed path involves Alice measurements whose indices form the coset $2\Z_N$, so removing any of these even-indexed measurements will destroy all witness paths for these $\zz$. Additionally, any such minimal directed path involves all $N/2$ Bob measurements. Now considering any total Charlie assignment $\zz$ with $(z_1, z_2) = (0,1)$, we see that any minimal directed path involves Alice measurements whose indices form the coset $1 + 2\Z_N$, so removing any of these odd-indexed measurements will destroy all witness paths for these $\zz$. Again, all $N/2$ Bob measurements are involved in any such path. Thus, removing any of Alice's, Bob's, or Charlie's measurements destroys the paradox.

Conversely, suppose that $\gcd(N,m) > 1$, so that $d$ is an even number greater than $2$. Take any total Charlie assignment with $z_1 = 1$ or $z_2 = 0$; then we have a witness path with the following labelling of Alice literals, for some $w \in \{1,2,3\}$:
\begin{equation*}
    0 \longrightarrow \HH_w \longrightarrow \cdots \longrightarrow q\HH_w \longrightarrow \cdots \longrightarrow N.
\end{equation*}
The Alice measurements present in this witness path have indices that form the coset $d\Z_N$. But since $2 \notin d\Z_N$, we may remove the Alice measurement $A_2$ without destroying this witness path. Furthermore, all of the witness paths for the remaining total Charlie assignments involve odd-indexed Alice measurements, so removing $A_2$ does not affect these witness paths either. Hence we can remove an Alice measurement without destroying the paradox, so the paradox is not minimal.

\end{proof}

\subsection{Proofs of results in Section~\ref{sec:counterexample}}

\subsubsection{Proof of Lemma~\ref{lem:comp-tick}}
\label{sec:proof-comp-tick}
\begin{proof}
    From \eqref{eq:circle-tick}, we have \[ \zeta =T_\lambda(\varphi) = \frac{\sin{\frac{\lambda}{2}} + e^{-i\varphi} \cos{\frac{\lambda}{2}}}{\cos{\frac{\lambda}{2}} + e^{-i\varphi} \sin{\frac{\lambda}{2}}} . \]
    By definition the complementary tick of $\zeta$ is given by 
    \[ \zeta' = T_\lambda (\varphi+\pi) = \frac{\sin{\frac{\lambda}{2}}- e^{-i\varphi} \cos{\frac{\lambda}{2}}}{\cos{\frac{\lambda}{2}} - e^{-i\varphi} \sin{\frac{\lambda}{2}}} . \]
    Solving for $e^{-i\varphi}$ in the first equation and substituting it into the second gives \[ \zeta' = \frac{2\sin{\frac{\lambda}{2}}\cos{\frac{\lambda}{2}}- \zeta (\cos^2{\frac{\lambda}{2}}+\sin^2{\frac{\lambda}{2}})}{(\cos^2{\frac{\lambda}{2}}+\sin^2{\frac{\lambda}{2}}) - 2\zeta \sin{\frac{\lambda}{2}}\cos{\frac{\lambda}{2}}} = \frac{\sin{\lambda}-\zeta}{1-\zeta\sin{\lambda}} . \]
    Substituting $F_\lambda (\zeta)$ into the formula for $F_\lambda$ gives $F_\lambda (F_\lambda (\zeta)) = \zeta$.
\end{proof}

\subsubsection{Proof of Lemma~\ref{lem:cycle-cert}}
\label{sec:proof-cycle-cert}
\begin{proof}
    Since each tick determines a unique literal, for the rest of this proof, we will denote the tick $\eta$ as the literal, and $F_{\alpha} (\eta)$ as the complementary literal.
    Since $\vartheta_j := R_{\Gamma_0} (\eta_j)$, we have $\eta_j \vartheta_j = \Gamma_0$. Therefore, under the first Charlie conditioning, the Alice--Bob literal pair $(\eta_j, \vartheta_j)$ is impossible. So, the implication graph contains the implication \begin{equation}\label{eq:impl_1}
        \eta_j \Longrightarrow F(\vartheta_j).
    \end{equation}
    Using the definition of $P_\zz$, \[ \eta_{j+1} = P_\zz (\eta_j) = F_\alpha \big(R_{\Gamma_1} (F_\alpha (R_{\Gamma_0} (\eta_j)))\big) = F_\alpha \big(R_{\Gamma_1} (F_\alpha (\vartheta_j))\big) . \]
    Applying $F_\alpha$ to both sides and using the fact that $F_\alpha$ is an involution, we have \[ F_\alpha (\eta_{j+1}) = R_{\Gamma_1} (F_\alpha (\vartheta_j)) \iff F_\alpha (\vartheta_j) F_\alpha (\eta_{j+1}) = \Gamma_1 .\]
    Thus, under the second Charlie conditioning, the Alice--Bob literal pair $(F_\alpha (\eta_{j+1}), F_\alpha (\vartheta_j))$ is impossible. The associated clause gives the following implication in the implication graph: \begin{equation}\label{eq:impl_2}
        F_\alpha (\vartheta_j) \Longrightarrow \eta_{j+1} .
    \end{equation}
    Combining \eqref{eq:impl_1} and \eqref{eq:impl_2}, we obtain, for each $j$, the following directed path in the implication graph: \begin{equation}\label{eq:impl_3}
        \eta_j \Longrightarrow \cdots \Longrightarrow \eta_{j+1} .
    \end{equation}
    Iterating \eqref{eq:impl_3} $m$ times and using \eqref{eq:cycle-cert}, we obtain the directed implication path \[ \eta_0 \Longrightarrow \cdots \Longrightarrow P_\zz^m (\eta_0) = F_\alpha (\eta_0)   \]
    in the implication graph. Since $P_{\zz}$ has order $N$, we also have \[ P_\zz^{N-m} (F_\alpha (\eta_0)) = P_\zz^{N-m} (P_\zz^m (\eta_0)) = P_\zz^N (\eta_0) = \eta_0 . \]
    Iterating \eqref{eq:impl_3} a further $N-m$ times therefore gives the reverse implication \[ F_\alpha (\eta_0) \Longrightarrow \cdots \Longrightarrow \eta_0 . \]
    Thus, the literal represented by $\eta_0$ and its complement, represented by $F_\alpha(\eta_0)$, lie in the same strongly connected component of the implication graph. By Lemma~\ref{lem:impl-graph}, the associated $2$-CNF formula is unsatisfiable.
\end{proof}

\subsubsection{Proof of Proposition~\ref{prop:not-interpolant}}
\label{sec:proof-not-interpolant}
\begin{proof}
    Local unitaries preserve the eigenvalues of the one-qubit reduced density matrices, and permutations of qubits only permute these eigenvalues. Hence, the number of maximally mixed one-qubit marginals is an invariant of the equivalence relation.

    Every interpolant state has at least two maximally mixed one-qubit marginals. Indeed, up to phase conventions, it has the form
    \[ \mathcal{N} \bigl(\ket{0}\ket{0}\ket{v_\lambda} + \ket{1}\ket{1}\ket{w_\lambda} \bigr), \]
    So, tracing out the other two qubits gives $I/2$ on each of the first two qubits.

    We now compute the one-qubit marginals of $\ket{\mathrm{B} ((\alpha, \alpha, \tfrac{\pi}{6}), \pi)}$. Write
    \[ r_i:=\braket{v_{\lambda_i}\,|\,{w_{\lambda_i}}}=\sin\lambda_i, \qquad R_i:=\prod_{k\neq i} r_k, \qquad R:=r_1r_2r_3 . \]
    Let $\varrho_i$ denote the reduced density matrix obtained by tracing out all qubits except qubit $i$. Since $\Phi=\pi$, a direct trace calculation gives\[
    \varrho_i = \frac{1}{2} I + \frac{r_i-R_i}{2(1-R)}X . \]
    Thus $\varrho_i$ is maximally mixed if and only if $r_i=R_i$.

    For $\ket{\mathrm{B} ((\alpha, \alpha, \tfrac{\pi}{6}), \pi)}$ we have \[
    (r_1,r_2,r_3) = \big(\rho,\rho,\tfrac{1}{2} \big), \qquad \rho=\sqrt{2}-1. \]
    The conditions $r_i=R_i$ would respectively give \[
    \rho=\frac{\rho}{2}, \qquad \rho=\frac{\rho}{2}, \qquad \frac{1}{2}=\rho^2. \]
    All three are false. Hence $\ket{\mathrm{B} ((\alpha, \alpha, \tfrac{\pi}{6}), \pi)}$ has no maximally mixed one-qubit marginal, whereas every interpolant state has at least two. Therefore, $\ket{\mathrm{B} ((\alpha, \alpha, \tfrac{\pi}{6}), \pi)}$ is not equivalent to any interpolant state under local unitaries and permutations of qubits.
\end{proof}

\section[Witness paths for BPP examples]{Witness paths for $\mathsf{BPP}$ examples}\label{sec:two-charlie-witness-paths}

\tikzset{
    lit/.style={
        draw,
        circle,
        inner sep=1pt,
        minimum size=8mm,
        font=\small
    },
    edge/.style={-{Stealth[length=2mm]}, line width=0.55pt},
    dashededge/.style={edge, dashed}
}

\newcommand{\Apair}[2]{%
    \node[lit] (A#1z) at (#2,1.25) {$A_{#1}^{0}$};
    \node[lit] (A#1o) at (#2+0.95,1.25) {$A_{#1}^{1}$};
}

\newcommand{\Bpair}[2]{%
    \node[lit] (B#1z) at (#2,-3) {$B_{#1}^{0}$};
    \node[lit] (B#1o) at (#2+0.95,-3) {$B_{#1}^{1}$};
}

\newcommand{\ABpair}[3]{%
    \Apair{#1}{#3}
    \Bpair{#2}{#3}
}

\begin{figure}[H]
\centering

\resizebox{0.8\textwidth}{!}{%
\begin{tikzpicture}

    \ABpair{0}{0}{0}
    \ABpair{1}{1}{2.6}
    \ABpair{2}{2}{5.2}
    \ABpair{10}{10}{10.8}
    \ABpair{11}{11}{13.4}

    \node at (8.0,1.25) {$\cdots$};
    \node at (8.0,-3) {$\cdots$};

    \draw[edge] (A0z) -- (B0z);
    \draw[edge] (B0z) -- (A11o);
    \draw[edge] (A11o) -- (B1z);
    \draw[edge] (B1z) -- (A10o);
    \draw[edge] (A10o) -- (B2z);
    \draw[edge] (A2o) -- (B10z);
    \draw[edge] (B10z) -- (A1o);
    \draw[edge] (A1o) -- (B11z);
    \draw[edge] (B11z) -- (A0o);

    \draw[dashededge] (B2z) -- (8.0,0.95);
    \draw[dashededge] (8.0,0.95) -- (8.0,-2.7);
    \draw[dashededge] (8.0,-2.7) -- (A2o);
\end{tikzpicture}%
}

\vspace{1.1em}

\resizebox{0.7\textwidth}{!}{%
\begin{tikzpicture}

    \ABpair{1}{2}{0}
    \ABpair{4}{5}{3.2}
    \ABpair{7}{8}{6.4}
    \ABpair{10}{11}{10.0}

    \draw[edge] (A1z) -- (B11o);
    \draw[edge] (B11o) -- (A10o);
    \draw[edge] (A10o) -- (B2z);
    \draw[edge] (B2z) -- (A7o);
    \draw[edge] (A7o) -- (B5z);
    \draw[edge] (B5z) -- (A4o);
    \draw[edge] (A4o) -- (B8z);
    \draw[edge] (B8z) -- (A1o);
\end{tikzpicture}%
}

\vspace{1.1em}

\resizebox{0.7\textwidth}{!}{%
\begin{tikzpicture}

    \ABpair{0}{0}{0}
    \ABpair{2}{2}{2.55}
    \ABpair{4}{4}{5.1}
    \ABpair{6}{6}{7.65}
    \ABpair{8}{8}{10.2}

    \draw[edge] (A0z) -- (B0z);
    \draw[edge] (B0z) -- (A6z);
    \draw[edge] (A6z) -- (B4o);
    \draw[edge] (B4o) -- (A2o);
    \draw[edge] (A2o) -- (B8z);
    \draw[edge] (B8z) -- (A8o);
    \draw[edge] (A8o) -- (B2z);
    \draw[edge] (B2z) -- (A4z);
    \draw[edge] (A4z) -- (B6o);
    \draw[edge] (B6o) -- (A0o);
\end{tikzpicture}%
}

\vspace{1.1em}

\resizebox{0.8\textwidth}{!}{%
\begin{tikzpicture}

    \ABpair{1}{1}{0}
    \ABpair{3}{3}{2.6}
    \ABpair{5}{5}{5.2}
    \ABpair{39}{39}{10.8}
    \ABpair{41}{41}{13.4}

    \node at (8.0,1.25) {$\cdots$};
    \node at (8.0,-3) {$\cdots$};

    \draw[edge] (A1z) -- (B41o);
    \draw[edge] (B41o) -- (A3z);
    \draw[edge] (A3z) -- (B39o);
    \draw[edge] (B39o) -- (A5z);
    
    \draw[edge] (B5o) -- (A39z);
    \draw[edge] (A39z) -- (B3o);
    \draw[edge] (B3o) -- (A41z);
    \draw[edge] (A41z) -- (B1o);
    \draw[edge] (B1o) -- (A1o);

    \draw[dashededge] (A5z) -- (8.0,-2.7);
    \draw[dashededge] (8.0,-2.7) -- (8,0.95);
    \draw[dashededge] (8,0.95) -- (B5o);
\end{tikzpicture}%
}

\caption{From top to bottom, a witness path for each of Examples~\ref{ex:first}--\ref{ex:last}, under the total Charlie assignment $\zz = (0,0)$. We write $A_j^a$ for the literal $(A_j,a)$, and similarly write $B_k^b$ for the literal $(B_k,b)$.}
\label{fig:two-charlie-implication-diagrams}
\end{figure}

\section{Lists of witness paths for Example~\ref{ex:four-charlie}}
\label{sec:B}

We provide further details for the four scenarios in Example~\ref{ex:four-charlie}. Note that in each scenario, the Charlie tick indices $\tickinds = (t_{l,z})_{l \in \Z_4,\, z \in \Z_2}$ are
\begin{equation*}
    \tickinds = \begin{pmatrix}
        23 & 9 \\
        21 & 5 \\
        15 & 1 \\
        19 & 3
    \end{pmatrix}.
\end{equation*}

For each scenario, we first specify the Alice and Bob measurement sets. We then exhibit, for every total Charlie assignment $\zz\in \Z_2^4$, a witness path in the corresponding implication graph $I_{\Psi}(\zz)$. This proves that the conditioned formula is inconsistent in each case, and hence establishes the desired nonlocality paradox. By Lemma~\ref{lem:interpolant-impl}, it suffices to exhibit a path from some literal to its complement. Throughout this section, we write $A^a$ for the literal $(A,a)$, and similarly write $B^b$ for the literal $(B,b)$.

\begin{remark}\label{rem:inc-systems}
    Observe that $t_{0,1} \equiv t_{1,0} + N \mod{2N}$ and $t_{2,0} \equiv t_{3,1} + N \mod{2N}$. Therefore, if $\zz = (1,0,z_2, z_3)$, then the system $\Psi(\zz)$ is necessarily inconsistent, i.e.\ $I_\Psi(\zz)$ contains a witness path, since any edge $A^a \Rightarrow B^b$ that arises from the $l=0$ Charlie conditioning implies the existence of an edge $B^b \Rightarrow A^{a \oplus 1}$ from the $l=1$ Charlie conditioning (see Lemma~\ref{lem:edges-vs-eqns}). Similarly, if $\zz = (z_0, z_1, 0, 1)$, then $\Psi(\zz)$ is also inconsistent. Thus, we already have that $\Psi(\zz)$ is inconsistent for each of these four total Charlie assignments.
\end{remark}




\subsection{Example~\ref{ex:four-charlie}(1)}

For the $(4,4,4)$-scenario, we have the following sets of Alice and Bob measurements.
\[ \begin{aligned}
     M_1^{(4,4)} &:= \{ A_0,\ A_2,\ A_4,\ A_6\}, \\
        \text{and} \quad  M_2^{(4,4)} &:=\{B_1,\ B_3,\ B_5,\ B_9 \}.
    \end{aligned}  \]

We give an example of the implication graph for $\zz=(0,0,0,0)$:

\begin{figure}[H]
\centering
\begin{tikzpicture}[
    every node/.style={font=\small},
    lit/.style={draw, circle, minimum size=7mm, inner sep=1pt},
    edge/.style={{Latex[length=2mm]}-{Latex[length=2mm]}, very thin, black}
]

\node[lit] (A00) at (0,0)    {$A_0^0$};
\node[lit] (A01) at (1,0)    {$A_0^1$};

\node[lit] (A20) at (3,0)    {$A_2^0$};
\node[lit] (A21) at (4,0)    {$A_2^1$};

\node[lit] (A40) at (6,0)    {$A_4^0$};
\node[lit] (A41) at (7,0)    {$A_4^1$};

\node[lit] (A60) at (9,0)    {$A_6^0$};
\node[lit] (A61) at (10,0)   {$A_6^1$};

\node[above=2mm of A21] {\hspace{2cm}\textbf{Alice literals}};

\node[lit] (B10) at (0,-5)   {$B_1^0$};
\node[lit] (B11) at (1,-5)   {$B_1^1$};

\node[lit] (B30) at (3,-5)   {$B_3^0$};
\node[lit] (B31) at (4,-5)   {$B_3^1$};

\node[lit] (B50) at (6,-5)   {$B_5^0$};
\node[lit] (B51) at (7,-5)   {$B_5^1$};

\node[lit] (B90) at (9,-5)   {$B_9^0$};
\node[lit] (B91) at (10,-5)  {$B_9^1$};

\node[below=2mm of B31] {\hspace{2cm} \textbf{Bob literals}};

\newcommand{\xorEdges}[3]{%
    \ifnum#3=0
        \draw[edge] (A#10) -- (B#20);
        \draw[edge] (A#11) -- (B#21);
    \else
        \draw[edge] (A#10) -- (B#21);
        \draw[edge] (A#11) -- (B#20);
    \fi
}


\xorEdges{6}{5}{1}
\xorEdges{2}{9}{1}

\xorEdges{0}{9}{1}
\xorEdges{4}{5}{1}
\xorEdges{6}{3}{1}

\xorEdges{0}{3}{1}
\xorEdges{2}{1}{1}
\xorEdges{6}{9}{0}

\xorEdges{2}{5}{1}
\xorEdges{4}{3}{1}
\xorEdges{6}{1}{1}

\end{tikzpicture}
\caption{Implication graph for the total Charlie assignment $\zz = (0,0,0,0)$ of Example~\ref{ex:four-charlie}(1).}
\label{fig:4-charlie}
\end{figure}

For each of the remaining $\zz$ values not covered by Remark~\ref{rem:inc-systems}, we list a witness path in the corresponding implication graph. 
Here, in the last line, $z\in\Z_2$; thus the displayed witness path applies to both $\zz=(1,1,1,0)$ and $\zz=(1,1,1,1)$.

\begin{itemize}
    \item $\zz = (0,0,0,0):\ A_0^0 \Longrightarrow B_3^1 \Longrightarrow A_6^0 \Longrightarrow B_9^0 \Longrightarrow A_0^1$.
    \item $\zz = (0,0,1,0):\ A_0^0 \Longrightarrow B_1^0 \Longrightarrow A_6^1 \Longrightarrow B_5^0 \Longrightarrow A_4^1 \Longrightarrow B_9^0 \Longrightarrow A_0^1$.
    \item $\zz = (0,0,1,1):\ A_0^0 \Longrightarrow B_3^0 \Longrightarrow A_6^1 \Longrightarrow B_9^0 \Longrightarrow A_0^1$.
    \item $\zz = (0,1,0,0):\ A_2^0 \Longrightarrow B_9^1 \Longrightarrow A_6^1 \Longrightarrow B_5^0 \Longrightarrow A_2^1$.
    \item $\zz = (0,1,1,0):\ A_2^0 \Longrightarrow B_9^1 \Longrightarrow A_4^0 \Longrightarrow B_1^0 \Longrightarrow A_6^1 \Longrightarrow B_5^0 \Longrightarrow A_2^1$.
    \item $\zz = (0,1,1,1):\ A_2^0 \Longrightarrow B_9^1 \Longrightarrow A_6^0 \Longrightarrow B_5^1 \Longrightarrow A_0^1 \Longrightarrow B_3^1 \Longrightarrow A_2^1$.
    \item $\zz = (1,1,0,0):\ A_0^0 \Longrightarrow B_9^0 \Longrightarrow A_6^0 \Longrightarrow B_1^1 \Longrightarrow A_4^1 \Longrightarrow B_5^1 \Longrightarrow A_0^1$.
    \item $\zz = (1,1,1,z):\ A_0^0 \Longrightarrow B_9^0 \Longrightarrow A_4^1 \Longrightarrow B_5^1 \Longrightarrow A_0^1$.
\end{itemize}

\subsection{Example~\ref{ex:four-charlie}(2)}

For the $(6,6,4)$-scenario, we have
 \[ \begin{aligned}
    M_1^{(6,6)} &:= \{ A_0,\ A_1,\ A_2,\ A_3,\ A_6,\ A_7\}, \\
        \text{and} \quad M_2^{(6,6)} &:=\{B_0,\ B_1,\ B_2,\ B_5,\ B_6,\ B_9 \}.
    \end{aligned}  \]

Again, we list witness paths from the remaining $\zz$ values not covered in Remark~\ref{rem:inc-systems}. 

\begin{itemize}
    \item $\zz = (0,0,0,0):\  A_2^0 \Longrightarrow B_9^1 \Longrightarrow A_6^1 \Longrightarrow B_5^0 \Longrightarrow A_2^1$.

    \item $\zz = (0,0,1,0):\ A_0^0 \Longrightarrow B_1^0 \Longrightarrow A_6^1 \Longrightarrow B_5^0 \Longrightarrow A_2^1 \Longrightarrow B_9^0 \Longrightarrow A_0^1$.

    \item $\zz = (0,0,1,1):\ A_1^0 \Longrightarrow B_2^0 \Longrightarrow A_7^1 \Longrightarrow B_6^0 \Longrightarrow A_3^1 \Longrightarrow B_0^1 \Longrightarrow A_1^1$.

    \item $\zz = (0,1,0,0):\ A_2^0 \Longrightarrow B_9^1 \Longrightarrow A_6^1 \Longrightarrow B_5^0 \Longrightarrow A_2^1$.

    \item $\zz = (0,1,1,0):\ A_1^0 \Longrightarrow B_0^0 \Longrightarrow A_7^1 \Longrightarrow B_6^0 \Longrightarrow A_1^1$.

    \item $\zz = (0,1,1,1):\ A_0^0 \Longrightarrow B_1^0 \Longrightarrow A_2^0 \Longrightarrow B_9^1 \Longrightarrow A_6^0 \Longrightarrow B_5^1 \Longrightarrow A_0^1$.

    \item $\zz = (1,1,0,0):\ A_0^0 \Longrightarrow B_5^0 \Longrightarrow A_2^1 \Longrightarrow B_1^0 \Longrightarrow A_6^1 \Longrightarrow B_9^1 \Longrightarrow A_0^1$.

    \item $\zz = (1,1,1,z):\ A_3^0 \Longrightarrow B_2^0 \Longrightarrow A_7^0 \Longrightarrow B_6^1 \Longrightarrow A_3^1$.
\end{itemize}

\subsection{Example~\ref{ex:four-charlie}(3)}

For the $(6,2,4)$-scenario, we have 

\[ \begin{aligned}
    M_1^{(6,2)} &:= \{ A_0,\ A_2,\ A_4,\ A_6,\ A_8,\ A_{10}\}, \\
        \text{and} \quad M_2^{(6,2)} &:=\{ B_1,\ B_7 \}.
    \end{aligned}  \]

Again, we list witness paths from the remaining $\zz$ values not covered in Remark~\ref{rem:inc-systems}.

\begin{itemize}
    \item $\zz = (0,0,0,0):\ A_2^0 \Longrightarrow B_7^1 \Longrightarrow A_8^1 \Longrightarrow B_1^0 \Longrightarrow A_2^1$.

    \item $\zz = (0,0,1,0):\ A_0^0 \Longrightarrow B_7^1 \Longrightarrow A_6^0 \Longrightarrow B_1^1 \Longrightarrow A_0^1$.

    \item $\zz = (0,0,1,1):\ A_2^0 \Longrightarrow B_7^1 \Longrightarrow A_8^0 \Longrightarrow B_1^1 \Longrightarrow A_2^1$.

    \item $\zz = (0,1,0,0):\ A_4^0 \Longrightarrow B_1^0 \Longrightarrow A_{10}^1 \Longrightarrow B_7^0 \Longrightarrow A_4^1$.

    \item $\zz = (0,1,1,0):\ A_4^0 \Longrightarrow B_1^0 \Longrightarrow A_{10}^1 \Longrightarrow B_7^0 \Longrightarrow A_4^1$.

    \item $\zz = (0,1,1,1):\ A_4^0 \Longrightarrow B_1^0 \Longrightarrow A_{10}^1 \Longrightarrow B_7^0 \Longrightarrow A_4^1$.

    \item $\zz = (1,1,0,0):\ A_2^0 \Longrightarrow B_7^0 \Longrightarrow A_8^0 \Longrightarrow B_1^0 \Longrightarrow A_2^1$.

    \item $\zz = (1,1,1,0):\ A_0^0 \Longrightarrow B_7^1 \Longrightarrow A_6^0 \Longrightarrow B_1^1 \Longrightarrow A_0^1$.

    \item $\zz = (1,1,1,1):\ A_2^0 \Longrightarrow B_7^0 \Longrightarrow A_8^1 \Longrightarrow B_1^1 \Longrightarrow A_2^1$.
\end{itemize}

\subsection{Example~\ref{ex:four-charlie}(4)}

For the $(7,5,4)$-scenario, we have  

\[ \begin{aligned}
    M_1^{(7,5)} &:= \{ A_0,\ A_1,\ A_2,\ A_3,\ A_6,\ A_7,\ A_8 \}, \\
        \text{and} \quad M_2^{(7,5)} &:=\{B_0,\ B_1,\ B_4,\ B_7,\ B_8 \}.
    \end{aligned}  \]

Again, we list witness paths from the remaining $\zz$ values not covered in Remark~\ref{rem:inc-systems}.

\begin{itemize}
    \item $\zz = (0,0,0,0):\ A_3^0 \Longrightarrow B_4^1 \Longrightarrow A_7^0 \Longrightarrow B_8^0 \Longrightarrow A_3^1$.

    \item $\zz = (0,0,1,0):\ A_1^0 \Longrightarrow B_8^1 \Longrightarrow A_3^0 \Longrightarrow B_4^1 \Longrightarrow A_7^0 \Longrightarrow B_0^1 \Longrightarrow A_1^1$.

    \item $\zz = (0,0,1,1):\ A_2^0 \Longrightarrow B_7^1 \Longrightarrow A_8^0 \Longrightarrow B_1^1 \Longrightarrow A_2^1$.

    \item $\zz = (0,1,0,0):\ A_3^0 \Longrightarrow B_4^1 \Longrightarrow A_7^0 \Longrightarrow B_8^0 \Longrightarrow A_3^1$.

    \item $\zz = (0,1,1,0):\ A_0^0 \Longrightarrow B_1^0 \Longrightarrow A_6^1 \Longrightarrow B_7^0 \Longrightarrow A_0^1$.

    \item $\zz = (0,1,1,1):\ A_1^0 \Longrightarrow B_4^0 \Longrightarrow A_7^1 \Longrightarrow B_8^0 \Longrightarrow A_3^1 \Longrightarrow B_0^1 \Longrightarrow A_1^1$.

    \item $\zz = (1,1,0,0):\ A_1^0 \Longrightarrow B_4^0 \Longrightarrow A_3^1 \Longrightarrow B_0^0 \Longrightarrow A_7^1 \Longrightarrow B_8^1 \Longrightarrow A_1^1$.

    \item $\zz = (1,1,1,0):\ A_0^0 \Longrightarrow B_1^0 \Longrightarrow A_6^1 \Longrightarrow B_7^0 \Longrightarrow A_0^1$.

    \item $\zz = (1,1,1,1):\ A_2^0 \Longrightarrow B_7^0 \Longrightarrow A_8^1 \Longrightarrow B_1^1 \Longrightarrow A_2^1$.
\end{itemize}

\end{document}